\documentclass[a4paper,11pt,subeqn]{article} 
%
%
\setlength{\oddsidemargin}{10.5pt} 
\setlength{\evensidemargin}{10.5pt}
\setlength{\textwidth}{410pt}
\setlength{\headheight}{14pt}

\usepackage{fancyhdr}
\lhead[\rm\thepage]{\fancyplain{}{\sl{\nouppercase{\leftmark}}}}
\rhead[\fancyplain{}{\sl{\leftmark}}]{\rm\thepage}
\chead{}\lfoot{}\rfoot{}\cfoot{}
\pagestyle{fancy}

\usepackage{graphicx}
\usepackage{microtype} 
\usepackage{mathdots} 
\usepackage[shortlabels]{enumitem}

\usepackage[backend=bibtex,citestyle=alphabetic,bibstyle=alphabetic,backref=true,url=false]{biblatex}
\addbibresource{../../Math,../../Hamiltonan*Separation*Theory,../../Preprints*and*My*Journal}

\usepackage{amsfonts,amssymb,amstext,amsmath} 
\usepackage{cool} 
\usepackage[amsmath,thmmarks,hyperref]{ntheorem}
\usepackage{tensor}
\usepackage{tensind} 
\tensordelimiter{?}

\usepackage[bookmarksnumbered,hypertexnames=false]{hyperref} 
\hypersetup{
    plainpages=false,       
    unicode=false,          
    pdftoolbar=true,        
    pdfmenubar=true,        
    pdffitwindow=false,     
    pdfstartview={FitH},    
    pdftitle={Concircular tensors in Spaces of Constant Curvature: With Applications to Orthogonal Separation of The Hamilton-Jacobi Equation},    
    pdfauthor={Krishan Rajaratnam},     
    pdfnewwindow=true,      
    linktoc=page,           
    colorlinks=true,        
    linkcolor=blue,         
    citecolor=black,        
    filecolor=magenta,      
    urlcolor=cyan           
}

\numberwithin{equation}{section} 
\usepackage[capitalize]{cleveref} 

\usepackage{amsmath,amsfonts,amssymb}



\newcommand{\diag}{\operatorname{diag}}
\newcommand{\spa}[1]{\operatorname{span} \{#1\}}
\newcommand{\End}{\operatorname{End}}
\newcommand{\Ima}{\operatorname{Im}}

\newcommand{\scalprod}[2]{\left\langle #1,#2 \right\rangle}
\newcommand{\tr}[1]{\operatorname{tr}(#1)}

\def\sp(#1,#2){\left\langle #1,#2 \right\rangle}
\def\bp#1{\sp(#1)}

\DeclareFontFamily{U}{matha}{\hyphenchar\font45}
\DeclareFontShape{U}{matha}{m}{n}{
      <5> <6> <7> <8> <9> <10> gen * matha
      <10.95> matha10 <12> <14.4> <17.28> <20.74> <24.88> matha12
      }{}
\DeclareSymbolFont{matha}{U}{matha}{m}{n}
\DeclareFontFamily{U}{mathx}{\hyphenchar\font45}
\DeclareFontShape{U}{mathx}{m}{n}{
      <5> <6> <7> <8> <9> <10>
      <10.95> <12> <14.4> <17.28> <20.74> <24.88>
      mathx10
      }{}
\DeclareSymbolFont{mathx}{U}{mathx}{m}{n}

\DeclareMathSymbol{\obot}         {2}{matha}{"6B}
\DeclareMathSymbol{\bigobot}       {1}{mathx}{"CB}


\newcommand{\R}{\mathbb{R}}
\newcommand{\N}{\mathbb{N}}
\newcommand{\Z}{\mathbb{Z}}
\newcommand{\C}{\mathbb{C}}

\newcommand{\Fi}{\mathbb{F}}
\renewcommand{\E}{\mathbb{E}} 
\newcommand{\Si}{\mathbb{S}}

\newcommand{\eunn}{\mathbb{E}^{n}_{\nu}}

\newcommand{\punn}{\mathbb{P}^{n}_{\nu}}
\newcommand{\sgn}{\operatorname{sgn}}

\newcommand{\conj}[1]{\overline{#1}}

\def\deriv#1#2{\dfrac{d#1}{d#2}}
\def\pderiv#1#2{\dfrac{\partial #1}{\partial #2}}

\def\spderiv#1#2#3{\dfrac{\partial^2 #1}{\partial #2\partial #3}}





\newcommand{\lied}[2]{\mathcal{L}_{#2} #1} 
\def\ve{\mathfrak{X}} 


\newcommand{\Ei}{{\mathcal{E}}}
\newcommand{\F}{{\mathcal{F}}}


\usepackage{amsfonts,amssymb,amsmath}


%
%

\theoremnumbering{arabic}
\theoremstyle{break} 
\usepackage{latexsym}
\theoremsymbol{\ensuremath{_\Box}}
\theorembodyfont{\itshape}
\theoremheaderfont{\normalfont\bfseries}
\theoremseparator{}
\newtheorem{theorem}{Theorem}[section]
\newtheorem{proposition}[theorem]{Proposition}
\newtheorem{corollary}[theorem]{Corollary}
\newtheorem{lemma}[theorem]{Lemma}
\theoremsymbol{\ensuremath{\diamondsuit}} 
\newtheorem{thmMy}[theorem]{Theorem}
\newtheorem{propMy}[theorem]{Proposition}

\theoremsymbol{\ensuremath{_\Box}}

\theorembodyfont{\upshape}
\newtheorem{example}[theorem]{Example}

\newtheorem{remark}[theorem]{Remark}
\newtheorem{definition}[theorem]{Definition}

\theoremstyle{nonumberplain}
\theoremheaderfont{\scshape}
\theorembodyfont{\normalfont}
\theoremsymbol{\ensuremath{_\blacksquare}}
\usepackage{amssymb}
\newtheorem{proof}{Proof}

\qedsymbol{\ensuremath{_\blacksquare}}

\numberwithin{equation}{section} 

\newcounter{partCounter}
\newenvironment{parts}{
	\begin{list}{\bfseries{}Case \arabic{partCounter}~}{\usecounter{partCounter}}
}{
	\end{list}
}

\renewcommand{\d}{{\textrm d}}

\usepackage{autonum} 

\usepackage[sanitize={symbol=false},style=index,sort=def,acronym,toc,nomain]{glossaries} 
\newglossary[nlg]{notation}{not}{ntn}{List of Notations} 

\newglossaryentry{tenRCurv}{name=Riemann curvature tensor, description={
$	R(X,Y)Z = \nabla_{X}\nabla_{Y} - \nabla_{Y}\nabla_{X} - \nabla_{[X,Y]}$}}

\newglossaryentry{tenHess}{name=Hessian of a function f, description={
\begin{equation*}
	H^{f}(X,Y) = XY f - (\nabla_{X} Y) f
\end{equation*}
}}

\newglossaryentry{meanCVf}{name=Mean curvature vector field, description={
\begin{equation*}
	H = \frac{1}{n} \sum\limits_{i=1}^{n} \epsilon_{i} \mathbf{II}
\end{equation*}
}}

\newglossaryentry{ot}{name={orthogonal tensor},
description={A valence 2 symmetric contravariant tensor whose uniquely determined endomorphism is point-wise diagonalizable.}}

\newglossaryentry{nTorLess}{name={torsionless},
description={A $\binom{1}{1}$-tensor is called torsionless if it's Nijenhuis torsion vanishes.}}


\newglossaryentry{eunn}{type=notation,name={\ensuremath{\eunn}}, description={pseudo-Euclidean space, an $n$-dimensional vector space equipped with a metric with signature $\nu$}} 

\newglossaryentry{eunnk}{type=notation,name={\ensuremath{\eunn(\kappa)}}, description={ A hyperquadric of pseudo-Euclidean space. More precisely the central hyperquadric of \gls{eunn} with curvature $\kappa$.}}

\newglossaryentry{obot}{type=notation,
name={\ensuremath{\obot}},
description={The orthogonal direct sum.
}}

\newglossaryentry{sgn}{type=notation,
name={\ensuremath{\sgn}},
description={Given a real number $a$, $\sgn a$ is the sign of $a$ if $a \neq 0$ and $0$ if $a = 0$.
}}



\newglossaryentry{fm}{type=notation,
name={\ensuremath{\F(M)}},
description={The set of functions defined on the manifold $M$.
}}

\newglossaryentry{vem}{type=notation,
name={\ensuremath{\ve(M)}},
description={The set of vector fields defined on the manifold $M$.
}}

\newglossaryentry{secd}{type=notation,
name={\ensuremath{\Gamma(E)}},
description={The set of vector fields tangent to the distribution $E$.
}}

\newglossaryentry{spm}{type=notation,
name={\ensuremath{S^{p}(M)}},
description={The set of symmetric contravariant tensors of valence $p$ defined on the manifold $M$.
}}

\newglossaryentry{con}{type=notation,
name={\ensuremath{\operatorname{C}^{p}(M)}},
description={The vector space of concircular contravariant tensors of valence $p$ defined on the manifold $M$.
}}

\newglossaryentry{cono}{type=notation,
name={\ensuremath{\operatorname{C}^p_0(M)}},
description={The vector space of covariantly constant contravariant tensors of valence $p$ defined on the manifold $M$.
}}

\newglossaryentry{odot}{type=notation,
name={\ensuremath{\odot}},
description={The symmetric product of two tensors, i.e. if $u,v$ are tensors then $u \odot v$ is the symmetrization of $u \otimes v$.},
sort=od}

\newglossaryentry{end}{type=notation, name={\ensuremath{\End}},
description={We refer to a linear map from a vector space (vector bundle) into itself as an \emph{endomorphism}.}}


\newacronym{scc}{SCC}{space of constant curvature}

\newacronym{kv}{KV}{Killing vector}
\newacronym{ckv}{CKV}{conformal Killing vector}
\newacronym{kt}{KT}{Killing tensor}
\newacronym{chkt}{ChKT}{characteristic Killing tensor}
\newacronym{ckt}{CKT}{conformal Killing tensor}
\newacronym{cv}{CV}{concircular vector}
\newacronym{ct}{CT}{concircular tensor also called a C-tensor}
\newacronym{oct}{OCT}{orthogonal concircular tensor}
\newacronym{ict}{ICT}{irreducible concircular tensor}

\newacronym{kss}{KS-space}{Killing-Stackel space}
\newacronym{kem}{KEM}{Kalnins-Eisenhart-Miller}
\newacronym{bekm}{BEKM}{Benenti-Eisenhart-Kalnins-Miller}
\newacronym{kbd}{KBD}{Killing Bertrand-Darboux}
\newacronym{kbdt}{KBDT}{Killing Bertrand-Darboux tensor}

\newacronym{wpnet}{WP-net}{warped product net}
\newacronym{tpnet}{TP-net}{twisted product net}

\newacronym{hj}{HJ}{Hamilton-Jacobi}
\makeglossary
\newcommand{\defn}[1]{\emph{\gls*{#1}}} 

%
%
\begin{document}
\pagenumbering{roman}
\title{Concircular tensors in Spaces of Constant Curvature: With Applications to Orthogonal Separation of The Hamilton-Jacobi Equation}
\author{Krishan Rajaratnam\footnote{e-mail: k2rajara@uwaterloo.ca}, Raymond G. McLenaghan\footnote{e-mail: rgmclenaghan@uwaterloo.ca}\\Department of Applied Mathematics, University of Waterloo, Canada}
\date{\today} 
\maketitle
\begin{abstract}\centering
	We study concircular tensors in spaces of constant curvature and then apply the results obtained to the problem of the orthogonal separation of the Hamilton-Jacobi equation on these spaces. Any coordinates which separate the geodesic Hamilton-Jacobi equation are called separable. Specifically for spaces of constant curvature, we obtain canonical forms of concircular tensors modulo the action of the isometry group, we obtain the separable coordinates induced by irreducible concircular tensors, and we obtain warped products adapted to reducible concircular tensors. Using these results, we show how to enumerate the isometrically inequivalent orthogonal separable coordinates, construct the transformation from separable to Cartesian coordinates, and execute the Benenti-Eisenhart-Kalnins-Miller (BEKM) separation algorithm for separating natural Hamilton-Jacobi equations.
\end{abstract}

%
%
%

\pagenumbering{arabic}
\newpage
\section{Introduction}

It is shown in \cite{Rajaratnam2014a} that any point-wise diagonalizable concircular tensor, hereafter called a \defn{oct}, can be used to recursively construct separable coordinates for the (geodesic) Hamilton-Jacobi equation. Such coordinates are called Kalnins-Eisenhart-Miller (KEM) coordinates. In \cite{Rajaratnam2014d} it is shown that all orthogonal separable coordinates for the Hamilton-Jacobi equation in spaces of constant curvature are KEM coordinates. The work done in \cite{Rajaratnam2014d} serves as an independent verification of the Kalnins-Miller classification of separable coordinates for Riemannian spaces of constant curvature \cite{Kalnins1986}. Hence the classification of OCTs in spaces of constant curvature is crucial for classifying orthogonal separable coordinates in these spaces.

Specifically, OCTs have the following uses:

\begin{enumerate}
	\item An algebraic classification of these tensors modulo the action of the isometry group can be used to obtain a notion of inequivalence for KEM coordinate systems.
	\item Crampin \cite{Crampin2003} shows that one can obtain transformations to separable coordinates for OCTs with functionally independent eigenfunctions. It's evident from the results in \cite{Rajaratnam2014a,Rajaratnam2014d} that a knowledge of the warped product decompositions of the space is sufficient to construct transformations to separable coordinates for any KEM coordinate system. We will expand on this idea later.
	\item When concircular tensors have simple eigenfunctions, it is shown in \cite{Benenti2005a} (see also \cite{Benenti1992c,Benenti1993,Benenti2004}) that a basis for the Killing-Stackel space can be obtained. Using the theory presented in \cite{Rajaratnam2014a} one can generalize this result to arbitrary KEM coordinate systems.
	\item With a classification of concircular tensors, the BEKM separation algorithm (presented in \cite{Rajaratnam2014a}), can be executed to solve the separation of variables problem for natural Hamiltonians.
\end{enumerate}

Thus an unsolved problem is to obtain a complete classification of these tensors in spaces of constant curvature. A partial classification of these tensors in Euclidean space can be found in \cite{Lundmark2003} (cf. \cite{Benenti2005a}). A complete classification of these tensors for Euclidean space and the Euclidean sphere is implicit in \cite{Waksjo2003}.

Building on existing knowledge in \cite{Lundmark2003,Crampin2003} together with new insights \cite{Rajaratnam2014a}, in this article we obtain a complete (local) classification of orthogonal concircular tensors in all spaces of constant curvature with Euclidean and Lorentzian signature\footnote{The classification for other signatures can be obtained fairly easily if one wishes.}. More details on our classification and the way in which it is done is given in \cref{sec:sumRes}, after we have introduced some preliminaries. Some of our results are also summarized in \cref{sec:sumRes}.

Different parts of this problem have been solved for special cases by different researchers over the past few decades. A classification of separable coordinate systems in Riemannian spaces of constant curvature was originally done by Kalnins and Miller in \cite{Kalnins1986a,Kalnins1982}, see also \cite{Kalnins1986} which is a book containing their results. The insight provided by their classification was crucial for the development of the theory which we present here. They have extended this work to spaces of constant curvature with arbitrary signature in \cite{Kalnins1984} to obtain a partial classification. In \cite{Kalnins1975} orthogonal separable coordinates in two dimensional Minkowski space are determined and partial results in three dimensional Minkowski space are given. A more detailed classification in two dimensions is given in \cite{Mclenaghan2002}, and in three dimensions in \cite{Kalnins1976}. This classification in three dimensions is further refined in \cite{Hinterleitner1998} and \cite{Horwood2008a}. A classification of orthogonal separable coordinates for four dimensional Minkowski space has been given in \cite{Kalnins1978} and references therein. Classifications of isometrically inequivalent Killing tensors in two dimensional flat spaces are given in \cite{Mclenaghan2002a}, \cite{McLenaghan2004b} and \cite{Chanu2006}, that in three dimensional Minkowski space in \cite{Horwood2009}, and that on the Euclidean three sphere in \cite{Cochran2011}. Finally, building on results in \cite{Kalnins1986}, a version of the BEKM separation algorithm is given in \cite{Waksjo2003} for Euclidean space and the Euclidean sphere.

Our approach to this problem has several advantages over previous approaches. First we are able to give a unified theory applicable to spaces of constant curvature with both Euclidean and Lorentzian signatures. This approach allows one to solve the different but related problems listed above. We are able to give a precise notion of inequivalence for orthogonal separable coordinate systems in Minkowski space and thereby give a clear, rigorous and complete classification in this space.


\section{Preliminaries and Summary}
\subsection{Notations and Conventions}

All differentiable structures are assumed to be smooth (class $C^{\infty}$). Let $M$ be a pseudo-Riemannian manifold of dimension $n$ equipped with covariant metric $g$. Unless specified otherwise, it is assumed that $n \geq 2$. The contravariant metric is usually denoted by $G$ and $\bp{\cdot, \cdot}$ plays the role of the covariant and contravariant metric depending on the arguments. We denote $\gls*{spm}$ as the set of symmetric contravariant tensor fields of valence p on M. Furthermore $\F(M) = S^{0}(M)$ is the set of functions from M to $\R$ and $\ve(M) = S^1(M)$ denotes the set of vector fields over $M$. If $f \in \F(M)$ then $\nabla f \in \ve(M)$ denotes the gradient of $f$, i.e. the vector field metrically equivalent to $\d f$. Also if $x \in \ve(M)$ then we denote $x^2 := \bp{x,x}$.

Throughout this article we will be working in pseudo-Euclidean space, which is defined as follows. An $n$-dimensional vector space $V$ equipped with metric $g$ of signature\footnote{The signature is equal to the number of negative diagonal entries in a basis which diagonalizes $g$.} $\nu$ is denoted by $\eunn$ and called \emph{pseudo-Euclidean space}. We obtain Euclidean space $\E^n$ in the special case where $\nu = 0$. Also Minkowski space $M^n$ is obtained by taking $\nu = 1$. Also note that since $\eunn$ is a vector space, for any $p \in \eunn$ we identify vectors in $T_p \eunn$ with points in $\eunn$.

Given an open subset $U \subseteq \eunn$ and $\kappa \in \R \setminus \{0\}$, we denote by $U(\kappa)$ the \emph{central hyperquadric} of $\eunn$ contained in $U$, which is defined by:

\begin{equation}
U(\kappa) = \{p \in U \; | \; \bp{p,p} = \kappa^{-1} \}
\end{equation}

Usually $U = \eunn$ and this is denoted $\eunn(\kappa)$. It is well known that $\eunn(\kappa)$ is a pseudo-Riemannian manifold of dimension $n-1$ with signature $\nu + \frac{(\operatorname{sgn} \kappa -1)}{2}$ and constant curvature $\kappa$ (see \cite{barrett1983semi} or \cite[Appendix~D]{Rajaratnam2014}). We often refer to these manifolds as the spherical submanifolds of pseudo-Euclidean space (see \cite[Appendix~D]{Rajaratnam2014} for the definition of a spherical submanifold). We use the connected components of these manifolds as the standard models of the corresponding space of constant curvature. Since $\eunn(\kappa) \subset \eunn$, for any $p \in \eunn(\kappa)$ we identify vectors in $T_p \eunn(\kappa)$ with points in $\eunn$.

For the following discussion, suppose $V$ is a pseudo-Euclidean vector space. Without further specification, \emph{tensor} is short for a valence 2-tensor and the type depends on the context. Let $T$ be an endomorphism of $V$. A subspace $D$ is called \emph{$T$-invariant} if $T D \subseteq D$. $T$ is said to have a \emph{simple eigenvalue} $\lambda$, if $\lambda$ is real and has algebraic multiplicity equal to 1. $T$ is said to have \emph{simple eigenvalues} if all its eigenvalues are simple. $T$ is called self-adjoint if

\begin{equation}
\bp{T x , y} = \bp{x , T y} \quad \text{ for all } x,y \in V
\end{equation}

The above condition is equivalent to requiring $T$ to be metrically equivalent to a symmetric contravariant tensor. By an \emph{orthogonal tensor}, we mean a symmetric contravariant tensor whose uniquely determined endomorphism is diagonalizable with real eigenvalues. One can check that the eigenspaces of such an endomorphism are necessarily pair-wise orthogonal non-degenerate subspaces. Finally given a subspace $W \leq V$, the restriction of $T$ to $W$ is denoted $T|_W$.

All the above notions generalize point-wise to a pseudo-Riemannian manifold. Although only locally. For example given a self-adjoint $\binom{1}{1}$-tensor $T$ on $M$, we say it is an \emph{orthogonal tensor} if it is point-wise diagonalizable on some (non-empty) open subset of $M$ and we tacitly work on this subset. Similarly we say $T$ is not an orthogonal tensor on $M$ if $T$ is not point-wise diagonalizable on a open dense subset of $M$. Similar definitions apply to other notions such as constancy of functions on $M$.

\subsection{Self-adjoint operators in pseudo-Euclidean space}

In this section we review the metric-Jordan canonical form of a self-adjoint operator on a pseudo-Euclidean space. The details of the theory behind this canonical form is given in \cite[Appendix~C]{Rajaratnam2014}; these are solutions to exercises 18-19 in \cite[P.~260-261]{barrett1983semi}.

A \emph{Jordan block} of dimension $k$ with eigenvalue $\lambda \in \C$ is a $k \times k$ matrix denoted by $J_k(\lambda)$, and defined as:

\begin{equation}
J_k(\lambda) :=
\begin{pmatrix}
\lambda & 1 &  &  &  \\ 
& \lambda & \ddots &  & 0 \\ 
&  & \ddots & 1 &  \\ 
&  &  & \lambda & 1 \\ 
& 0 &  &  & \lambda
\end{pmatrix} 
\end{equation}

The \emph{skew-diagonal matrix} of dimension $k$ is denoted by $S_k$, and defined as:

\begin{equation}
S_k := \begin{pmatrix}
0 &  & 1 \\ 
& \iddots &  \\ 
1 &  & 0
\end{pmatrix}
\end{equation}

An ordered sequence of vectors $\beta = \{v_1,\dotsc,v_k\}$ where the matrix representation of $g$ with respect to (w.r.t) $\beta$ has the form $g|_\beta = \varepsilon S_k$, is called a \emph{skew-normal sequence} of (length k) and (sign $\varepsilon = \pm 1$). The subspace spanned by a skew-normal sequence is necessarily non-degenerate and of dimension $k$ (see \cite[lemma~8.1.1]{Rajaratnam2014}).

In order to express the metric-Jordan canonical form of a self-adjoint operator on a pseudo-Euclidean space \cite[Appendix~C]{Rajaratnam2014}, we use the signed integer $\varepsilon k \in \Z$ where $k \in \N$ and $\varepsilon = \pm 1$. Then the notation $J_{\varepsilon k}(\lambda)$ is short hand for the pair:

\begin{align}
A & = J_k(\lambda) & g = & \varepsilon S_k 
\end{align}

Furthermore, given matrices $A_1$ and $A_2$, we denote the following block diagonal matrix by $A_1 \oplus A_2$

\begin{equation}
A_1 \oplus A_2 := \begin{pmatrix}
A_1 &  0 \\ 
0 & A_2
\end{pmatrix}
\end{equation}

The (real) metric-Jordan canonical form of a self-adjoint operator is discussed in detail in \cite[Appendix~C]{Rajaratnam2014}. In this article (for convenience) we will be working with the complex version (it can be deduced from \cite[theorem~C.3.7]{Rajaratnam2014}), which is given as follows:

\begin{theorem}[Complex metric-Jordan canonical form \cite{barrett1983semi}] \label{thm:comMetJFor}
	A real operator $T$ on a pseudo-Euclidean space $\eunn$ is self-adjoint iff there exists a (possibly complex) basis $\beta$ such that
	
	\begin{equation}
	T|_{\beta} = J_{\varepsilon_1 k_1}(\lambda_1) \oplus \cdots \oplus J_{\varepsilon_l k_l}(\lambda_l)
	\end{equation}
	
	Furthermore there exists a canonical basis such that the unordered list \\ $\{J_{\varepsilon_1 k_1}(\lambda_1), \dotsc, J_{\varepsilon_l k_l}(\lambda_l) \}$ is uniquely determined by $T$ and an invariant of $T$ under the action of the orthogonal group $O(\eunn)$.
\end{theorem}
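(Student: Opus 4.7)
The plan is to establish both directions of the equivalence, then address uniqueness and the invariance under $O(\eunn)$. The easy direction is a block-by-block verification: the self-adjointness of $J_k(\lambda)$ with respect to $\varepsilon S_k$ reduces to $J_k(\lambda)^\top S_k = S_k J_k(\lambda)$, which is immediate because conjugation by $S_k$ transposes a matrix about its anti-diagonal and a single Jordan block is invariant under that operation. Orthogonality of distinct blocks is built into the direct sum.

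For the hard direction I would first complexify $V = \eunn$ to $V_\C$, extend $g$ bilinearly and $T$ complex-linearly, and decompose $V_\C = \bigoplus_\lambda V_\lambda$ into generalized eigenspaces. Self-adjointness forces $V_\lambda \perp V_\mu$ for $\lambda \neq \mu$: the self-adjoint operator $(T-\mu)^N$ is invertible on $V_\lambda$ for large $N$, yet $V_\mu$ lies in its kernel, so the pairing between them must vanish. Since $g$ is non-degenerate on $V_\C$, each $V_\lambda$ is non-degenerate, and reality of $T$ gives $V_{\bar\lambda} = \overline{V_\lambda}$. This reduces the problem to a nilpotent self-adjoint operator $N := T - \lambda I$ on a non-degenerate space $W = V_\lambda$.

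The key structural lemma is: any such $(W,g,N)$ splits as $W = W_1 \obot W_1^\perp$, where $W_1$ admits a skew-normal basis $\{v, Nv, \ldots, N^{k-1}v\}$ on which $N|_{W_1}$ acts as $J_k(0)$ with $g|_{W_1} = \varepsilon S_k$. To produce such a $v$, observe that the bilinear form $(x,y) \mapsto \langle x, N^{k-1}y \rangle$ is symmetric (by self-adjointness) and nonzero (since $N^{k-1} \neq 0$ and $g$ is non-degenerate), so polarization yields $v$ with $\langle v, N^{k-1}v \rangle \neq 0$. The Gram matrix of $\{v, Nv, \ldots, N^{k-1}v\}$ has entries depending only on $i+j$, vanishes strictly below the anti-diagonal, and has a nonzero constant along the anti-diagonal, so the sequence is linearly independent and spans a non-degenerate $N$-invariant subspace $W_1$. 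A Gram--Schmidt-style modification $v \mapsto v + \sum_{i\geq 1} a_i N^i v$ kills the entries strictly above the anti-diagonal, and a final rescaling normalizes the anti-diagonal to $\varepsilon S_k$. Since $W_1^\perp$ is $N$-invariant (by self-adjointness) and non-degenerate, induction on $\dim W$ finishes the decomposition.

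The main obstacle is the real-versus-complex bookkeeping in the rescaling step, and recovering a basis compatible with the real structure on $\eunn$. For non-real $\lambda$ the blocks on $V_\lambda$ and $V_{\bar\lambda}$ must be paired by complex conjugation; complex scaling $v \mapsto cv$ with $c^2 = \varepsilon^{-1}$ then normalizes the sign (conventionally to $+1$) consistently on both conjugate blocks, while for real $\lambda$ only real scaling is available and $\varepsilon$ becomes a genuine metric invariant. Uniqueness of the multiset $\{J_{\varepsilon_i k_i}(\lambda_i)\}$ follows: the pairs $(k_i, \lambda_i)$ are the ordinary Jordan data of $T$, and the residual signs $\varepsilon_i$ on real-eigenvalue blocks are recovered from invariants such as $\sgn \langle v, N^{k-1}v\rangle$ for a cyclic generator, which are well defined independently of the choice of $v$ by the dimension count of definite subspaces in the filtration $\ker N^j$. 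Finally, any $U \in O(\eunn)$ preserves $g$ and intertwines the Jordan structure of $T$ with that of $UTU^{-1}$, so the canonical list is an $O(\eunn)$-invariant of $T$.
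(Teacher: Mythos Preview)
The paper does not prove this theorem: it is quoted from O'Neill (exercises 18--19 of \cite{barrett1983semi}) with details deferred to \cite[Appendix~C]{Rajaratnam2014}. So there is no in-paper proof to compare against; your outline must stand on its own.

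Your existence argument is essentially the standard one and is correct. The block-by-block check of self-adjointness, the orthogonality of generalized eigenspaces via the invertibility of $(T-\mu)^N$ on $V_\lambda$, and the reduction to a nilpotent self-adjoint $N$ on a non-degenerate space are all sound. The key lemma is handled well: the symmetric form $(x,y)\mapsto\langle x,N^{k-1}y\rangle$ is indeed nonzero, polarization produces a cyclic $v$ with $\langle v,N^{k-1}v\rangle\neq 0$, the Gram matrix of the cycle is anti-triangular with nonzero anti-diagonal, and the substitution $v\mapsto v+\sum a_iN^iv$ solves a triangular system to kill the super-anti-diagonal entries. Your treatment of the complex-conjugate pairing and the freedom to normalize $\varepsilon=+1$ for non-real $\lambda$ is also correct.

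The one genuine soft spot is your uniqueness argument for the signs $\varepsilon_i$. You write that $\sgn\langle v,N^{k-1}v\rangle$ is ``well defined independently of the choice of $v$'', but this is false as stated when several blocks of the same size $k$ and eigenvalue $\lambda$ occur with opposite signs: a generic cyclic $v$ then gives no fixed sign. What is an invariant is the \emph{signature} of the induced symmetric form $\overline{B}_j(\bar x,\bar y):=\langle x,N^{j-1}y\rangle$ on the quotient $\ker N^{j}/\ker N^{j-1}$ (or equivalently on $N^{j-1}V/(\ker N\cap N^{j-1}V)^\perp$); these signatures, for $j=1,\dots,k$, recover the multiset of signs attached to blocks of each size. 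Your parenthetical ``dimension count of definite subspaces in the filtration $\ker N^j$'' is gesturing at exactly this, but you should state the invariant precisely rather than attach a sign to a single cyclic vector. With that fix the argument is complete.
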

\begin{remark}
	Since $T$ is real, each Jordan block $J_{\varepsilon k}(\lambda)$ with $\lambda \in \C \setminus \R$ comes with a complex conjugate pair $J_{\varepsilon k}(\conj{\lambda})$. For complex eigenvalues, we can additionally assume that $\varepsilon = 1$.
\end{remark}

A key fact used to derive the above canonical form and one to keep in mind is that for any self-adjoint operator $T$, any non-degenerate $T$-invariant subspace has a $T$-invariant orthogonal complement.

\subsection{Concircular tensors}


$L \in S^{p}(M)$ is called a \defn{ct} of valence $p$ if there exists $C \in S^{p-1}(M)$ (called the \emph{conformal factor}) such that

\begin{equation} \label{eq:defnCTp}
\nabla_{x}L = C \odot x
\end{equation}

\noindent for all $x \in \ve(M)$.  Concircular tensors of arbitrary valence were originally defined in \cite{Crampin2008}, where they were called special conformal Killing tensors. This is because concircular tensors are conformal Killing tensors \cite{Crampin2008}. When $p = 1$, $L$ is called a \defn{cv}. When $p = 2$, we will simply call $L$ a concircular tensor since we will mainly be working with these objects. Furthermore one should note that the CTs form a real vector space and the symmetric product of CTs is again a CT. Sometimes we denote the space of concircular tensors of valence $p$ by $\operatorname{C}^{p}(M)$ and the subspace of covariantly constant tensors by $\operatorname{C}^p_0(M)$.

An \defn{oct} (also called an OC-tensor) is a concircular tensor which is also an orthogonal tensor. OC-tensors with simple eigenfunctions were studied extensively by Benenti, see \cite{Benenti1992c,Benenti2004,Benenti2005a}; thus in recognition of his contributions we refer to this special class of OC-tensors as \emph{Benenti tensors} (also called L-tensors by Benenti).

OC-tensors have some useful properties. First, given a tensor $L$, let $N_{L}$ be the Nijenhuis tensor (torsion) of $L$ \cite{Gerdjikov2008a}. We say that $L$ is \emph{torsionless} if its Nijenhuis tensor vanishes. Then if $L$ is a concircular tensor, the following equations hold \cite[Lemma 3.1]{Benenti2005a} (cf. \cite{Crampin2003})

\begin{align}
[L, G] & = - 2 \nabla \tr{L} \odot G \quad   ([L, G]_{a b c} = -2\nabla_{(a} L_{b c)}) \\
N_{L} & = 0 
\end{align}

Conversely, by Theorem 19.3 in \cite{Benenti2005a}, an orthogonal tensor satisfying the above equations is a C-tensor. The first of the above equations tells us that a C-tensor is a conformal Killing tensor of trace-type. The second equation can be interpreted if we assume $L$ is an OC-tensor.

Suppose now that $L$ is an OC-tensor with eigenspaces $(E_{i})_{i=1}^{k}$ and corresponding eigenfunctions $\lambda^{1},...,\lambda^{k}$. Since an OC-tensor has Nijenhuis torsion zero, by Theorem~13.29 (Haantjes theorem) in \cite{Gerdjikov2008a}, the eigenspaces $(E_{i})_{i=1}^{k}$  are orthogonally integrable and each eigenfunction $\lambda^{i}$ depends only on $E_{i}$. Furthermore the trace-type condition implies that the eigenfunction corresponding to a multidimensional eigenspace of $L$ is a constant \cite{Rajaratnam2014a}.

Suppose $D$ is a multidimensional eigenspace of a non-trivial\footnote{By a non-trivial concircular tensor, we mean one which is not a multiple of the metric when $n > 1$.} OCT $L$. Denote by $D^{\perp}$ the distribution orthogonal to $D$. Then one can show that (see \cite[Theorem~6.1]{Rajaratnam2014a} for example):

\begin{itemize}
	\item There is a local product manifold $B \times F$ of Riemannian manifolds $(B, g_{B})$ and $(F, g_{F})$ such that: \\
	$\{p\} \times F$ is an integral manifold of $D$ for any $p \in B$ and \\
	$B \times \{q\}$ is an integral manifold of $D^{\perp}$ for any $q \in F$. 
	\item $B \times F$ equipped with the metric $\pi_{B}^{*} g_{B} + \rho^{2} \pi_{F}^{*} g_{F}$ for a specific function $\rho : B \rightarrow \R^{+}$ is locally isometric to $(M,g)$; where $\pi_{B}$ (resp. $\pi_{F}$) is the canonical projection onto $B$ (resp. $F$).
\end{itemize}

Such a product manifold is called a \emph{warped product} and is denoted $B \times_{\rho} F$. We also say in this case that the warped product $B \times_{\rho} F$ is \emph{adapted} to the splitting $(D^{\perp},D)$. The manifold $F$ is a \emph{spherical submanifold} and $B$ is \emph{geodesic submanifold} of $M$ (see \cite[Appendix~D]{Rajaratnam2014} and references therein). An important observation is that $L$ restricted to $B$ is an OCT; we will use this later to construct OCTs from Benenti tensors.

In general if $L$ has multiple multidimensional eigenspaces, we will have to consider more general warped products. So suppose $M = \prod_{i=0}^{k} M_{i}$ is a product manifold of pseudo-Riemannian manifolds $(M_{i}, g_{i})$ where $\dim M_{i} > 0$ for $i > 0$. Equip $M$ with the metric $g = \sum_{i=0}^{k} \rho_{i}^{2} \pi_{i}^{*} g_{i}$ where $\rho_{i} : M_0 \rightarrow \R^{+}$ are functions with $\rho_0 \equiv 1$ and $\pi_{i} : M \rightarrow M_{i}$ are the canonical projection maps. Additionally we assume either $\dim M_{0} > 0$ or $k > 1$. Then $(M,g)$ is called a \emph{warped product} and the metric $g$ is called a \emph{warped product metric}. If $\dim M_{0} = 0$ then $(M,g)$ is called a \emph{pseudo-Riemannian product}. The warped product is denoted by $M_{0} \times_{\rho_{1}} M_{1} \times \cdots \times_{\rho_{k}} M_{k} $. $M_{0}$ is called the geodesic factor of the warped product and the $M_{i}$ for $i > 0$ are called spherical factors. See \cite{Rajaratnam2014} and references therein for more on warped products.

The following class of OCTs are fundamental to the classification:

\begin{definition}[Irreducible concircular tensors]
	An OC-tensor with functionally independent eigenfunctions is referred to as an \defn{ict} or more succinctly an \emph{IC-tensor}. To be precise, an IC-tensor has real eigenfunctions $u^{1},...,u^{k}$ (counted without multiplicity) satisfying:
	\begin{equation}
	\d u^{1} \wedge \cdots \wedge \d u^{k} \neq 0
	\end{equation}
	
	Furthermore an OC-tensor which is not irreducible is called \emph{reducible}.
\end{definition}

\begin{remark}
	IC-tensors were the class of C-tensors mainly studied in \cite{Crampin2003}.
\end{remark}

Since we observed earlier that the eigenfunction associated with a multidimensional eigenspace of an OCT is constant, it follows that an ICT must have simple eigenfunctions, hence ICTs are Benenti tensors. The special property that ICTs have is that their eigenfunctions can be used as (local) coordinates for the separable web they induce \cite{Crampin2003}. We will refer to these coordinates as the \emph{canonical coordinates} induced by these tensors.

Away from singular points, locally, we can assume a reducible OC-tensor has eigenfunctions $u^{1},...,u^{k}$ which are functionally independent and the rest of which are constants. Indeed, for the remainder of this article, this is what we will mean by a reducible OC-tensor. More generally we say a CT is \emph{reducible} if it admits a non-degenerate eigenspace with constant eigenfunction. We will outline in \cref{sec:sumRes} how we will break down the classification in terms of irreducible and reducible OCTs.

\subsubsection{Properties of OCTs}

We will now list some properties of OCTs that will be used later. The following proposition gives a necessary and sufficient (n.s.s) condition to determine when two OCTs (one of which is not covariantly constant) share the same eigenspaces.

\begin{propMy} \label{prop:CtEquiv}
	Suppose $M$ is a connected manifold and $L$ is an OCT on $M$ which is not covariantly constant (around any neighborhood). Then $\tilde{L}$ is a CT sharing the same eigenspaces as $L$ iff there exists $a \in \R \setminus \{0\}$ and $b \in \R$ such that
	
	\begin{equation}
	\tilde{L}= a L + b G
	\end{equation}
\end{propMy}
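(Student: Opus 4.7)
Plan. The $(\Leftarrow)$ direction is immediate: $G$ is trivially a CT (its conformal factor vanishes), CTs form a real vector space, and for $a\neq 0$ rescaling $L$ by $a$ and shifting its eigenvalues by $b$ preserves the eigenspace decomposition.

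For $(\Rightarrow)$ I first observe that the eigenspaces $E_i$ of $L$ are non-degenerate and pairwise orthogonal (as $L$ is OC), and $\tilde L$ acts as a scalar $\tilde\lambda^i$ on each $E_i$, so $\tilde L$ is itself an OCT. Write $L=\sum_i\lambda^i P_i$ and $\tilde L=\sum_i\tilde\lambda^i P_i$ with $P_i$ the orthogonal projection onto $E_i$. The OCT properties recalled above give $\nabla\lambda^i,\nabla\tilde\lambda^i\in E_i$ for each $i$, and any multidimensional $E_i$ carries constant $\lambda^i,\tilde\lambda^i$. Since $L$ is not covariantly constant, at least one $E_i$ is one-dimensional with non-constant $\lambda^i$.

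The main obstacle is to promote this pointwise data to a single \emph{global affine} relation $\tilde\lambda^i=a\lambda^i+b$ with constants $a,b$ independent of $i$. For this I use the concircular identity $\nabla_c L_{ab}=\tfrac{1}{2}(C_a g_{bc}+C_b g_{ac})$ with $C=d\tr{L}$. Evaluating $(\nabla_Y L)(Y,Z)$ on $Y\in E_i$, $Z\in E_k$ with $i\neq k$, reducing $L(\nabla_Y Y,Z)=\lambda^k\bp{\nabla_Y Y,Z}$ and $L(Y,\nabla_Y Z)=\lambda^i\bp{Y,\nabla_Y Z}$ by the spectral decomposition, and using $Y\bp{Y,Z}=0$, I obtain
\begin{equation}
(\lambda^i-\lambda^k)\,\bp{\nabla_Y Y,Z} \;=\; \tfrac{1}{2}\,\bp{Y,Y}\,Z(\lambda^k),
\end{equation}
together with the analogous identity for $\tilde L$. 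Dividing on the open dense set where $\lambda^i\neq\lambda^k$ and $Z(\lambda^k)\neq 0$ (guaranteed by the existence of a non-constant $\lambda^k$) kills the extrinsic factor $\bp{\nabla_Y Y,Z}$ and yields
\begin{equation}
\tilde\lambda^i-\tilde\lambda^k \;=\; (\lambda^i-\lambda^k)\,F_k'(\lambda^k),
\end{equation}
where locally $\tilde\lambda^k=F_k(\lambda^k)$.

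Because $\lambda^i$ and $\lambda^k$ live on orthogonal distributions they serve as independent local coordinates; differentiating the last equation in $\lambda^k$ forces $F_k''(\lambda^k)=0$, and differentiating in $\lambda^i$ gives $d\tilde\lambda^i/d\lambda^i=F_k'(\lambda^k)$, so this slope is a single constant $a$ independent of $i$ and $k$. Thus $\tilde\lambda^i=a\lambda^i+b$ for common constants $a,b$ across all one-dimensional non-constant eigenspaces. Plugging a constant $\lambda^j$ into the same identity extends the relation to multidimensional and constant one-dimensional eigenspaces. Summing then gives $\tilde L=\sum_i(a\lambda^i+b)P_i=aL+bG$. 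Finally $a\neq 0$: if $a=0$ then $\tilde L=bG$ would have $TM$ as its single eigenspace, contradicting the hypothesis that it shares the decomposition of $L$, which necessarily has at least two distinct eigenspaces (a single-eigenspace CT must be a constant multiple of $G$, since $fG$ fails to be concircular for non-constant $f$ when $n\geq 2$, and this would make $L$ covariantly constant).
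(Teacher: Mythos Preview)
The paper itself defers this proof (``will appear elsewhere''), so there is no argument to compare against. Your approach---deriving the identity
\[
(\lambda^i-\lambda^k)\,\bp{\nabla_Y Y,Z} \;=\; \tfrac{1}{2}\,\bp{Y,Y}\,Z(\lambda^k)
\]
directly from the concircular condition and then eliminating the extrinsic factor between the two tensors---is correct and is the natural route; the deduction that each $F_k$ is affine with a common slope is sound.

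One step deserves an extra sentence of justification: the passage from the \emph{local} relation $\tilde L = aL + bG$ (valid on the open set where some $d\lambda^k\neq 0$) to the global one on connected $M$. The dense open set on which your argument runs need not be connected, so ``this slope is a single constant'' does not follow immediately. The cleanest fix within the paper's framework is to note that $\tilde L - aL - bG$ is a CT vanishing on a nonempty open subset, and then invoke the finite-dimensionality of $\operatorname{C}^2(M)$ (\cref{thm:CTsDim}), which forces it to vanish identically on connected $M$. Alternatively, observe that $a=\dfrac{\tilde\lambda^i-\tilde\lambda^k}{\lambda^i-\lambda^k}$ extends smoothly to all of $M$ (the denominator is nowhere zero, since $E_i\neq E_k$) with $da=0$ on a dense set, hence $a$ is constant by continuity; the same then follows for $b$.
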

\begin{proof}
	The proof of this, which is a straightforward calculation, will appear else where.
\end{proof}

The above proposition no longer holds if we relax the assumption that $L$ is not covariantly constant. One can easily see why by considering any non-trivial covariantly constant symmetric tensor in Euclidean space. We now define an important notion for classifying KEM webs.

\begin{definition}[Geometric Equivalence of CTs]
	We say two CTs $L$ and $\tilde{L}$ are \emph{geometrically equivalent} if there exists $a \in \R \setminus \{0\}$, $b \in \R$ and $T \in I(M)$ such that
	\begin{equation}
	\tilde{L}= a T_*L + b G
	\end{equation}
\end{definition}

An immediate corollary of the above proposition is the following:

\begin{corollary}[Geometric Equivalence of OCTs]
	Suppose $M$ is a connected manifold. Suppose $L$ and $\tilde{L}$ are OCTs with respective eigenspaces $\Ei = (E_1,\dotsc,E_k)$ and $\tilde{\Ei} = (\tilde{E}_1,\dotsc,\tilde{E}_k)$. Suppose further that $\Ei$ is not a Riemannian product net \cite{Rajaratnam2014a}, equivalently one of the CTs is not covariantly constant. Then $\Ei$ and $\tilde{\Ei}$ are related by $T \in I(M)$, i.e. $\tilde{E}_i = T_* E_{\sigma(i)}$ for each $i$ (where $\sigma$ is a permutation of $\{ 1,\dotsc,k \}$) iff $L$ and $\tilde{L}$ are geometrically equivalent.
\end{corollary}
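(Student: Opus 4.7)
The plan is to reduce the corollary to \cref{prop:CtEquiv} by pulling back $\tilde{L}$ along the isometry $T$. Two standing observations are used throughout: for any isometry $T \in I(M)$, $T_*G = G$, and the defining equation \cref{eq:defnCTp} of a CT is invariant under $T_*$, so pushing forward (or pulling back) a CT along an isometry produces another CT with its eigenspaces transported in the obvious way.

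For the easy direction ($\Leftarrow$), assume $\tilde{L} = aT_*L + bG$ with $a \neq 0$. The eigenspaces of $T_*L$ are exactly $T_*E_i$; scaling by $a \neq 0$ and adding $bG$ merely rescales and shifts eigenvalues without altering eigenspaces. Hence $\tilde{E}_i = T_* E_{\sigma(i)}$ for some permutation $\sigma$ of $\{1,\dotsc,k\}$, which is the required relation.

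For the nontrivial direction ($\Rightarrow$), assume $\tilde{E}_i = T_* E_{\sigma(i)}$ for some $T \in I(M)$ and permutation $\sigma$, and define $L' := T^{-1}_*\tilde{L}$. By the observations above, $L'$ is an OCT whose eigenspaces are $T^{-1}_*\tilde{E}_i = E_{\sigma(i)}$; after relabeling, $L'$ and $L$ share exactly the same eigenspaces. Now invoke \cref{prop:CtEquiv}. Here one must use the hypothesis that (equivalently) at least one of $L, \tilde{L}$ is not covariantly constant; the condition is symmetric under the isometry $T$, so it holds for $L$ iff it holds for $L'$. If $L$ is not covariantly constant, \cref{prop:CtEquiv} applied with $L$ as the distinguished OCT yields $L' = aL + bG$ for some $a \in \R \setminus \{0\}$, $b \in \R$; pushing forward by $T$ gives $\tilde{L} = aT_*L + bG$. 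If instead $L$ is covariantly constant, then $L'$ is not (since $L' = T^{-1}_*\tilde{L}$ and $\tilde{L}$ is not covariantly constant), so \cref{prop:CtEquiv} applies with $L'$ in the distinguished role, giving $L = a'L' + b'G$ with $a' \neq 0$; solving for $L'$ and pushing forward again produces $\tilde{L} = \tfrac{1}{a'}T_*L - \tfrac{b'}{a'}G$, which is of the desired form.

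The only subtlety is the bookkeeping in the case split on which of $L,\tilde{L}$ is covariantly constant, together with the routine check that pushforward by an isometry sends CTs to CTs and fixes $G$. No real obstacle remains: once $L'$ is introduced, \cref{prop:CtEquiv} does all the geometric work.
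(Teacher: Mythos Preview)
Your proof is correct and is precisely the argument the paper has in mind: the corollary is stated as an ``immediate consequence'' of \cref{prop:CtEquiv}, and you carry out exactly that reduction by pulling $\tilde{L}$ back along $T$ and applying the proposition. The only nitpick is the sentence ``the condition is symmetric under the isometry $T$, so it holds for $L$ iff it holds for $L'$'': what the isometry gives you is that $\tilde{L}$ is not covariantly constant iff $L'=T^{-1}_*\tilde{L}$ is not, which is what your subsequent case split actually uses---the phrasing, however, does not affect correctness.
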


The above corollary implies that the classification of isometrically inequivalent KEM webs can be reduced to the classification of geometrically inequivalent OCTs. For the proof of the following theorem, see \cite{Thompson2005,Crampin2007}.

\begin{theorem}[The Vector Space of Concircular tensors  \cite{Thompson2005}] \label{thm:CTsDim}
	If $n > 1$, then the C-tensors of valence $r \leq 2$ form a finite dimensional real vector space with maximal dimension equal to the dimension of the space of constant symmetric $r$-tensors in $\R^{n+1}$. Furthermore the maximal dimension is achieved if and only if the space has constant curvature.
\end{theorem}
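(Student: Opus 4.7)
The plan is to prolong the defining equation $\nabla_x L = C \odot x$ into a closed linear first-order system, so that CTs become parallel sections of a linear connection on a finite-rank auxiliary bundle. Classical ODE theory then bounds the dimension of the solution space by the rank of this bundle, and both saturation of the bound and the converse follow by analyzing when the prolonged connection is flat.

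For $r=1$, substituting $\nabla_x v = f x$ twice into the Ricci commutator $\nabla_y\nabla_x v - \nabla_x\nabla_y v - \nabla_{[y,x]}v$ yields $R(y,x)v = y(f)x - x(f)y$, which expresses $\d f$ linearly in $v$ and the curvature. This closes the system on the pair $(v,f)$ as a connection on $TM \oplus \R$ of rank $n+1 = \dim S^1(\R^{n+1})$. For $r=2$, tracing the defining equation gives $C = \nabla \tr{L}$; one round of prolongation using the Ricci identity then determines $\nabla C$ modulo a single residual scalar $\alpha$ (essentially $\Delta \tr{L}/n$), and one further round closes the system on the triple $(L,C,\alpha)$. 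The resulting bundle has rank $\binom{n+1}{2} + n + 1 = \binom{n+2}{2} = \dim S^2(\R^{n+1})$, yielding the upper bound $\dim \operatorname{C}^r(M) \leq \dim S^r(\R^{n+1})$.

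Saturation in constant curvature I would establish by exhibiting explicit models. In Euclidean space every valence-2 CT has the form $L = A + x \odot v + \alpha\, x\otimes x$ with constant $A$, $v$, $\alpha$, accounting for exactly $\binom{n+2}{2}$ parameters. For a hyperquadric $\eunn(\kappa) \subset \eunn$ the restriction of any constant symmetric $r$-tensor on the ambient pseudo-Euclidean space is a CT on the hyperquadric, and one checks directly that the restriction map is a linear isomorphism onto $\operatorname{C}^r(\eunn(\kappa))$. Conversely, if the upper bound is saturated on $(M,g)$ then every fibre value of the prolonged bundle is the value of a global parallel section, which forces the connection to be flat; computing its curvature, contracting, and using the second Bianchi identity then produces $R_{abcd} = \kappa(g_{ac}g_{bd} - g_{ad}g_{bc})$ with $\kappa$ constant, so $(M,g)$ has constant curvature.

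The hard part will be the $r=2$ prolongation itself: correctly handling the symmetrizations produced by the Ricci commutator and isolating the single residual scalar $\alpha$ needed to close the system is the technical core of the argument. Once this is in place, the upper bound, the explicit hyperquadric construction, and the converse direction all fit into the same flat-connection framework, so the remaining work is essentially bookkeeping.
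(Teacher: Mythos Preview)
The paper does not actually prove this theorem; the sentence immediately preceding the statement reads ``For the proof of the following theorem, see \cite{Thompson2005,Crampin2007},'' and no argument is given in the body. Your prolongation approach---closing the system $(v,f)$ on $TM\oplus\R$ for $r=1$ and $(L,C,\alpha)$ on $S^2(TM)\oplus TM\oplus\R$ for $r=2$, then reading off the rank as $\dim S^r(\R^{n+1})$ and characterizing saturation by flatness of the prolonged connection---is exactly the method of those cited references, so your proposal is consistent with what the paper defers to. The outline is correct; the one place to be careful is the $r=2$ closure step, where the Ricci-identity computation produces several curvature terms that must be shown to depend only on $(L,C)$ and not introduce new unknowns beyond the single scalar $\alpha$, but this is precisely the ``hard part'' you already flagged.
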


The above theorem implies the following:
\begin{corollary}[Concircular tensors in spaces of constant curvature] \label{cor:CTpSCC}
	Suppose $M^n$ is a space of constant curvature with $n > 1$ and let $r \leq 2$. Let $\beta = \{v_1,\dotsc,v_{n+1}\}$ be a basis for the space of concircular vectors, then a given C-tensor of valence $r$ can be written uniquely as a linear combination of $r$-fold symmetric products of the vectors in $\beta$.
\end{corollary}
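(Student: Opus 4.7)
The plan is to use \cref{thm:CTsDim} to match dimensions on both sides of the natural symmetric-product map, and then to show this map is a bijection by showing it is surjective.

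First, I would apply \cref{thm:CTsDim} with $r=1$ to obtain $\dim \operatorname{C}^1(M) = n+1$, confirming that $\beta$ is a basis of $\operatorname{C}^1(M)$, and with $r \leq 2$ to obtain $\dim \operatorname{C}^r(M) = \binom{n+r}{r}$. Since the symmetric product of two CTs is itself a CT (as noted in the excerpt), the symmetric product induces a well-defined linear map
\[
\Phi \colon S^r\!\bigl(\operatorname{C}^1(M)\bigr) \to \operatorname{C}^r(M),
\]
where $S^r$ here denotes the $r$-fold symmetric power of a vector space. Since both sides have dimension $\binom{n+r}{r}$, $\Phi$ is a bijection if and only if it is surjective.

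To establish surjectivity I would use the standard realization of $M$ inside an ambient $(n+1)$-dimensional pseudo-Euclidean space---either as a hyperquadric (non-flat case) or as an affine hyperplane (flat case). In this realization, the CVs on $M$ arise from constant vector fields on the ambient space, and the ambient-space construction underlying the proof of \cref{thm:CTsDim} in~\cite{Thompson2005,Crampin2007} exhibits each element of $\operatorname{C}^r(M)$ as arising from a constant symmetric $r$-tensor on the ambient space. Since constant symmetric $r$-tensors on an $(n+1)$-dimensional space are spanned by $r$-fold symmetric products of constant vectors, and the passage from ambient tensors to CTs on $M$ commutes with $\odot$, every element of $\operatorname{C}^r(M)$ is a linear combination of $r$-fold symmetric products of CVs; equivalently, $\Phi$ is surjective.

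Once $\Phi$ is known to be a bijection, the canonical basis of $S^r(\operatorname{C}^1(M))$---namely the $\binom{n+r}{r}$ products $v_{i_1} \odot \cdots \odot v_{i_r}$ with $1 \leq i_1 \leq \cdots \leq i_r \leq n+1$---maps under $\Phi$ to a basis of $\operatorname{C}^r(M)$, giving the claimed unique decomposition. The principal obstacle is surjectivity of $\Phi$: \cref{thm:CTsDim} by itself supplies only the correct dimension, so in order to exhibit an actual basis consisting of symmetric products one must pass through the ambient pseudo-Euclidean description of CTs underlying the proof of \cref{thm:CTsDim}.
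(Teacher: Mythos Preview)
Your proposal is correct. The paper does not supply a separate proof of this corollary; it is stated as a direct consequence of \cref{thm:CTsDim}. Your argument is exactly how one would unpack that implication, and you are right that the dimension count alone does not close the argument: one still needs injectivity or surjectivity of the symmetric-product map $\Phi$, which ultimately rests on the ambient pseudo-Euclidean description of CTs underlying \cite{Thompson2005,Crampin2007}. This is the same mechanism the paper relies on implicitly and then makes explicit when applying the corollary in \cref{prop:CtFormEunn} and \cref{prop:CtFormEunnKap}, where restriction of constant tensors from the ambient space is carried out in detail.
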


\subsection{Summary of Results} \label{sec:sumRes}

We first give an overview of the classification. The classification breaks down into three parts: obtaining canonical forms for C-tensors modulo the action of the isometry group (\cref{sec:CtEunn,sec:CtEunnKap}), classifying the webs described by IC-tensors (\cref{sec:cTIrred}) and obtaining warped product decompositions adapted to reducible OCTs (\cref{sec:CtClassRed}).

The webs formed by IC-tensors are the basic building blocks of all separable webs. \Cref{sec:cTIrred} is devoted to obtaining information about these webs from the corresponding IC-tensors. In that section we obtain the transformation from the canonical coordinates $(u^i)$ induced by these tensors to Cartesian coordinates $(x^i)$ and we obtain the metric in canonical coordinates. This is done by first calculating the characteristic polynomial of all CTs in spaces of constant curvature in a Cartesian coordinate system. In examples, we will also show how to obtain the coordinate domains for coordinate systems induced by IC-tensors.

To obtain all orthogonal separable coordinates in spaces of constant curvature, we also have to consider reducible OCTs.  Let $L$ be a non-trivial reducible OCT and suppose $\psi : N_{0} \times_{\rho_{1}} N_{1} \times \cdots \times_{\rho_{k}} N_{k} \rightarrow M$ is a local warped product decomposition of $M$ adapted to the eigenspaces of $L$ such that $L_0 := L|_{N_0}$ is an ICT\footnote{If $L$ has only constant eigenfunctions, we can choose $N_0$ to be a point.}. Let $(x_0) = (u^1,\dotsc,u^{n_0})$ be the canonical coordinates induced by $L_0$ on some open subset of $N_0$. For $i > 0$ suppose $(x_i) = (x_i^1,\dotsc,x_i^{n_i})$ are separable coordinates for $N_i$, then it was shown in \cite[proposition~6.8]{Rajaratnam2014a} that the coordinates $\psi(x_{0},x_{1},\dotsc,x_{k})$ are separable coordinates for $M$. To construct the separable coordinates $(x_i)$ on $N_i$ where $i > 0$, one would apply this procedure again on $N_i$ equipped with the induced metric, which is again a space of constant curvature \cite[lemma~6.10]{Rajaratnam2014a}. It was shown in \cite[theorem~1.3]{Rajaratnam2014d} that all orthogonal separable coordinates for spaces of constant curvature arise this way. Hence a remaining problem is to develop a method to construct warped product decompositions which decompose a given reducible OCT as above; this is done in \cref{sec:CtClassRed}. Together with the results of \cref{sec:cTIrred}, this gives a recursive procedure to construct the orthogonal separable coordinates of these spaces.

In \cref{sec:appNEx} we will show how to apply the theory developed in this article to solve motivating problems. First, in \cref{sec:enumIneqCoord} we will show how to enumerate the isometrically inequivalent separable coordinates in a given space of constant curvature. Then in \cref{sec:constSepCoord} we will show how to construct separable coordinate systems by way of examples. Finally, in \cref{sec:BEKMsep} we will show how to explicitly execute the BEKM separation algorithm in general. We also give the details of executing the BEKM separation algorithm for the Calogero-Moser system.

The classification generally breaks down into one for pseudo-Euclidean space $\eunn$ then one for its spherical submanifolds $\eunn(\kappa)$ (which usually reduces to a similar problem in $\eunn$). We give more details in the following subsections.

\subsubsection{pseudo-Euclidean space}

First we define the \emph{dilatational vector field}, $r$, to be the vector field given in Cartesian coordinates $(x^i)$ by $r = \sum\limits_{i} x^i \partial_i$. The general concircular contravariant tensor in $\eunn$ is given as follows (see \cref{prop:CtFormEunn}):

\begin{equation} \label{eq:CTGenEunn}
L = A + 2 w \odot r + m r \odot r
\end{equation}

\noindent where $A \in C^2_0(\eunn)$, $w \in C^1_0(\eunn)$ and $m \in C^0_0(\eunn)$. For $k \geq 0$, define constants $\omega_k$ as follows:
\begin{equation} \label{eq:omegI}
\omega_k = 
\begin{cases}
m & \text{ if } k = 0 \\
\bp{w,A^{k-1} w} & \text{ else }
\end{cases}
\end{equation}

The above constants aren't necessarily invariant under isometries. But invariants can be defined from them.

\begin{definition} \label{def:CtEunnInd}
	Suppose $L$ is a CT in $\eunn$ as defined above. Then we define the \emph{index} of $L$ to be the first integer $k \geq 0$ for which $\omega_k \neq 0$; $L$ is said to be \emph{non-degenerate} if such an integer exists. Furthermore if $L$ is non-degenerate, it has an associated sign (characteristic):
	
	\begin{equation}
	\varepsilon = 
	\begin{cases}
	1  & 1 \text{ if $k$ is even} \\
	\sgn \omega_k & \text{ if $k$ is odd}
	\end{cases}
	\end{equation}
\end{definition}

The following theorem which is proven in \cref{sec:CtEunn} summarizes our results on the canonical forms of concircular tensors; it classifies C-tensors into five disjoint classes.
\begin{theorem}[Canonical forms for CTs in $\eunn$] \label{thm:conTenCanForm}
	Let $\tilde{L} = \tilde{A} + m r \otimes r^{\flat} + w \otimes r^{\flat} + r \otimes w^{\flat}$ be a CT in $\eunn$. Let $k$ be the index and $\varepsilon$ be the sign of $\tilde{L}$ if $\tilde{L}$ is non-degenerate. These quantities are geometric invariants of $\tilde{L}$. Furthermore, after a possible change of origin and after changing to a geometrically equivalent CT, $L = a \tilde{L}$ for some $a \in \R \setminus \{0\}$, $\tilde{L}$ admits precisely one of the following canonical forms. 
	
	\begin{description}
		\item[Central:] If $k=0$
		\begin{equation}
		L = A + r \otimes r^{\flat}
		\end{equation}
		\item[non-null Axial:] If $k=1$, i.e. $m = 0$, and $\bp{w,w} \neq 0$:
		
		There exists a vector $e_1 \in \spa{w}$ such that $L$ has the following form:
		\begin{align}
		L & = A + e_{1} \otimes r^{\flat}  + r \otimes e_{1}^{\flat} & A e_{1} = 0, \quad \bp{e_{1},e_{1}} = \varepsilon
		\end{align}
		\item[null Axial:] If $k \geq 2$, hence $m = 0$ and $\bp{w,w} = 0$:
		
		There exists a skew-normal sequence $\beta = \{e_{1},...,e_{k}\}$ with $\bp{e_{1},e_{k}} = \varepsilon$ where $e_1 \in \spa{w}$ which is $A$-invariant such that $L$ has the following form:
		
		\begin{align}
		L & = A + e_1 \otimes r^{\flat}  + r \otimes e_1^{\flat} \\
		A|_{\beta} & = J_k(0)^T =
		\begin{pmatrix}
		0 &  &  &  &  \\ 
		1 & 0 &  &  &  \\ 
		& 1 & \ddots &  &  \\ 
		&  & \ddots & 0 &  \\ 
		&  &  & 1 & 0
		\end{pmatrix}
		\end{align}
		\item[Cartesian:] If $k$ doesn't exist, $m = 0$ and $w = 0$
		\begin{equation}
		L = \tilde{A}
		\end{equation}
		\item[degenerate null Axial:] If $k$ doesn't exist and $w \neq 0$
	\end{description}
\end{theorem}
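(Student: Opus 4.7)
The plan is to apply the allowed equivalences—change of origin, scaling $L \mapsto aL$, addition of $bG$, and the action of the orthogonal group—to the general form \eqref{eq:CTGenEunn} and simplify case by case. Since $G \in C_{0}^{2}(\eunn)$, adding $bG$ only shifts $A$; a translation $r \mapsto r - p$ sends the triple $(A, w, m)$ to $(A + 2w\odot p + m\,p\odot p,\; w + mp,\; m)$; scaling sends $(A,w,m) \mapsto (aA, aw, am)$; and orthogonal transformations act naturally. The first step is to record these transformation rules.

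The second step is to verify that $k$ and $\varepsilon$ are genuine geometric invariants. The constant $m = \omega_{0}$ is translation-invariant, which handles $k = 0$. When $m = 0$, $w$ is also translation-invariant, and an induction on $j$ shows that each $\omega_{j}$ for $j \leq k$ is untouched by the shift $A \mapsto A + 2w \odot p$: the extra terms created by the shift always carry a factor $\bp{w, A^{i}w} = \omega_{i+1}$ with $i + 1 < k$, and these vanish. Under scaling, $\omega_{j} \mapsto a^{j+1}\omega_{j}$, so for odd $k$ the exponent $k+1$ is even and $\sgn \omega_{k}$ is preserved, while for even $k$ we may always renormalize to $\omega_{k} > 0$, consistent with the given definition of $\varepsilon$.

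The case reduction then proceeds quickly for all but one case. If $k = 0$, translating by $p = -w/m$ removes $w$ and scaling by $1/m$ normalizes $m = 1$, yielding the central form. If $k = 1$, scale so $\bp{w,w} = \varepsilon$, set $e_{1} := w$, and solve $A e_{1} + e_{1} \bp{p, e_{1}} + \varepsilon p = 0$ for the translation $p$: contracting with $e_{1}$ gives $\bp{p, e_{1}} = -\tfrac{\varepsilon}{2}\bp{A e_{1}, e_{1}}$, which in turn yields $p = -\varepsilon A e_{1} + \tfrac{1}{2}\bp{A e_{1}, e_{1}}\,e_{1}$. The Cartesian case ($k$ absent, $w = 0$) is immediate, and the degenerate null axial case requires no further reduction.

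The main obstacle is the $k \geq 2$ case. Here $w$ is null and one must produce a skew-normal sequence $\beta = \{e_{1}, \ldots, e_{k}\}$ with $e_{1} \in \spa{w}$ whose span is $A$-invariant with $A|_{\beta} = J_{k}(0)^{T}$. The starting observation is that the Krylov sequence $\{w, Aw, \ldots, A^{k-1}w\}$ has Gram matrix $G_{ij} = \omega_{i+j-1}$, which vanishes strictly above the anti-diagonal and equals $\omega_{k} \neq 0$ on it, so the span is $k$-dimensional and non-degenerate. A Gram--Schmidt-type procedure, processing the below-anti-diagonal entries from the corner inward by triangular replacements that are polynomial in $A$ applied to $w$, simultaneously eliminates those entries and preserves the $A$-cyclic structure $Ae_{i} = e_{i+1}$; a final rescaling normalizes the anti-diagonal to $\varepsilon$. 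To close off the sequence with $Ae_{k} = 0$, use the translation freedom $A \mapsto A + 2w \odot p$ to project $A^{k}w$ out of $\spa{\beta}$; existence of such a $p$ can be established either by direct solution within the non-degenerate Krylov span, or more conceptually by extracting from the metric-Jordan canonical form \cref{thm:comMetJFor} the nilpotent block $J_{k}(0)$ of $A$ with metric $\varepsilon S_{k}$ corresponding to the maximal cyclic component of $w$.
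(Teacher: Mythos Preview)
Your approach for $k\le 1$ is correct and your translation formulas coincide with the paper's \cref{eq:transConFormI,eq:transConFormII}. The invariance argument for the $\omega_j$ is also fine. However, the paper takes a genuinely different route: it lifts to the parabolic model $\punn\subset\E^{n+2}_{\nu+1}$ (\cref{sec:parabMod}), where every isometry of $\eunn$, including translations, becomes a \emph{linear} orthogonal transformation fixing a fixed null vector $a$. In that model the CT is encoded by a single constant tensor $A$, the invariants become $\omega_i=\bp{a,A^{i+1}a}$, and the canonical form is obtained by choosing $b\in\punn$ so that $a,A_b a,\dotsc,A_b^{k}a$ is skew-normal via the explicit recursion \cref{eq:parabNonDeSeqI,eq:parabNonDeSeqII}. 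Your direct approach in $\eunn$ is more elementary but forfeits this unification.

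There is a genuine gap in your $k\ge 2$ argument. You cannot separate ``Gram--Schmidt to skew-normal'' from ``translate to close off $Ae_k=0$'' as two independent steps. If you insist on the cyclic relation $Ae_i=e_{i+1}$, then $e_i=A^{i-1}e_1$ and the Gram matrix is Hankel, $\bp{e_i,e_j}=\bp{e_1,A^{i+j-2}e_1}$; replacing $e_1$ by $q(A)w$ does not simply kill below-anti-diagonal entries, and in general it pushes $e_k$ outside the original Krylov span since that span is not $A$-invariant. What actually works is a \emph{single} condition: find a translation $p$ so that, with $A'=A+w\otimes p^\flat+p\otimes w^\flat$, one has $(A')^{k}w=0$. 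Then $\bp{(A')^{i-1}w,(A')^{j-1}w}=\omega_{i+j-1}$ automatically equals $0$ for $i+j-1\neq k$ and $\omega_k$ for $i+j-1=k$, giving the skew-normal structure for free. But $(A')^{k}w=0$ is a nonlinear system in $p$ (already for $k=2$ it is implicit, and $p\notin\spa{w,Aw}$ in general), and this is precisely what the paper's recursion in \cref{prop:ParbconForm} establishes in the ambient space. Your fallback to the metric-Jordan form of $A$ also fails: before translation, $A$ need not have any nilpotent block containing $w$ (e.g.\ $A$ can be diagonalizable with $w$ not an eigenvector), so \cref{thm:comMetJFor} gives nothing to extract.
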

\begin{remark}
	The degenerate null axial concircular tensors will be of no concern to us. In Euclidean space they don't occur and it will be proven later (see \cref{sec:degenCase}) that in Minkowski space that they are never orthogonal concircular tensors.
\end{remark}
\begin{remark}
	The precise classification for Euclidean and Minkowski space can be directly inferred from the above theorem by imposing the signature of the metric. The classification for Euclidean space is clear. In Minkowski space, $k \leq 3$ and when $k = 3$ the sign of the axial CT must be positive (see \cite[lemma~8.1.1]{Rajaratnam2014}).
\end{remark}
\begin{remark}
	When $k = 0$ and $1$ respectively, the translation vector $v$ for the isometry $T : r \rightarrow r + v$ which sends $\tilde{L}$ to canonical form is given as follows:
	
	
	\begin{align}
	v & = \frac{w}{\omega_0} & \text{if } k & = 0 \label{eq:transConFormI} \\
	v & = \frac{1}{\omega_1}(A w -\frac{1}{2}\frac{\omega_2}{\omega_1} w) & \text{if } k & = 1 \label{eq:transConFormII}
	\end{align}
	
	For the general case, see \cref{par:conFormTrans}.
\end{remark}

One can easily deduce that in Euclidean or Minkowski space, any covariantly non-constant OCT is non-degenerate. Hence non-degenerate CTs are the main interest of this article.

Some notation will be useful. The matrix $A$ will be called the \emph{parameter matrix} and the vector $w$ the \emph{axial vector} of the CT. When $k \geq 1$ in the above theorem, we will refer to the CT as an \emph{axial concircular tensor}.

Suppose $L$ is a non-degenerate CT in the canonical form given by \cref{thm:conTenCanForm}. We denote by $D$ the $A$-invariant subspace spanned by $w, Aw, \dotsc$. This subspace is either zero (if $w = 0$) or metrically non-degenerate. We will let $A_c := A|_{D^\perp}$, $A_d := A|_{D}$ and the central CT in $D^\perp$ with parameter matrix $A_c$ by $L_c$. Furthermore we define the following functions:

\begin{align}
p(z) & := \det (z I - L) \\
B(z) & := \det (z I - A_c) 
\end{align}

\noindent where the second determinant is evaluated in $D^\perp$.

The canonical forms for non-degenerate CTs can be enumerated by choosing a non-degenerate CT from \cref{thm:conTenCanForm} then choosing a metric-Jordan canonical form for the pair $(A|_{D^\perp}, g|_{D^\perp})$. The proofs of these canonical forms, which are given in \cref{sec:CtEunn}, can be omitted on first reading. Once these canonical forms are obtained, in \cref{sec:cTIrredCC,sec:cTIrredAC} we will calculate the characteristic polynomial for non-degenerate CTs in $\eunn$. Using this, for ICTs we can calculate the transformation from their canonical coordinates to Cartesian coordinates and the metric in canonical coordinates. Then in \cref{sec:CtClassRedEunn} we will show how to obtain the warped product decompositions induced by reducible OCTs.

\subsubsection{Spherical submanifolds of pseudo-Euclidean space}

In this section we assume $n \geq 3$. Denote the orthogonal projection $R$ onto the spherical distribution $r^\perp$ as follows:

\begin{align}
R & = I - \frac{r \otimes r^{\flat}}{r^{2}} & R^* & = I - \frac{r^{\flat} \otimes r}{r^{2}}
\end{align}

Then the general CT in $\eunn(\kappa)$ is obtained by restricting $A \in C^2_0(\eunn)$ to $\eunn(\kappa)$. It is given as follows in $\eunn$ in contravariant form (see \cref{prop:CtFormEunnKap}):

\begin{equation} \label{eq:CTGenEunnKap}
L = R A R^* =  A + \kappa^2 \bp{r, A r} r \odot r - 2 \kappa (A r \odot r) \qquad L^{ij} = R \indices{^i_l} A^{lk} R\indices{^j_k}
\end{equation}

The matrix $A$ is called the \emph{parameter matrix} of the CT. We denote by $L_c$ the central CT in $\eunn$ with parameter matrix $A$. Note that $L = R L_c R^*$. We will see later that several questions concerning $L$ can be related to similar ones concerning $L_c$.

The canonical forms for these CTs can be enumerated by choosing a metric-Jordan canonical form for the pair $(A, g)$. The proofs of these canonical forms, which are given in \cref{sec:CtEunnKap}, can be omitted on first reading. Once these canonical forms are obtained, in \cref{sec:cTIrredSphC} we will calculate the characteristic polynomial for CTs in $\eunn(\kappa)$ by making use of the solution to the similar problem in $\eunn$. Using this, for ICTs we can calculate the transformation from their canonical coordinates to Cartesian coordinates and the metric in canonical coordinates. Then in \cref{sec:CtClassRedEunnKap} we will show how to obtain the warped product decompositions induced by reducible OCTs by making use of the solution to the similar problem in $\eunn$.

\section{Canonical forms for Concircular tensors in pseudo-Euclidean space} \label{sec:CtEunn}

\subsection{Standard Model of pseudo-Euclidean space}

In this section we calculate the CVs and CTs for $\eunn$ in its standard vector space model. These results are well known \cite{Crampin2007,Benenti2005a}, but we include it here for completeness.

First we define the \emph{dilatational vector field}, $r$, to be the vector field satisfying for any $p \in \eunn$, $r_p = p \in T_p \eunn$. In Cartesian coordinates $(x^i)$, we have

\begin{equation}
r = \sum_{i} x^{i} \partial_{i}
\end{equation}

In the following proposition we calculate the general CV in $\eunn$ as done originally in \cite{Crampin2007}.

\begin{proposition}[Concircular vectors in $\eunn$ \cite{Crampin2007}] \label{prop:CvEunn}
	A vector $v \in \ve(\eunn)$ is a CV in $\eunn$ where $n > 1$ iff there exists $a \in C^0_0(\eunn)$ and $b \in C^1_0(\eunn)$ such that
	\begin{equation}
	v = a r + b
	\end{equation}
	\noindent where $r$ is the dilatational vector field.
\end{proposition}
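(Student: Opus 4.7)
The plan is to prove both directions by working in Cartesian coordinates, where the Levi-Civita connection reduces to the ordinary directional derivative and the dilatational field satisfies $\nabla_x r = x$.

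For the \emph{if} direction, I would simply compute: given $v = ar + b$ with $a \in C^0_0(\eunn)$ (so $a$ is a real constant) and $b \in C^1_0(\eunn)$ (so $\nabla b = 0$), one has
\begin{equation*}
\nabla_x v = (xa)\, r + a\,\nabla_x r + \nabla_x b = a\, x,
\end{equation*}
which matches \cref{eq:defnCTp} for $p=1$ with conformal factor $a$, so $v$ is a CV.

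For the \emph{only if} direction, assume $\nabla_x v = c\, x$ for some function $c \in \F(\eunn)$. In Cartesian coordinates the Christoffel symbols vanish, so this reads $\partial_j v^i = c\, \delta^i_j$. Differentiating once more gives $\partial_k \partial_j v^i = (\partial_k c)\, \delta^i_j$. The symmetry of second partial derivatives forces
\begin{equation*}
(\partial_k c)\, \delta^i_j = (\partial_j c)\, \delta^i_k
\end{equation*}
for all $i,j,k$. The main (and only nontrivial) step is to extract from this that $c$ is constant: for any index $k$, since $n > 1$ one may choose $j \neq k$ and set $i = j$, yielding $\partial_k c = 0$. Thus $c$ is a constant $a \in \R$.

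Having done this, set $b := v - a r$. Then $\nabla_x b = \nabla_x v - a\,\nabla_x r = a x - a x = 0$, so $b \in C^1_0(\eunn)$, and we obtain the decomposition $v = a r + b$ as required.

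The only step requiring any thought is the extraction of constancy of $c$ from the integrability condition, and this is precisely where the hypothesis $n > 1$ enters (otherwise the Kronecker-delta identity is vacuous and $c$ can be an arbitrary function of $x^1$). Everything else is a direct verification using the flatness of $\eunn$ and the identity $\nabla r = \operatorname{id}$.
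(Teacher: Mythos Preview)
Your proof is correct and follows essentially the same approach as the paper: both reduce to Cartesian coordinates, differentiate the equation $\partial_j v^i = c\,\delta^i_j$, and use the symmetry of second partials with the choice $i = j \neq k$ to force $c$ constant. Your write-up is in fact slightly more complete, since you explicitly verify the ``if'' direction and point out precisely where the hypothesis $n > 1$ is used.
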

\begin{proof}
	In $\eunn$ with canonical Cartesian coordinates $(x^{i})$, \cref{eq:defnCTp} becomes:
	
	\begin{equation}
	\pderiv{v^{i}}{x^{j}} = \phi ?\delta^{i}_{j}?
	\end{equation}
	
	This equation can be easily solved by observing the following:
	
	\begin{align}
	\pderiv{\phi}{x^{k}} ?\delta^{i}_{j}? = \spderiv{v^{i}}{x^{k}}{x^{j}} = \pderiv{\phi}{x^{j}} ?\delta^{i}_{k}? 
	\end{align}
	
	Thus taking $i = j \neq k$, we find that $\pderiv{\phi}{x^{k}} = 0$. Thus $\phi \in \R$ and we find that $v$ must have the form given by $v^{i} = \phi x^{i} + b^{i}$ where each $b^{i} \in \R$. 
\end{proof}

Then using \cref{cor:CTpSCC} we can deduce the general CT in $\eunn$:

\begin{proposition}[Concircular tensors in $\eunn$] \label{prop:CtFormEunn}
	$L$ is a concircular 2-tensor in $\eunn$ where $n > 1$ iff there exists $A \in C^2_0(\eunn)$, $w \in C^1_0(\eunn)$ and $m \in C^0_0(\eunn)$ such that:
	\begin{equation}
	L = A + 2 w \odot r + m r \odot r
	\end{equation}
	\noindent where $r$ is the dilatational vector field. The tensors $A$, $w$ and $m$ are uniquely determined by $L$.
\end{proposition}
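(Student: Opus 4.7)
The plan is to prove both directions and uniqueness by leveraging the classification of concircular vectors from \cref{prop:CvEunn} together with \cref{cor:CTpSCC}, which identifies all valence-2 CTs in a space of constant curvature with symmetric products of a basis of CVs.

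First I would dispose of the "if" direction by direct computation. In Cartesian coordinates $(x^i)$ we have $r = \sum_i x^i \partial_i$, and since $\partial_i$ is covariantly constant the Levi-Civita connection gives $\nabla_x r = x$ for every $x \in \ve(\eunn)$. For $A \in C^2_0(\eunn)$, $w \in C^1_0(\eunn)$ and $m \in C^0_0(\eunn)$, covariant constancy yields $\nabla_x A = 0$, $\nabla_x w = 0$, $\nabla_x m = 0$, so Leibniz gives
\begin{equation*}
\nabla_x L = 2 w \odot (\nabla_x r) + m \bigl((\nabla_x r)\odot r + r \odot (\nabla_x r)\bigr) = \bigl(2 w + 2 m r\bigr) \odot x,
\end{equation*}
which is of the form $C \odot x$ with $C = 2w + 2mr$ (up to the convention of \cref{eq:defnCTp}). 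Hence $L$ is a CT.

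For the "only if" direction I would apply \cref{cor:CTpSCC}. Since $\eunn$ has constant curvature and $n > 1$, the space of CTs of valence $2$ is spanned by symmetric products of any basis of the space of CVs. By \cref{prop:CvEunn} the CVs are generated by the $n$ covariantly constant fields $\partial_1,\dotsc,\partial_n$ together with the dilatational vector field $r$, so $\beta = \{\partial_1,\dotsc,\partial_n, r\}$ is a basis of CVs of cardinality $n+1$. Writing an arbitrary CT $L$ as a linear combination of the $\binom{n+2}{2}$ symmetric products $v_i \odot v_j$ for $v_i, v_j \in \beta$, and grouping the terms involving only the $\partial_i$ into a covariantly constant symmetric tensor $A$, the cross terms involving one $\partial_i$ and $r$ into $2 w \odot r$ for some covariantly constant vector $w$, and the $r \odot r$ term into $m\, r \odot r$ for some constant $m$, yields the asserted decomposition.

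Uniqueness follows from a dimension count. By \cref{thm:CTsDim}, the full space $C^2(\eunn)$ has dimension equal to that of constant symmetric $2$-tensors on $\R^{n+1}$, namely $\binom{n+2}{2} = \tfrac{(n+1)(n+2)}{2}$. The three summands contribute $\dim C^2_0(\eunn) = \binom{n+1}{2}$, $\dim C^1_0(\eunn) = n$, and $\dim C^0_0(\eunn) = 1$ respectively, totalling $\binom{n+1}{2} + n + 1 = \binom{n+2}{2}$. Since the "only if" direction exhibits $A$, $w$, $m$ with $L = A + 2 w \odot r + m\, r \odot r$, the sum of the three spaces exhausts $C^2(\eunn)$; matching dimensions forces the sum to be direct, so the triple $(A,w,m)$ is uniquely determined by $L$.

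I do not expect significant obstacles: the "if" direction is a one-line Leibniz calculation once one observes $\nabla_x r = x$, and the "only if" direction together with uniqueness is essentially the content of \cref{cor:CTpSCC} combined with the dimension identity from \cref{thm:CTsDim}. The only mild subtlety is ensuring that the grouping of $n(n+1)$ cross terms $\partial_i \odot r$ genuinely produces a single covariantly constant vector $w$, which is immediate since each $\partial_i$ is covariantly constant and scalar multiplication preserves this property.
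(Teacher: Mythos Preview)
Your proposal is correct and follows essentially the same route as the paper, which simply invokes \cref{cor:CTpSCC} together with the basis of CVs from \cref{prop:CvEunn}. You add more detail than the paper (the explicit Leibniz computation for the ``if'' direction and the dimension count for uniqueness), but note that uniqueness already comes for free from the word ``uniquely'' in \cref{cor:CTpSCC}, so the dimension argument, while valid, is not needed.
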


\subsection{Parabolic Model of pseudo-Euclidean space} \label{sec:parabMod}

In order to obtain canonical forms for CTs it will be useful to work with a different model of $\eunn$. We will refer to it as the parabolic model of $\eunn$, to be introduced shortly. The main reason for working with this model is because it is a spherical submanifold of the ambient space in which the isometries of $\eunn$ are linearized (see for example, \cite[Appendix~D]{Rajaratnam2014}).

Let $\E^{n+2}_{\nu+1}(\infty)$ be the light cone in $\E^{n+2}_{\nu+1}$, i.e. the set of non-zero null vectors. We define the parabolic embedding of $\eunn$ in $\E^{n+2}_{\nu+1}$ with mean curvature vector $-a \in \E^{n+2}_{\nu+1}(\infty)$ by \cite{Tojeiro2007}

\begin{equation}
\eunn \cong \punn := \{p \in \E^{n+2}_{\nu+1}(\infty) : \bp{p, a} = 1 \}
\end{equation}

An explicit isometry with $\eunn$ is obtained by choosing $b \in \punn$, i.e. $b$ is lightlike and $\bp{a, b} = 1$. We let $V := \spa{a,b}^\perp$, note that $V \cong \eunn$, then for $x \in V$:

\begin{equation} \label{eq:parabEmbed}
\psi(x) = b + x  - \frac{1}{2}x^{2}a \in \punn
\end{equation}

The map $\psi$ gives an explicit isometry between $\punn$ and $\eunn$. By definition of $\punn$, it follows that $T_{p} \punn = p^{\perp} \cap a^{\perp} = \spa{p,a}^\perp$. Also note that for $x \in \punn$

\begin{align}
\psi^{-1}(x)  = x - \bp{x,b} a - \bp{x,a} b
\end{align}

An important reason for working with $\punn$ is the following (see \cite{Nolker1996} or \cite[Appendix~D]{Rajaratnam2014}):

\begin{proposition}[Isometry group of $\punn$] \label{prop:isoPunnII}
	The isometry group of $\punn$ is:
	
	\begin{equation}
	I(\punn) = \{T \in O_{\nu + 1} (n+2) \; | \; T a = a \}
	\end{equation}
	
	Furthermore suppose we fix an isometry with $\eunn$ via \cref{eq:parabEmbed} by fixing a subspace $V \subset a^\perp$ such that $V \simeq \eunn$, then for $p \in V$ and $\tilde{p} \in V^\perp$ we have the following Lie group isomorphism:
	
	\begin{equation}
	\phi : \begin{cases}
	O_{\nu}^{n}(V) \ltimes V & \rightarrow I(\punn) \\
	(B, v) & \mapsto  \phi(B,v)
	\end{cases}
	\end{equation}
	
	where
	
	\begin{equation} \label{eq:parabIso}
	\phi(B,v)(p + \tilde{p}) = \tilde{p} + B p + \bp{a,\tilde{p}} v - (\bp{B p, v} + \frac{1}{2}\bp{a, \tilde{p}}v^2))a
	\end{equation}
\end{proposition}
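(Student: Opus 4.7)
The plan is to use the explicit parabolic embedding $\psi$ from \eqref{eq:parabEmbed} to transfer the well-known structure $I(\eunn) = O_\nu^n(V) \ltimes V$ over to $\punn$, and to recognize each such isometry as the restriction of an ambient linear isometry that fixes $a$.

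First I would establish the easy inclusion: if $T \in O_{\nu+1}(n+2)$ satisfies $Ta = a$, then $T$ preserves the lightcone (as a linear isometry it maps null vectors to null vectors) and preserves the affine condition $\bp{p,a}=1$, since $\bp{Tp, a} = \bp{Tp, Ta} = \bp{p, a}$. Hence $T(\punn) \subseteq \punn$, and the restriction $T|_{\punn}$ is an isometry of $\punn$ in the induced metric. The restriction map $\{T \in O_{\nu+1}(n+2) : Ta=a\} \to I(\punn)$ is injective, because $\punn$ affinely spans the ambient space: \eqref{eq:parabEmbed} gives $b \in \punn$, and differences $\psi(x) - b$ for varying $x \in V$ yield every vector of $V$ (linear part) and $a$ (quadratic part), so any such $T$ is determined by its restriction.

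Next I would verify that $\phi$ defined by \eqref{eq:parabIso} maps into this subgroup. Decomposing any ambient vector as $p + \tilde{p}$ with $p \in V$ and $\tilde{p} \in V^\perp = \spa{a,b}$, and writing $\alpha_i := \bp{a, \tilde{p}_i}$ and $c_i := \bp{Bp_i, v} + \tfrac{1}{2} \alpha_i v^2$, one regroups
\[ \phi(B,v)(p_i + \tilde{p}_i) = (Bp_i + \alpha_i v) + (\tilde{p}_i - c_i a) \in V \oplus V^\perp. \]
Using $B \in O_\nu^n(V)$, $\bp{a,a}=0$, and $\bp{a,b}=1$, the mixed terms $\alpha_1 \bp{Bp_2, v} + \alpha_2 \bp{Bp_1, v} + \alpha_1 \alpha_2 v^2$ coming from the $V$-component cancel precisely against those from the $V^\perp$-component, yielding $\bp{\phi(B,v)x, \phi(B,v)y} = \bp{x,y}$. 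Setting $p=0$, $\tilde{p}=a$ shows $\phi(B,v)(a) = a$ directly. Injectivity of $\phi$ is seen by applying $\phi(B,v) = \mathrm{id}$ to $x \in V$: this forces $Bx = x$ and $\bp{x,v}=0$ for all $x$, so $v=0$ by non-degeneracy of $V$.

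Finally I would intertwine $\phi$ with $\psi$: direct substitution into \eqref{eq:parabIso} with $p = x$ and $\tilde{p} = b - \tfrac{1}{2} x^2 a$ (so $\alpha = 1$) will show
\[ \phi(B,v)(\psi(x)) = \psi(Bx + v) \quad \text{for all } x \in V. \]
Since $x \mapsto Bx + v$ realizes the generic isometry of $\eunn$ and $\psi$ is an isometry $\eunn \to \punn$, composing $\phi$ with restriction to $\punn$ hits every element of $I(\punn)$. The semidirect-product group law follows automatically: both $\phi(B_1,v_1) \circ \phi(B_2, v_2)$ and $\phi(B_1 B_2, v_1 + B_1 v_2)$ agree on $\punn$ by intertwining, hence as linear maps by the spanning property from the first paragraph. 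Combined with injectivity of the restriction map, this identifies $\phi$ as an isomorphism onto $\{T \in O_{\nu+1}(n+2) : Ta = a\}$, and the latter with $I(\punn)$. The main technical obstacle will be the metric-preservation check: the cancellation of the $c_i a$-correction terms requires careful bookkeeping across the decomposition $V \oplus V^\perp$, and is the single place where the precise form of \eqref{eq:parabIso} is essential.
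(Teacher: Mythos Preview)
Your argument is correct. The paper itself does not supply a proof here; it simply cites \cite[Appendix~D]{Rajaratnam2014} and \cite[lemma~6]{Nolker1996}, so there is no in-paper argument to compare against. Your self-contained treatment --- restriction gives the inclusion $\{T : Ta=a\} \hookrightarrow I(\punn)$, the explicit computation shows $\phi(B,v) \in O_{\nu+1}(n+2)$ fixes $a$, and the intertwining $\phi(B,v)\circ\psi = \psi\circ(x\mapsto Bx+v)$ yields surjectivity and the group law --- is exactly the natural approach and all steps check out (the norm computation cancels as you claim, and the intertwining you verify is precisely what the paper records, without proof, in the remark immediately following the proposition). The only place to tighten the exposition is the spanning argument for injectivity of the restriction map: you should state explicitly that $Tb=b$, $Ta=a$, and $T\psi(x)=\psi(x)$ together force $Tx=x$ on $V$, hence $T=\mathrm{id}$ on $V\oplus\spa{a,b}=\E^{n+2}_{\nu+1}$.
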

\begin{proof}
	See \cite[Appendix~D]{Rajaratnam2014} or \cite[lemma~6]{Nolker1996} which covers the case when $\eunn$ is Euclidean.
\end{proof}
\begin{remark}
	If $\psi : \eunn \rightarrow \punn$ is the standard embedding from \cref{eq:parabEmbed}, then $\psi$ is equivariant. In other words, if we let $Tp := Bp + v$ for $(B, v) \in O_{\nu}^{n}(V) \ltimes V$ as above, and $\hat{T} := \phi(B,v)$ then $\psi \circ T (p) = \hat{T} \circ \psi (p)$.
\end{remark}

We also have the following:

\begin{lemma}
	For $\bar{p} \in V$ and $X \in T_{\bar{p}} V$
	
	\begin{equation}
	\psi_* X = X - \bp{X,\bar{p}} a
	\end{equation}
	
	For $Y \in T_{\psi(\bar{p})} \punn$, the inverse of the above map is given by:
	
	\begin{equation}
	P_b : \begin{cases}
	T_{\psi(\bar{p})} \punn & \rightarrow T_{\bar{p}} V \\
	Y & \mapsto Y - \bp{Y,b} a
	\end{cases}
	\end{equation}
\end{lemma}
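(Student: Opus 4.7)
The plan is to verify both formulas by direct computation from the definition of the parabolic embedding $\psi(x) = b + x - \tfrac{1}{2}x^{2} a$, and then to confirm that $P_b$ and $\psi_*$ are mutually inverse linear maps between $T_{\bar{p}} V$ and $T_{\psi(\bar{p})} \punn$.

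First I would compute $\psi_* X$. Since $V$ is a vector space, $T_{\bar{p}} V$ is canonically identified with $V$, so $\psi_* X$ is just the directional derivative $\left.\tfrac{d}{dt}\right|_{t=0} \psi(\bar{p} + tX)$. Expanding $\psi(\bar{p} + tX) = b + \bar{p} + tX - \tfrac{1}{2}\bp{\bar{p}+tX,\bar{p}+tX}a$ and differentiating at $t=0$ yields $\psi_* X = X - \bp{X,\bar{p}} a$, using symmetry of the metric. This establishes the first formula.

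Next I would verify that $P_b$ genuinely lands in $V = \spa{a,b}^\perp$. For $Y \in T_{\psi(\bar{p})}\punn = \spa{\psi(\bar{p}), a}^\perp$ one has $\bp{Y,a} = 0$, so $\bp{P_b Y, a} = \bp{Y,a} - \bp{Y,b}\bp{a,a} = 0$ (since $a$ is null). Also $\bp{P_b Y, b} = \bp{Y,b} - \bp{Y,b}\bp{a,b} = 0$ because $\bp{a,b}=1$. Hence $P_b Y \in V$.

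Finally I would check that the two maps are mutually inverse. For $X \in V$, a short computation gives $P_b(\psi_* X) = X - \bp{X,\bar{p}} a - \bp{X,b} a + \bp{X,\bar{p}}\bp{a,b}a = X$, using $\bp{a,b}=1$ and $\bp{X,b}=0$ (since $X \in V$). Conversely, for $Y \in T_{\psi(\bar{p})}\punn$, I would compute $\psi_*(P_b Y) = Y - \bp{Y,b}a - \bp{Y,\bar{p}}a$, which requires the relation $\bp{Y,b} + \bp{Y,\bar{p}} = 0$. This relation follows by expanding $0 = \bp{Y,\psi(\bar{p})} = \bp{Y,b} + \bp{Y,\bar{p}} - \tfrac{1}{2}\bar{p}^{2}\bp{Y,a}$ and using $\bp{Y,a}=0$. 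This step, exploiting the defining relation $\bp{Y,\psi(\bar{p})}=0$ to cancel the two extra terms, is the only place where the argument is not completely mechanical, but it is still a one-line manipulation. Both maps being linear, this establishes the lemma.
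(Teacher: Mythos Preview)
Your proof is correct and follows essentially the same approach as the paper: the paper declares the first formula ``clear,'' then verifies $P_b \psi_* X = X$ and $\psi_* P_b Y = Y$, the latter via the same identity $0 = \bp{Y,\psi(\bar{p})} = \bp{Y,b} + \bp{Y,\bar{p}}$ that you use. You supply more detail (the explicit differentiation and the check that $P_b$ lands in $V$), but the argument is identical in substance.
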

\begin{proof}
	The first statement is clear. First observe that $P_b \psi_*X = X$. Now,
	
	\begin{equation}
	\psi_* P_b Y = Y - \bp{Y,b} a - \bp{Y,\bar{p}} a
	\end{equation}
	
	Now $0 = \bp{Y,\psi(\bar{p})} = \bp{Y,b} + \bp{Y,\bar{p}}$. Thus $\psi_* P_b Y = Y$.
\end{proof}

Furthermore we denote by $P_1$ the orthogonal projector onto $T \punn$. It is given as follows for $r \in \E^{n+2}_{\nu+1}$

\begin{equation}
P_1 : \begin{cases}
T_{r} \E^{n+2}_{\nu+1} & \rightarrow T_{r} \E^{n+2}_{\nu+1} \\
V & \mapsto V - \bp{V,r} a - \bp{V,a} r
\end{cases}
\end{equation}

We will now calculate the CT in $\E^{n+2}_{\nu+1}$ which restricts to the most general CT in $\punn$. Due to \cref{cor:CTpSCC} we only need to examine how CVs restrict. By \cref{prop:CvEunn} and \cref{thm:CTsDim}, the general CV in $\E^{n+2}_{\nu+1}$ can be written 

\begin{equation}
v = c_0 r + \sum_{i=1}^{n} c_i a_i + c_{n+1} b + c_{n+2} a
\end{equation}

\noindent where each $c_i \in \R$, $a_1,\dotsc, a_n$ is a basis for $V$ and $r$ is the dilatational vector field in $\E^{n+2}_{\nu+1}$. Then

\begin{align}
P_b P_1 v & = P_b (\sum_{i=1}^{n} c_i  (a_i - \bp{a_i, r} a) + c_{n+1} (b - \bp{b, r} a - r)) \\
& = \sum_{i=1}^{n} c_i  a_i - c_{n+1} x
\end{align}

\noindent where $x$ is the dilatational vector field in $V$. Then using \cref{cor:CTpSCC} we have proven the following:

\begin{proposition} \label{prop:parabIndCT}
	Suppose $\punn$ is identified with $\eunn$ by the embedding in \cref{eq:parabEmbed}. Denote by $V = \spa{a,b}^\perp$, let $\tilde{A} \in C^2_0(V)$, $w \in C^1_0(V)$, and $m \in C^0_0(V)$. Define
	
	\begin{equation} \label{eq:parabConCT}
	A = \tilde{A} + m b \odot b - 2 w \odot b
	\end{equation}
	
	Then the restriction of $A$ to $V$, denoted $L$, via the embedding in \cref{eq:parabEmbed} is:
	\begin{equation}
	L = \tilde{A} + m r \odot r + 2 w \odot r
	\end{equation}
\end{proposition}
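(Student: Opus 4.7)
The plan is to reduce the claim to the restriction formula for concircular vectors already derived in the paragraph preceding the proposition. By \cref{cor:CTpSCC} every constant symmetric $2$-tensor in $\E^{n+2}_{\nu+1}$ is a linear combination of symmetric products of covariantly constant vectors, so the restriction operation on such tensors is entirely determined by its action on concircular vectors. More specifically, I would compute the restriction of $A$ by applying $P_b \circ P_1$ slot-by-slot to each of the three terms $\tilde A$, $m\,b \odot b$, and $-2\,w \odot b$.

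First I would extract from the preceding computation the action of $P_b P_1$ on the relevant ambient vectors: for any $e \in V$ one has $P_b P_1 e = e$, while $P_b P_1 a = 0$ and $P_b P_1 b$, evaluated at the point $\psi(x) \in \punn$, equals $-x$. When viewed as a vector field on $V$ under the identification $T_x V \cong V$, this last expression is precisely $-r$, the dilatational vector field on $V$. These three facts are the only substantive ingredients needed, and each of them is immediate from the explicit formulas for $P_1$ and $P_b$ given in the preceding lemma.

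With these identifications in hand, the remaining work is purely algebraic. The tensor $\tilde A$, already built from vectors in $V$, pulls back to itself. The term $m\,b \odot b$ pulls back to $m(-r) \odot (-r) = m\,r \odot r$. The term $-2\,w \odot b$ with $w \in V$ pulls back to $-2\,w \odot (-r) = 2\,w \odot r$. Summing the three contributions gives $L = \tilde A + m\,r \odot r + 2\,w \odot r$, as claimed.

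I do not expect any real obstacle: the only subtlety is to track carefully the point-dependence of $P_1 b$ and confirm that, upon applying $P_b$ and transporting via $\psi^{-1}$, the resulting vector field on $V$ is indeed the dilatational vector field $r$ and not merely a position-dependent vector in the ambient space. Once that identification is made, the result follows by bilinearity of $\odot$ and the obvious compatibility of restriction with symmetric products.
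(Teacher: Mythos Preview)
Your proposal is correct and follows essentially the same route as the paper: the paper performs the computation of $P_b P_1$ on the basis $\{a_1,\dotsc,a_n,b,a,r\}$ in the paragraph immediately preceding the proposition, then invokes \cref{cor:CTpSCC} to pass from concircular vectors to symmetric $2$-tensors, which is exactly your slot-by-slot argument. The only cosmetic difference is that the paper writes the restricted dilatational field as $x$ rather than $r$ in that paragraph, but the identification is the same.
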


Note that $A$ is completely determined by the condition $A b = 0$. Now for $A \in C^2_0(\E^{n+2}_{\nu+1})$, define $A_b$ by 

\begin{equation}
(A_b)^{ij} := (P_b)\indices{^i_l} A^{lk} (P_b)\indices{^j_k}
\end{equation}

Note that $b$ is an eigenvector of $A_b$ with eigenvalue $0$. Also observe that

\begin{align}
P_1 P_b & = P_1 - w \otimes b^\flat + w \otimes b^\flat = P_1
\end{align}

The above equation shows that $A$ and $A_b$ induce the same CT on $\punn$. From the calculations proceeding \cref{eq:parabConCT} we see that 
\begin{equation}
\{a_1 - \bp{a_1, r} a, \dotsc, a_n - \bp{a_n, r} a, b - \bp{b, r} a - r\}
\end{equation}
is basis for the space of CVs on $\punn$. Thus it follows from \cref{cor:CTpSCC} and the proceeding calculations that $A,B \in C^2_0(\E^{n+2}_{\nu+1})$ induce the same CT on $\punn$ iff for some $b \in \punn$ we have

\begin{equation}
A_b = B_b
\end{equation}

Furthermore, one should note that if $b,c \in \punn$, then $(A_c)_b = A_b$. Hence it follows that if $A_b = B_b$ for some $b \in \punn$ then $A_c = B_c$ for all $c \in \punn$.

\subsection{Existence of Canonical forms}

In this section $A \in C^2_0(\E^{n+2}_{\nu+1})$. We are interested in finding canonical forms for the CT on $\punn$ induced by this tensor. As it was shown in the previous section, the induced CT depends only on $A_b$ for some $b \in \punn$. Hence our goal will be to find $\tilde{b} \in \punn$ such that $A_{\tilde{b}}$ is in a canonical form. Since the isometry with $\eunn$ (see \cref{eq:parabEmbed}) is fixed by a vector $b \in \punn$, we will then choose $T \in I(\punn)$ such that $T\tilde{b} = b$. This will transform $A_{\tilde{b}}$ to $(T_* A)_b$ which can be restricted to $\eunn$ using \cref{prop:parabIndCT} to obtain a canonical form for the original CT in $\eunn$.

To obtain the canonical choice of $b \in \punn$, first note that $A_b$ is completely determined by the fact that $A_b b = 0$. Secondly, note that since isometries of $\punn$ fix $a$, it follows that for each $l \geq 0$, $\bp{a,A^l a}$ are invariants of $A$. Although these are in general not invariants of the CT induced by $A$, they will play a significant role in the classification. Thirdly, since $a$ cannot be transformed by isometries, we will attempt to choose $b \in \punn$ such that $a$ is a basis vector in a metric-Jordan canonical basis for $A_b$. Since $\bp{a,b} = 1$, one can deduce that (using the metric-Jordan canonical form \cite[Appendix~C]{Rajaratnam2014}) in the simplest cases, $a,b$ lie in the same eigenspace of $A_b$ or $a$ generates a Jordan cycle ending in a constant multiple of $b$. These observations motivate our search for $b$.

For the following calculations, $b \in\punn$ is arbitrary and we let $\tilde{A} :=  A_b$. The following lemma will get us started:

\begin{lemma} \label{lem:tildAw}
	Suppose there is $k \in \N$ such that $\bp{a,A^l a} = 0$ for $ 0 \leq l < k$. Then for each $ 0 \leq l \leq k $
	
	\begin{equation} \label{eq:tildAw}
	\tilde{A}^l a = A^l a  - \sum_{j=0}^{l-1} \bp{b, A^{l-j} a} \tilde{A}^{j} a
	\end{equation}
	
	Furthermore, if $ 0 \leq l \leq k$ then
	\begin{align}
	\bp{a, \tilde{A}^l a} & = \bp{a, A^l a} \label{eq:tildAwInv}
	\end{align}
	So the constants $\bp{a, A^l a}$ are invariants of the CT on $\punn$ induced by $A$.
\end{lemma}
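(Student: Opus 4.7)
The plan is to first convert the index-level definition of $\tilde{A} = A_b$ into a clean operator-level formula. Using $(P_b)\indices{^i_j} = \delta^i_j - a^i b_j$ and expanding $\tilde{A}\indices{^i_j} = (A_b)^{ik} g_{kj}$, a short index computation shows that $\tilde{A}$, regarded as a self-adjoint endomorphism via the metric, factors as $\tilde{A} X = P_b\bigl(A(X - \bp{X,a}\, b)\bigr)$ for every vector $X$; equivalently, $\tilde{A} = P_b A P_b^{\dagger}$, where $P_b^{\dagger} Y := Y - \bp{Y,a}\, b$ is the metric-adjoint of $P_b$. Because $a$ is null and $A$ is self-adjoint, the hypothesis $\bp{a, A^l a} = 0$ for $0 \leq l < k$ forces $P_b^{\dagger}(A^l a) = A^l a$ on that range, so
\[
\tilde{A}(A^l a) \;=\; P_b(A^{l+1} a) \;=\; A^{l+1} a - \bp{b, A^{l+1} a}\, a, \qquad 0 \leq l < k.
\]

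With this identity in hand, I would establish \cref{eq:tildAw} by induction on $l$. The base case $l = 0$ is trivial, since both sides equal $a$ and the sum is empty. For the inductive step, assume the formula at level $l$ with $l < k$; apply $\tilde{A}$ to both sides, substitute the displayed identity above for $\tilde{A}(A^l a)$, absorb the new term $\bp{b, A^{l+1} a}\, a = \bp{b, A^{l+1} a}\, \tilde{A}^0 a$ as the $j=0$ contribution of the new sum, and reindex $\sum_{j=0}^{l-1} \bp{b, A^{l-j}a}\tilde{A}^{j+1}a$ by $j' = j+1$. The resulting expression is precisely the formula at level $l+1$, valid whenever $l+1 \leq k$.

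For \cref{eq:tildAwInv} I would pair \cref{eq:tildAw} with $a$ and induct a second time on $l$. For every $j$ in the range $0 \leq j \leq l-1 \leq k-1$ we have $j < k$, so the inductive hypothesis gives $\bp{a, \tilde{A}^j a} = \bp{a, A^j a}$, which vanishes by assumption. Every term of the sum is therefore zero, leaving $\bp{a, \tilde{A}^l a} = \bp{a, A^l a}$ as claimed.

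No step is genuinely hard; the one calculation deserving care is the operator-level identity $\tilde{A} = P_b A P_b^{\dagger}$, because the definition of $A_b$ contracts $P_b$ against the contravariant indices of $A$ while the statement requires viewing $\tilde{A}$ as acting on vectors. Once that identity is in place, both claims reduce to routine inductive bookkeeping, powered entirely by the vanishing of $\bp{a, A^l a}$ for $l < k$ together with the nullity of $a$.
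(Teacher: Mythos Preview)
Your proposal is correct and follows essentially the same approach as the paper. Both proofs hinge on the identity $\tilde{A}(A^{l}a) = A^{l+1}a - \bp{b,A^{l+1}a}\,a$ for $0 \le l < k$ and then proceed by the same two inductions; you are simply more explicit in deriving that identity from the operator-level formula $\tilde{A} = P_b A P_b^{\dagger}$, whereas the paper invokes it directly in the inductive step without further comment.
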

\begin{proof}
	We prove \cref{eq:tildAw} by induction. It clearly holds for $l = 0,1$. Now assume it holds for $l-1$, then
	
	\begin{align}
	\tilde{A}^l a & = \tilde{A}A^{l-1} a  - \sum_{j=0}^{l-2} \bp{b, A^{l-1-j} a} \tilde{A}^{j+1} a \\
	& = A^l a - a\bp{b, A^l a} - \sum_{j=0}^{l-2} \bp{b, A^{l-1-j} a} \tilde{A}^{j+1} a \\
	& = A^l a - a\bp{b, A^l a} - \sum_{j=1}^{l-1} \bp{b, A^{l-j} a} \tilde{A}^{j} a \\
	& = A^l a  - \sum_{j=0}^{l-1} \bp{b, A^{l-j} a} \tilde{A}^{j} a
	\end{align}
	
	Hence the first equation follows by induction.
	
	Suppose $ 0 \leq l < k$, then
	
	\begin{equation}
	\bp{a , \tilde{A}^l a} = - \sum_{j=0}^{l-1} \bp{b, A^{l-j} a} \bp{a, \tilde{A}^{j} a}
	\end{equation}
	
	Thus it follows by induction that $\bp{a , \tilde{A}^l a} = 0$. Thus $\bp{a, \tilde{A}^k a} = \bp{a, A^k a}$.
\end{proof}

Now, define $\omega_i$ by

\begin{equation}
\omega_i := \bp{ a , A^{i+1} a}
\end{equation}

We will also need the following lemma to calculate $\omega_i$ in $\eunn$.

\begin{lemma}
	Suppose $A$ has the form given by \cref{eq:parabConCT}, then
	
	\begin{equation} \label{eq:parabatilde}
	A^l a = \begin{cases}
	m b - w & l = 1 \\
	\bp{w,\tilde{A}^{l-2} w} b -\tilde{A}^{l-1} w &  l > 1
	\end{cases}
	\end{equation}
	
	and $\omega_i$ is given by \cref{eq:omegI}.
\end{lemma}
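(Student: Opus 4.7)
The plan is to reduce the computation to the action of $A$, viewed as a self-adjoint operator, on three kinds of vectors: $a$, $b$, and vectors in $V = \spa{a,b}^\perp$. Expanding the symmetric products in \cref{eq:parabConCT} as operators gives $A(x) = \tilde{A}(x) + m\bp{b,x}b - \bp{b,x}w - \bp{w,x}b$. Combining this with $\bp{a,b}=1$, $\bp{a,a}=\bp{b,b}=0$, $w \in V$, and the facts that $\tilde{A}$ annihilates $\spa{a,b}$ and preserves $V$, I would extract three basic identities: $A(a) = mb - w$, $A(b) = 0$, and $A(v) = \tilde{A}(v) - \bp{w,v}\,b$ for all $v \in V$. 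The first of these already establishes the case $l = 1$.

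For the formula at $l \geq 2$, I would argue by induction on $l$. The base case $l = 2$ follows from $A(mb - w) = mA(b) - A(w)$, where $A(b) = 0$ and, by the $V$-identity applied to $w \in V$, $A(w) = \tilde{A}(w) - \bp{w,w}\,b$; this yields $\bp{w,w}\,b - \tilde{A}(w)$, matching the claim. For the inductive step, the key observation is that $\tilde{A}^{l-1} w \in V$ (since $\tilde{A}$ preserves $V$ and $w \in V$), so applying $A$ to the right-hand side of the formula at level $l$ uses $A(b)=0$ to kill the first term and the $V$-identity to rewrite the second, producing exactly the claimed expression at level $l+1$.

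Finally, I would obtain $\omega_i = \bp{a, A^{i+1} a}$ by pairing $a$ against the derived formula. Since $\bp{a,b}=1$, $\bp{a,w}=0$, and $\bp{a,\tilde{A}^i w} = 0$ (because $\tilde{A}^i w \in V \perp a$), only the $b$-component contributes: $\omega_0 = m$ and $\omega_i = \bp{w,\tilde{A}^{i-1} w}$ for $i \geq 1$, matching \cref{eq:omegI} once one recalls that the parameter matrix denoted $A$ in the intrinsic $\eunn$ picture of \cref{eq:CTGenEunn} is what is called $\tilde{A}$ in the parabolic model.

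The main difficulty is essentially bookkeeping: correctly expanding the operator form of the symmetric products, remembering that $b$ is null and orthogonal to $w$, and keeping the ambient $A$ on $\E^{n+2}_{\nu+1}$ notationally distinct from the parameter matrix $\tilde{A}$ on $V$. Once the three basic action identities are in hand, both halves of the lemma collapse to a short induction and a one-line inner-product computation.
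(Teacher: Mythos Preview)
Your proposal is correct. The paper states this lemma without proof, so there is no authorial argument to compare against; your approach---extracting the three action identities $A(a)=mb-w$, $A(b)=0$, $A(v)=\tilde{A}(v)-\bp{w,v}b$ for $v\in V$ from the operator form of \cref{eq:parabConCT}, then inducting on $l$ and reading off $\omega_i$ by pairing with $a$---is exactly the natural verification and goes through without issue.
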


Using the above lemma we can also apply the definitions of index, sign and degeneracy of CTs in $\eunn$ from \cref{def:CtEunnInd} to CTs in $\punn$.

\subsubsection{Non-degenerate cases}

Now we consider the case where there exists a least $k \in \N$ such that $\bp{a,A^k a} \neq 0$. This will be the most important case for our interests. Motivated by special cases and the metric-Jordan canonical form of $\tilde{A}$ discussed earlier, we will try to find $b$ such that $a,\tilde{A} a,\dotsc,\tilde{A}^k a$ forms a skew-normal sequence with $\bp{a , A^k a} b = \tilde{A}^k a$. The following lemma describes $b$ provided it exists:

\begin{lemma}
	Suppose there is $k \in \N$ such that $\bp{a,A^l a} = 0$ for $ 0 \leq l < k$ and $\bp{a,A^k a} \neq 0$. Assume there exists a $b$ such that $\bp{a , A^k a} b = \tilde{A}^k a$ and $\bp{\tilde{A}^j a, \tilde{A}^k a} = 0$ for all $1 \leq j \leq k $. Then $b$ must satisfy the following equations for each $l \in \{0,\dotsc, k\}$
	
	\begin{equation} \label{eq:bAla}
	2 \bp{b, A^{l} a}  = \frac{\bp{A^l a , A^k a }}{\bp{ a , A^k a}} - \sum_{j=1}^{l-1} \bp{b, A^{l-j} a} \bp{b, A^{j} a}
	\end{equation}
\end{lemma}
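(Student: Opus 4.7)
The plan is to expand both $A^l a$ and $A^k a$ in the basis $\{a, \tilde{A} a, \dotsc, \tilde{A}^k a\}$ furnished by \cref{lem:tildAw} and compute the inner product $\bp{A^l a, A^k a}$ directly. Write $\tilde{A} := A_b$. The claim then reduces to bookkeeping once the pairings $\bp{\tilde{A}^i a, \tilde{A}^j a}$ are understood.

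Three preliminary facts drive the computation. First, $\tilde{A}$ is self-adjoint, because $A$ is symmetric and the construction $(A_b)^{ij} = (P_b)\indices{^i_l} A^{lm} (P_b)\indices{^j_m}$ preserves symmetry in $i, j$. Second, $\tilde{A} b = 0$, since a direct index calculation gives $(P_b)\indices{^j_k} b_j = b_k - \bp{a, b} b_k = 0$. Third, combining the hypothesis $\tilde{A}^k a = \bp{a, A^k a} b$ with $\tilde{A} b = 0$ and iterating yields $\tilde{A}^m a = 0$ for every $m > k$. Self-adjointness then rewrites $\bp{\tilde{A}^i a, \tilde{A}^j a}$ as $\bp{a, \tilde{A}^{i+j} a}$: for $i + j \leq k$ this equals $\bp{a, A^{i+j} a}$ by \cref{eq:tildAwInv}, and for $i + j > k$ it vanishes. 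Together with the hypothesis $\bp{a, A^m a} = 0$ for $m < k$, I conclude that $\bp{\tilde{A}^i a, \tilde{A}^j a} = \bp{a, A^k a}$ when $i + j = k$ and is zero otherwise.

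With this in place, I would rearrange \cref{eq:tildAw} as $A^l a = \tilde{A}^l a + \sum_{i=0}^{l-1} \bp{b, A^{l-i} a}\, \tilde{A}^i a$ (and analogously for $A^k a$), expand $\bp{A^l a, A^k a}$ as a double sum indexed by $(i, j)$ with $0 \leq i \leq l$ and $0 \leq j \leq k$, and keep only the pairs with $i + j = k$. For $1 \leq l \leq k$, the two ``endpoint'' pairs $(i, j) = (0, k)$ and $(i, j) = (l, k-l)$ each contribute $\bp{b, A^l a}$ (one from the expansion coefficient of $A^l a$ paired with the identity coefficient of $A^k a$, and vice versa), while the intermediate pairs with $1 \leq i \leq l - 1$ contribute $\bp{b, A^{l-i} a}\,\bp{b, A^i a}$. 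Dividing through by $\bp{a, A^k a} \neq 0$ and rearranging produces the stated identity.

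The main obstacle is purely combinatorial: correctly identifying the two endpoint contributions that jointly produce the factor of $2$ on the left-hand side, which is easy to mis-count if one treats the expansions symmetrically without carefully separating the $i = 0$ and $i = l$ summands. Beyond the three structural facts about $\tilde{A}$ isolated above, no deeper geometric input is required.
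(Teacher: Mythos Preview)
Your proof is correct and rests on the same two ingredients the paper uses: the expansion \cref{eq:tildAw} and the self-adjointness of $\tilde{A} = A_b$. The organization differs slightly. The paper proceeds in two stages: it first expands only $\tilde{A}^k a$ to obtain the intermediate relation $\bp{\tilde{A}^l a, A^k a} = \bp{b, A^l a}\bp{a, A^k a}$ from the hypothesis, then expands $\tilde{A}^l a$ and substitutes this relation back into the resulting sum. You instead front-load the structural facts $\tilde{A}b = 0$ and $\tilde{A}^{m}a = 0$ for $m > k$, which immediately give the full skew-normal pairing $\bp{\tilde{A}^i a, \tilde{A}^j a} = \delta_{i+j,k}\,\bp{a, A^k a}$, and then expand both $A^l a$ and $A^k a$ at once. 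Your route is arguably cleaner, and it makes transparent that the hypothesis $\bp{\tilde{A}^j a, \tilde{A}^k a} = 0$ is actually automatic once $\tilde{A}^k a = \bp{a,A^k a}\,b$ is assumed; the paper uses that hypothesis as a direct input instead. Both arguments cover only $1 \le l \le k$ (the $l=0$ case in the statement is degenerate), so there is no gap on your side.
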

\begin{proof}
	Suppose $0 < l \leq k$. Expanding $\tilde{A}^k a$ using \cref{eq:tildAw}, we have
	
	\begin{align}
	\bp{\tilde{A}^l a, \tilde{A}^k a} & = \bp{\tilde{A}^l a, A^k a} - \bp{b, A^l a} \bp{\tilde{A}^l a, \tilde{A}^{k-l} a} \\
	& \overset{\eqref{eq:tildAwInv}}{=} \bp{\tilde{A}^l a, A^k a} - \bp{b, A^l a} \bp{a , A^{k} a}
	\end{align}
	
	By imposing the condition $\bp{\tilde{A}^l a, \tilde{A}^k a} = 0$, the above equation implies that:
	
	\begin{equation} \label{eq:lemRel}
	\bp{\tilde{A}^l a, A^k a} - \bp{b, A^l a} \bp{a , A^{k} a} = 0
	\end{equation}
	
	Now expanding $\tilde{A}^l a$ using \cref{eq:tildAw}, the above equation becomes
	
	\begin{align}
	\bp{\tilde{A}^l a , A^k a } & = \bp{A^l a  - \sum_{j=0}^{l-1} \bp{b, A^{l-j} a} \tilde{A}^{j} a , A^k a } \\
	& = \bp{A^l a , A^k a } - \sum_{j=0}^{l-1} \bp{b, A^{l-j} a} \bp{ \tilde{A}^{j} a , A^k a } \\
	& = \bp{A^l a , A^k a } -  \bp{b, A^{l} a}\bp{ a , A^k a } -  \sum_{j=1}^{l-1} \bp{b, A^{l-j} a} \bp{ \tilde{A}^{j} a , A^k a } \\
	& \overset{\eqref{eq:lemRel}}{=} \bp{A^l a , A^k a } -  \bp{b, A^{l} a}\bp{ a , A^k a } -  \sum_{j=1}^{l-1} \bp{b, A^{l-j} a} \bp{b, A^{j} a} \bp{ a , A^k a} 
	\end{align}
	
	Equating the above equation with \cref{eq:lemRel} and solving for $\bp{b, A^{l} a}$ proves the result.
\end{proof}

Now we will use the above lemma and \cref{eq:tildAw} to construct a vector $b$ such that $\tilde{A}$ is in canonical form. First define a sequence $b_1,\dotsc,b_k$ of scalars recursively as follows:

\begin{equation} \label{eq:parabNonDeSeqI}
2 b_l := \frac{\bp{A^l a , A^k a }}{\bp{ a , A^k a}} - \sum_{j=1}^{l-1} b_{l-j} b_{j}
\end{equation}

Then define vectors $s_0,s_1,\dotsc,s_k$ as follows:

\begin{equation} \label{eq:parabNonDeSeqII}
s_l := A^l a  - \sum_{j=0}^{l-1} b_{l-j} s_{j}
\end{equation}

Then define $b$ by $ b \bp{ a , A^k a} := s_k$. The following lemma shows that this choice does work:

\begin{proposition} \label{prop:ParbconForm}
	The vectors $s_0,s_1,\dotsc,s_k$ form a skew-normal sequence with  $\bp{s_0, s_k} = \bp{ a , A^k a}$. If $\tilde{A}^l a$ are defined as in \cref{eq:tildAw} with the above vector $b$ then $\tilde{A}^l a = s_l$.
\end{proposition}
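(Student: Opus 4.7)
The plan is to handle the two assertions separately. First observe that the formula $\tilde{A}^{l}a = A^{l}a - \sum_{j=0}^{l-1}\bp{b,A^{l-j}a}\,\tilde{A}^{j}a$ from \cref{eq:tildAw} has exactly the same shape as the defining recursion $s_{l} = A^{l}a - \sum_{j=0}^{l-1}b_{l-j}s_{j}$ from \cref{eq:parabNonDeSeqII}. So once the identity $\bp{b,A^{m}a} = b_{m}$ is available for $1 \le m \le k$, a one-line induction on $l$ yields the second assertion $\tilde{A}^{l}a = s_{l}$. That identity will itself fall out of the first (skew-normal) assertion, since $b$ is defined by $\bp{a,A^{k}a}\,b = s_{k}$. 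Accordingly, I first attack skew-normality.

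The approach is linear-algebraic. Let $W := \spa{a, Aa, \dotsc, A^{k}a}$. In the basis $\{A^{j}a\}_{j=0}^{k}$ the Gram matrix of $W$ is the Hankel matrix $M$ with $M_{ij} = \bp{a, A^{i+j}a}$; by hypothesis this vanishes for $i+j < k$ and equals $\bp{a,A^{k}a}$ on the anti-diagonal $i+j=k$, so $M$ is non-zero only on and below that anti-diagonal. Rewriting \cref{eq:parabNonDeSeqII} as $A^{l}a = \sum_{m=0}^{l}b_{l-m}s_{m}$ (with the convention $b_{0}:=1$), the change-of-basis matrix $B$ expressing $\{A^{l}a\}$ in terms of $\{s_{m}\}$ is lower-triangular Toeplitz with unit diagonal and sub-diagonals $b_{1}, b_{2}, \dotsc, b_{k}$. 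The claim that $(s_{0},\dotsc,s_{k})$ is skew-normal with $\bp{s_{0},s_{k}} = \bp{a,A^{k}a}$ is therefore equivalent to the matrix factorization $M = \bp{a,A^{k}a}\,B\,S_{k+1}\,B^{T}$.

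Verifying this factorization is, I expect, the one place with any real content; the difficulty is purely bookkeeping on the Toeplitz and Hankel indices. Expanding the product and re-indexing yields $(B\,S_{k+1}\,B^{T})_{ij} = \sum_{m=0}^{i+j-k} b_{m}\,b_{i+j-k-m}$ when $i+j \ge k$ and $0$ otherwise. The $i+j<k$ entries match $M$ trivially; the anti-diagonal $i+j=k$ collapses to $b_{0}^{2}=1$, matching $\bp{a,A^{k}a}$; and for $i+j = k+p$ with $1 \le p \le k$ the needed equality $\bp{a,A^{k+p}a}/\bp{a,A^{k}a} = \sum_{m=0}^{p} b_{m}\,b_{p-m}$ is exactly a rearrangement of the recursion \cref{eq:parabNonDeSeqI} defining $b_{p}$. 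In other words the scalars $b_{p}$ are engineered precisely so that the deeper Hankel entries of $M$ fit. Linear independence of $\{s_{m}\}$ (part of being a skew-normal sequence) follows at once from non-singularity of the resulting Gram matrix $\bp{a,A^{k}a}\,S_{k+1}$.

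With skew-normality in hand, I close the circle. From $A^{l}a = \sum_{n=0}^{l}b_{l-n}s_{n}$ and $\bp{s_{k},s_{n}} = \bp{a,A^{k}a}\,\delta_{n,0}$ one reads off $\bp{s_{k},A^{l}a} = \bp{a,A^{k}a}\,b_{l}$, whence $\bp{b,A^{l}a} = b_{l}$ for all $0 \le l \le k$. I would also record that $b \in \punn$: $\bp{a,b} = \bp{s_{0},s_{k}}/\bp{a,A^{k}a} = 1$ by the skew-normal structure, and (since $a$ being lightlike forces $k \ge 1$) $\bp{b,b} = \bp{s_{k},s_{k}}/\bp{a,A^{k}a}^{2} = 0$. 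Thus $\tilde{A} = A_{b}$ is well-defined, and the one-line induction from the first paragraph then closes the argument.
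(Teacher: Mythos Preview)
Your argument is correct. The second half --- extracting $\bp{b,A^{l}a}=b_{l}$ from skew-normality via $\bp{s_{k},A^{l}a}$ and then matching the two recursions by induction --- is exactly what the paper does. Where you genuinely diverge is in proving skew-normality itself. The paper disposes of this in one sentence: it observes that the recursion \eqref{eq:parabNonDeSeqII} for $s_{l}$ has the same shape as \eqref{eq:tildAw} for $\tilde{A}^{l}a$, and so the computations in \cref{lem:tildAw} and the lemma preceding the proposition go through verbatim after the formal substitution $\tilde{A}^{l}a\mapsto s_{l}$, $\bp{b,A^{l}a}\mapsto b_{l}$. Those computations, however, only directly address the inner products $\bp{s_{0},s_{l}}$ and $\bp{s_{l},s_{k}}$; that these determine the full Gram matrix relies on the (unstated) fact that $\bp{s_{i},s_{j}}$ depends only on $i+j$, which is precisely the Hankel/Toeplitz structure. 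Your matrix factorization $M=\bp{a,A^{k}a}\,B\,S_{k+1}\,B^{T}$ makes this structure explicit and handles every entry at once, with the recursion \eqref{eq:parabNonDeSeqI} appearing transparently as the convolution identity $\bp{a,A^{k+p}a}/\bp{a,A^{k}a}=\sum_{m=0}^{p}b_{m}b_{p-m}$. Your route is longer but more self-contained; the paper's is economical but leans on the reader to see that the earlier arguments cover all the Gram entries. Your side remark that $b\in\punn$ is a useful sanity check the paper omits.
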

\begin{proof}
	The fact that $s_0,s_1,\dotsc,s_k$ form a skew-normal sequence follows verbatim from \cref{lem:tildAw} and the proceeding arguments by replacing $s_l \rightarrow \tilde{A}^l a$ and $ b_l \rightarrow \bp{b, A^l a}$.
	
	Suppose that $s_0,s_1,\dotsc,s_k$ form a skew-normal sequence where $\bp{s_0, s_k} = \bp{ a , A^k a}$. By definition of $s_l$, it follows that each $A^l a$ can be expanded in this basis as:
	
	\begin{equation}
	A^l a = s_l + \sum_{j=0}^{l-1} b_{l-j} s_{j}
	\end{equation}
	
	Thus
	
	\begin{equation}
	\bp{A^k a, a} \bp{b, A^l a} = \bp{s_k , A^l a} =  b_l \bp{A^k a, a}
	\end{equation}
	
	Hence $b_l = \bp{b, A^l a}$. Then it follows by definition of $s_l$ and $\tilde{A}^l a$ in  \cref{eq:tildAw} that $\tilde{A}^l a = s_l$.
\end{proof}

Now suppose $A$ is in the canonical form stated above. Let $V = \spa{a,b}^\perp$ where $b$ was chosen as above. Then $H = \spa{a,Aa,\dotsc, A^k a}$ is a non-degenerate $A$-invariant subspace (see \cite[Lemma~8.1.1]{Rajaratnam2014}). Hence $H^\perp$ is a non-degenerate $A$-invariant subspace complementary to $H$. We now mention more precisely what we mean by ``the'' canonical form:

\begin{definition} \label{defn:parabCTconForm}
	Suppose $L$ is a CT in $\punn$ with parameter matrix $A$ as above and index $k' := k - 1  \geq 0$, i.e. $L$ is \emph{non-degenerate}. The \emph{iso-canonical form} for $L$ is the metric-Jordan canonical form for $(A|_{H^\perp}, g|_{H^\perp})$ together with the index $k'$ and constant $\bp{a, A^{k'+1} a} \in \R \setminus \{0\}$.
\end{definition}

We will prove later on that this canonical form is uniquely determined by $L$. But for now we will examine this canonical form further. Let $\tilde{A} := A|_{H^\perp}$, then we can write:

\begin{equation}
A = \tilde{A} + \omega_0 b \odot b - 2 w \odot b
\end{equation}

\noindent where $w = \omega_0 b -A a$.

If $\omega_0 \neq 0$ then it follows that $w = 0$ and it follows by \cref{prop:parabIndCT} that the induced CT on $V$ is

\begin{equation}
\tilde{A} + \omega_0 r \odot r
\end{equation}

Thus after dividing by $\omega_0$ we get the central CT from \cref{thm:conTenCanForm}. If $\omega_0 = 0$, one can check that $w,\tilde{A} w,\dotsc,\tilde{A}^{k-2} w \in V$ form a skew-normal sequence with $\bp{w, \tilde{A}^{k-2} w} = \omega_{k-1}$. It follows by \cref{prop:parabIndCT} that the induced CT on $V$ is

\begin{equation}
\tilde{A} + 2 w \odot r
\end{equation}

This CT is a constant multiple of a (null) axial CT with the same index and sign from \cref{thm:conTenCanForm} (after an appropriate choice of basis).

\paragraph{Transformation to Canonical form: } \label{par:conFormTrans}  We now denote by $\tilde{b}$ the vector $b$ obtained above which puts $A$ into a canonical form.  The vector $b \in \punn$ is fixed by an isometry with $\eunn$ (see \cref{eq:parabEmbed}), furthermore we let $V = \spa{a,b}^\perp$. We can assume $A$ has the form given by \cref{eq:parabConCT}. The last problem is to choose $T \in I(\punn)$ such that $T \tilde{b} = b$. We can obtain a unique transformation if we require $T$ to induce a translation in $V$. Indeed, by \cref{eq:parabIso} the most general transformation of this type is

\begin{equation}
T = I - a \otimes (\frac{1}{2}v^2 a^\flat + v^\flat) + v \otimes a^\flat
\end{equation}

\noindent where $v \in V$ is arbitrary. The unique transformation with the above form satisfying $T \tilde{b} = b$ is obtained by taking

\begin{align}
v =  b - \tilde{b} + a \bp{\tilde{b}, b}
\end{align}

We now proceed to calculate $v$. First we can write

\begin{equation}
\tilde{b} = \frac{1}{\omega_{k-1}}\sum_{i=0}^{k} c_i A^i a
\end{equation}

Since $\bp{b, A^l a} = 0$ for any $l > 0$, we see that

\begin{equation}
\tilde{b} - a \bp{\tilde{b}, b} = \frac{1}{\omega_{k-1}}\sum_{i=1}^{k} c_i A^i a
\end{equation}

Since for $0 < l < k$, $\bp{a, A^l a} = 0$ it follows by \cref{eq:parabatilde} that $A^l a = -\tilde{A}^{l-1} w$. Thus

\begin{align}
v & = - \frac{1}{\omega_{k-1}}\sum_{i=1}^{k} c_i A^i a + b \\
& = \frac{1}{\omega_{k-1}}\sum_{i=1}^{k} c_i \tilde{A}^{i-1} w
\end{align}

where the last equation follows from the fact that $c_k = 1$. We have calculated the first four coefficients (which are sufficient for Euclidean and Minkowski space):



\begin{align}
c_k & = 1 \\
c_{k-1} & = - \frac{1}{2} \frac{\omega_{k}}{\omega_{k-1}} \\
c_{k-2} & = \frac{1}{16} \frac{(-8 \omega_{k-1} \omega_{k+1} + 6 \omega_{k}^2 )}{\omega_{k-1}^2} \\
c_{k-3} & = \frac{1}{16} \frac{(-8 \omega_{k-1}^2 \omega_{k+2} + 12 \omega_{k-1} \omega_{k} \omega_{k+1} - 5 \omega_{k}^3 )}{\omega_{k-1}^3}
\end{align}

In particular when $k = 1$ and $2$ respectively we have the following:

\begin{align}
v & = \frac{w}{\omega_0} \\
v & = \frac{1}{\omega_1}(\tilde{A} w -\frac{1}{2}\frac{\omega_2}{\omega_1} w)
\end{align}

Finally, we note that by equivariance of the map $\psi$ (see remark after \cref{prop:isoPunnII}), one only needs to apply the isometry $T : V \rightarrow V$ given by $r \mapsto r + v$ to send the induced CT in $V$ into canonical form. Hence in practice one does not need to work in $\punn$.

\subsubsection{Degenerate cases} \label{sec:degenCase}

We now consider the case where $\bp{a,A^l a} = 0$ for every $l \in \N$. First note that the dimension of the subspace spanned by $a, A a, \dotsc$ must be at most $n-1$ by non-degeneracy of the scalar product. So there exists a least $l \leq n-1$ such that $\{a,Aa,\dotsc,A^{l}a\} \subseteq a^\perp$ is a linearly independent set but $A^{l+1}a \in \spa{a,Aa,\dotsc,A^{l}a}$. Thus it follows that $A^{m}a \in \spa{a,Aa,\dotsc,A^{l}a}$ for all $m > l$. Also note by \cref{lem:tildAw} it follows that these properties are invariant under the transformation $A \rightarrow A_b$.

\begin{parts}
	\item $l = 0$ \\
	In this case $a$ is an eigenvector of $A$. After transforming $A$ to $A_b$ (if necessary), we can assume that $A a = 0$. Also $A b = 0$, then since $\bp{a,b} = 1$ it follows that $\spa{a,b}$ is a non-degenerate $A$-invariant subspace. Hence after identifying $\eunn \simeq \spa{a,b}^\perp$, it follows by \cref{prop:parabIndCT} that $A$ restricts to a Cartesian CT on $\eunn$.
	\item $l \geq 1$ \\
	Fix $b \in \punn$, let $V = \spa{a,b}^\perp$ and assume $A b = 0$. Then we can write:
	
	\begin{equation}
	A = \tilde{A} + 2 w \odot b
	\end{equation}
	
	Now note that for any $j \in \N$, $\bp{a, A^j a} = 0$. Suppose inductively that for all $1 \leq j \leq i$ that $A^j a \in V$ then
	
	\begin{equation}
	A A^i a = \tilde{A}A^i a \in V
	\end{equation}
	
	\noindent since $\bp{A^i a, w} = \bp{A^i a, A a} = 0$ and $\bp{A^i a, b} = 0$. Hence by induction for any $j \in \N$, $\bp{b, A^j a} = 0$. Thus $A a,\dotsc,A^l a,A^{l+1} a \in V$.
	
	In particular, when $l = 1$ we see that $w$ is a lightlike eigenvector of $\tilde{A}$. Then by \cref{prop:parabIndCT}, $A$ induces the following CT in $\eunn$
	
	\begin{equation}
	L = \tilde{A} - 2 w \odot r
	\end{equation}
	
	Observe that $w$ is a lightlike eigenvector of $L$ with non-constant eigenfunction. Thus $L$ is never an OC-tensor because lightlike eigenvectors of OC-tensors must have constant eigenfunctions.
	
	If $l > 1$, we see that $A a,A^2 a \in V$ are linearly independent orthogonal lightlike vectors. Thus this case can't occur in Euclidean or Minkowski case, so we ignore it.
\end{parts}

%

\subsection{Uniqueness of Canonical Forms}

In this section we will show that the canonical forms obtained in the previous section are uniquely determined by a given CT in $\punn$. As a consequence of this we will show that the different canonical forms divide the CTs into isometrically inequivalent classes. We will be working with the case when the CT is non-degenerate as the other cases are either straightforward or uninteresting.

Suppose $L$ and $M$ are CTs in $\punn$ with parameter matrices $A$ and $B$ respectively. We observed at the end of \cref{sec:parabMod} that $L = M$ iff for one (hence all) $b \in \punn$:

\begin{equation}
A_b = B_b
\end{equation}

Thus it follows that $L = T_*M$ for some $T \in I(\punn)$ iff for one (hence all) $b \in \punn$:

\begin{equation}
A_b = (T_*B)_b
\end{equation}

\begin{lemma} \label{lem:parabConUniq}
	Suppose $A_2$ is a parameter matrix, and $A_1 =  (A_2)_b $ for some $b \in \punn$. Assume each $A_i$ have the same index and admit a vector $b_i$ which transforms it to canonical form according to \cref{prop:ParbconForm}. Then $b_1 = b_2$.
\end{lemma}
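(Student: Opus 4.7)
The plan is to exploit two facts from earlier in the excerpt: (a) the projection identity $(A_c)_{b'} = A_{b'}$ for all $b', c \in \punn$, stated at the end of \cref{sec:parabMod}, and (b) the invariance $\bp{a, \tilde{A}^l a} = \bp{a, A^l a}$ for $0 \leq l \leq k$ from \cref{lem:tildAw}. Together these say that the data used by \cref{prop:ParbconForm} to construct the canonical vector of $A_2$ is already identical to the corresponding data for $A_1 = (A_2)_b$, so the argument reduces to a uniqueness statement.

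First I would check that $b_2$, the canonical vector for $A_2$, already satisfies the defining conditions of a canonical vector for $A_1$. By \cref{prop:ParbconForm}, canonicalness of $\tilde{b}$ for a parameter matrix $B$ of index $k$ amounts to $\bp{a, B^k a}\,\tilde{b} = (B_{\tilde{b}})^k a$ together with $\bp{(B_{\tilde{b}})^j a, (B_{\tilde{b}})^k a} = 0$ for $1 \leq j \leq k$. Applied with $B = A_2$ and $\tilde{b} = b_2$, both equalities can be rewritten using $(A_1)_{b_2} = ((A_2)_b)_{b_2} = (A_2)_{b_2}$ and $\bp{a, A_1^k a} = \bp{a, A_2^k a}$; they thereby become the canonical-vector conditions for $A_1$ with the same vector $b_2$. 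Hence $b_2$ qualifies as a canonical vector for $A_1$ as well.

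The second step is uniqueness of the canonical vector for a fixed parameter matrix, which is implicit in the proof of \cref{prop:ParbconForm}: any $b$ satisfying the canonical-vector conditions must have $\bp{b, A^l a} = b_l$ for $0 \leq l \leq k$, where $b_l$ is the scalar produced by the recursion \cref{eq:parabNonDeSeqI}; hence $s_k$ is uniquely determined from $A$, and condition (i) forces $b = s_k / \bp{a, A^k a}$. Since $b_1$ was assumed to be a canonical vector for $A_1$ and I have just exhibited $b_2$ as another, this uniqueness yields $b_1 = b_2$. The one subtlety to guard against is the bookkeeping around the identity $(A_c)_{b'} = A_{b'}$ and the definition of the projection $P_{b'}$, since one must verify that the substitution $A_1 \to A_2$ genuinely leaves both canonical-vector conditions invariant rather than merely producing a superficially similar pair of equations.
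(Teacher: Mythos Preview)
Your argument is correct and uses the same ingredients as the paper's proof: the projection identity $(A_c)_{b'} = A_{b'}$, the invariance of $\bp{a,A^l a}$ from \cref{lem:tildAw}, and the recursion \cref{eq:bAla}. The paper organizes the computation slightly differently---it introduces $A_0 := (A_2)_{b_2}$, which is already in canonical form, and exploits this to show directly that the recursion forces $\bp{b_1, A_0^{l} a} = 0$ and hence $\bp{a,A_1^k a}\,b_1 = A_0^k a = \bp{a,A_1^k a}\,b_2$---whereas you package the same content as ``$b_2$ satisfies the canonical-vector conditions for $A_1$'' followed by ``the canonical vector is unique''; the substance is the same.
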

\begin{proof}
	Let $A_0 = (A_2)_{b_2}$, then $A_1 = (A_0)_{b}$. Since $A_0$ is in canonical form, \\ $a,A_0 a, \cdots, A_0^k a$ forms an adapted cycle of generalized eigenvectors for $A_0$ with eigenvalue $0$. In this case $\bp{a, A_0^k a} \in \R \setminus \{0\}$.
	
	Let $b_1$ be the vector admitted by $A_1$ and let $A_3 := (A_1)_{b_1} = (A_0)_{b_1}$. Now by \cref{prop:ParbconForm} and \cref{lem:tildAw}, $b_1$ satisfies:
	
	\begin{equation} \label{eq:uncf}
	\bp{a, A_1^k a} b_1 = A_3^k a = A_0^k a  - \sum_{j=0}^{k-1} \bp{b_1 , A_0^{k-j} a} A_3^{j} a
	\end{equation}
	
	Since $A_3$ is in canonical form, it follows for each $l \in \{1,\cdots, k\}$, $\bp{b_1, A_0^{l} a}$ satisfies \cref{eq:bAla}. Then since $A_0$ is in canonical form, we have $\bp{b_1, A_0^{l} a} = 0$ for $l \in \{1,\cdots, k\}$. Thus \cref{eq:uncf} shows that
	
	\begin{equation}
	\bp{a, A_1^k a} b_1 = A_0^k a = \bp{a, A_1^k a} b_2
	\end{equation}
	
	Hence $b_1 = b_2$.
\end{proof}

In the following theorem we will show that the iso-canonical form defined in \cref{defn:parabCTconForm} for non-degenerate CTs is uniquely determined by the CT.

\begin{theorem}[Isometric Equivalence of CTs in $\eunn$]
	Suppose $L$ and $M$ are CTs in $\punn$ such that $M$ has an index $k \geq 0$. Then $L = T_*M$ for some $T \in I(\punn)$ iff $L$ and $M$ have the same iso-canonical form.
\end{theorem}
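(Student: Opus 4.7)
The plan is to reduce both implications to linear algebra in the ambient space $\E^{n+2}_{\nu+1}$, exploiting the fact that $I(\punn)$ consists of the linear isometries of the ambient space that fix the lightlike vector $a$ (\cref{prop:isoPunnII}). Fix parameter matrices $A, B$ inducing $L, M$ respectively, together with canonicalizing vectors $b_A, b_B$ produced by \cref{prop:ParbconForm}; write $A' := A_{b_A}$, $B' := B_{b_B}$, and let $H_{A'}, H_{B'}$ denote the $(k+2)$-dimensional cycle subspaces spanned by $\{a, A'a, \ldots, (A')^{k+1}a\}$ and $\{a, B'a, \ldots, (B')^{k+1}a\}$. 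The pivotal technical fact is the equivariance identity
\begin{equation*}
(T_* C)_b \;=\; T_*\!\left(C_{T^{-1} b}\right), \qquad T \in I(\punn),\ b \in \punn,\ C \in C^2_0(\E^{n+2}_{\nu+1}),
\end{equation*}
which follows from $Ta = a$ combined with the transformation rule $T P_b T^{-1} = P_{Tb}$.

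For necessity, suppose $L = T_* M$. Then $T_*B$ is a parameter matrix for $L$, and the isometric invariance $\bp{a, (T_*B)^{l+1} a} = \bp{a, B^{l+1} a}$ matches the index and the constant $\omega_k$ between $L$ and $M$. A short induction on the recursions \cref{eq:parabNonDeSeqI,eq:parabNonDeSeqII}, using the $T$-invariance of the defining inner products, shows that the canonicalizing vector for $T_*B$ is $Tb_B$. The equivariance identity then yields $(T_*B)_{Tb_B} = T_*(B_{b_B}) = T_* B'$. Since the iso-canonical form depends only on the induced CT (independence from the choice of parameter matrix follows from \cref{lem:parabConUniq}), one concludes $T_* B' = A'$, so $T$ restricts to a linear isometry $H_{B'}^{\perp} \to H_{A'}^{\perp}$ conjugating $B'|_{H_{B'}^{\perp}}$ to $A'|_{H_{A'}^{\perp}}$; the uniqueness half of \cref{thm:comMetJFor} forces the metric-Jordan forms on $H^{\perp}$ to agree.

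For sufficiency, matching iso-canonical forms give equal indices, equal constants $\omega_k$, and equal metric-Jordan forms on $H^{\perp}$. Since the two skew-normal cycles share a common Gram matrix (determined entirely by $\bp{a, (A')^{k+1} a} = \bp{a, (B')^{k+1} a}$), the assignment $(B')^j a \mapsto (A')^j a$ extends by linearity to an isometry $H_{B'} \to H_{A'}$ fixing $a$. The existence half of \cref{thm:comMetJFor} furnishes a linear isometry $H_{B'}^{\perp} \to H_{A'}^{\perp}$ conjugating $B'|_{H_{B'}^{\perp}}$ to $A'|_{H_{A'}^{\perp}}$. Concatenating yields a linear isometry $T$ of $\E^{n+2}_{\nu+1}$ with $Ta = a$, hence $T \in I(\punn)$, which satisfies $T_* B' = A'$ and $T b_B = b_A$ by construction. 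The equivariance identity then gives $(T_*B)_{b_A} = T_*(B_{b_B}) = T_* B' = A' = A_{b_A}$, so $T_* B$ and $A$ induce the same CT on $\punn$, i.e.\ $L = T_* M$.

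The principal technical obstacle is the \textbf{existence} half of \cref{thm:comMetJFor}: one needs that two self-adjoint operators on pseudo-Euclidean spaces with identical metric-Jordan canonical forms are conjugate by a genuine linear isometry, which is what permits the block-by-block construction of $T$ in the sufficiency direction. A subsidiary but delicate point is the identification $b_A = T b_B$ in the necessity direction, which must be obtained by chasing through the recursive construction of the canonicalizing vector in \cref{eq:parabNonDeSeqI,eq:parabNonDeSeqII}; this is essentially forced once one recognises that every inner product appearing in those recursions is $T$-invariant.
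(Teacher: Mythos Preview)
Your proof is correct and follows essentially the same route as the paper. Both directions hinge on the equivariance identity $(T_*C)_{Tb} = T_*(C_b)$, the fact that $Tb_B$ canonicalises $T_*B$, and then the uniqueness/existence halves of the metric-Jordan canonical form on $H^\perp$; the paper merely states several of these steps more tersely (e.g.\ it asserts ``$Tb_2$ sends $T_*B$ to canonical form'' without spelling out the induction on \cref{eq:parabNonDeSeqI,eq:parabNonDeSeqII}, and it constructs $T$ in the converse by saying one can ``easily'' map one metric-Jordan canonical basis to the other, which is precisely your block-by-block description on $H$ and $H^\perp$).
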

\begin{proof}
	Assume that $L = T_*M$ for some $T \in I(\punn)$. Then for some $b \in \punn$:
	
	\begin{equation}
	A_b = (T_*B)_b
	\end{equation}
	
	By the above equation and \cref{lem:tildAw} it follows that the index of $L$ is also $k$. Let $b_2$ be the vector which puts $B$ in canonical form given by \cref{prop:ParbconForm}. Then $T b_2$ sends $T_* B$ to canonical form. By \cref{lem:parabConUniq}, $T b_2$ is the vector obtained from \cref{prop:ParbconForm} which puts $A$ in canonical form. Let $\tilde{b} := T b_2$ then
	
	\begin{equation}
	A_{\tilde{b}} = (T_* B)_{\tilde{b}} = T_* (B_{b_2})
	\end{equation}
	
	Hence $B_{b_2}$ is isometric to $A_{\tilde{b}}$. Then it follows from the uniqueness of the metric-Jordan canonical form \cite[Appendix~C]{Rajaratnam2014} that $A_{\tilde{b}}$ and $B_{b_2}$ have the same iso-canonical form.
	
	Conversely suppose $L$ and $M$ have the same iso-canonical form. Then $A$ (resp. B) each admit a vector $b_1 \in \punn$ (resp. $b_2 \in \punn$) such that $A_{b_1}$ and $B_{b_2}$ have the same iso-canonical form. Then one can easily construct $T \in I(\punn)$ which transforms a metric-Jordan canonical basis of $B_{b_2}$ into $A_{b_1}$, so that $A_{b_1} = T_*(B_{b_2})$. Thus
	
	\begin{align}
	\Rightarrow  T (B_{b_2})^k a & = (A_{b_1})^k a \\
	\Rightarrow T b_2 & = b_1
	\end{align}
	
	Note that in the last equation we have used the fact that $\bp{a, B^k a} = \bp{a, A^k a}$. Then
	
	\begin{equation}
	A_{b_1}  = T_*(B_{b_2}) = (T_* B)_{b_1}
	\end{equation}
	
	Thus $L = T_*M$, which proves the converse.
\end{proof}

\paragraph{Geo-Canonical forms } We now give a geo-canonical form for non-degenerate CTs in $\punn$. Suppose $L$ is such a CT with index $k$ and parameter matrix $A$ in iso-canonical form. Then for $c \in \R$, $c L$ has parameter matrix $c A$ and

\begin{equation}
\bp{a , (c A)^{k+1} a} = c^{k+1} \bp{a, A^{k+1} a}
\end{equation}

Hence after an appropriate transformation $L \rightarrow c L$, we can assume

\begin{equation}
\bp{a, A^{k+1} a} = 
\begin{cases}
1  & 1 \text{ if $k$ is even} \\
\pm 1 & \text{ if $k$ is odd}
\end{cases}
\end{equation}

Note that when $k$ is odd, $c$ is only determined up to sign. Hence there are two possible geo-canonical forms in this case. Now, if $L$ is an axial CT, we can fix $d \in \R$ by requiring that $(A + d I)^{k} a \in \spa{a,b}$. This condition is satisfied in the iso-canonical form. If $L$ is central, we choose $d$ such that the real part of the smallest eigenvalue (see \cref{defn:ordC}) of $A|_{H^\perp}$ is zero.

%

\section{Canonical forms for Concircular tensors in Spherical submanifolds of pseudo-Euclidean space} \label{sec:CtEunnKap}
\subsection{Obtaining concircular tensors in umbilical submanifolds by restriction}

Let $\tilde{M}$ be a pseudo-Riemannian submanifold of $M$ with Levi-Civita connections $\tilde{\nabla}$ and $\nabla$ respectively. We say $\tilde{M}$ is an umbilical submanifold (see \cite{barrett1983semi} for more details) if there exists $H \in \ve(\tilde{M})^{\perp}$ (i.e. $H$ is orthogonal to $T \tilde{M}$) called the \emph{mean curvature normal} of $\tilde{M}$ such that

\begin{equation}
\nabla_{x}y = \tilde{\nabla}_{x}y + \bp{x,y}H
\end{equation}

\noindent for all $x,y \in \ve(\tilde{M})$. By generalizing an observation made in \cite{Crampin2003} one can deduce the following:

\begin{proposition}[Restriction of CTs to umbilical submanifolds \cite{Crampin2003}] \label{prop:CtRestUmb}
	Suppose $\tilde{M}$ is an umbilical submanifold of $M$ with mean curvature normal $H$ and $L$ is a concircular $r$-tensor on $M$ with conformal factor $C$ in covariant form. Then the pullback of $L$ to $\tilde{M}$ is a concircular $r$-tensor with conformal factor equal to the pullback of $C + r L(H)$, where in components, $L(H)_{i_{1},\dotsc,i_{r-1}} = L_{i_{1},\dotsc,i_{r-1}j}H^{j}$.
\end{proposition}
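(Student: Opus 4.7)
The plan is to compare the intrinsic covariant derivative $\tilde{\nabla}\tilde{L}$ with the restriction of the ambient one $\nabla L$ along $\tilde{M}$, using the umbilical Gauss-type identity $\nabla_{x}y = \tilde{\nabla}_{x}y + \bp{x,y}H$, and then to substitute the concircular equation $\nabla_{x}L = C \odot x^{\flat}$ (in covariant form) to read off the new conformal factor. I would work with $L$ in fully covariant form so that its pullback to $\tilde{M}$ is obtained simply by restricting the arguments to $T\tilde{M}$.

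First I would fix $x, Y_{1},\ldots,Y_{r} \in \ve(\tilde{M})$, extended arbitrarily off $\tilde{M}$, and expand both $(\tilde{\nabla}_{x}\tilde{L})(Y_{1},\ldots,Y_{r})$ and $(\nabla_{x}L)(Y_{1},\ldots,Y_{r})$ via the Leibniz identity for covariant tensors. Since $\tilde{L}(Y_{1},\ldots,Y_{r}) = L(Y_{1},\ldots,Y_{r})$ along $\tilde{M}$, the directional-derivative terms cancel; subtracting the two expressions and inserting $\nabla_{x}Y_{i} = \tilde{\nabla}_{x}Y_{i} + \bp{x,Y_{i}}H$ in each argument should yield
\begin{equation*}
(\tilde{\nabla}_{x}\tilde{L})(Y_{1},\ldots,Y_{r}) - (\nabla_{x}L)(Y_{1},\ldots,Y_{r}) = \sum_{i=1}^{r} \bp{x,Y_{i}}\, L(Y_{1},\ldots,H,\ldots,Y_{r}),
\end{equation*}
with $H$ occupying the $i$-th slot.

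Next I would use the full symmetry of $L$ to write each term $L(Y_{1},\ldots,H,\ldots,Y_{r})$ as $L(H)(Y_{1},\ldots,\hat{Y_{i}},\ldots,Y_{r})$, and recognize the resulting sum as $r\,(L(H) \odot x^{\flat})(Y_{1},\ldots,Y_{r})$ via the definition of $\odot$ as symmetrization of the tensor product. Substituting the concircular hypothesis $\nabla_{x}L = C \odot x^{\flat}$ on the right then gives $\tilde{\nabla}_{x}\tilde{L} = (C + r\,L(H)) \odot x^{\flat}$ along $\tilde{M}$, which, after taking pullbacks, is exactly the claim. The specialization to $r=1$ (a concircular vector) is a good sanity check, since then $L(H) = \bp{L,H}$ is a scalar and the correction collapses to a single term.

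The only real obstacle is the combinatorial identity $\sum_{i} \bp{x,Y_{i}}\,L(H)(Y_{1},\ldots,\hat{Y_{i}},\ldots,Y_{r}) = r\,(L(H) \odot x^{\flat})(Y_{1},\ldots,Y_{r})$, which is an easily mis-normalized check that depends on the $1/r!$ convention used for $\odot$ together with the symmetry of $L(H)$; the coefficient $r$ is precisely what survives after matching the $(r-1)!$ permutations of the first $r-1$ slots with the $r$ choices of which slot takes $x^{\flat}$. Beyond this bookkeeping, the proof is just the Gauss formula applied to a covariant $r$-tensor, with the concircular hypothesis entering only at the very last substitution.
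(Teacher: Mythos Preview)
Your argument is correct. The paper itself does not supply a proof of this proposition; it attributes the result to \cite{Crampin2003} and moves on. Your approach---expanding $(\tilde{\nabla}_{x}\tilde{L})(Y_{1},\ldots,Y_{r})$ and $(\nabla_{x}L)(Y_{1},\ldots,Y_{r})$ via Leibniz, subtracting, inserting the umbilical Gauss identity in each slot, and then identifying the resulting sum as $r\,(L(H)\odot x^{\flat})$---is the standard and essentially unique route to this statement, and your normalization check for $\odot$ is right: with the paper's convention that $\odot$ is the full symmetrization of $\otimes$, one has $(L(H)\odot x^{\flat})(Y_{1},\ldots,Y_{r}) = \tfrac{1}{r}\sum_{i}\bp{x,Y_{i}}\,L(H)(Y_{1},\ldots,\hat{Y_{i}},\ldots,Y_{r})$, which accounts for the factor $r$ in the conformal factor.

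One small point worth making explicit in a final write-up: the $x^{\flat}$ appearing on the right-hand side is the \emph{intrinsic} metric dual of $x$ on $\tilde{M}$, which agrees with the pullback of the ambient $x^{\flat}$ because $x$ and all $Y_{i}$ are tangent and the induced metric is the restriction of $g$. Likewise $C + r\,L(H)$ is an ambient covariant $(r-1)$-tensor whose pullback (restriction of arguments to $T\tilde{M}$) is what serves as the new conformal factor. You say this at the end, but it is the only place where a reader might pause.
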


Since spherical submanifolds are umbilical submanifolds and $\eunn(\kappa)$ is a spherical submanifold (see for example \cite[Appendix~D]{Rajaratnam2014}), the above proposition allows us to obtain CTs on $\eunn(\kappa)$. We will do this in the following section.

\subsection{Concircular tensors in Spherical submanifolds of pseudo-Euclidean space}

In this section we study the CTs in $\eunn(\kappa)$ via the canonical embedding in $\eunn$. Let $r$ denote the dilatational vector field, we work on the subset of $\eunn$ for which $r^2 \neq 0$. Let $E := r^{\perp}$ and let $L$ be a CT on $M$. To obtain the CT on $\eunn(\frac{1}{r^2})$ (which is an integral manifold of $E$), we first let $R := I - \dfrac{r^{\flat} \otimes r}{r^{2}}$ where $I$ is the identity endomorphism then $L_{E} := L|_{E}$ is given as follows:

\begin{equation}
(L_{E})^{ij} = R \indices{^i_l} L^{lk} R\indices{^j_k}
\end{equation}

Now we will calculate the general CT on $\eunn(\kappa)$.

\begin{propMy}[Concircular tensors in $\eunn(\kappa)$] \label{prop:CtFormEunnKap}
	$\tilde{L}$ is a concircular tensor in $\eunn(\frac{1}{r^2})$ where $n > 2$ iff there exists $A \in C^2_0(\eunn)$ such that $\tilde{L}$ has the following form embedded in $\eunn$:
	
	\begin{equation}
	L = A_{E} =  A + \frac{\bp{r, A r}}{r^4} r \odot r - \frac{2}{r^2} (A  r \odot r)
	\end{equation}
	
	$A$ is uniquely determined by $\tilde{L}$. Furthermore $\tilde{L}$ is covariantly constant iff its a constant multiple of the metric on $\eunn(\frac{1}{r^2})$, i.e. $A = c \, G$ for some $c \in \R$ where $G$ is the metric of $\eunn$.
\end{propMy}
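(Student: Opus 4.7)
The plan is to exhibit the map $\Phi \colon A \mapsto A_E$ from $C^2_0(\eunn)$ to $C^2(\eunn(\kappa))$ as a linear isomorphism. The ``if'' direction together with the explicit formula follows from \cref{prop:CtRestUmb}: since $\eunn(\kappa)$ is a spherical (hence umbilical) submanifold of $\eunn$ and any $A \in C^2_0(\eunn)$ is a concircular tensor on $\eunn$ with vanishing conformal factor, its tangential restriction is a concircular tensor on $\eunn(\kappa)$. Using the orthogonal projector $R = I - \frac{r^\flat \otimes r}{r^2}$ onto $r^\perp$, a direct expansion of $L^{ij} = R\indices{^i_l} A^{lk} R\indices{^j_k}$ collapses to the stated closed form.

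For the ``only if'' direction and the uniqueness of $A$, I would argue by a dimension count. By \cref{cor:CTpSCC} applied to the $(n-1)$-dimensional constant-curvature space $\eunn(\kappa)$, one has $\dim C^2(\eunn(\kappa)) = \binom{n+1}{2}$, which equals $\dim C^2_0(\eunn)$. Thus it suffices to establish injectivity of $\Phi$. Suppose $\Phi(A) = 0$; then at each $p \in \eunn(\kappa)$ the constant tensor $A$ annihilates $p^\perp \times p^\perp$. Given any $u, v \in \eunn$ in general position, $\spa{u,v}^\perp$ has dimension $n-2 \geq 1$; selecting a non-null $p$ therein and rescaling onto $\eunn(\kappa)$ yields $A(u,v) = 0$. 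Since this holds on a dense subset of pairs, $A \equiv 0$ by continuity.

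The covariantly constant characterization then follows quickly. If $A = cG$, then $Ar = cr$ and $\bp{r, Ar} = c\, r^2$, whence the formula collapses to $L = c\bigl(G - \tfrac{1}{r^2}\, r \odot r\bigr)$, which is $c$ times the induced metric on $\eunn(\kappa)$ and hence covariantly constant. Conversely, since $\eunn(\kappa)$ has nonzero constant curvature, its space of covariantly constant symmetric $2$-tensors is one-dimensional and spanned by the induced metric, so the injectivity of $\Phi$ identifies the unique preimage as $cG$.

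The step I expect to require the most care is the injectivity argument, specifically accommodating the sign constraint $p^2 = \kappa^{-1}$ when selecting $p$ inside a prescribed orthogonal complement; this is where the assumption $n > 2$ genuinely enters, ensuring $\spa{u,v}^\perp$ has enough room to contain vectors of the required causal type for a dense family of $(u,v)$. Once this is in hand, the rest is dimension counting combined with the formula for the restriction.
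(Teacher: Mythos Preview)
Your overall strategy---prove that $\Phi\colon A \mapsto A_E$ is injective and then invoke equality of dimensions---is sound and differs from the paper's route. The paper instead exhibits the vectors $R^* a_i$ (restrictions of an orthonormal basis of $\eunn$) as a linearly independent family of $n$ concircular vectors on $\eunn(\kappa)$, and then appeals to \cref{cor:CTpSCC} to conclude that their symmetric products form a basis of $C^2(\eunn(\kappa))$; surjectivity and injectivity of $\Phi$ then follow simultaneously from this explicit basis. Both arguments ultimately rest on the dimension formula from \cref{thm:CTsDim}.

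However, your injectivity argument has a real gap in indefinite signature. You claim that for a dense set of pairs $(u,v)$ the subspace $\spa{u,v}^\perp$ contains a vector $p$ with $p^2 = \kappa^{-1}$. This fails already for $\E^n_1$ with $\kappa < 0$ (the hyperbolic case, which the paper explicitly treats): here one needs $\spa{u,v}^\perp$ to contain a \emph{timelike} vector, but whenever $\spa{u,v}$ is itself a Lorentzian $2$-plane its orthogonal complement is positive definite---and such pairs $(u,v)$ form an \emph{open} set, not a nowhere-dense one. Concretely, if $e_1$ is the timelike basis vector then $e_1^\perp$ is spacelike, so no $p \in \E^n_1(-1)$ satisfies $e_1 \in p^\perp$, and your argument never constrains $A(e_1,\cdot)$ at all. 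The hypothesis $n > 2$ guarantees $\dim \spa{u,v}^\perp \geq 1$ but says nothing about the causal character of that subspace.

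The repair is not difficult but requires varying $p$ rather than $(u,v)$: from $A|_{p^\perp \times p^\perp} = 0$ one deduces $A(p^\perp) \subseteq \spa{p}$ for each $p \in \eunn(\kappa)$, and running $p$ along a curve such as $p_t = \cosh(t)\, e_1 + \sinh(t)\, e_2$ then kills the remaining components. Alternatively, adopting the paper's concircular-vector basis sidesteps the causal-type issue entirely. Your treatment of the covariantly constant characterisation is correct and more explicit than the paper's one-line appeal to \cref{prop:CtRestUmb}.
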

\begin{proof}
	Fix $\tilde{L} \in S^2(\eunn(\frac{1}{r^2}))$. Choose an orthonormal basis $a_1,\dotsc,a_n$ for $\eunn$. Let $R^* = I - \dfrac{r \otimes r^{\flat} }{r^{2}}$, then it follows from \cref{prop:CtRestUmb} that the vectors
	
	\begin{equation}
	R^{*}a_{i} = a_{i} - \frac{\bp{r,a_{i}}}{r^{2}} r \quad i = 1,\dotsc,n
	\end{equation}
	
	
	
	\noindent are CVs on $\eunn(\frac{1}{r^2})$. Furthermore one can check that these vectors are linearly independent. Thus by \cref{cor:CTpSCC} every CT can be written uniquely as a linear combination of symmetric products of the above CVs. Thus it follows that we can choose a unique $A \in C^2_0(\eunn)$ such that $\tilde{L} = A_{E}$ on $\eunn(\frac{1}{r^2})$. In $\eunn$, $A_{E}$ is given as follows:
	
	\begin{align}
	A_{E} & = R^{*}A R \\
	& = A + A(r^{\flat},r^{\flat}) \frac{r \odot r}{r^{4}} - \frac{2}{r^2} A(r^{\flat}) \odot r \\
	& = A + \bp{r, A r} \frac{r \odot r}{r^{4}} - \frac{2}{r^2} A r \odot r
	\end{align}
	
	Conversely by \cref{cor:CTpSCC} it follows that for any $A \in C^2_0(\eunn)$, $A_{E}$ corresponds to CT on $\eunn(\frac{1}{r^2})$.
	
	
	The last statement follows from \cref{prop:CtRestUmb}.
\end{proof}
\begin{remark}
	The general CT in $\eunn(\kappa)$ has been obtained in \cite[Section~3]{Thompson2005} with respect to certain canonical coordinates for these spaces. They use a different method for obtaining these tensors based on the theory developed in their article.
\end{remark}

For the remainder of this article we will always work with CTs in $\eunn(\kappa)$ via the tensor $L$ defined in $\eunn$ in the above proposition.

\begin{definition} \label{defn:sphCTconForm}
	Suppose $L$ is a CT in $\eunn(\kappa)$ with parameter matrix $A \in S^2(\eunn)$ as above. The iso-canonical form for $L$ is the metric-Jordan canonical form for $(A, g)$.
\end{definition}

Except for hyperbolic space $H_0^{n-1}$ and the space anti-isomorphic to it $S_{n-1}^{n-1}$, uniqueness of the iso-canonical form follows from the uniqueness of the metric-Jordan canonical form and the fact that $I(\eunn(\kappa)) = O(\eunn)$ \cite{barrett1983semi}. For $H_0^{n-1}$, $I(H_0^{n-1})$ is the subset of $O(\E^{n}_{1})$ that preserves time orientation \cite{barrett1983semi}. In this case, minor modifications of the proof of the uniqueness of the metric-Jordan canonical form will show that it holds true with $I(H_0^{n-1})$ in place of of $O(\E^{n}_{1})$. A similar argument goes for $S_{n-1}^{n-1}$. Hence we have proven the following:

\begin{theorem}[Isometric Equivalence of CTs in $\eunn(\kappa)$] \label{thm:EkisoEquivCTs}
	Suppose $L$ and $M$ are CTs in $\eunn(\kappa)$. Then $L = T_*M$ for some $T \in I(\eunn(\kappa))$ iff $L$ and $M$ have the same iso-canonical form.
\end{theorem}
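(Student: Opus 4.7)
The plan is to exploit the bijection established in \cref{prop:CtFormEunnKap} between CTs on $\eunn(\kappa)$ and parameter matrices $A \in S^2(\eunn)$, together with the equivariance of the restriction map $A \mapsto A_E = R^* A R$ under orthogonal transformations of the ambient space. Writing $A$ and $B$ for the parameter matrices of $L$ and $M$, the two CTs share an iso-canonical form (by \cref{defn:sphCTconForm}) precisely when $(A,g)$ and $(B,g)$ have the same metric-Jordan canonical form, so the theorem is really an assertion about when two parameter matrices are conjugate by an element of $I(\eunn(\kappa))$ viewed inside $O(\eunn)$.

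For the forward direction, suppose $L = T_* M$ for some $T \in I(\eunn(\kappa))$. Generically $T$ is the restriction to $\eunn(\kappa)$ of an element $\hat T \in O(\eunn)$. Because $R$ is built from the position vector $r$, which is fixed by any linear isometry as a vector field, $\hat T$ commutes with $R$ and with $R^*$, so a direct computation gives $(\hat T_* B)_E = \hat T_*(B_E) = \hat T_* M = L = A_E$. The uniqueness of the parameter matrix from \cref{prop:CtFormEunnKap} then forces $\hat T_* B = A$, and an appeal to the uniqueness half of \cref{thm:comMetJFor} shows that $A$ and $B$ have the same metric-Jordan canonical form. Conversely, if $A$ and $B$ agree in iso-canonical form, \cref{thm:comMetJFor} supplies some $\hat T \in O(\eunn)$ with $\hat T_* B = A$; restricting $\hat T$ to $\eunn(\kappa)$ produces an isometry $T$, and by the same equivariance $T_* M = (\hat T_* B)_E = A_E = L$.

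The main obstacle is the pair of exceptional cases, hyperbolic space $H_0^{n-1}$ and its dual $S_{n-1}^{n-1}$, for which $I(\eunn(\kappa))$ is a proper index-two subgroup of $O(\eunn)$ (the time-orientation-preserving part in the hyperbolic case). The forward direction is unaffected, since any $T \in I(\eunn(\kappa))$ still lifts into $O(\eunn)$. The issue is with the converse: the $\hat T$ produced by \cref{thm:comMetJFor} may reverse time orientation. To handle this I would refine the uniqueness argument for the metric-Jordan canonical form by composing $\hat T$ with a reflection in a spacelike hyperplane that contains the full timelike span of the chosen metric-Jordan canonical basis for $A$. Such a reflection lies in $O(\E^n_1) \setminus I(H_0^{n-1})$, it preserves each Jordan block and each sign datum $\varepsilon_i$ of the canonical form, and it flips the time orientation; hence the composition still conjugates $B$ to $A$ and now belongs to $I(H_0^{n-1})$. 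The dual argument, swapping the roles of the timelike and spacelike directions, settles $S_{n-1}^{n-1}$, completing the proof.
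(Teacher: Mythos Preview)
Your approach matches the paper's: reduce to the metric-Jordan canonical form of the parameter matrix via the bijection of \cref{prop:CtFormEunnKap}, use $I(\eunn(\kappa)) = O(\eunn)$ in the generic case, and patch the two exceptional spaces by hand. The equivariance argument $\hat T_*(B_E) = (\hat T_* B)_E$ is exactly what the paper leaves implicit.

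The one place that needs fixing is your construction for the exceptional cases. A spacelike hyperplane in $\E^n_1$ (one whose induced metric is positive definite, equivalently whose normal is timelike) contains \emph{no} timelike vectors, so it cannot ``contain the full timelike span'' of anything; and conversely a reflection across a hyperplane with spacelike normal preserves time orientation rather than reversing it. So the reflection you describe does not exist. The repair is much simpler than what you attempted: take $\sigma = -I$. This lies in $O(\E^n_1)$, commutes with every $A$, and reverses time orientation (it swaps the two timelike cones). Hence if the $\hat T$ produced by \cref{thm:comMetJFor} reverses time orientation, $(-I)\hat T$ preserves it and still conjugates $B$ to $A$, so it restricts to an element of $I(H_0^{n-1})$. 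The dual case $S_{n-1}^{n-1}$ is handled identically.
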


\paragraph{Geo-Canonical forms } \label{par:sphGeoCan} By definition, the restriction of $G$ to $\eunn(\kappa)$ is the metric on $\eunn(\kappa)$. Hence we see that if $a \in \R \setminus \{0\}, b \in \R$ and $A \in C^2_0(\eunn)$, then $A$ and $a A + b G$ induce geometrically equivalent CTs on $\eunn(\kappa)$ (see \cref{prop:CtEquiv}). We now show how to obtain the geo-canonical forms. Suppose $\lambda_1,\dotsc,\lambda_k \in \C$ are the distinct eigenvalues of $A$. Let $\Abs{\cdot}$ denote the modulus of a complex number, then define:

\begin{equation}
\Abs{a} := \min_{i,j} \Abs{\lambda_i - \lambda_j} > 0
\end{equation}

Note that this quantity is invariant under geometric equivalence. By making the transformation  $\lambda_i \rightarrow \frac{\lambda_i}{\Abs{a}}$, we can assume $\Abs{a} = 1$. Furthermore we choose $b \in \R$ such that the real part of the smallest eigenvalue (see \cref{defn:ordC}) of $A$ is zero. Since its not possible to specify the sign of $a$, we conclude that there are (in general) two geo-canonical forms for CTs in $\eunn(\kappa)$. Although in practice one can often use more information from the metric-Jordan canonical form of $A$ to obtain a single geo-canonical form, as the following example shows:

\begin{example}[Separable coordinates in hyperbolic space] \label{ex:geoEquivHypSpa}
	Consider $H^{n-1} = \E^{n}_{1}(-1)$ with the standard metric:
	
	\begin{equation}
	g = \diag(-1,1,\dotsc,1)
	\end{equation}
	
	For $\lambda_1 < \cdots < \lambda_n \in \R$ define two linear operators $A_1$ and $A_2$ as follows:
	
	\begin{align}
	A_1 & = \diag(\lambda_1,\dotsc,\lambda_n) \\
	A_2 & = \diag(-\lambda_1,\dotsc,-\lambda_n)
	\end{align}
	
	These two operators are isometrically inequivalent since they have different metric-Jordan canonical forms. The timelike eigenvalue of the first is the smallest, while that of the second is the largest. Although $- A_2 = A_1$ and hence the CT on $H^{n-1}$ induced by these operators are geometrically equivalent. So, in $H^{n-1}$ we can work with inequivalent CTs (under change of sign) by working with those whose parameter matrix has a timelike eigenvalue which is less than or equal to $\lfloor \frac{n}{2} \rfloor$ spacelike eigenvalues.
	
	Thus the set of eigenvalues $\lambda_1 < \cdots < \lambda_n \in \R$ induce $\lceil \frac{n}{2} \rceil$ inequivalent separable coordinates in $H^{n-1}$; in contrast with the $n$ inequivalent separable coordinates in $\E^{n}_{1}$ induced by central CTs.
\end{example}


\section{Properties of Concircular tensors in Spaces of Constant Curvature} \label{sec:cTIrred}

In this section we will assume that each CT in $\eunn$ or $\eunn(\kappa)$ is in a canonical form listed in \cref{sec:sumRes}. Furthermore we will assume that the Cartesian coordinates are chosen such that the parameter matrix $A_c$ is in the complex metric-Jordan canonical form stated in \cref{thm:comMetJFor} (see \cite[Appendix~C]{Rajaratnam2014} for details). We now describe how to transform to real Cartesian coordinates such that $A_c$ obtains the real metric-Jordan canonical form (see \cite[Appendix~C]{Rajaratnam2014}). Suppose $\lambda \in \C \setminus \R$ and $(A,g)$ is given as follows:

\begin{align}
A & = J_k(\lambda) \oplus J_k(\conj{\lambda}) & g & = S_k \oplus S_k
\end{align}

\noindent in coordinates $(x^1 , \dotsc, x^k , \conj{x}^1 , \dotsc, \conj{x}^k)$. Define real coordinates $(s^1 , t^1 , \dotsc, s^k, t^k)$ implicitly as follows:

\begin{subequations} \label{eq:transReToCom}
	\begin{equation}
	x^j = \frac{1}{\sqrt{2}}(s^j -i  t^j)
	\end{equation}
	\begin{equation}
	\conj{x}^j = \frac{1}{\sqrt{2}}(s^j + i  t^j)
	\end{equation}
\end{subequations}

These coordinates were chosen so that the pair $(A,g)$ are in the real metric-Jordan canonical form in the real coordinates $(s^1 , t^1 , \dotsc, s^k, t^k)$ after applying the appropriate tensor transformation law. 

In Cartesian coordinates $(x^i)$, we will use the convention that $x_i := g_{ij} x^j$; this is the only case where the Einstein summation convention is used in this section.

We now list some generic facts about tensors and C-tensors that will be used. We first present some facts about $\binom{1}{1}$-tensors. In the following proposition, we use the notation $C^p$ to denote the differentiability class of a geometric object, where $p \in  \N \cup \{ \infty , \omega \}$, and $C^\omega$ denotes the analytic class.

\begin{proposition}
	Suppose $T$ is a $\binom{1}{1}$-tensor of class $C^p$ and fix $q \in M$.
	
	Let $\lambda_0$ be a simple eigenvalue of $T_q$. Then there exists a neighborhood of $q$ in which $T$ has a simple eigenfunction $\lambda$ with a corresponding eigenvector field which are both of class $C^p$, and $\lambda(q) = \lambda_0$.
	
	If $T_q$ has simple eigenvalues, then there exists a neighborhood of $q$ in which $T$ has simple eigenfunctions of class $C^p$, and $T$ admits a basis of eigenvector fields of class $C^p$.
\end{proposition}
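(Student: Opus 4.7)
The plan is to reduce the first statement to two applications of the implicit function theorem (one for the eigenvalue, one for the eigenvector) and then derive the second statement as an easy consequence.

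\textbf{Step 1 (eigenfunction).} Consider the characteristic polynomial $P(x,\lambda) := \det(T_x - \lambda I)$ regarded as a function on $M \times \R$. Since $T$ is of class $C^p$, so is $P$. By assumption $\lambda_0$ is a simple root of the polynomial $\lambda \mapsto P(q,\lambda)$, hence $\partial_\lambda P(q,\lambda_0) \neq 0$. The implicit function theorem then yields a neighborhood $U$ of $q$ and a $C^p$ function $\lambda : U \to \R$ with $\lambda(q) = \lambda_0$ and $P(x,\lambda(x)) \equiv 0$ on $U$. Shrinking $U$ if necessary, we use continuous dependence of the roots of a monic polynomial on its coefficients to ensure that $\lambda(x)$ stays isolated from the remaining eigenvalues, i.e.\ remains a simple eigenvalue throughout $U$.

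\textbf{Step 2 (eigenvector field).} Since $\lambda_0$ is a simple eigenvalue, $T_q - \lambda_0 I$ has rank $n-1$, so some $(n-1) \times (n-1)$ minor of the matrix representation (in any local frame) is non-zero at $q$. By continuity this minor remains non-zero on some smaller neighborhood $U' \subseteq U$. One can then write down an explicit eigenvector via the classical adjugate / Cramer construction: the columns of $\operatorname{cof}(T - \lambda(x) I)$ span the one-dimensional kernel of $T - \lambda(x) I$, and the non-vanishing minor guarantees that at least one such column is non-zero throughout $U'$. The entries of this column are polynomial expressions in the $C^p$ coefficients of $T$ and in $\lambda(x)$, so the resulting eigenvector field is of class $C^p$. (Alternatively, one may use the Riesz spectral projector $P(x) = \frac{1}{2\pi i} \oint_\gamma (\zeta I - T_x)^{-1}\,d\zeta$ with $\gamma$ a small circle enclosing only $\lambda_0$; $P$ is $C^p$ and one-dimensional in range, so $P(x)v_0$ gives a $C^p$ eigenvector whenever $v_0$ is not annihilated.)

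\textbf{Step 3 (all simple eigenvalues).} If every eigenvalue of $T_q$ is simple, apply Steps 1--2 at each eigenvalue $\lambda_{0,i}$ separately, obtaining $C^p$ eigenfunctions $\lambda_i$ and $C^p$ eigenvector fields $v_i$ on a common neighborhood (intersect the finitely many neighborhoods). Eigenvectors corresponding to distinct simple eigenvalues are linearly independent at each point, so $\{v_1,\dots,v_n\}$ is a local $C^p$ frame of eigenvectors.

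The main obstacle is the eigenvector part of Step 2: the implicit function theorem alone does not directly produce a non-vanishing eigenvector field, because the linear system $(T - \lambda(x) I)v = 0$ is degenerate by design. The key trick is to exploit the fact that the rank drop is exactly one, so that the adjugate (or equivalently a spectral projector) provides a canonical, non-vanishing, and smoothly varying generator of the kernel in a neighborhood of $q$.
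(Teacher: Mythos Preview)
Your argument is correct and follows essentially the same approach the paper indicates: the paper's own proof is just a pointer to the implicit function theorem and references (Dieudonn\'e, Kazdan, Lax), and you have simply spelled out the standard details. The only minor difference is that you handle the eigenvector via the adjugate/spectral-projector trick rather than a second application of the implicit function theorem with a normalization constraint, but both are equally valid and well known.
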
 
\begin{proof}
	The proof is an application of the implicit function theorem (see, for example \cite[Theorems~10.2.1-10.2.4]{Dieudonne2008}). Details can be found in \cite{Kazdan1998}, see also \cite{Lax2007}.
\end{proof}

The above proposition shows that Benenti tensors necessarily locally admit a smooth basis of eigenvectors with corresponding smooth eigenfunctions. The following proposition gives necessary and sufficient conditions to determine when a given Benenti tensor is an IC-tensor.

\begin{proposition} \label{prop:eigCTs}
	Suppose $L$ is a Benenti tensor in a neighbourhood $U$ of a point $p$. If the eigenfunctions of $L$ are not constant in $U$, then the eigenfunctions are functionally independent, i.e. $L$ is an IC-tensor in a dense open subset of $U$.
\end{proposition}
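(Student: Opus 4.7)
Since $L$ is Benenti, its eigenfunctions $\lambda^1, \dotsc, \lambda^n$ are simple pointwise, so by the proposition immediately preceding I would invoke smoothness of the $\lambda^i$ on $U$ together with smooth, pairwise orthogonal (non-null) eigenvector fields $e_1, \dotsc, e_n$. Because any CT has vanishing Nijenhuis torsion, the Haantjes argument already invoked in the preliminaries then gives that each $\lambda^i$ is constant along every $E_j := \operatorname{span}(e_j)$ with $j \neq i$; equivalently, $\nabla \lambda^i$ is everywhere a (possibly zero) scalar multiple of $e_i$ on $U$.

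My plan is to reduce functional independence of the $\lambda^i$ to pointwise non-vanishing of their gradients. Since the $E_i$ are pairwise orthogonal non-degenerate one-dimensional subspaces, the covectors $d\lambda^1, \dotsc, d\lambda^n$ are linearly independent at a point $q \in U$ if and only if $e_i(\lambda^i)(q) \neq 0$ for every $i$. It therefore suffices to show that $V_i := \{ q \in U : e_i(\lambda^i)(q) \neq 0 \}$ is dense open in $U$ for each $i$; then $V := \bigcap_i V_i$ is dense open in $U$, and $L$ is an ICT on $V$.

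For each $i$ the set $V_i$ is open by smoothness, and non-empty because $\lambda^i$ is non-constant on $U$ and its only possible direction of variation is $e_i$. To obtain density I would appeal to the real-analytic structure of the problem: by \cref{prop:CtFormEunn,prop:CtFormEunnKap}, every CT under consideration is polynomial in Cartesian coordinates, hence each simple eigenvalue $\lambda^i$ is real-analytic on the open subset of $U$ where the eigenvalues remain simple, and a non-constant real-analytic function on a connected domain has nonvanishing differential on a dense open subset.

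The main obstacle I anticipate is precisely this density step: in a purely smooth (non-analytic) setting, a globally non-constant function can be locally constant on some open subset of $U$, so the argument genuinely uses the analytic structure of CTs in constant-curvature spaces. Should a fully differential-geometric argument be desired instead, my backup plan would be to contract the concircular identity $\nabla_a L_{bc} = g_{a(b} \nabla_{c)} \tr{L}$ on the orthonormal eigenframe to obtain
\[
(\lambda^j - \lambda^i)\, g(\nabla_{e_j} e_j, e_i) \;=\; \tfrac{1}{2}\, \varepsilon_j\, e_i(\lambda^i) \qquad (i \neq j),
\]
and then exploit the orthogonal integrability of the eigenspaces to propagate non-vanishing of any one $e_i(\lambda^i)$ to the remaining eigenfunctions.
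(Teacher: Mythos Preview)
Your proposal is correct and follows essentially the same line as the paper: torsionlessness forces each eigenfunction to vary only in its own eigendirection, and analyticity then upgrades ``non-constant'' to ``non-constant on a dense open set.'' The paper streamlines your frame-level argument by invoking the adapted coordinates $(q^i)$ coming from Haantjes' theorem directly, so that $u^i = u^i(q^i)$ and $\d u^1 \wedge \cdots \wedge \d u^n = \bigl(\prod_i \tfrac{d u^i}{d q^i}\bigr)\, \d q^1 \wedge \cdots \wedge \d q^n$ in one line; your backup plan via the concircular identity is not needed.
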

\begin{proof}
	This is a direct consequence of the torsionless property of these tensors. Since in this case there are coordinates $(q^{i})$ such that $L$ is diagonal and each eigenfunction $u^{i}(q^i)$. Then
	
	\begin{align}
	\d u^{1} \wedge \cdots \wedge \d u^{n} & = \deriv{u^{1}}{q^{1}} \d q^{1} \wedge \cdots \wedge \deriv{u^{n}}{q^{n}} \d q^{n} \\
	& = (\prod\limits_{i=1}^{n} \deriv{u^{i}}{q^{i}}) \d q^{1} \wedge \cdots \wedge \d q^{n}
	\end{align}
	
	Hence if $\d u^i \neq 0$ for each $i$, the eigenfunctions are functionally independent. If the $u^{i}$ are analytic functions of $q^{i}$, then by assumption it follows that $L$ is an IC-tensor in a dense open subset of $U$.
\end{proof}

\begin{propMy} \label{prop:CTmetric}
	Suppose $L$ is an OCT and $p(z) = \det (z I - L)$ is its characteristic polynomial. Suppose $u^{i}$ is a simple eigenfunction of L and $\d u^{i} \neq 0$, then the corresponding eigenform is given by:
	
	\begin{equation}
	\d u^{i} = - \frac{(\d p)|_{z=u^{i}}}{p'(u^{i})}
	\end{equation}
	
	\noindent where $\d p$ is the exterior derivative of $p$ with respect to the ambient coordinates and $p'$ is the partial derivative of $p$ with respect to $z$. Furthermore if $L$ is an IC-tensor, then the metric in the coordinates induced by the eigenfunctions of $L$ is:
	
	\begin{equation}
	g^{ij} =
	\begin{cases}
	(p'(u^{i}))^{-2} \scalprod{(\d p)|_{z=u^{i}}}{(\d p)|_{z=u^{i}}} & \text{if } i = j \\
	0 & \text{else}
	\end{cases}
	\end{equation}
\end{propMy}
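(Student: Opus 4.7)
The plan is to derive both formulas from the implicit relation $p(u^{i}(q),q)=0$ on a neighbourhood of the point of interest, viewing $p(z,q)=\det(zI - L(q))$ as a polynomial in $z$ whose coefficients are functions on $M$.

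First I would establish the eigenform formula. By hypothesis $u^{i}$ is a simple eigenfunction, so $p'(u^{i})\neq 0$, and the implicit function theorem applied to $p(z,q)=0$ recovers $u^{i}(q)$ as a smooth branch near $q$. Taking the exterior derivative of the identity $p(u^{i}(q),q)\equiv 0$, and distinguishing the $z$-derivative from the manifold differential $d$ applied with $z$ held as a parameter, yields
\begin{equation*}
p'(u^{i})\,du^{i} + (dp)|_{z=u^{i}} = 0,
\end{equation*}
which is the stated formula. To justify calling this the \emph{eigenform} corresponding to $u^{i}$: since $L$ is an OCT, the earlier discussion (Haantjes/Benenti) gives that its eigenspaces $E_{1},\dotsc,E_{k}$ are pairwise orthogonal, orthogonally integrable, and each eigenfunction depends only on its own eigenspace, so $du^{i}$ annihilates $E_{j}$ for $j\neq i$; as the $E_{j}$ span the tangent space and are non-degenerate, the metrically dual vector $\nabla u^{i}$ lies in $E_{i}$, which is exactly what it means for $du^{i}$ to be an eigenform with eigenvalue $u^{i}$.

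Next I would deduce the metric formula. Because $L$ is an IC-tensor, the simple eigenfunctions $u^{1},\dotsc,u^{n}$ are functionally independent and furnish local coordinates, so by definition
\begin{equation*}
g^{ij} = \bp{du^{i},du^{j}}.
\end{equation*}
For $i\neq j$, the orthogonality of eigenspaces established in the previous paragraph (together with $du^{i}\in E_{i}^{\flat}$, $du^{j}\in E_{j}^{\flat}$) gives $g^{ij}=0$. For $i=j$, substitution of the eigenform formula from the first step yields the claimed expression
\begin{equation*}
g^{ii} = (p'(u^{i}))^{-2}\,\bp{(dp)|_{z=u^{i}},(dp)|_{z=u^{i}}}.
\end{equation*}

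The only subtle point, and the one I expect to require the most care, is the identification of $du^{i}$ with an eigenform of $L$; all the rest is implicit differentiation and bilinearity. This identification rests on the fact that OCTs have orthogonally integrable eigendistributions with the property that each eigenfunction is constant on every other leaf, so its differential lies in the orthogonal complement of every other $E_{j}$, which equals $E_{i}$ by non-degeneracy. Once this is in hand, the remainder of the argument is immediate.
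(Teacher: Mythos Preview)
Your proposal is correct and essentially matches the paper's proof. The only cosmetic difference is that the paper obtains the relation $p'(u^{i})\,du^{i}+(dp)|_{z=u^{i}}=0$ by first factoring $p(z)=(z-u^{i})f(z)$ and then differentiating, whereas you apply the chain rule directly to $p(u^{i}(q),q)\equiv 0$; the eigenform identification via the torsionless property and the metric computation are handled the same way in both.
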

\begin{proof}
	Since  $p(z) = (z-u^{i}) f(z)$ for a smooth function $f(z)$. By taking the exterior derivative, we get:
	
	\begin{align}
	\d p &= -f \d u^{i} + (z-u^{i}) \d f
	\end{align}
	
	Then by L'Hopital's rule, we find that:
	
	\begin{align}
	(\d p)|_{z=u^{i}} &= -p'(u^{i}) \d u^{i}
	\end{align}
	
	\noindent which can be solved for $\d u^{i}$ since $u^{i}$ is a simple eigenfunction. The fact that $L \d u^i = u^i \d u^i$ follows from the fact that $L$ is torsionless.
	
	To calculate the metric, first it follows that $g^{ij} = 0$ when $i \neq j$ since $L$ is self-adjoint and has simple eigenfunctions. For the remaining component:
	
	\begin{align}
	g^{ii} & = \bp{\d u^{i},\d u^{i}} \\
	& = (p'(u^{i}))^{-2} \scalprod{(\d p)|_{z=u^{i}}}{(\d p)|_{z=u^{i}}}
	\end{align}
\end{proof}
\begin{remark}
	The assumption that $L$ is a concircular tensor can be replaced with any symmetric contravariant tensor whose associated endomorphism is torsionless.
\end{remark}

The following lemma on determinants will be used several times.

\begin{lemma} \label{lem:detAplusTen}
	Suppose $T = A + v \otimes x$ where $A = [a_{1},...,a_{n}]$ is an $n \times n$ matrix, $v \in \Fi^{n}$ and $x \in \Fi^{n}$ (where $\Fi$ is $\R$ or $\C$). Then $\det T$ is given as follows:
	
	\begin{equation}
	\det T = \bigwedge \limits_{i=1}^{n} (a_{i} + x_{i}v) = \bigwedge \limits_{i=1}^{n} a_{i} + 
	\sum\limits_{i=1}^{n} a_{1} \wedge \cdots \wedge x_{i} v \wedge \cdots \wedge a_{n}
	\end{equation}
\end{lemma}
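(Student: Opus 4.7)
The plan is to view the determinant as the alternating multilinear form on the columns of the matrix and then expand by multilinearity, using antisymmetry to kill the higher-order terms.

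First I would identify the columns of $T$. Since $(v \otimes x)_{ij} = v_i x_j$, the $j$-th column of $v \otimes x$ is $x_j v$, so the $j$-th column of $T$ is $a_j + x_j v$. Using the standard fact that $\det[c_1,\dotsc,c_n] = c_1 \wedge \cdots \wedge c_n$ (under the identification of $\bigwedge^n \Fi^n \cong \Fi$ via the standard basis), this immediately gives
\begin{equation*}
\det T = \bigwedge_{i=1}^{n}(a_i + x_i v),
\end{equation*}
which is the first equality.

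Next I would expand the wedge product by $n$-fold multilinearity. This produces $2^n$ terms indexed by subsets $S \subseteq \{1,\dotsc,n\}$, where the term corresponding to $S$ has the factor $x_i v$ in slot $i$ for each $i \in S$ and $a_i$ in the other slots. The term for $S = \emptyset$ is $\bigwedge_i a_i$, and the terms for $\lvert S \rvert = 1$ are exactly the summands in the claimed formula. Any term with $\lvert S \rvert \geq 2$ contains the factor $v$ in at least two slots of the wedge, and since $v \wedge v = 0$, all such terms vanish by antisymmetry. This yields
\begin{equation*}
\bigwedge_{i=1}^{n}(a_i + x_i v) = \bigwedge_{i=1}^{n} a_i + \sum_{i=1}^{n} a_1 \wedge \cdots \wedge x_i v \wedge \cdots \wedge a_n,
\end{equation*}
completing the proof.

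There is no real obstacle here; the only subtlety is recognizing that all cross-terms involving two or more occurrences of $v$ automatically vanish, so the finite expansion collapses to the rank-one correction stated in the lemma. The argument works uniformly over any field $\Fi$ (in particular $\R$ or $\C$) since it relies only on multilinearity and antisymmetry.
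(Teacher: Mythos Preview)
Your proof is correct. Both equalities are established cleanly: the first by identifying the columns of $T$, the second by multilinear expansion together with the observation that any term containing $v$ in two or more slots vanishes by antisymmetry.

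The paper takes a slightly different route, proving the second equality by induction on $n$: it peels off the last factor $(a_n + x_n v)$, applies the inductive hypothesis to the first $n-1$ factors, and then distributes. In that distribution the cross-terms carrying two copies of $v$ are silently dropped (for the same reason you state explicitly, $v \wedge v = 0$). Your direct expansion is arguably more transparent, since it makes the mechanism for the collapse visible in one step rather than hiding it inside an induction; the inductive argument, on the other hand, avoids ever writing down the full $2^n$-term expansion. Both approaches are equally elementary and yield the same result with the same underlying reason.
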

\begin{proof}
	The formula clearly holds for $n=1$, so inductively suppose the formula holds for $k = n-1$, then:
	
	\begin{align}
	\bigwedge \limits_{i=1}^{n} (a_{i} + x_{i}v) & = \bigwedge \limits_{i=1}^{n-1} (a_{i} + x_{i}v) \wedge (a_{n} + x_{n} v) \\
	& = (\bigwedge \limits_{i=1}^{n-1} a_{i} + 
	\sum\limits_{i=1}^{n-1} a_{1} \wedge \cdots \wedge x_{i} v \wedge \cdots \wedge a_{n-1}) \wedge (a_{n} + x_{n} v) \\
	& = \bigwedge \limits_{i=1}^{n} a_{i} + \sum\limits_{i=1}^{n-1} a_{1} \wedge \cdots \wedge x_{i} v \wedge \cdots \wedge a_{n} + \bigwedge \limits_{i=1}^{n-1} a_{i} \wedge x_{n} v \\
	& = \bigwedge \limits_{i=1}^{n} a_{i} + \sum\limits_{i=1}^{n} a_{1} \wedge \cdots \wedge x_{i} v \wedge \cdots \wedge a_{n}
	\end{align}
\end{proof}

In the following sections, we will obtain the following information. First we will calculate the characteristic polynomial for CTs in spaces of constant curvature. Using this, for ICTs we will calculate the transformation from the canonical coordinates they induce to Cartesian coordinates, and we will calculate the metric in canonical coordinates.

\subsection{Central Concircular tensors} \label{sec:cTIrredCC}

The following general lemma will be used to calculate the characteristic polynomial of central CTs.

\begin{lemma}[Determinant of Central Concircular tensors] \label{lem:detConTen}
	Suppose $L = A + r \otimes r^{\flat} $ is a central Concircular tensor, where $r^{i} = x^{i}$. Then,
	
	\begin{equation} \label{eq:CTDetCent}
	\det L = \bigwedge \limits_{i=1}^{n} a_{i} +
	\sum\limits_{i=1}^{n} a_{1} \wedge \cdots \wedge x_{i} r \wedge \cdots \wedge a_{n}
	\end{equation}
	
	Suppose U is a non-degenerate A-invariant subspace (hence $U^{\perp}$ is A-invariant), let $L_{u} = L|_{U}$ and $L_{u^{\perp}} = L|_{U^{\perp}}$, then:
	
	\begin{equation} \label{eq:CTDetCentInvSS}
	\det L = \det L_{u} \det A_{u^{\perp}} + \det A_{u} (\det L_{u^{\perp}} - \det A_{u^{\perp}})
	\end{equation}
\end{lemma}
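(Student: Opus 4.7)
The plan is to read the first formula as a direct application of \cref{lem:detAplusTen} to the rank-one update $L = A + r \otimes r^\flat$: with $v := r$ and $x := r^\flat$ the $i$-th component of $x$ is exactly $x_i = g_{ij}x^j$, and the lemma immediately yields \eqref{eq:CTDetCent}. Everything thus reduces to establishing \eqref{eq:CTDetCentInvSS} once \eqref{eq:CTDetCent} is in hand.

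For the second formula, I would choose a basis $\{e_1,\dots,e_n\}$ of $\eunn$ adapted to the decomposition $U \obot U^\perp$, with $\{e_1,\dots,e_k\}$ spanning $U$. Since both $U$ and $U^\perp$ are $A$-invariant and mutually $g$-orthogonal, the columns $a_j := Ae_j$ satisfy $a_j \in U$ for $j \leq k$ and $a_j \in U^\perp$ for $j > k$, and the metric is block-diagonal; in particular each lowered coordinate $x_j = g_{jl}x^l$ only involves indices from the same block as $j$. Splitting $r = r_U + r_{U^\perp}$ with $r_U \in U$ and $r_{U^\perp} \in U^\perp$, I would substitute into \eqref{eq:CTDetCent}. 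Each summand $a_1 \wedge \cdots \wedge x_i r \wedge \cdots \wedge a_n$ then breaks into a $U$-part and a $U^\perp$-part, and a dimension count kills one of these in every case: for $i \leq k$ the piece with $x_i r_{U^\perp}$ places $n-k+1$ vectors into the $(n-k)$-dimensional space $U^\perp$ and thus vanishes, while for $i > k$ the piece with $x_i r_U$ vanishes symmetrically.

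The surviving sum for $i \leq k$ factors as a wedge of $k$ vectors in $U$ (with the $i$-th column replaced by $x_i r_U$) wedged with the fixed product $a_{k+1} \wedge \cdots \wedge a_n$ in $U^\perp$. Re-applying \eqref{eq:CTDetCent} to the central CT $L_u = A_u + r_U \otimes r_U^\flat$ on $U$ identifies this total contribution as $(\det L_u - \det A_u)\det A_{u^\perp}$, and by symmetry the $i > k$ terms assemble into $\det A_u(\det L_{u^\perp} - \det A_{u^\perp})$. Adding these to the constant term $\det A = \det A_u \det A_{u^\perp}$ and collecting yields \eqref{eq:CTDetCentInvSS}. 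The one step that requires a brief justification is the identification $L|_U = A_u + r_U \otimes r_U^\flat$, which follows from $A$-invariance of $U$ together with $\bp{r,v} = \bp{r_U,v}$ for $v \in U$; beyond that, the proof is wedge-product bookkeeping and a dimension count, which I expect to be the main (purely mechanical) obstacle.
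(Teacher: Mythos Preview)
Your proposal is correct and essentially identical to the paper's proof: both derive \eqref{eq:CTDetCent} directly from \cref{lem:detAplusTen}, then establish \eqref{eq:CTDetCentInvSS} by choosing a basis adapted to $U \obot U^\perp$, splitting $r = r_U + r_{U^\perp}$, and using the block structure to kill the cross terms before regrouping. Your dimension-count justification for why the cross terms vanish is slightly more explicit than the paper's treatment (which invokes the block-triangular determinant formula and then silently replaces $r$ by $r_b$ or $r_c$ in each block), but the argument is the same.
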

\begin{proof}
	The first statement follows from Lemma~\ref{lem:detAplusTen} by taking $A \rightarrow A$, $r \rightarrow v$ and $r^{\flat} \rightarrow x$.
	
	Now for the second part, let $k = \dim U$, then in a basis adapted to the decomposition $V = U \obot U^{\perp}$, we have:
	
	\begin{equation}
	A =
	\begin{pmatrix}
	B & 0 \\ 
	0 & C
	\end{pmatrix} 
	\end{equation}
	
	\noindent where $B$ is a $k \times k$ matrix and $C$ is a $(n-k) \times (n-k)$ matrix. Furthermore $r = r_{b} + r_{c}$ where $r_{b} \in U$ and $r_{c} \in U^{\perp}$. The main fact we use is that for any square matrix, $T$, of the form:
	
	\begin{equation}
	\begin{pmatrix}
	A & B \\ 
	0 & C
	\end{pmatrix} 
	\end{equation}
	
	\noindent we have $ \det T = \det A \det C$. Thus:
	
	\begin{align}
	\det L & = \bigwedge \limits_{i=1}^{n} a_{i} +
	\sum\limits_{i=1}^{n} a_{1} \wedge \cdots \wedge x_{i} r \wedge \cdots \wedge a_{n} \\
	& = \bigwedge \limits_{i=1}^{k} b_{i} \wedge \bigwedge \limits_{i=1}^{n-k} c_{i} + (\sum\limits_{i=1}^{k} b_{1} \wedge \cdots \wedge x_{i} r_{b} \wedge \cdots \wedge b_{k}) \wedge \bigwedge \limits_{i=1}^{n-k} c_{i} \\
	& \qquad {} + \bigwedge \limits_{i=1}^{k} b_{i} \wedge (\sum\limits_{i=1}^{n-k} c_{1} \wedge \cdots \wedge x_{i} r_{c} \wedge \cdots \wedge c_{n-k}) \\
	& = (\bigwedge \limits_{i=1}^{k} b_{i} + \sum\limits_{i=1}^{k} b_{1} \wedge \cdots \wedge x_{i} r_{b} \wedge \cdots \wedge b_{k}) \wedge \bigwedge \limits_{i=1}^{n-k} c_{i} \\
	& \qquad {} + \bigwedge \limits_{i=1}^{k} b_{i} \wedge (\sum\limits_{i=1}^{n-k} c_{1} \wedge \cdots \wedge x_{i} r_{c} \wedge \cdots \wedge c_{n-k}) \\
	& = \det L_{u} \det A_{u^{\perp}} + \det A_{u} (\det L_{u^{\perp}} - \det A_{u^{\perp}})
	\end{align}
\end{proof}

Now consider the simplest case where $A = \diag(\lambda_{1},...,\lambda_{n})$. Then \cref{eq:CTDetCent} can be used to get the characteristic polynomial of L, which is:

\begin{equation} \label{eq:CTChPolCCI}
p(z) = \det (z I - L) = \prod\limits_{i=1}^{n} (z - \lambda_{i}) - \sum\limits_{i=1}^{n} x_{i}x^{i} \prod\limits_{j \neq i} (z - \lambda_{j})
\end{equation}

Now suppose $L$ is an ICT with eigenfunctions $(u^1,\dotsc, u^n)$, then from the above equation we have:

\begin{equation}
\prod\limits_{j=1}^{n}(u^{j} - \lambda_{i}) = p(\lambda_i) = - \varepsilon_{i} (x^{i})^{2} \prod\limits_{j \neq i} (\lambda_{i} - \lambda_{j})
\end{equation}

One can check that by assumption we must have $\lambda_i \neq \lambda_j$ if $i \neq j$. This will eventually be proven later. Thus we deduce the transformation from the coordinates $(u^1,\dotsc, u^n)$ to Cartesian coordinates to be:

\begin{equation} \label{eq:CTCoordsCCI}
(x^{i})^{2} = \varepsilon_{i} \frac{\prod\limits_{j=1}^{n}(u^{j} - \lambda_{i})}{\prod\limits_{j \neq i} (\lambda_{j} - \lambda_{i})}
\end{equation}

The derivation of the transformation to Cartesian coordinates follows that of \cite[section~5]{Crampin2003}. We will use this method for all other types of CTs as well. Now, it will be useful to write the characteristic polynomial in standard form:

\begin{propMy} \label{cor:CTChPolCent}
	Suppose $L$ is a central CT with parameter matrix $A = \diag(\lambda_{1},...,\lambda_{n})$ and arbitrary orthogonal metric. Write the characteristic polynomial of $A$ as:
	\begin{equation}
	B(z) = \det(zI - A) = \sum_{l=0}^{n} a_{l}z^{l}
	\end{equation}
	Then the characteristic polynomial of $L$ is:
	\begin{equation} \label{eq:CTChPolCCIExp}
	p(z) = \det(zI - L) = \sum_{l=0}^{n} (a_{l} - \sum_{j=0}^{n-1-l} a_{j+1+l} \bp{r, A^{j}r})z^{l}
	\end{equation}
\end{propMy}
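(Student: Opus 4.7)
The plan is to derive the stated formula directly from the explicit expression for $p(z)$ already obtained in \cref{eq:CTChPolCCI}, namely
\begin{equation*}
p(z) = \prod_{i=1}^n (z - \lambda_i) - \sum_{i=1}^n x_i x^i \prod_{j \neq i}(z-\lambda_j).
\end{equation*}
The first term is exactly $B(z) = \sum_{l=0}^n a_l z^l$, so the whole task reduces to re-expressing the correction $\sum_i x_i x^i \prod_{j \neq i}(z - \lambda_j)$ in the coefficients $a_l$ and the invariants $\bp{r, A^j r}$.

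First I would observe that $\prod_{j \neq i}(z - \lambda_j) = B(z)/(z - \lambda_i)$, so the quotient is a polynomial of degree $n-1$ in $z$. A straightforward synthetic (Horner-style) division, i.e.\ writing $\prod_{j \neq i}(z - \lambda_j) = \sum_{l=0}^{n-1} b_l^{(i)} z^l$ and matching coefficients in $(z - \lambda_i)\sum_l b_l^{(i)} z^l = B(z)$, yields the closed form
\begin{equation*}
b_l^{(i)} = \sum_{k=0}^{n-1-l} a_{l+1+k}\, \lambda_i^{k}
\end{equation*}
for $0 \le l \le n-1$ (starting from $b_{n-1}^{(i)} = a_n$ and using $b_{l-1}^{(i)} = a_l + \lambda_i b_l^{(i)}$). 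This is just a routine identity that I would verify by induction on $n-1-l$.

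Next I would substitute back and exchange the two summations:
\begin{equation*}
\sum_{i=1}^n x_i x^i \prod_{j \neq i}(z - \lambda_j) = \sum_{l=0}^{n-1} z^l \sum_{k=0}^{n-1-l} a_{l+1+k}\Bigl(\sum_{i=1}^n x_i x^i \lambda_i^{k}\Bigr).
\end{equation*}
Since $A = \diag(\lambda_1, \dotsc, \lambda_n)$ in an orthogonal basis, the endomorphism $A^k$ is diagonal with entries $\lambda_i^k$, so $(A^k r)^i = \lambda_i^k x^i$ (no sum), whence
\begin{equation*}
\bp{r, A^k r} = \sum_{i=1}^n x_i \lambda_i^{k} x^i = \sum_{i=1}^n x_i x^i \lambda_i^{k}.
\end{equation*}
Inserting this and renaming $k \to j$ gives $\sum_{l=0}^{n-1} z^l \sum_{j=0}^{n-1-l} a_{j+1+l} \bp{r, A^j r}$, and I would extend the outer sum to $l = n$ harmlessly since the inner sum is then empty. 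Subtracting from $B(z)$ yields the stated formula.

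The main obstacle is purely bookkeeping — making sure the shift $l+1+k$ and the range $0 \le k \le n-1-l$ come out right, and that the orthogonal-basis identity $\sum_i x_i x^i \lambda_i^k = \bp{r, A^k r}$ is invoked with the correct summation convention (this section explicitly allows the Einstein convention with the lowered-index notation $x_i := g_{ij} x^j$). There is no conceptual difficulty beyond polynomial division.
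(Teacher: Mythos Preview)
Your proposal is correct and follows essentially the same route as the paper: start from \cref{eq:CTChPolCCI}, derive the Horner-type recursion $b_{l-1}^{(i)} = a_l + \lambda_i b_l^{(i)}$ for the coefficients of $\prod_{j\neq i}(z-\lambda_j)$ (which is exactly the paper's recursion for $[z^{l-1}]\prod_{j\neq i}(z-\lambda_j)$), solve it by induction to get $b_l^{(i)} = \sum_k a_{l+1+k}\lambda_i^k$, and then identify $\sum_i x_i x^i \lambda_i^k = \bp{r, A^k r}$. The only difference is presentational.
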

\begin{proof}
	We will prove this formula by expanding \cref{eq:CTChPolCCI}. For the following calculations, if $a(z)$ is a polynomial in $z$, then $[z^{l}] a(z)$ is the coefficient of $z^l$ in this polynomial. First observe that
	
	\begin{align}
	[z^{l}] \prod\limits_{j} (z- \lambda_{j}) & = [z^{l}] [z\prod\limits_{j \neq i} (z- \lambda_{j}) - \lambda_{i}\prod\limits_{j\neq i} (z- \lambda_{j})] \\
	& =  [z^{l-1}]\prod\limits_{j \neq i} (z- \lambda_{j}) - \lambda_{i} [z^{l}]\prod\limits_{j\neq i} (z- \lambda_{j}) \\
	\Rightarrow & [z^{l-1}]\prod\limits_{j \neq i} (z- \lambda_{j}) = [z^{l}] \prod\limits_{j} (z- \lambda_{j}) + \lambda_{i} [z^{l}]\prod\limits_{j\neq i} (z- \lambda_{j})
	\end{align}
	
	We also have
	
	\begin{equation}
	[z^{n-1}]\prod\limits_{j \neq i} (z- \lambda_{j}) = 1
	\end{equation}
	
	We will prove inductively that
	
	\begin{align}
	[z^{l}]\prod\limits_{j \neq i} (z- \lambda_{j}) & = \sum_{j=0}^{n-1-l} \lambda_{i}^{j} a_{j+1+l}
	\end{align}
	
	Then by inductive hypothesis, we have
	
	\begin{align}
	[z^{l-1}]\prod\limits_{j \neq i} (z- \lambda_{j}) & = a_{l} + \lambda_{i} \sum_{j=0}^{n-1-l} \lambda_{i}^{j} a_{j+1+l} \\
	& = a_{l} + \sum_{j=1}^{n-l} \lambda_{i}^{j} a_{j+l} \\
	& = \sum_{j=0}^{n-l} \lambda_{i}^{j} a_{j+l}
	\end{align}
	
	Then
	
	\begin{align}
	[z^{l}]\sum\limits_{i=1}^{n} x_i x^{i} \prod\limits_{j \neq i} (z- \lambda_{j}) & = \sum\limits_{i=1}^{n} g_{ii}(x^{i})^{2} [z^{l}]\prod\limits_{j \neq i} (z- \lambda_{j}) \\
	& = \sum\limits_{i=1}^{n} g_{ii}(x^{i})^{2} \sum_{j=0}^{n-1-l} \lambda_{i}^{j} a_{j+1+l} \\
	& = \sum_{j=0}^{n-1-l} a_{j+1+l} \sum\limits_{i=1}^{n} g_{ii}(x^{i})^{2}\lambda_{i}^{j} \\
	& = \sum_{j=0}^{n-1-l} a_{j+1+l} \bp{r, A^{j}r}
	\end{align}
	
	Which together with \cref{eq:CTChPolCCI} proves the proposition.
\end{proof}

In the following theorem we collect a useful limiting procedure for dealing with Jordan blocks. It has been proven by \citeauthor{Kalnins1984} in \cite{Kalnins1984} for general dimensions. We have independently verified it only for dimensions less than three. The details of this verification are only partially included in the following proof, which can be omitted without loss of continuity.

\begin{theorem}[\cite{Kalnins1984}] \label{thm:KalLim}
	Let $A_0 := J_{n}^T(\lambda_1)$ and $g_0 := \varepsilon S_n$. For $n \leq 3$, there exists a sequence of diagonal matrices $A := \diag(\lambda_1,\dotsc,\lambda_n)$, $g := \diag(a_1,\dotsc,a_n)$ and transformation matrices $\Lambda$ such that
	
	\begin{align}
	\Lambda^{-1} A \Lambda & \rightarrow A_0 & \Lambda^T g \Lambda & \rightarrow g_0
	\end{align}
\end{theorem}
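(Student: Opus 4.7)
The plan is to exhibit, for each $n\in\{1,2,3\}$, an explicit one-parameter family $(A(t),g(t),\Lambda(t))$ and verify the two limits by direct matrix computation. The conceptual point is that while no single diagonalizable matrix can equal a non-trivial Jordan block, a sequence of diagonalizable matrices whose eigenvalues coalesce can converge to a Jordan block provided the similarity $\Lambda(t)$ is permitted to become singular as $t\to 0$. The trick is to scale the entries of $\Lambda$ and $g$ with carefully matched powers of $t$ so that the singular parts of $\Lambda^{-1} A \Lambda$ and $\Lambda^T g \Lambda$ cancel simultaneously, leaving the finite limits $J_n^T(\lambda_1)$ and $\varepsilon S_n$.

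For $n=1$ there is nothing to prove. For $n=2$ I would take $A(t):=\diag(\lambda_1+t,\lambda_1)$ together with the unipotent-type transformation $\Lambda(t):=\bigl(\begin{smallmatrix}1 & 0 \\ -1/t & 1\end{smallmatrix}\bigr)$; a short computation gives $\Lambda^{-1}A\Lambda=\bigl(\begin{smallmatrix}\lambda_1+t & 0 \\ 1 & \lambda_1\end{smallmatrix}\bigr)\to J_2^T(\lambda_1)$. Once this is fixed, one solves for $g(t)=\diag(a_1(t),a_2(t))$ by matching $\Lambda^T g\Lambda$ entry-by-entry to $\varepsilon S_2$; this forces the scaling $a_2(t)=-\varepsilon t$, $a_1(t)=\varepsilon/t$, and each of the three conditions (the $(1,1)$, $(1,2)$ and $(2,2)$ entries) then holds in the limit.

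For $n=3$ the same idea works with one more parameter: take $A(t):=\diag(\lambda_1+t^2,\lambda_1+t,\lambda_1)$ and let $\Lambda(t)$ be a lower-triangular matrix with ones on the diagonal whose sub-diagonal entries behave like $1/t$ and $1/t^2$, chosen so that the two non-trivial off-diagonal entries of $\Lambda^{-1}A\Lambda$ stabilize at $1$ as $t\to 0$, yielding $J_3^T(\lambda_1)$. One then determines $g(t)=\diag(a_1,a_2,a_3)$ by the linear system obtained from equating $\Lambda^T g \Lambda$ with $\varepsilon S_3$ (to leading order in $t$); the scalings $a_i(t)\sim t^{\pm k}$ are dictated by the skew-diagonal target, and the sign $\varepsilon$ enters through the single free choice.

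The main obstacle is precisely this simultaneity: the $\Lambda$ that diagonally regularizes $A$ is not free---it must also produce the prescribed skew-diagonal metric under congruence. Equivalently, one must verify that the linear system for the $a_i(t)$ has a diagonal (as opposed to merely symmetric) solution of the correct order in $t$. This is what makes the argument genuinely dimension-dependent; for $n\leq 3$ the system is small enough to solve explicitly, and the existence of the construction can be checked in a few lines, whereas for general $n$ one needs the more careful inductive scheme of \cite{Kalnins1984}.
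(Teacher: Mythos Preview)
Your approach is correct and, like the paper's, reduces to an explicit construction checked by direct computation; the $n=2$ case you carry out fully is right, and your $n=3$ sketch does in fact close (with $p=1/t$, $r=-1/t$, $q=1/t^3-1/t^2$ one finds $a_3=\varepsilon t^3/(1-t)$, $a_2=\varepsilon$, $a_1=-\varepsilon/t^3$ work, and the potentially dangerous $(2,1)$-entry $a_2 p+a_3qr$ cancels to zero exactly).

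The paper takes a different explicit construction: rather than a lower-triangular unipotent $\Lambda$, it quotes the Kalnins--Miller formula $\Lambda^i_j=\prod_{l=2}^{j}(\epsilon^1_{i-1}-\epsilon^1_{l-2})$, a confluent Vandermonde-type matrix with all ones in the first column, together with $a_i=\varepsilon/\prod_{k\neq i}(\epsilon^1_{i-1}-\epsilon^1_{k-1})$ and $\lambda_i=\lambda_1+\epsilon^1_{i-1}$, where the $\epsilon^1_i$ are independent small parameters. The paper then simply says ``the conclusion follows by direct calculation'' for $n\leq 3$, so its level of detail is comparable to yours. The advantage of the paper's formula is that it is written uniformly in $n$ and is exactly the scheme Kalnins--Miller prove works for all $n$; the advantage of your ad hoc lower-triangular ansatz is that the mechanism (sub-diagonal entries of $\Lambda^{-1}A\Lambda$ equal $p(\mu_2-\mu_1)$, $r(\mu_3-\mu_2)$, etc.) is more transparent, and the metric equations decouple in a visible order starting from the $(n,n)$-corner. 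Your discussion of the ``simultaneity obstacle'' is apt but slightly overstated for $n=3$: once $\Lambda$ is fixed by the $A$-condition, the six metric equations impose conditions on only three unknowns $a_1,a_2,a_3$, yet the system is consistent because the leading singularities cancel automatically---this is the content of the verification, not an accident one must hope for.
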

\begin{proof}
	First consider the following definitions:
	\begin{align} 
	?\Lambda^i_j?  & := \epsilon^{j-1}_{i+1-j} = \prod_{l=2}^{j}(\epsilon_{i-1}^{1}- \epsilon_{l-2}^{1}) &
	\epsilon^{k}_{l} & := \begin{cases}
	0 & \text{if } l \leq 0 \\
	1 & \text{if } k \leq 0
	\end{cases}
	\label{eq:limproc} \\
	a_{i} & := \frac{\varepsilon}{\prod_{k \neq i} (\epsilon^{1}_{i-1}- \epsilon^{1}_{k-1})}
	\end{align}
	
	Note that $\epsilon^{k}_{l}$ is of order $k$ if $k,l > 0$. Finally let $\lambda_i := \lambda_1 + \epsilon_{i-1}^{1}$. Then the conclusion follows by direct calculation if for each $i = 2,\dotsc,n$, $\epsilon_{i}^{1} \rightarrow 0$.
\end{proof}

Now suppose $L$ is a central CT with parameter matrix $A = J_k^T(0)$. We will use the above theorem to obtain this CT as a limit of central CTs with parameter matrix $A = \diag(0,\lambda_2,\dotsc,\lambda_k)$. The characteristic polynomial of these CTs is given by \cref{eq:CTChPolCCIExp}. In order to obtain the characteristic polynomial for a CT with $A = J_k^T(0)$ we will use the fact that the characteristic polynomial of $J_k^T(0)$ is $z^k$. Then starting with $A = \diag(0,\lambda_2,\dotsc,\lambda_k)$, by \cref{eq:CTChPolCCIExp} we have:

\begin{align}
p(z) & = \sum_{l=0}^{k} (a_{l} - \sum_{j=0}^{k-1-l} a_{j+1+l} \bp{r, A^{j}r})z^{l} \\
& \rightarrow z^k - \sum_{l=0}^{k-1} \bp{r, A^{k-1-l}r} z^{l} \\
& = z^k - \sum_{l=0}^{k-1} \bp{r, A^{k-1-l}r} z^{l} \\
& = z^k - \varepsilon \sum_{l=0}^{k-1} \sum_{i=1}^{l+1} x^i x^{l + 2-i} z^{l}
\end{align}


Thus we have proven part of the following:

\begin{propMy} \label{cor:CTChPolCentDe}
	Suppose $L$ is a central CT with parameter matrix $A = J_k^T(0)$ and  metric $g = \varepsilon S_k$.
	Then the characteristic polynomial of $L$ is:
	\begin{equation}
	p(z) = \det(zI - L) = z^k - \varepsilon \sum_{l=0}^{k-1} \sum_{i=1}^{l+1} x^i x^{l + 2-i} z^{l}
	\end{equation}
	
	Furthermore the following are true:
	\begin{itemize}
		\item $L$ has no constant eigenfunctions.
		\item If $T(z) = \frac{p(z)}{B(z)}$ and $k \leq 3$, then $\bp{\d T, \d T}  = 4\deriv{}{z} T(z)$
	\end{itemize}
\end{propMy}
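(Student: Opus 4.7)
The characteristic polynomial formula is essentially derived in the paragraph immediately preceding the statement, via the limiting argument of \cref{thm:KalLim} applied to \cref{eq:CTChPolCCIExp}. To close the small gap, I would verify the last equality $\bp{r, A^{k-1-l} r} = \varepsilon \sum_{i=1}^{l+1} x^i x^{l+2-i}$ by a direct index computation: with $A = J_k^T(0)$, the matrix $A^j$ shifts the coordinate index by $j$, so $(A^j r)^b = x^{b-j}$ for $b > j$ and vanishes otherwise; the skew-diagonal metric $g_{ab} = \varepsilon\, \delta_{a+b,\, k+1}$ then pairs these with the ``mirror'' components of $r$, and the substitution $l = k-1-j$ reindexes the sum into the stated form.

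For the claim that $L$ has no constant eigenfunctions, the plan is a short contradiction argument using the explicit formula. Suppose $\lambda \in \R$ is a constant eigenvalue; then $p(\lambda) \equiv 0$ as a polynomial identity in $(x^1, \ldots, x^k)$. The $x$-independent part of $p(\lambda)$ is $\lambda^k$, which forces $\lambda = 0$; but then only the $l = 0$ summand survives in $p(0)$, contributing $-\varepsilon (x^1)^2 \neq 0$, a contradiction.

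For the identity $\bp{\d T, \d T} = 4\, dT/dz$, since $B(z) = z^k$ I have
\[
T(z) = 1 - \varepsilon \sum_{l=0}^{k-1} \sum_{i=1}^{l+1} x^i x^{l+2-i}\, z^{l-k},
\]
and I would proceed by direct case-by-case verification for $k = 1, 2, 3$. In each case the recipe is: compute $\partial T/\partial x^j$ for each $j$, contract with the inverse metric $g^{ij} = \varepsilon\, \delta_{i+j,\, k+1}$ to obtain $\bp{\d T, \d T}$, compute $dT/dz$ separately, and compare the two rational functions of $z$ coefficient by coefficient.

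The main obstacle is that no clean uniform argument for all $k$ seems to present itself -- the restriction $k \leq 3$ in the statement strongly suggests the identity genuinely fails for larger Jordan blocks, so a case-by-case calculation is unavoidable. The $k = 1$ and $k = 2$ cases are immediate; for $k = 3$ the algebra is somewhat lengthy but entirely mechanical, since the numerator polynomials in $x$ on both sides are quadratic in the $x^i$ and the denominators are fixed powers of $z$, reducing the comparison to matching finitely many scalar coefficients.
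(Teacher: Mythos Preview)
Your proposal is correct, and agrees with the paper on the characteristic polynomial (derived just before the statement, as you note). The other two parts take somewhat different routes.

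For ``no constant eigenfunctions'', the paper differentiates to obtain $\nabla p = -2\sum_{l=0}^{k-1} z^l A^{k-1-l} r$ and then inspects a particular component, whereas you argue directly from $p(\lambda)$ viewed as a polynomial in the $x^i$: the constant part forces $\lambda = 0$, and then $p(0) = -\varepsilon (x^1)^2$ gives the contradiction. Your version is a bit more transparent and avoids introducing $\nabla p$.

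For $\bp{\d T,\d T} = 4\,\d T/\d z$, the paper does \emph{not} verify the Jordan-block case directly. Instead it first checks the identity for a diagonal parameter matrix $A = \diag(\lambda_1,\dots,\lambda_k)$, where $T(z) = 1 - \sum_i x_i x^i/(z-\lambda_i)$ makes it a one-line computation (valid in any dimension), and then passes to the Jordan block by the limiting procedure of \cref{thm:KalLim}. Your direct case-by-case verification for $k = 1,2,3$ is equally valid. One point worth correcting, though: your inference that ``the identity genuinely fails for larger Jordan blocks'' is almost certainly wrong. The paper's restriction to $k \le 3$ comes from the fact that the limit construction in \cref{thm:KalLim} has only been verified in those dimensions; the diagonal identity itself holds for all $k$, and the paper remarks elsewhere that the formula ``likely holds in general''.
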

\begin{proof}
	We first prove the case where $A$ is a real Jordan block. To prove that $L$ has no constant eigenfunctions, we differentiate an equation preceding this proposition to obtain:
	
	\begin{equation}
	\nabla p = - 2 \sum_{l=0}^{k-1} z^{l} A^{k-1-l}r
	\end{equation}
	
	
	\noindent from which we see that $\bp{e_k, \nabla p} = - 2 \varepsilon z^{k-1} x^1$. Thus $L$ cannot have a constant eigenfunction. The equation for $\bp{\d T, \d T}$ is proven as follows. When $A = \diag(0,\lambda_2,\dotsc,\lambda_k)$ one can easily prove the formula using \cref{eq:CTChPolCCI}. Then the formula for $A = J_k^T(0)$ follows by applying the limiting technique in \cref{thm:KalLim} used above. Finally, for the case of a complex Jordan block, i.e. $A = J_k^T(\lambda)$ where $\lambda \in \C$, note that these proofs hold by replacing $A \rightarrow A - \lambda I$ and $z \rightarrow z + \lambda$.
\end{proof}

Now one can use the second part of \cref{lem:detConTen} to obtain the characteristic polynomial of any central CT in $\eunn$. Indeed, suppose $L$ is a central CT with parameter matrix

\begin{align}
A & = J_k^T(0) \oplus \diag(\lambda_{k+1},\dotsc,\lambda_n) & g & = \varepsilon_0 S_k \oplus \diag(\varepsilon_{k+1},\dotsc,\varepsilon_n)
\end{align}

We can apply \cref{lem:detConTen} with $U$ equal to the subspace corresponding to $J_k^T(0)$, then

\begin{align}
p(z) = \det (zI - L) & = (\prod\limits_{i=k+1}^{n} y_{i}) \left (z^{k} - \varepsilon_0  \sum\limits_{l=0}^{k-1} \left (\sum\limits_{i=1}^{l+1} x^{i} x^{l+2-i} \right ) z^{l} \right ) \label{eq:CTChPolCCIII} \\
& \qquad {} - z^{k}(\sum\limits_{i=k+1}^{n} x_{i}x^{i} \prod\limits_{j=k+1, j \neq i}^{n} y_{j})
\end{align}

When $L$ is an ICT, we can obtain a transformation from canonical coordinates to Cartesian coordinates. Our formula is motivated by one in \cite{Kalnins1984} and is given as follows:

\begin{subequations} \label{eq:CTCoordsCCII}
	\begin{equation} \label{eq:CTCoordsCCIIa}
	\sum\limits_{i=1}^{l+1} x^{i} x^{l+2-i} = \frac{- \varepsilon_0 }{l! } (\deriv{}{z})^{l}(\frac{p(z)}{B_{u^\perp}(z)}) \big \lvert_{z = 0}  \quad l = 0,\dotsc, k-1
	\end{equation}
	\begin{align}
	(x^{i})^{2}  & =  -\varepsilon_i \frac{p(\lambda_i)}{B'(\lambda_i)} & i = k+1,...,n \label{eq:CTCoordsCCIIb}
	\end{align}
\end{subequations}

The following lemma will be used to obtain the metric in canonical coordinates adapted to an ICT defined in a space of constant curvature.

\begin{lemma} \label{lem:CctTForm}
	Suppose $L$ is a central CT with parameter matrix $A$. Let
	
	\begin{equation}
	T(z) = \frac{p(z)}{B(z)}
	\end{equation}
	
	Then $\bp{\d T, \d T} = 4\deriv{}{z} T(z)$.
\end{lemma}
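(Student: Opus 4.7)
The plan is to reduce the identity to the single-Jordan-block case already established in \cref{cor:CTChPolCentDe}, by proving that both sides split additively over the blocks of the metric-Jordan canonical form of $(A,g)$.

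First, I would derive an additivity formula for $T(z) - 1$ across $A$-invariant orthogonal decompositions. Given a non-degenerate $A$-invariant subspace $U \leq \eunn$, the decomposition proof of \cref{lem:detConTen} carries through verbatim for the ``shifted'' central-CT structure $zI - L = (zI - A) - r \otimes r^{\flat}$ (the only change being a sign on the rank-one piece, which enters \cref{lem:detAplusTen} linearly and is absorbed by identifying the restricted pieces as $zI|_U - L_U$ and $zI|_{U^\perp} - L_{U^\perp}$). Applying \eqref{eq:CTDetCentInvSS} in this form gives
\begin{equation*}
p(z) = p_U(z)\, B_{U^\perp}(z) + B_U(z)\, p_{U^\perp}(z) - B_U(z)\, B_{U^\perp}(z),
\end{equation*}
and dividing by $B(z) = B_U(z)\, B_{U^\perp}(z)$ yields
\begin{equation*}
T(z) - 1 = \bigl(T_U(z) - 1\bigr) + \bigl(T_{U^\perp}(z) - 1\bigr),
\end{equation*}
where $T_U(z) := p_U(z)/B_U(z)$ and similarly for $U^\perp$. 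Since $T_U$ depends only on coordinates of $U$, $T_{U^\perp}$ only on those of $U^\perp$, and these coordinate subspaces are mutually orthogonal, the differentials $\d T_U$ and $\d T_{U^\perp}$ are mutually orthogonal. Hence both sides of the claim split across $U$ and $U^\perp$:
\begin{equation*}
\bp{\d T, \d T} = \bp{\d T_U, \d T_U} + \bp{\d T_{U^\perp}, \d T_{U^\perp}}, \qquad 4 T'(z) = 4 T'_U(z) + 4 T'_{U^\perp}(z).
\end{equation*}

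Iterating this with respect to the metric-Jordan canonical form of $(A,g)$ from \cref{thm:comMetJFor} reduces the lemma to the case where $(A,g)$ consists of a single (possibly complex) block $J_k^T(\lambda)$ with metric $\varepsilon S_k$. Under the shift $A \mapsto A - \lambda I$ and $z \mapsto z - \lambda$, both $p(z)$ and $B(z)$ are unchanged, hence so are $T(z)$, $\d T$, and $T'(z)$; this reduces further to $\lambda = 0$, which is precisely the content of the last item of \cref{cor:CTChPolCentDe}. For a complex Jordan pair $J_k(\lambda) \oplus J_k(\bar{\lambda})$ the identity is first verified in the complex coordinates used in the canonical form, then transported to real coordinates by the orthogonal transformation \eqref{eq:transReToCom}.

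The main obstacle is the single-Jordan-block case itself: \cref{cor:CTChPolCentDe} only verifies it directly for $k \leq 3$ and defers the higher-$k$ case to the limiting procedure of \cref{thm:KalLim}. Granting that theorem, both sides of the identity are rational in $z$ with polynomial-in-$(x^i)$ coefficients that transform equivariantly under the basis change $\Lambda$; the diagonal case, which follows by direct computation from $T(z) = 1 - \sum_i \varepsilon_i (x^i)^2/(z - \lambda_i)$, then passes to the limit to yield the Jordan block identity. The one point that must be carefully checked is that the pointwise limit of $\bp{\d T, \d T}$ computed in the diagonal basis agrees with $\bp{\d T, \d T}$ computed in the limiting Jordan basis; this follows because $\Lambda$ intertwines the Gram matrices of the two metrics and the coordinate vector fields transform contragrediently.
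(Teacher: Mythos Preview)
Your approach is essentially the same as the paper's: both use the decomposition formula \eqref{eq:CTDetCentInvSS} to obtain the additivity $\d T = \d T_U + \d T_{U^\perp}$ (equivalently, your $T-1 = (T_U-1)+(T_{U^\perp}-1)$), then reduce by induction to the single-block base case of \cref{cor:CTChPolCentDe}. Your observation that the base case is only directly verified for $k\le 3$ is correct and is the same limitation the paper carries; the paper does not close this gap either and later acknowledges it (see the remark after \cref{prop:IctMetricEunn}).
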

\begin{proof}
	We prove this by induction. The base cases are given by \cref{cor:CTChPolCentDe}. Suppose $U$ is a non-degenerate invariant subspace of $A$ such that $L_u$ has the form given by \cref{cor:CTChPolCentDe} and $U^\perp$ satisfies the induction hypothesis. 
	
	By \cref{eq:CTDetCentInvSS} we can write:
	
	\begin{equation}
	p(z) = p_u(z)B_{u^\perp}(z) + B_{u}(z)(p_{u^\perp}(z) - B_{u^\perp}(z))
	\end{equation}
	
	Then
	
	\begin{equation}
	\d p = B_{u^\perp} \d p_u + B_{u} \d p_{u^\perp}
	\end{equation}
	
	Thus from the above equation, we have:
	
	\begin{align}
	\frac{\d p}{B} & = \frac{\d p_u}{B_u} + \frac{\d p_{u^\perp}}{B_{u^\perp}} \\
	\Rightarrow \d T & = \d T_u  + \d T_{u^\perp} \\
	\Rightarrow \bp{\d T, \d T} & = \bp{\d T_u, \d T_u} + \bp{\d T_{u^\perp}, \d T_{u^\perp}} \\
	& = 4\deriv{}{z} T_u(z) + 4\deriv{}{z} T_{u^\perp}(z) \\
	& = 4\deriv{}{z} T(z)
	\end{align}
\end{proof}

\paragraph{Examples } We end this section with some separable coordinate systems induced by central ICTs which can be analyzed fairly easily. These examples are a natural generalization of those presented in \cite[section~5]{Crampin2003}.

\begin{example}[Generalization of elliptic coordinates to $\eunn$] \label{ex:genElipCoord}
	Our first example is the central CT in $\eunn$ with parameter matrix $A = \diag(\lambda_{1},...,\lambda_{n})$ and orthogonal metric $g = (-1,\dotsc,-1,1,\dotsc,1)$. This CT is easiest to analyze if we assume $\lambda_{1} < \lambda_{2} < \cdots < \lambda_{n}$. Recall from \cref{eq:CTChPolCCI}, that the characteristic polynomial of $L$ is:
	
	\begin{equation}
	p(z) = \det (z I - L) = \prod\limits_{i=1}^{n} y_{i} - \sum\limits_{i=1}^{n} x_{i}x^{i} \prod\limits_{j \neq i} y_{j}
	\end{equation}
	
	Using the above formula, one can show that $L$ has no constant eigenfunctions (see the proof of \cref{cor:CTChPolCentDe}). Then by \cref{prop:eigCTs}, this CT is an ICT near any point where the eigenfunctions of $L$ are simple. We will now show that $L$ is an ICT in a dense subset of $\eunn$. First note that
	
	\begin{equation} \label{eq:chPolCCIsgn}
	p(\lambda_{i}) = -\varepsilon_{i}(x^{i})^{2} \prod\limits_{j \neq i} (\lambda_{i} - \lambda_{j})
	\end{equation}
	
	Assume each $x^{i} \neq 0$, then from Equation~\ref{eq:chPolCCIsgn}, we find that $\sgn p(\lambda_{i}) = \varepsilon_{i} (-1)^{n+1-i}$. Also since the coefficient of leading degree of $p(z)$ is $z^{n}$, we find that $\lim\limits_{z \rightarrow \infty} p(z) = 1$ and $\lim\limits_{z \rightarrow -\infty} p(z) = (-1)^{n}$. Since by assumption we have that $\varepsilon_{n} = 1$, we can use the intermediate value theorem to deduce the following about the roots of $p(z)$. If $\nu = 0$ (i.e. in Euclidean space), there are n distinct roots $u^{1},...,u^{n}$ satisfying:
	
	\begin{equation}
	\lambda_{1} < u^{1} < \lambda_{2} < u^{2} \cdots < \lambda_{n} < u^{n}
	\end{equation}
	
	If $\nu > 0$ then there are n distinct roots $u^{1},...,u^{n}$ satisfying:
	
	\begin{equation} \label{eq:CCIeigdom}
	u^{1} < \lambda_{1} < u^{2} \cdots < u^{\nu} < \lambda_{\nu} < \lambda_{\nu+1} < u^{\nu+1} < \lambda_{\nu+2} < u^{\nu+2} \cdots < \lambda_{n} < u^{n}
	\end{equation}
	
	Hence $L$ is an IC-tensor on an open dense subset of $\eunn$; because of this property one could consider the induced separable coordinates to be a generalization of elliptic coordinates. Since $p(\lambda_{i}) = \prod\limits_{j=1}^{n}(\lambda_{i}-u^{j})$, by Equation~\eqref{eq:chPolCCIsgn}, we can obtain the Cartesian coordinates in terms of the separable coordinates $u^{1},...,u^{n}$
	
	\begin{equation}
	(x^{i})^{2} = \varepsilon_{i} \frac{\prod\limits_{j=1}^{n}(u^{j} - \lambda_{i})}{\prod\limits_{j \neq i} (\lambda_{j} - \lambda_{i})}
	\end{equation}
	
	By using \cref{eq:CCIeigdom} and \cref{prop:IctMetricEunn}, one can check that in the separable coordinates $(u^1,\dotsc,u^n)$, for $1  \leq i \leq \nu$, $\sgn g^{ii} = \frac{(-1)^{n-i+1}}{(-1)^{n-i}} = -1$. Hence $\partial_{1},\dotsc,\partial_\nu$ are timelike vector fields and the remaining ones are spacelike.
\end{example}

We now show that if we relax the condition that $ \lambda_{1} < \cdots < \lambda_{n}$ in the above example then the coordinate system may no longer be defined on a dense subset of $\eunn$. Although one should note that in $\E^n$ that condition was not restrictive. The simplest case occurs in $\E^2_1$.

\begin{example} \label{ex:CCTCoordMink}
	Consider a central CT $L$ in $\E^2_1$ with parameter matrix $A = \diag(\lambda_{1},\lambda_{2})$ where $\lambda_1 > 
	\lambda_2$ and orthogonal metric $g = \diag(-1,1)$. Denote Cartesian coordinates by $(t,x)$. In this case the characteristic polynomial of $L$, $p(z)$, given by \cref{eq:CTChPolCCIExp} reduces to:
	
	\begin{equation}
	p(z) = z^2 + (2(t^2 - x^2 ) - \lambda_1 - \lambda_2)z - 2 t^2 \lambda_2 + 2 x^2 \lambda_1 + \lambda_1 \lambda_2
	\end{equation}
	
	One can calculate the discriminant of this polynomial to be:
	
	\begin{equation}
	4\, \left(  \left( t -x \right)^{2} + \frac{\lambda_2 - \lambda_1}{2} \right)  \left(  \left( t + x \right)^{2} + \frac{\lambda_2 - \lambda_1}{2} \right)
	\end{equation}
	
	If we define new Cartesian coordinates $(y^1,y^2 )$ by: 
	
	\begin{align}
	y^1 & := \sqrt{2}(t-x) & y^2 & := \sqrt{2}(t+x)
	\end{align}
	
	and we let $e := \sqrt{\lambda_1 - \lambda_2}$, then $L$ is a Benenti tensor on the following connected regions:
	
	\begin{center}
		\begin{tabular}{|c|c|}
			\hline Region & $(u^1 , u^2)$ \\ 
			\hline N & $y^1 > e , y^2 < - e$ \\  
			\hline E & $y^1, y^2 > e$ \\ 
			\hline S & $y^1 < -e , y^2 > e$ \\ 
			\hline W & $y^1 , y^2 > -e$ \\ 
			\hline C & $\Abs{y^1},\Abs{y^2} < e$ \\
			\hline 
		\end{tabular}
	\end{center}
	
	%
	%
	
	Hence the regions are separated by the lightlike lines $\Abs{y^i} = e$. Thus as claimed the associated separable coordinate systems aren't defined on a dense subset. 
	
	
	One can also find the coordinate domains as follows. Suppose $L$ is an ICT with eigenfunctions $u^1 < u^2$. Then by requiring that the metric in these coordinates given by \cref{prop:IctMetricEunn} to be Lorentzian, one finds the following constraints:
	
	\begin{subequations} \label{eq:constEqns}
		\begin{align}
		u^1 < u^2 < \lambda_2 < \lambda_1 \\
		\lambda_2 < \lambda_1 < u^1 < u^2 \\
		\lambda_2 < u^1 < u^2 < \lambda_1 \\
		u^1 < \lambda_2 < \lambda_1 < u^2
		\end{align}
	\end{subequations}
	
	The above inequalities shown that in the subset where $L$ is a Benenti tensor, if the eigenfunctions transition from one coordinate domain to another then one of the eigenfunctions must take the value $\lambda_1$ or $\lambda_2$. Hence the transition manifolds are solutions of $p(\lambda_i) = 0$, i.e. by \cref{eq:CTChPolCCI} where $(x^i)^2 = 0$. In this case, the eigenfunctions of $L$ can be readily calculated:
	
	\begin{align}
	t = 0 & \Rightarrow \lambda_1 , \lambda_2 + x^2 \\
	x = 0 & \Rightarrow \lambda_1 - t^2, \lambda_2
	\end{align}
	
	Using the values of the eigenfunctions on these subsets and their possible ranges given in \cref{eq:constEqns} one can deduce the following:
	
	\begin{center}
		\begin{tabular}{|c|c|}
			\hline $(y^1 , y^2)$ & $(u^1 , u^2)$ \\ 
			\hline E, W & $u^1 < u^2 < \lambda_2 < \lambda_1$ \\ 
			\hline N,S  & $\lambda_2 < \lambda_1 < u^1 < u^2$ \\ 
			\hline C  & $\lambda_2 < u^1 < u^2 < \lambda_1$ \\ 
			\hline 
		\end{tabular}
	\end{center}
	
	Together with \cref{eq:CTCoordsCCI}, this completes the analysis of these coordinate systems.
\end{example}

Even in three dimensions, the above analysis becomes much more difficult. This is because in three dimensions one can show that the discriminant is an eight degree polynomial in the coordinates with many terms. However, we note two simplifications that could be made for the general case. First by transferring to a geometrically equivalent CT, we could assume one of the eigenvalues of $A$ is zero. Secondly since the characteristic polynomial of $L$, given by \cref{eq:CTChPolCCI} only depends on the quantities $(x^i)^2$ and not $x^i$ explicitly, one can restrict the analysis to the quadrant where each $x^i > 0$ without losing generality. This symmetry is a consequence of the non-uniqueness of the chosen basis, in particular due to the fact that if $v$ is an eigenvector of $A$ then so is $-v$.


\subsection{Axial Concircular tensors} \label{sec:cTIrredAC}

\begin{propMy} \label{prop:CTDetIrredAxial}
	Let $L$ be an axial CT with parameter matrix $A  = J_{k}(0)^T$ and metric $g = \varepsilon S_k$. Then
	
	\begin{equation} \label{eq:CTDetIrredAxial}
	p(z) = \det (z I - L) = z^{k}+\sum\limits_{l=2}^{k} \sum\limits_{i=1}^{l-1} x^{k+1+i-l}x^{k+1-i} z^{k-l} - 2 \varepsilon \sum\limits_{i=1}^{k} x^{k-i+1} z^{k-i}
	\end{equation}
	
	Furthermore the following are true:
	\begin{itemize}
		\item $L$ has no constant eigenfunctions.
		\item If $k \leq 3$, then $\bp{\d p, \d p}  = 4 \varepsilon \deriv{}{z} p(z)$.
	\end{itemize}
\end{propMy}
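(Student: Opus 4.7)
The strategy is to compute $p(z) = \det(zI - L)$ by applying \cref{lem:detAplusTen} twice, then read off the two auxiliary claims from the resulting polynomial.

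Work in the skew-normal basis $\{e_1, \dotsc, e_k\}$ with coordinates $(x^1, \dotsc, x^k)$ of $r$. Since $g_{ij} = \varepsilon \delta_{i+j, k+1}$, one has $r^\flat_j = \varepsilon x^{k+1-j}$ and $(e_1^\flat)_j = \varepsilon \delta_{jk}$, so the $j$-th column of $zI - L$ equals $b_j - \varepsilon x^{k+1-j} e_1 - \varepsilon \delta_{jk} r$, where $b_j := (zI - A) e_j$ satisfies $b_j = z e_j - e_{j+1}$ for $j < k$ and $b_k = z e_k$. Expanding $\bigwedge_{j=1}^k (b_j - \varepsilon x^{k+1-j} e_1 - \varepsilon \delta_{jk} r)$ via \cref{lem:detAplusTen} and noting that $e_1 \wedge e_1 = 0$ while $r$ appears only in column $k$, four types of terms survive:
\begin{equation*}
p(z) = z^k - \varepsilon \sum_{j=1}^k x^{k+1-j} D_j - \varepsilon D_0 + \sum_{j=1}^{k-1} x^{k+1-j} D'_j,
\end{equation*}
where $D_j$ replaces column $j$ by $e_1$, $D_0$ replaces column $k$ by $r$, and $D'_j$ does both.

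To evaluate these subdeterminants uniformly, use the nilpotency $A^k = 0$ to write $(zI - A)^{-1} = \sum_{m=0}^{k-1} z^{-(m+1)} A^m$ and the identity $\det N = z^k \det((zI - A)^{-1} N)$. The matrix $(zI - A)^{-1} N$ coincides with the identity except in the substituted columns: $((zI - A)^{-1} e_1)_j = z^{-j}$ and $((zI - A)^{-1} r)_l = \sum_{i=1}^{l} x^i z^{i-l-1}$. A routine cofactor argument (only the identity and, for $D'_j$, the transposition $(j\,k)$ contribute) then gives $D_j = z^{k-j}$, $D_0 = \sum_{i=1}^k x^i z^{i-1}$ and $D'_j = \sum_{i=1}^{k-j} x^{i+j} z^{i-1}$. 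Substituting (the two single sums coincide under $l = k+1-j$, doubling to $-2\varepsilon \sum_l x^l z^{l-1}$) and re-indexing the double sum via $l = k - i + 1$ yields the claimed characteristic polynomial.

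The remaining claims are then immediate. The variable $x^1$ occurs in $p(z)$ only in the term $-2\varepsilon x^1$, so $\partial p/\partial x^1 \equiv -2\varepsilon$ independently of $z$; hence for any constant $u$, $p(u)$ is a non-trivial affine function of $x^1$ and cannot vanish identically, ruling out constant eigenfunctions. Finally, when $k \leq 3$ the identity $\bp{\d p, \d p} = 4\varepsilon \deriv{}{z} p(z)$ is verified by direct expansion in each case, using $\bp{\d x^i, \d x^j} = \varepsilon \delta_{i+j, k+1}$. The principal obstacle of the proof is the evaluation of the doubly substituted determinants $D'_j$; the inverse-matrix trick above is what makes this computation tractable in a uniform way.
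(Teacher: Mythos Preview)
Your proposal is correct and follows essentially the same overall strategy as the paper: both apply \cref{lem:detAplusTen} to split off the two rank-one perturbations $e_1 \otimes r^\flat$ and $r \otimes e_1^\flat$, then evaluate the resulting subdeterminants. The paper simply declares the latter step a ``straightforward but tedious calculation,'' whereas you make it explicit and uniform via the nilpotent inverse $(zI-A)^{-1}=\sum_{m=0}^{k-1} z^{-(m+1)}A^m$; this is a genuinely cleaner way to obtain $D_j$, $D_0$, and especially the doubly substituted $D'_j$. Your treatment of the two auxiliary claims is also equivalent to the paper's: the paper reads off the constant-in-$x$ part of $\d p$ (equivalently your observation that $\partial p/\partial x^1 \equiv -2\varepsilon$) to exclude constant eigenfunctions, and both verify $\bp{\d p,\d p}=4\varepsilon\,\d p/\d z$ for $k\leq 3$ by direct expansion.
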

\begin{proof}
	We first outline how one proves the above formula for $p(z)$. It is sufficient to calculate $\det L$ when $L$ has the parameter matrix $A  = J_{k}(\lambda)^T$. Let $\tilde{A} = [\tilde{a_{1}},...,\tilde{a_{n}}] := A + \varepsilon r \otimes e_{k}$. Then applying \cref{lem:detAplusTen} to $L = \tilde{A} + e_1 \otimes r^\flat$ gives:
	
	\begin{equation}
	\det L = \bigwedge \limits_{i=1}^{n} \tilde{a_{i}} +
	\sum\limits_{i=1}^{n} \tilde{a_{1}} \wedge \cdots \wedge x_{i} e_{1} \wedge \cdots \wedge \tilde{a_{n}}
	\end{equation}
	
	After expanding $r$ and $e_1$ in the basis $\{a_1,\dotsc,a_k\}$ and simplifying, the result then follows by a straightforward but tedious calculation.
	
	Suppose the above formula for $p(z)$ holds. We now show that $L$ has no constant eigenfunctions. The constant term of $\d p$ is:
	
	\begin{equation}
	-2 \varepsilon \sum\limits_{i=1}^{k} z^{k-i} \d x^{k-i+1}
	\end{equation}
	
	If $\lambda \in \R$ satisfies $p(\lambda) \equiv 0$, then the above form must be identically zero. A contradiction, hence $L$ has no constant eigenfunctions.
	
	The formula involving $\bp{\d p, \d p}$ can be checked manually for the cases $k \leq 3$.
\end{proof}

%

The following proposition will reduce the calculation of the characteristic polynomial for general axial concircular tensors to cases already considered.

\begin{propMy}[Determinant of Axial Concircular tensors] \label{prop:CTDetAxial}
	Suppose $L$ is an axial CT in canonical form given as follows:
	
	\begin{align}
	L & = A + e_{1} \otimes r^{\flat} + r \otimes e_{1}^{\flat} \\
	A & = A_d \oplus A_c
	\end{align}
	
	\noindent where $A_d = J_{k}^T(\lambda)$. Then $p(z) = \det(z I - L)$ is given as follows:
	
	\begin{equation} \label{eq:CTChPolAxial}
	p(z) = p_d(z) B(z) + \varepsilon (p_c(z) - B(z)) 
	\end{equation}
\end{propMy}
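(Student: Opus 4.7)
The plan is to exploit the $A$-invariant splitting $V = D \oplus D^\perp$ already built into the canonical form, and to reduce $\det(zI-L)$ to data on the two blocks via the Schur complement combined with \cref{lem:detAplusTen}. First I would rewrite $L$ block-wise with respect to $D \oplus D^\perp$. Since $A = A_d \oplus A_c$, $r = r_d + r_c$, and $e_1 \in D$, the diagonal blocks are $L_d = A_d + e_1 \otimes r_d^\flat + r_d \otimes e_1^\flat$ on $D$ (the axial CT handled in \cref{prop:CTDetIrredAxial}) and $A_c$ on $D^\perp$. The off-diagonal blocks are rank one: the term $e_1 \otimes r^\flat$ contributes $v \mapsto \bp{r_c,v}\, e_1$ from $D^\perp$ into $D$, and $r \otimes e_1^\flat$ contributes $w \mapsto \varepsilon w^k r_c$ from $D$ into $D^\perp$, where $w^k$ denotes the $e_k$-coefficient of $w$ in the skew-normal basis; the factor $\varepsilon$ and the appearance of only the $k$-th coordinate come from $g|_D = \varepsilon S_k$, which forces $\bp{e_1,e_j} = \varepsilon \delta_{jk}$.

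Next I would apply the Schur-complement formula with $zI-A_c$ as the bottom-right block (invertible on a Zariski-dense set of $z$; both sides of the claimed identity are polynomials in $z$, so the formula extends automatically). A short computation collapses the composition of the two off-diagonal blocks through $(zI-A_c)^{-1}$ to the single rank-one endomorphism of $D$ sending $w \mapsto \varepsilon\, \alpha(z)\, w^k\, e_1$, where $\alpha(z) := \bp{r_c,(zI-A_c)^{-1}r_c}$. The scalar $\alpha(z)$ is then pinned down by the central sub-problem on $D^\perp$: applying \cref{lem:detAplusTen} to $L_c = A_c + r_c \otimes r_c^\flat$ yields $p_c(z) = B(z)\bigl(1 - \alpha(z)\bigr)$, so $\alpha(z) = (B(z) - p_c(z))/B(z)$. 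One more application of \cref{lem:detAplusTen}, in the form $\det(M - u v^T) = \det(M) - v^T \mathrm{adj}(M)\, u$, collapses the remaining determinant on $D$ to $p_d(z) - \varepsilon\, \alpha(z)\, [\mathrm{adj}(zI-L_d)]_{k,1}$.

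After substituting and simplifying, the whole identity reduces to the single cofactor computation $[\mathrm{adj}(zI-L_d)]_{k,1} = 1$, and this is where I expect the only real work. By definition, the cofactor equals $(-1)^{k+1}$ times the minor obtained by deleting row $1$ and column $k$ from $zI - L_d$. Reading off the columns of $L_d$ in the skew-normal basis --- column $j$ for $1 \le j < k$ carries a $\lambda$ on the diagonal, a $1$ directly below, and a single ``axial'' entry in row $1$ --- one sees that deleting row $1$ and column $k$ kills every axial entry and leaves an upper-triangular $(k-1)\times(k-1)$ matrix with $-1$ on every diagonal entry (the $-1$'s come from the subdiagonal $1$'s of $L_d$, with the sign flipped by $zI - L_d$). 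Its determinant is $(-1)^{k-1}$, so the cofactor equals $(-1)^{k+1}(-1)^{k-1} = 1$. Substituting back produces
\[
\det(zI-L) \;=\; p_d(z)B(z) \;-\; \varepsilon\bigl(B(z) - p_c(z)\bigr) \;=\; p_d(z)B(z) \;+\; \varepsilon\bigl(p_c(z) - B(z)\bigr),
\]
as claimed.
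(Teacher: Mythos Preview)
Your proof is correct and takes a genuinely different route from the paper's. Both arguments exploit the block decomposition $V = D \obot D^\perp$ and the rank-one nature of the off-diagonal terms, but they organize the reduction differently. The paper peels off the rank-one pieces one at a time via \cref{lem:detAplusTen} in wedge-product form: it first writes $L = \tilde{L} + \varepsilon r_c \otimes e_k$ with $\tilde{L}$ block upper-triangular, expands the determinant as a sum of wedge products, and then manipulates those wedges (swapping $r_c$ and $e_1$, expanding $e_1$ in the $\{a_1,\dotsc,a_k\}$ basis) to recognize the resulting term as $(-1)^k(\det L_c - \det A_c)$. You instead invoke the Schur complement with respect to the invertible block $zI - A_c$, which immediately packages the two off-diagonal rank-one maps into the single scalar $\alpha(z) = \bp{r_c,(zI-A_c)^{-1}r_c}$; the identification $\alpha(z) = (B(z)-p_c(z))/B(z)$ via the matrix determinant lemma is exactly the content of \cref{lem:detAplusTen} applied to the central block, and the remaining cofactor $[\operatorname{adj}(zI-L_d)]_{k,1}=1$ is a clean computation because deleting row~$1$ and column~$k$ kills both axial contributions at once.

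Your approach is arguably more transparent for a reader comfortable with standard linear-algebra identities (Schur complement, matrix determinant lemma, adjugate), and it isolates the one genuinely nontrivial step --- the cofactor computation --- very cleanly. The paper's wedge-product argument, on the other hand, stays within the multilinear framework already set up and avoids introducing the Schur complement as a separate tool; it also computes $\det L$ directly rather than $\det(zI-L)$, with the characteristic polynomial obtained by substitution. Neither approach is shorter in any essential way.
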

\begin{proof}
	First note that it is sufficient to calculate $\det L$. Write $r = r_d + r_c$ adapted to the decomposition $\eunn = D \obot D^\perp$ where $D$ is the $A$-invariant subspace generated by $e_1$. Then
	
	\begin{equation}
	L = L_d + A_c + e_{1} \otimes (r_c)^{\flat} + r_c \otimes e_{1}^{\flat} 
	\end{equation}
	
	\noindent where $L_d$ is $L$ restricted to $D$ and $A_c$ is $A$ restricted to $D^\perp$. Let $\tilde{L} = L_d + A_c + e_{1} \otimes (r_c)^{\flat}$, then applying \cref{lem:detAplusTen} to $L = \tilde{L} + \varepsilon r_c \otimes e_k$ gives:
	
	\begin{equation} \label{eq:prop:CTDetAxI}
	\det L = \det \tilde{L} + \varepsilon \tilde{L}_{1} \wedge \cdots \wedge r_c \wedge \cdots \wedge \tilde{L}_{n} 
	\end{equation}
	
	\noindent where $r_c$ appears in the $k$th position. Note that in block diagonal form
	
	\begin{equation}
	\tilde{L} = \begin{pmatrix}
	L_d & e_1 \otimes (r_c)^\flat \\ 
	0 & A_c
	\end{pmatrix} 
	\end{equation}
	
	Then after applying \cref{lem:detAplusTen} once more, we get
	
	\begin{align}
	\tilde{L}_{1} \wedge \cdots \wedge r_c \wedge \cdots \wedge \tilde{L}_{n}  & = \bigwedge \limits_{i=1}^{k-1} (L_d)_i \wedge r_c \wedge (\sum\limits_{i=k+1}^{n} a_{k+1} \wedge \cdots \wedge x_{i} e_{1} \wedge \cdots \wedge a_{n}) \\
	& = - \bigwedge \limits_{i=1}^{k-1} (L_d)_i \wedge e_1\wedge (\sum\limits_{i=k+1}^{n} a_{k+1} \wedge \cdots \wedge x_{i} r_c \wedge \cdots \wedge a_{n}) \\
	& = - \bigwedge \limits_{i=1}^{k-1} a_i \wedge e_1 \wedge (\sum\limits_{i=k+1}^{n} a_{k+1} \wedge \cdots \wedge x_{i} r_c \wedge \cdots \wedge a_{n}) \\
	& = (-1)^k  e_1 \wedge \cdots \wedge e_k \wedge (\sum\limits_{i=k+1}^{n} a_{k+1} \wedge \cdots \wedge x_{i} r_c \wedge \cdots \wedge a_{n}) \\
	& = (-1)^k (\det (L_c) - \det (A_c))
	\end{align}
	
	\noindent where the second last equation follows by expanding $e_1$ in the basis $\{ a_1,\dotsc,a_k \}$. The result then follows by \cref{eq:prop:CTDetAxI}.
\end{proof}

One can use \cref{prop:CTDetAxial} to obtain the characteristic polynomial of any axial CT in $\eunn$. This is done as in the example in the discussion following \cref{cor:CTChPolCentDe}. As an example, we will calculate the Cartesian coordinates for a non-null axial CT (i.e. $k = 1$). Indeed, suppose $L$ is a non-null axial ICT with eigenfunctions $(u^1,\dotsc, u^n)$. Let $A_c = \diag(\lambda_2,\dotsc,\lambda_{n})$, then from \cref{eq:CTChPolAxial} and \cref{eq:CTDetIrredAxial}, we see that

\begin{equation}
p(z) = \det (z I - L) = (\prod\limits_{i=2}^{n} y_{i})(z - 2 \varepsilon x^{1}) - \varepsilon (\sum\limits_{i=2}^{n} x_{i}x^{i} \prod\limits_{j=2, j \neq i}^{n} y_{j})
\end{equation}

\noindent where $y_i = z - \lambda_i$. Since $p(z) = \prod\limits_{i=1}^{n} (z-u^i)$, we can deduce the transformation from the coordinates $(u^1,\dotsc, u^n)$ to Cartesian coordinates as follows. By evaluating $p(\lambda_i)$, we get 

\begin{align}
(x^{i})^{2} & = -\varepsilon_i \varepsilon \frac{\prod\limits_{j=1}^{n} (u^{j}-\lambda_{i})}{\prod\limits_{j \geq 2, j \neq i} (\lambda_{j}-\lambda_{i})} & i = 2,...,n \label{eq:CTCoordsACI}
\end{align}

By taking the coefficient of $z^{n-1}$ of $p(z)$, we get:

\begin{equation} \label{eq:CTCoordsACII}
x^{1} = \frac{\varepsilon}{2}(u^{1}+ \cdots + u^{n} -\lambda_2 - \cdots - \lambda_n)
\end{equation}

In conclusion, we note that this procedure can be generalized for $k \geq 2$.

Observe that \cref{eq:CTChPolAxial} holds for a central CT if we define $p_d(z) \equiv 1$ in this case. We will use \cref{eq:CTChPolAxial} and \cref{lem:CctTForm} to obtain the metric in canonical coordinates for some ICTs in $\eunn$. We have the following:

\begin{proposition}[ICT metrics in $\eunn$] \label{prop:IctMetricEunn}
	Suppose $L$ is an ICT in Euclidean or Minkowski space in canonical form with eigenfunctions $(u^1,\dotsc,u^n)$. Then the metric in adapted coordinates is orthogonal and
	
	\begin{equation}
	g_{ii} = \frac{\varepsilon}{4} \frac{p'(u^i)}{B(u^i)} = \frac{\varepsilon}{4}\frac{\prod\limits_{j \neq i}(u^{i}-u^{j})}{\prod\limits_{j=1}^{n-k} (u^{i}-\lambda_{j})}
	\end{equation}
	
	\noindent where $\varepsilon$ is the sign associated with $L$ and $\lambda_{1},\dotsc,\lambda_{n-k}$ are the roots of $B(z)$.
\end{proposition}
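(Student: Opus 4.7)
The plan is to combine \cref{prop:CTmetric}, which expresses $g^{ii}$ in terms of $\bp{\d p, \d p}|_{z=u^i}$ and $p'(u^i)$, with the structural identities for $p$ in \cref{prop:CTDetAxial} together with the ``energy'' formulas in \cref{cor:CTChPolCentDe} and \cref{lem:CctTForm}. The central computation to establish is the clean identity
\begin{equation*}
\bp{\d p, \d p}\big|_{z = u^i} = 4\varepsilon B(u^i) p'(u^i),
\end{equation*}
from which the stated formula follows at once by dividing by $(p'(u^i))^2$ and inverting. The treatment of central CTs is obtained as the degenerate case $p_d \equiv 1$, $\varepsilon = 1$ of the axial formulas, so it suffices to work with a non-degenerate axial CT; recall that in Euclidean or Minkowski signature we always have $k \leq 3$, so \cref{cor:CTChPolCentDe} applies directly.

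With $L$ in canonical axial form, write $r = r_d + r_c$ along the orthogonal decomposition $\eunn = D \obot D^\perp$ and apply \cref{prop:CTDetAxial} to get $p(z) = p_d(z) B(z) + \varepsilon(p_c(z) - B(z))$. The polynomial $B(z)$ has constant (in $x$) coefficients, so $\d B = 0$; similarly $p_d$ depends only on the coordinates adapted to $D$ and $p_c$ only on those adapted to $D^\perp$. Orthogonality of these coordinate directions then gives
\begin{equation*}
\d p = B \, \d p_d + \varepsilon \, \d p_c, \qquad \bp{\d p_d, \d p_c} = 0,
\end{equation*}
whence $\bp{\d p, \d p} = B^2 \bp{\d p_d, \d p_d} + \bp{\d p_c, \d p_c}$.

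Next I would substitute the two ``energy'' identities. For the axial piece, \cref{cor:CTChPolCentDe} yields $\bp{\d p_d, \d p_d} = 4\varepsilon p_d'(z)$. For the central piece, \cref{lem:CctTForm} gives $\bp{\d T_c, \d T_c} = 4 T_c'(z)$ with $T_c = p_c/B$; multiplying through by $B^2$ (which commutes with $\d$ since $B$ is coordinate-independent) produces $\bp{\d p_c, \d p_c} = 4[p_c'(z) B(z) - p_c(z) B'(z)]$. Assembling:
\begin{equation*}
\bp{\d p, \d p} = 4\varepsilon B^2 p_d'(z) + 4 B \, p_c'(z) - 4 p_c(z) B'(z).
\end{equation*}

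Finally I would evaluate at $z = u^i$. The relation $p(u^i) = 0$ in the form $p_c(u^i) = B(u^i)(1 - \varepsilon p_d(u^i))$ eliminates $p_c(u^i)$ from the third term, while differentiating $p(z)$ and multiplying by $\varepsilon$ gives
\begin{equation*}
\varepsilon p'(u^i) = \varepsilon B(u^i) p_d'(u^i) + (\varepsilon p_d(u^i) - 1) B'(u^i) + p_c'(u^i).
\end{equation*}
A direct substitution shows the expression above factors as $4 B(u^i) \cdot \varepsilon p'(u^i)$, yielding the claimed identity. Then \cref{prop:CTmetric} gives $g^{ii} = 4\varepsilon B(u^i)/p'(u^i)$, and inverting produces $g_{ii} = \varepsilon p'(u^i) / (4 B(u^i))$. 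The second equality in the statement comes from the factorizations $p(z) = \prod_j(z-u^j)$ and $B(z) = \prod_{j=1}^{n-k}(z-\lambda_j)$. The main obstacle is the careful bookkeeping in the evaluation at $z = u^i$; the one genuine ingredient that has only been verified for $k \leq 3$ in the excerpt is the axial energy identity $\bp{\d p_d, \d p_d} = 4\varepsilon p_d'(z)$, but this suffices for Euclidean and Minkowski spaces, which is the scope of the proposition.
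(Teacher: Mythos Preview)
Your argument is correct and follows essentially the same route as the paper: split $\d p$ into its $D$ and $D^\perp$ pieces, apply the ``energy'' identities from \cref{prop:CTDetIrredAxial} and \cref{lem:CctTForm}, and combine. Two small remarks. First, you cite \cref{cor:CTChPolCentDe} for the axial identity $\bp{\d p_d,\d p_d}=4\varepsilon p_d'(z)$, but that is the central result; the axial identity is \cref{prop:CTDetIrredAxial}. Second, the paper organizes the computation slightly more economically by working throughout with $T=p/B$, $S=p_d$, $\tilde T=p_c/B$: since $\d T=\varepsilon\,\d\tilde T+\d S$ and the two summands are orthogonal, \cref{lem:CctTForm} and \cref{prop:CTDetIrredAxial} give directly $\bp{\d T,\d T}=4\varepsilon\,\tfrac{d}{dz}T(z)$ as an identity in $z$, so no evaluation step using $p(u^i)=0$ is needed. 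Your route reaches the same conclusion but only after the substitution at $z=u^i$; the paper's formulation avoids that bookkeeping.
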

\begin{remark}
	The above formula likely holds in general (see \cite{Kalnins1984}) but we haven't verified it for null axial CTs when $k > 3$.
\end{remark}
\begin{proof}
	
	Let $T(z) := \frac{p(z)}{B(z)}$, $S(z) = p_d(z)$ and $\tilde{T}(z) := \frac{p_c(z)}{B(z)}$, then \cref{eq:CTChPolAxial} implies:
	
	\begin{equation}
	\d T = \varepsilon \d \tilde{T} + \d S
	\end{equation}
	
	Also recall that in these spaces, the index $k \leq 3$. Hence
	
	\begin{align}
	\bp{\d T, \d T} & = \d T (\nabla T) \\
	& = \bp{\d \tilde{T}, \d \tilde{T}}+ \bp{\d S, \d S} \\
	& = 4\deriv{}{z} \tilde{T}(z) + 4\varepsilon \deriv{}{z} S(z)  \quad \text{by \cref{lem:CctTForm} and \cref{prop:CTDetIrredAxial} }   \\ 
	& = 4\varepsilon \deriv{}{z} ( \varepsilon\tilde{T}(z) + S(z)) \\
	& \overset{\eqref{eq:CTChPolAxial} }{=} 4\varepsilon \deriv{}{z}\frac{p(z)}{B(z)}
	\end{align}
	
	Thus we have the following:
	\begin{align}
	\frac{\scalprod{(\d p)|_{z=u^{i}}}{(\d p)|_{z=u^{i}}}}{B(u^{i})^{2}} & = 4 \varepsilon \deriv{}{z}\frac{p(z)}{B(z)}\bigg\lvert_{z=u^{i}} \\
	& = 4 \varepsilon \frac{p'(u^{i})}{B(u^{i})}
	\end{align}
	
	From \cref{prop:CTmetric} we have:
	
	\begin{align}
	g^{ii} & = \frac{\scalprod{(\d p)|_{z=u^{i}}}{(\d p)|_{z=u^{i}}}}{p'(u^{i})^{2}} \\
	& = 4 \varepsilon \frac{B(u^{i})}{p'(u^{i})} \\
	& = 4 \varepsilon\frac{\prod\limits_{j=k+1}^{n} (u^{i}-\lambda_{j})}{\prod\limits_{j \neq i}(u^{i}-u^{j})}
	\end{align}
\end{proof}
\begin{remark}
	The above technique for calculating the metric is based on Moser's calculation of the metric for sphere-elliptic coordinates in \cite[P.~179-180]{Moser2011}.
\end{remark}

\begin{corollary}
	Suppose $L$ is a non-degenerate CT in Euclidean or Minkowski space in canonical form. Then the points at which a real eigenvalue of $A_c$ is an eigenvalue of $L$ are singular, i.e. $L$ cannot be an ICT in any neighborhood of these points.
\end{corollary}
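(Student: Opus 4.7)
The plan is to argue by contradiction using the explicit expression for the contravariant metric in canonical coordinates established in \cref{prop:IctMetricEunn}. Suppose that $L$ were an ICT on some neighbourhood $U$ of a point $p$ at which $\lambda$, a real eigenvalue of $A_c$, is also an eigenvalue of $L_p$.

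First I would exploit simplicity of the eigenfunctions: since $L$ is assumed to be an ICT on $U$, its $n$ eigenfunctions are simple on $U$, so the eigenvalues of $L_q$ are pairwise distinct for every $q \in U$. In particular, there is a unique index $i$ with $u^i(p)=\lambda$, and $\lambda$ is a simple root of the characteristic polynomial $p(z)$ at $p$, so $p'(\lambda)\neq 0$.

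Next, I would apply \cref{prop:IctMetricEunn}, which is valid in Euclidean and Minkowski signature (all relevant indices satisfy $k\leq 3$): in the canonical coordinates $(u^1,\dotsc,u^n)$ the metric is orthogonal with
\begin{equation*}
g^{ii}(q) \;=\; \frac{4\varepsilon\, B(u^i(q))}{p'(u^i(q))}.
\end{equation*}
Evaluating at $q=p$ gives $g^{ii}(p)=4\varepsilon\, B(\lambda)/p'(\lambda)=0$, because $\lambda$ is a root of $B(z)$ by hypothesis. In an orthogonal coordinate chart, however, $g_{ii}\,g^{ii}=1$ (no sum) pointwise, so the vanishing of $g^{ii}(p)$ is incompatible with the non-degeneracy of the pseudo-Riemannian metric at $p$ (equivalently, $du^i(p)$ would be a null covector orthogonal to every other $du^j(p)$, forcing $du^i(p)=0$ and contradicting functional independence of the eigenfunctions). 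This contradiction shows that $L$ cannot be an ICT in any neighbourhood of $p$.

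The only delicate point I anticipate is justifying $p'(\lambda)\neq 0$ at the singular point itself: this is not part of the hypothesis, but is forced by the assumed ICT property on the full neighbourhood $U$ via simplicity of the eigenfunctions. Once that is in place, \cref{prop:IctMetricEunn} does all the work and the contradiction is immediate.
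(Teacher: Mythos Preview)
Your proof is correct and is essentially the argument the paper intends: the corollary is stated without proof immediately after \cref{prop:IctMetricEunn}, and the implied reasoning is precisely that $B(\lambda)=0$ forces the metric component $g^{ii}$ (equivalently, makes $g_{ii}$ blow up) in the formula of that proposition, contradicting non-degeneracy. Your handling of the $p'(\lambda)\neq 0$ point via simplicity of the eigenfunctions is exactly the right justification.
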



\subsection{Concircular tensors in Spherical Submanifolds of pseudo-Euclidean space} \label{sec:cTIrredSphC}

In this section we treat the case of CTs defined on $\eunn(\kappa)$. We will be able to reduce most calculations to similar ones involving central CTs. The following proposition will allow us to do this.

\begin{propMy}[Determinant of Spherical CTs]
	Suppose $L = R L_c R^*$ is a CT in $\eunn(\frac{1}{r^2})$, the following holds:
	
	\begin{equation} \label{eq:CTChPolEunnKap}
	p(z) = \det (z R -  L + \frac{r \otimes r^{\flat}}{r^{2}} ) =  r^{-2} (B(z) - p_c(z)) 
	\end{equation}
\end{propMy}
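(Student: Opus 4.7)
The plan is to recognize that the operator $N(z) := zR - L + \frac{r \otimes r^{\flat}}{r^{2}}$ has a block structure with respect to the splitting $V = r^\perp \oplus \langle r\rangle$ which reduces the computation to a determinant on $r^\perp$, and then to evaluate this by lifting back to $V$ and applying \cref{lem:detAplusTen} as a rank-one update of $zI - A$.

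I would first verify the block decomposition. Write $P := \frac{r \otimes r^{\flat}}{r^{2}}$ for the projector onto $\langle r\rangle$, so that $R = I - P$. Since $Rr = 0$, one has $Lr = RARr = 0$ and $Pr = r$, giving $N(z)r = r$; for any $v \in r^\perp$ one has $Pv = 0$ and $Rv = v$, so $N(z)v = zv - RAv \in r^\perp$ (the image of $R$ is $r^\perp$). Hence in a basis adapted to this splitting, $N(z)$ is block diagonal with blocks $(zI - RA)|_{r^\perp}$ and $1$, so $\det N(z) = \det\bigl((zI - RA)|_{r^\perp}\bigr)$. A parallel analysis shows $zI - RA$ is block upper triangular on $V$ (since $RAr \in r^\perp$) with diagonal blocks $(zI - RA)|_{r^\perp}$ and the scalar $z$, yielding
\[
\det(zI - RA) = z \cdot \det\bigl((zI - RA)|_{r^\perp}\bigr).
\]

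The heart of the argument is computing $\det(zI - RA)$ directly. Using self-adjointness of $A$ one gets $PA = \frac{r \otimes (Ar)^{\flat}}{r^{2}}$, so that $zI - RA = (zI - A) + \frac{1}{r^{2}}\, r \otimes (Ar)^{\flat}$ is a rank-one modification of $zI - A$. Applying \cref{lem:detAplusTen} then gives
\[
\det(zI - RA) = B(z) + \frac{1}{r^{2}}\,\langle Ar,\, \operatorname{adj}(zI - A)\, r\rangle.
\]
A parallel application of \cref{lem:detAplusTen} to $zI - L_c = (zI - A) - r \otimes r^{\flat}$ identifies $\langle r, \operatorname{adj}(zI - A)\, r\rangle = B(z) - p_c(z)$. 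Finally, the polynomial identity $A \cdot \operatorname{adj}(zI - A) = z\,\operatorname{adj}(zI - A) - B(z)\, I$, obtained by rearranging $(zI - A)\operatorname{adj}(zI - A) = B(z) I$, converts $\langle Ar, \operatorname{adj}(zI - A)\, r\rangle$ into $-r^{2} B(z) + z(B(z) - p_c(z))$. Combining these gives $\det(zI - RA) = \frac{z}{r^{2}}(B(z) - p_c(z))$, and dividing by $z$ yields the claim.

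The main obstacle I anticipate is the careful bookkeeping to establish the two block decompositions (in particular the checks $Lr = 0$ and $N(z)v \in r^\perp$), which rely crucially on $R$ having kernel exactly $\langle r\rangle$; this is why the argument is carried out on the open set where $r^{2} \neq 0$. Once these decompositions are in place, the rest is a routine manipulation of the adjugate, valid as an identity of polynomials in $z$.
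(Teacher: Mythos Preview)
Your argument is correct and complete. The block decompositions are verified properly (the only delicate point is $N(z)v\in r^\perp$ for $v\in r^\perp$, which follows since the image of $R$ is $r^\perp$), the rank-one identification $PA=\tfrac{1}{r^{2}}\,r\otimes(Ar)^{\flat}$ uses self-adjointness of $A$ correctly, and the adjugate identity $(zI-A)\operatorname{adj}(zI-A)=B(z)I$ does the rest. The final division by $z$ is justified since both $\det(zI-RA)$ and $z\cdot p(z)$ are polynomials in $z$ that agree for $z\neq 0$, hence identically.

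Your route is genuinely different from the paper's. The paper does not use the block decomposition or the adjugate at all: instead it reduces to proving $\det\bigl(L+\tfrac{r\otimes r^{\flat}}{r^{2}}\bigr)=r^{-2}(\det L_c-\det A)$ (the general-$z$ statement then follows by replacing $A$ with $A-zI$, under which $L\mapsto L-zR$), and establishes this by writing $L+\tfrac{r\otimes r^{\flat}}{r^{2}}=AR+r\otimes d$ and performing explicit column/wedge manipulations using \cref{lem:detAplusTen} together with the observation $\det L=0$. What your approach buys is a cleaner, coordinate-free computation that recognizes $zI-RA$ as a rank-one perturbation of $zI-A$ and exploits the adjugate systematically; this avoids the somewhat delicate wedge-product bookkeeping and the antisymmetry cancellation the paper uses. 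The paper's approach, by contrast, is more elementary in the sense that it never invokes the adjugate or the matrix determinant lemma explicitly, working instead entirely with \cref{lem:detAplusTen} as stated.
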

\begin{proof}
	It is sufficient to prove that:
	
	\begin{equation} \label{eq:CTDetEunnKap}
	\det (L + \frac{r \otimes r^{\flat}}{r^{2}} ) =  r^{-2} (\det L_c - \det A) 
	\end{equation}
	
	Observe that:
	
	\begin{align}
	L + \frac{r \otimes r^{\flat}}{r^{2}} & = A R + \frac{[(r \cdot A \cdot r) + r^2]}{r^{4}} r \otimes r^{\flat} -  \frac{1}{r^2} r \otimes r^{\flat} \cdot A \\
	& = A R + r \otimes d
	\end{align}
	
	\noindent for some vector $d$ and
	
	\begin{align}
	A R & =  A  - \frac{1}{r^2} A r \otimes r^{\flat}
	\end{align}
	
	Let $b_i$ be the columns of $A R$, then by \cref{lem:detAplusTen} we have
	
	\begin{align}
	\det (L + \frac{r \otimes r^{\flat}}{r^{2}} ) & = \bigwedge \limits_{i=1}^{n} b_{i} + 
	\sum\limits_{i=1}^{n} b_{1} \wedge \cdots \wedge d_{i} r \wedge \cdots \wedge b_{n}
	\end{align}
	
	Now observe that
	
	\begin{align}
	0 = \det L & = \bigwedge \limits_{i=1}^{n} b_{i} + 
	\sum\limits_{i=1}^{n} b_{1} \wedge \cdots \wedge (d_{i} - \frac{x_i}{r^2}) r \wedge \cdots \wedge b_{n}
	\end{align}
	
	Thus
	
	\begin{align}
	\det (L + \frac{r \otimes r^{\flat}}{r^{2}} ) & = \frac{1}{r^2}\sum\limits_{i=1}^{n} b_{1} \wedge \cdots \wedge  x_i r \wedge \cdots \wedge b_{n} \label{eq:CTDetEunnKapI}
	\end{align}
	
	Now, again using \cref{lem:detAplusTen}, we have:
	
	\begin{align}
	b_{1} \wedge \cdots \wedge r \wedge \cdots \wedge b_{n} & = (-1)^{i-1} r \wedge b_{1} \wedge \cdots \wedge \hat{b}_{i} \wedge \cdots \wedge b_{n} \\
	& = (-1)^{i-1} r \wedge (a_{1} \wedge \cdots \wedge \hat{a}_{i} \wedge \cdots \wedge a_{n} - r^{-2} \sum\limits_{j \neq i} a_{1} \wedge \cdots \wedge x_{j} A r \wedge \cdots \wedge a_{n})
	\end{align}
	
	Note that the term $\hat{b}_{i}$, means $b_{i}$ is missing from the product. Also note that for $i \neq j$
	
	\begin{equation}
	(-1)^{i-1} x_i r \wedge a_{1} \wedge \cdots \wedge x_{j} A r \wedge \cdots \wedge a_{n} = - (-1)^{j-1} x_j r \wedge a_{1} \wedge \cdots \wedge x_{i} A r \wedge \cdots \wedge a_{n}
	\end{equation}
	
	Thus
	
	\begin{align}
	\det (L + \frac{r \otimes r^{\flat}}{r^{2}} ) & \overset{\eqref{eq:CTDetEunnKapI}}{=} r^{-2}\sum\limits_{i=1}^{n} b_{1} \wedge \cdots \wedge  x_i r \wedge \cdots \wedge b_{n} \\
	& = r^{-2} \sum\limits_{i=1}^{n} a_{1} \wedge \cdots \wedge  x_i r \wedge \cdots \wedge a_{n} \\
	& =  r^{-2} (\det (A + r \otimes r^{\flat}) - \det A) 
	\end{align}
\end{proof}

Using \cref{eq:CTChPolEunnKap}, for ICTs, the transformation from canonical coordinates to Cartesian coordinates can be calculated using the standard method. Indeed, if $L$ is an ICT in $\eunn(\frac{1}{r^2})$ with parameter matrix:

\begin{align}
A & = J_k^T(0) \oplus \diag(\lambda_{k+1},\dotsc,\lambda_n) & g & = \varepsilon_0 S_k \oplus \diag(\varepsilon_{k+1},\dotsc,\varepsilon_n)
\end{align}

Then by a calculation almost identical to the one used to derive \cref{eq:CTCoordsCCIIa,eq:CTCoordsCCIIb}, one obtains the following now using \cref{eq:CTChPolEunnKap}:

\begin{subequations} \label{eq:CTCoordsCCIII}
	\begin{equation} \label{eq:CTCoordsCCIIIa}
	\sum\limits_{i=1}^{l+1} x^{i} x^{l+2-i} = \frac{r^2 \varepsilon_0 }{l! } (\deriv{}{z})^{l}(\frac{p(z)}{B_{u^\perp}(z)}) \big \lvert_{z = 0}  \quad l = 0,\dotsc, k-1
	\end{equation}
	\begin{align}
	(x^{i})^{2}  & =  r^2\varepsilon_i \frac{p(\lambda_i)}{B'(\lambda_i)} & i = k+1,...,n \label{eq:CTCoordsCCIIIb}
	\end{align}
\end{subequations}

The transformation from canonical coordinates $(u^1,\dotsc,u^{n-1})$ to Cartesian coordinates are obtained by noting that $p(z) = \prod\limits_{i=1}^{n-1} (z-u^{i})$.

\begin{example}[Circular coordinates] \label{ex:coordEunKap}
	Let $M = \E^2_{\nu} (\kappa)$ where $\kappa = \pm 1$. Consider the CT in $M$ with parameter matrix:
	
	\begin{align}
	A & = \diag(0,1) & g & = \diag(\kappa_1, \varepsilon) \quad \kappa_1, \varepsilon \in \{-1,1\}
	\end{align}
	
	Then by \cref{eq:CTCoordsCCIIIa,eq:CTCoordsCCIIIb}, Cartesian coordinates $(x,y)$ are given by:
	
	\begin{align}
	x^{2} & = \kappa \kappa_1 u \\
	y^{2} & = \kappa \varepsilon (1 - u)
	\end{align} 
	
	We now show how to obtain the standard parameterizations of these coordinates. First note that by metric-Jordan canonical form theory, there are three isometrically inequivalent cases\footnote{Note that these cases additionally depend on $\nu$.}:
	
	\begin{parts}
		\item $\kappa_1 = \kappa$ and $\varepsilon = \kappa$, thus $g = \diag(\kappa, \kappa)$
		
		If we take $u = \cos^2(t)$, then we obtain:
		
		\begin{align}
		x^{2} & = \cos^2(t) \\
		y^{2} & = \sin^2(t)
		\end{align}
		
		\item $\kappa_1 = \kappa$ and $\varepsilon = -\kappa$, thus $g = \diag(\kappa, -\kappa)$ 
		
		If we take $u = \cosh^2(t)$, then we obtain:
		
		\begin{align}
		x^{2} & = \cosh^2(t) \\
		y^{2} & = \sinh^2(t)
		\end{align}
		
		\item $\kappa_1 = -\kappa$ and $\varepsilon = \kappa$, thus $g = \diag(-\kappa, \kappa)$ 
		
		If we take $u = -\sinh^2(t)$, then we obtain:
		
		\begin{align}
		x^{2} & = \sinh^2(t) \\
		y^{2} & = \cosh^2(t)
		\end{align}
	\end{parts}
	
	Although the last two cases are geometrically equivalent, it will be useful to distinguish them when we move on to reducible CTs.
\end{example}

Also using \cref{eq:CTChPolEunnKap}, one can obtain the metric in ICT induced coordinates.

\begin{proposition}[ICT metrics in $\eunn(\kappa)$] \label{prop:IctMetricEunnKap}
	Suppose $L$ is an ICT in $\eunn(\frac{1}{r^2})$ with eigenfunctions $(u^1,\dotsc,u^{n-1})$. Then the metric in adapted coordinates is orthogonal and
	
	\begin{equation}
	g_{ii} = \frac{- r^2}{4} \frac{p'(u^i)}{B(u^i)} = \frac{- r^2}{4}\frac{\prod\limits_{j \neq i}(u^{i}-u^{j})}{\prod\limits_{j=1}^{n} (u^{i}-\lambda_{j})}
	\end{equation}
	
	\noindent where $\lambda_{1},\dotsc,\lambda_n$ are the roots of $B(z)$.
\end{proposition}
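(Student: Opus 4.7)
The plan is to mirror the proof of \cref{prop:IctMetricEunn}: I will use \cref{prop:CTmetric} to reduce the computation of $g^{ii}$ to evaluating $\bp{\d p, \d p}$ at $z = u^i$, packaged through the auxiliary function $T(z) := p(z)/B(z)$. Orthogonality of the coordinates is immediate from self-adjointness of $L$ together with simplicity of its eigenfunctions. The substantive task is to transport the ambient identity of \cref{lem:CctTForm} down to the spherical submanifold.

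First I would rewrite $T$ on $\eunn(\kappa)$ in terms of the ambient object $T_c(z) := p_c(z)/B(z)$. Since $r^2 = 1/\kappa$ is constant on $\eunn(\kappa)$, \cref{eq:CTChPolEunnKap} collapses to $p = \kappa(B - p_c)$, and hence $T = \kappa(1 - T_c)$ on the submanifold. Taking differentials along $\eunn(\kappa)$ then yields $\d T = -\kappa\, \d T_c|_{\eunn(\kappa)}$, and the squared norm of any pulled-back $1$-form satisfies
\begin{equation}
\bp{\d f, \d f}_{\eunn(\kappa)} = \bp{\d f, \d f}_{\eunn} - \kappa (r \cdot f)^2,
\end{equation}
because $r$ is normal to $\eunn(\kappa)$ with $\bp{r, r} = 1/\kappa$. \cref{lem:CctTForm} supplies $\bp{\d T_c, \d T_c}_{\eunn} = 4 T_c'(z)$, so everything hinges on computing the normal derivative $r \cdot T_c$.

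The crucial observation --- and in my view the only non-routine step --- is that $L_c = A + r \otimes r^\flat$ depends on position only through the rank-one piece $r \otimes r^\flat$, which is homogeneous of degree $2$ in $r$. Expanding $\det(zI - L_c)$ via \cref{lem:detAplusTen} shows that $B(z) - p_c(z)$ is a homogeneous quadratic in $r$ with $z$-dependent coefficients, so Euler's identity gives $r \cdot p_c = -2(B - p_c)$. Restricted to $\eunn(\kappa)$ this equals $-2p/\kappa$, whence $r \cdot T_c = -2T/\kappa$. Assembling the pieces produces
\begin{equation}
\bp{\d T, \d T}_{\eunn(\kappa)} = -4\kappa \bigl( T'(z) + T(z)^2 \bigr).
\end{equation}

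To finish I would evaluate at $z = u^i$: since $T(u^i) = 0$ and $T'(u^i) = p'(u^i)/B(u^i)$, the right side collapses to $-4\kappa\, p'(u^i)/B(u^i)$. Using $\d T = \d p / B$ at fixed $z$, \cref{prop:CTmetric} then delivers $g^{ii} = -4\kappa B(u^i)/p'(u^i)$, which inverts to the claimed formula $g_{ii} = -r^2 p'(u^i)/(4 B(u^i))$. The main obstacle I anticipate is the normal-component identity $r \cdot p_c = -2(B - p_c)$: a direct differentiation of the characteristic polynomial would be cumbersome, but recognizing that $L_c - A$ is quadratic and homogeneous in the dilatational vector field bypasses that computation entirely.
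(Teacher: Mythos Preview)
Your proof is correct and follows essentially the same approach as the paper: reduce to \cref{lem:CctTForm} for the ambient central tensor via \cref{eq:CTChPolEunnKap}, control the normal component using homogeneity, and finish with \cref{prop:CTmetric}. The paper organizes the homogeneity step slightly more cleanly by observing directly that $\nabla_r p = 0$ (since $p = r^{-2}(B - p_c)$ has total degree zero in $r$), so that $\bp{\tilde{\d} p,\tilde{\d} p} = \bp{\d p,\d p}$ with no correction term to generate and then kill at $z = u^i$; your Euler-identity computation of $r\cdot p_c$ is exactly the same content, just applied one level down.
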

\begin{proof}
	We reduce this calculation to the corresponding one for $L_c$ using \cref{eq:CTChPolEunnKap}. We assume that $L$ is an ICT with eigenfunctions $(u^1 ,\dotsc, u^{n-1})$ in some neighborhood in $\eunn(\frac{1}{r^2})$.
	
	Now if we let $\tilde{d}$ denote the exterior derivative on the sphere, note that
	
	\begin{equation}
	\tilde{\d} p = R^* \d p
	\end{equation}
	
	Now we make the following observation. 
	
	\begin{equation}
	\bp{\d p , r^\flat } = \nabla_r p = 0
	\end{equation}
	
	%
	
	This can be proven, for example, by using \cref{eq:CTDetCent} and the fact that $r$ is a CV. Note that the above equation also implies that $\bp{\d p_c  , r^\flat } = - 2 r^2 p $.
	
	Hence we see that
	
	\begin{equation}
	\bp{\tilde{\d} p , \tilde{\d} p} = \bp{\d p , \d p}
	\end{equation}

	Thus at a root $z = u^i$, we have
	
	\begin{equation}
	\bp{\tilde{\d} p , \tilde{\d} p} = r^{-4} \bp{\d p_c , \d p_c }
	\end{equation}
	
	Then at $z = u^i$ we have
	\begin{align}
	\frac{\bp{\tilde{\d} p, \tilde{\d} p}}{B^2} & = \frac{r^{-4} \bp{\d p_c , \d p_c }}{B^2} \\
	& \overset{\ref{lem:CctTForm}}{=} 4 r^{-4} \deriv{}{z} \frac{p_c(z)}{B(z)}\bigg\lvert_{z=u^{i}} \\
	& = - 4 r^{-2} \deriv{}{z} \frac{p(z)}{B(z)}\bigg\lvert_{z=u^{i}} \\
	& = - 4 r^{-2} \frac{p'(u^i)}{B(u^i)}
	\end{align}
	
	Thus \cref{prop:IctMetricEunnKap} follows from the above equation and  \cref{prop:CTmetric}.
\end{proof}


\section{Classification of reducible concircular tensors} \label{sec:CtClassRed}

In this section, we will show how to find a warped product which ``decomposes''\footnote{This amounts to partially diagonalizing these CTs.} a given reducible OCT defined in a space of constant curvature. First we will prove a generic result which will allow us to construct reducible OCTs. Then in the next two sections, we will apply this result to pseudo-Euclidean space, then to spherical submanifolds of pseudo-Euclidean space.

The following proposition will give us a useful characterization of reducible OCTs in terms of their irreducible part. Its proof, which is based on theorem~6.1 in \cite{Rajaratnam2014a}, can be omitted without lose of continuity.

\begin{propMy}[Characterization of Reducible OCTs] \label{prop:CTredChar}
	Suppose $L \in S^{2}(M)$ is an orthogonal tensor. Then $L$ is a reducible OCT iff there exists a warped product decomposition $M = M_{0} \times_{\rho_{1}} M_{1} \times \cdots \times_{\rho_{k}} M_{k}$ with adapted contravariant metric $G = \sum_{i=0}^{k} G_{i}$ such that $L$ has the following contravariant form:
	\begin{equation}
	L = \tilde{L} + \sum_{i=1}^{k} \lambda_i G_{i}
	\end{equation}
	\noindent where each $\lambda_i \in \R$ and $\tilde{L} \in \hat{S}^{2}(M_{0})$ is the canonical lift (see \cite{Rajaratnam2014a}) of an ICT $\tilde{L} \in S^{2}(M_{0})$ satisfying the following equation on $M_{0}$ for each $i > 0$
	\begin{equation} \label{eq:CTredExt}
	\tilde{L}(\d \log \rho_{i}) = \d ( \lambda_i  \log \rho_{i} + \frac{1}{2} \tr{\tilde{L}})
	\end{equation}
\end{propMy}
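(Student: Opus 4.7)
The plan is to combine Theorem~6.1 of \cite{Rajaratnam2014a}, which produces warped products adapted to multidimensional eigenspaces of OCTs, with a direct analysis of the concircular equation $\nabla_x L = C \odot x$ (whose conformal factor satisfies $C = \nabla \tr{L}$ after tracing) in order to derive the compatibility condition~\eqref{eq:CTredExt}.

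For the forward implication, suppose $L$ is a reducible OCT. Its orthogonal eigenspace decomposition $TM = E_0 \oplus E_1 \oplus \cdots \oplus E_k$ splits into a subdistribution $E_0$ on which the eigenfunctions are non-constant and subdistributions $E_1,\dotsc,E_k$ on which $L$ acts by the respective constants $\lambda_1,\dotsc,\lambda_k$. Iterating Theorem~6.1 of \cite{Rajaratnam2014a} over the constant-eigenvalue blocks produces a warped product $M = M_0 \times_{\rho_1} M_1 \times \cdots \times_{\rho_k} M_k$ with $TM_i = E_i$ for each $i$. The restriction $\tilde{L} := L|_{M_0}$ is an OCT with non-constant eigenfunctions, so by \cref{prop:eigCTs} its eigenfunctions are functionally independent and $\tilde{L}$ is an ICT. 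Decomposing $L$ orthogonally along $E_0,E_1,\dotsc,E_k$ yields the contravariant formula $L = \tilde{L} + \sum_{i=1}^{k} \lambda_i G_i$, where $\tilde{L}$ on $M$ denotes the canonical lift.

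To extract~\eqref{eq:CTredExt}, I would evaluate the concircular equation on a vertical vector field $Y_i \in \Gamma(TM_i)$ and pair the resulting symmetric $2$-tensor with $X_0^\flat$ for an arbitrary horizontal vector field $X_0 \in \ve(M_0)$. Since the $\lambda_i$ are constants, $C = \nabla \tr{L} = \nabla \tr{\tilde{L}}$ is horizontal. Using the warped-product Koszul identity $\nabla_{Y_i} X_0 = X_0(\log \rho_i)\, Y_i$ to compute how $\nabla_{Y_i}$ differentiates the contravariant tensor $L$ through its $M_0$-arguments, the left-hand side collects into a scalar multiple of $(\tilde{L}(\d \log \rho_i) - \lambda_i\, \d \log \rho_i)(X_0)$, while the right-hand side contributes $\tfrac{1}{2} X_0(\tr{\tilde{L}})$. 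Matching these expressions and letting $X_0$ vary yields precisely~\eqref{eq:CTredExt}.

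For the converse, given the stated form of $L$ together with~\eqref{eq:CTredExt}, $L$ is manifestly an orthogonal tensor with constant-eigenvalue eigenspaces $TM_i$ for each $i \geq 1$, so it suffices to check concircularity. A direct computation using the warped-product connection shows that $\nabla_x L - \nabla \tr{\tilde{L}} \odot x$ vanishes for every $x \in \ve(M)$: the purely horizontal part reduces to the concircular equation for the ICT $\tilde{L}$ on $M_0$, which holds by hypothesis, while the mixed and vertical components vanish exactly because of~\eqref{eq:CTredExt}. The main obstacle will be the bookkeeping in this warped-product computation: one must track how $\nabla_{Y_i}$ differentiates both $\tilde{L}$ and each factor $\lambda_j G_j$ (for $j = i$ as well as $j \neq i$) through the canonical lift, and ensure the $C \odot x$ term assembles correctly. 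The simplification that $C = \nabla \tr{\tilde{L}}$ is horizontal forces the purely vertical components of $\nabla_x L - C \odot x$ to vanish independently, so that~\eqref{eq:CTredExt} emerges as the single nontrivial compatibility condition linking the warping functions to the underlying ICT.
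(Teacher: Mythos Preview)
Your proposal is correct but follows a different route from the paper. For the forward direction, the paper does not derive~\eqref{eq:CTredExt} by manipulating the concircular equation through warped-product Koszul formulas; instead it invokes the explicit warping functions $\rho_i^2 = \prod_a |\lambda_i - \lambda_a|$ supplied by Theorem~6.1 of \cite{Rajaratnam2014a} (where $a$ ranges over the eigenfunctions of $\tilde{L}$), and then checks~\eqref{eq:CTredExt} by a short algebraic computation with the eigenvalues, using only that $\tilde{L}$ is torsionless so that $\tilde{L}\,\d\lambda_a = \lambda_a\,\d\lambda_a$. For the converse, the paper again avoids a direct verification of $\nabla_x L = C \odot x$: it argues that~\eqref{eq:CTredExt}, viewed as an ODE for $\rho_i$, forces $\rho_i$ to agree (up to a positive constant) with the explicit product formula above, and then cites Theorem~6.1 of \cite{Rajaratnam2014a} once more to conclude that $L$ is a reducible OCT. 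Your approach is more intrinsic and self-contained, since it does not rely on the specific eigenvalue formula for the warping functions; the paper's approach is shorter because it packages all the warped-product differential geometry into the single appeal to Theorem~6.1.
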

\begin{proof}
	Suppose $L$ is an OCT. Let $D_1,\dotsc, D_l$ be the eigenspaces of $L$ associated with constant eigenfunctions and let $M = M_{0} \times_{\rho_{1}} M_{1} \times \cdots \times_{\rho_{k}} M_{k}$ be a warped product adapted to $(\bigcap\limits_{i=1}^{l} D_{i}^{\perp}, D_1,\dotsc, D_l)$ which exists by theorem~6.1 in \cite{Rajaratnam2014a}. We define $\tilde{L}$ to be the restriction of $L$ to $M_0$; it follows by theorem~6.1 that $\tilde{L}$ is an ICT in $M_0$. It also follows by theorem~6.1 that we can assume 
	
	\begin{equation} \label{eq:OCTwarpFn}
	\rho_{i}^{2} = \prod\limits_{a} \Abs{\lambda_{i}-\lambda_{a}}
	\end{equation}
	
	\noindent where $a$ ranges over all eigenfunctions of $\tilde{L}$. If $\dim M_0 = 0$, i.e. $L$ induces a pseudo-Riemannian product, the conclusion follows. Otherwise, since $\lambda_i$ is constant and because $\tilde{L}$ is torsionless, we see that on $M_0$

	\begin{align}
	\tilde{L}(\d \log \rho_{i}) & =  \frac{1}{2} \sum\limits_{a} \lambda_{a} \d \log \Abs{\lambda_{i}-\lambda_{a}} \\
	& =  \frac{1}{2} \sum\limits_{a} \lambda_{a} \frac{\d \lambda_{a}}{\lambda_{a}-\lambda_{i}} \\
	& = \frac{\lambda_i}{2} \sum\limits_{a} \frac{\d \lambda_{a}}{\lambda_{a}-\lambda_{i}} + \frac{1}{2} \sum\limits_{a} \d \lambda_{a} \\
	& = \d ( \lambda_i  \log \rho_{i} + \frac{1}{2} \tr{\tilde{L}})
	\end{align}
	
	Conversely, it is easily checked that if $\tilde{L}$ is an ICT and $\rho_i$ satisfies the above equation, then $c \rho_i$ must satisfy \cref{eq:OCTwarpFn} for some $c \in \R^+$. Hence it follows that $L$ defined in the statement is torsionless and then by theorem~6.1 in \cite{Rajaratnam2014a} that $L$ is a reducible OCT.
\end{proof}

In the following sections we will use the above proposition to classify reducible OCTs in spaces of constant curvature. But first we will need the following definition.

\begin{definition}
	Suppose $L$ is a CT in $M$ and let $N = N_{0} \times_{\rho_{1}} N_{1} \times \cdots \times_{\rho_{k}} N_{k}$ be a local warped product decomposition of $M$ passing through $\bar{p} \in N \subseteq M$. We say $L$ is \emph{decomposable} in this warped product if for each $p \in N$ and $i > 0$, $T_p N_i$ is an invariant subspace for $L$.
\end{definition}

\subsection{In pseudo-Euclidean space} \label{sec:CtClassRedEunn}

We first need to review the standard warped product decompositions of $\eunn$. All other warped product decompositions of $\eunn$ can be built up from the standard ones. Our exposition is based on the article by \citeauthor{Nolker1996} \cite{Nolker1996}. More details are given in \cite[Appendix~D]{Rajaratnam2014} where the standard warped products of spaces of constant curvature are given, generalizing results originally given in \cite{Nolker1996}.

Consider the following decomposition $\eunn = V_{0} \obot V_{1}$ of $\eunn$ into nontrivial (hence non-degenerate) subspaces. Choose $a \in V_0 \setminus \{0\}$ and $\bar{p} \in V_0$ such that $\bp{a, \bar{p}} = 1$. Denote $\kappa := a^2$ and $\epsilon := \sgn \kappa$. We have two types of warped products:

\begin{description}
	\item[non-null warped decomposition] If $\kappa \neq 0$, let $W_0 := V_0 \cap a^\perp$ and $W_1 := W_0^\perp$. Let $c = \bar{p} - \frac{a}{\kappa}$ and
	
	\begin{equation}
	N_1 = c + \{ p \in W_1 \: | \: p^{2} = \frac{1}{\kappa} \}
	\end{equation}
	
	\item[null warped decomposition] If $\kappa = 0$, then $a$ is lightlike, so fix another lightlike vector $b \in V_0$ such that $\bp{a,b} = 1$, let $W_0 := V_0 \cap \spa{a,b}^\perp$ and $W_1 := V_1$. Let
	
	\begin{equation}
	N_1 = \bar{p} + \{ p - \frac{1}{2} p^{2} a  \: | \: p \in W_1 \}
	\end{equation}
\end{description}

In each case, we say that $N_1$ is the sphere determined by $(\bar{p},V_1, a)$. For $i = 0,1$, let $P_i : \eunn \rightarrow W_{i}$ be the orthogonal projection. Let

\begin{equation}
N_{0} = \{p \in V_{0} | \bp{a, p} > 0  \}
\end{equation}

\begin{equation}
\rho : \begin{cases}
N_{0} & \rightarrow \R_{+} \\
p_{0} & \mapsto \bp{a, p_{0}}
\end{cases}
\end{equation}

Then the following holds:

\begin{theorem}[Standard Warped Products in $\eunn$ \cite{Nolker1996}] \label{thm:WPDecomps}
	The map
	
	\begin{equation}
	\psi : \begin{cases}
	N_{0} \times _{\rho} N_{1} & \rightarrow \eunn \\
	(p_{0},p_{1}) & \mapsto p_{0} + \rho(p_{0})(p_{1} - \overline{p})
	\end{cases}
	\end{equation}
	
	\noindent is an isometry onto the following set:
	
	\begin{equation}
	\Ima(\psi) = \begin{cases}
	\{ p \in \eunn \: | \: \sgn (P_{1}p)^{2} = \epsilon \} & \text{non-null case} \\
	\{ p \in \eunn \: | \: \bp{a,p} > 0 \} & \text{null case}
	\end{cases}
	\end{equation}
	
	Furthermore, the following equation holds:
	
	\begin{equation} \label{eq:wpdecpSp}
	\psi(p_{0},p_{1})^{2} = p_{0}^{2}
	\end{equation}
\end{theorem}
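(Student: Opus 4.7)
The plan is to verify each claim (image, isometry, spherical identity) by direct computation, treating the two cases in parallel once the right ambient orthogonal decomposition is set up.

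First I would exploit the structural picture. In the non-null case, $\eunn = W_0 \obot W_1$ is an orthogonal splitting with $a \in W_1$, and a point $p_1 \in N_1$ can be written $p_1 = c + q_1$ with $q_1 \in W_1$ and $q_1^2 = 1/\kappa$. Since $\bar{p} = c + a/\kappa$, this gives $p_1 - \bar{p} = q_1 - a/\kappa \in W_1$, which is the key fact making $W_0$-components of $p_0$ decouple from $p_1 - \bar{p}$. In the null case, $\eunn = W_0 \obot \spa{a,b} \obot V_1$, and for $p_1 = \bar{p} + p - \tfrac{1}{2} p^2 a$ with $p \in V_1$ the displacement $p_1 - \bar{p}$ lies in $V_1 \oplus \spa{a}$, so again $\bp{v_{W_0}, p_1 - \bar{p}} = 0$.

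With that in hand, the spherical identity \eqref{eq:wpdecpSp} is the quickest to check: expand
\begin{equation*}
\psi(p_0,p_1)^2 = p_0^2 + 2\rho(p_0)\bp{p_0, p_1-\bar{p}} + \rho(p_0)^2 (p_1-\bar{p})^2,
\end{equation*}
and in each case compute the two cross terms explicitly. Non-null: $(p_1-\bar{p})^2 = 2(1-\bp{a,q_1})/\kappa$ and $\bp{p_0,p_1-\bar{p}} = (\rho(p_0)/\kappa)(\bp{a,q_1}-1)$, so they cancel. Null: $(p_1-\bar{p})^2 = p^2$ and $\bp{p_0,p_1-\bar{p}} = -\tfrac{1}{2}\rho(p_0) p^2$, so again they cancel. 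Bijectivity onto the claimed image follows by constructing an inverse: recover $\rho(p_0)$ either from $(P_1 q)^2 = \rho(p_0)^2 (p_1-\bar{p})^2$ combined with $(p_1-\bar{p})^2$ having a definite sign $\epsilon$, or, in the null case, directly from $\rho(p_0) = \bp{a,\psi(p_0,p_1)}$; then $p_0 = \psi - \rho(p_0)(p_1-\bar{p})$ and $p_1$ is read off.

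The isometry assertion is the main work. Computing differentials gives $\psi_*(v,0) = v + \bp{a,v}(p_1-\bar{p})$ for $v \in T_{p_0}N_0 = V_0$, and $\psi_*(0,w) = \rho(p_0)\, w$ for $w \in T_{p_1}N_1$. The pullback metric has three blocks to verify: (i) for the mixed block, use that $T_{p_1}N_1 \perp (p_1-\bar{p})$ in the non-null case (since $N_1 - c$ is a hyperquadric), and the explicit parameterization $w \mapsto w - \bp{p,w}a$ in the null case, to show $\bp{\psi_*(v,0),\psi_*(0,w)} = 0$; (ii) for the $N_0$-block, expand and invoke again the computed values of $\bp{v_i,p_1-\bar{p}}$ and $(p_1-\bar{p})^2$ to obtain $\bp{v_1,v_2}$; (iii) the $N_1$-block is immediate, giving the warp factor $\rho(p_0)^2$.

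The main obstacle is organizational rather than technical: the null and non-null cases have genuinely different ambient geometries (hyperquadric versus paraboloid) and must be handled with different orthogonal decompositions, yet the final identities look identical. The crucial unifying observation is that in both cases $p_1 - \bar{p}$ has vanishing pairing with the part of $V_0$ orthogonal to $\spa{a}$ (respectively $\spa{a,b}$), which is what allows the cross terms in both the spherical identity and the metric computation to collapse to expressions depending only on $\bp{a,v}$.
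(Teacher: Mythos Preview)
The paper does not actually prove this theorem; its proof environment reads in full ``See \cite[Appendix~D]{Rajaratnam2014}.'' So there is no in-paper argument to compare against, and your direct computational approach is a reasonable self-contained substitute. The overall strategy---verify \eqref{eq:wpdecpSp} by expanding the square, check the three metric blocks from the differential $\psi_*(v,0)=v+\bp{a,v}(p_1-\bar p)$, $\psi_*(0,w)=\rho(p_0)w$, and construct an explicit inverse---is sound and does go through.

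Two small slips to fix. First, in the non-null case the claim ``$T_{p_1}N_1\perp(p_1-\bar p)$'' is false: for $w\in T_{p_1}N_1$ one has $\bp{w,p_1-\bar p}=-\bp{w,a}/\kappa$, which need not vanish. The mixed block still vanishes, but for a different reason: writing $v\in V_0=W_0\obot\R a$ and $w\in W_1$, one gets $\bp{v,w}=(\bp{a,v}/\kappa)\bp{a,w}$, and this cancels exactly against $\bp{a,v}\bp{p_1-\bar p,w}=-\bp{a,v}\bp{a,w}/\kappa$. Second, in the bijectivity step your identity $(P_1q)^2=\rho(p_0)^2(p_1-\bar p)^2$ is incorrect; since $P_1$ is projection onto $W_1=\R a\obot V_1$, one actually has $P_1\psi(p_0,p_1)=\rho(p_0)(p_1-c)$ and hence $(P_1q)^2=\rho(p_0)^2/\kappa$, which is what gives $\sgn(P_1q)^2=\epsilon$ and lets you recover $\rho(p_0)$. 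With these two corrections your argument is complete.
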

\begin{proof}
	See \cite[Appendix~D]{Rajaratnam2014}.
\end{proof}

In fact, for $(p_{0},p_{1}) \in N_{0} \times N_{1}$, $\psi$ has one of the following forms, first if $\psi$ is non-null:

\begin{equation} \label{eq:WPDecNn}
\psi(p_{0},p_{1}) = P_{0}p_{0} + \bp{a, p_{0}} (p_{1}-c)
\end{equation}

and if $\psi$ is null:

\begin{equation} \label{eq:WPDecN}
\psi(p_{0},p_{1}) = P_{0}p_{0} + (\bp{b, p_{0}} - \frac{1}{2}\bp{a, p_{0}}(P_{1}p_1)^{2})a + \bp{a,p_{0}}b + \bp{a, p_{0}} P_{1}p_1
\end{equation}


The above forms are obtained from the equation for $\psi$ from the above theorem by expanding $p_0$ in an appropriate basis. The warped product decomposition $\psi$ is completely determined by the fact that $\psi(\bar{p},\bar{p}) = \bar{p}$, $N_1$ is a spherical submanifold of $\eunn$ with $\bar{p} \in N_1$, $T_{\bar{p}} N_1 = V_1$ and mean curvature normal $-a$ at $\bar{p}$ (see \cite{Nolker1996} or \cite[Appendix~D]{Rajaratnam2014}). The point $\bar{p}$ was restricted so that the warped product is in \emph{canonical form} (see \cite[Appendix~D]{Rajaratnam2014}); we will make this assumption throughout this article. We call $\psi$ the warped product decomposition (of $\eunn$) determined by $(\bar{p}; V_0 \obot V_1 ; a)$; often we omit the point $\bar{p}$ as it doesn't enter calculations, in this case the warped product is assumed to be in canonical form.

We note that the warped products with multiple spherical factors can be obtained using the standard ones described above. Indeed, suppose $\phi_1 : N_0' \times_{\rho_1} N_1 \rightarrow \eunn$ is the warped product decomposition determined by $(\bar{p}; V_0 \obot V_1 ; a_1)$ as above. Since $V_0$ is pseudo-Euclidean, consider a warped product decomposition, $\phi_2 : \tilde{N}_0 \times_{\rho_2} N_2 \rightarrow V_0$, determined by $(\bar{p}; \tilde{V}_0 \obot \tilde{V}_1 ; a_2)$ with $V_0 \cap W_0^\perp \subset \tilde{W}_0$ (hence $a_1 \in \tilde{W}_0$). Note that $\tilde{W}_0$ is the subspace $W_0$ from the above construction for $\phi_2$. Let $N_0 := N_0' \cap \tilde{N}_0$, then one can check that the map $\psi$ defined by:

\begin{equation}
\psi : \begin{cases}
N_{0} \times_{\rho_1} N_1 \times_{\rho_2} N_2 & \rightarrow \eunn \\
(p_0 , p_1 , p_2) & \mapsto \phi_1(\phi_2(p_0, p_2), p_1)
\end{cases}
\end{equation}

\noindent is a warped product decomposition of $\eunn$. We illustrate this construction with an example.

\begin{example}[Constructing multiply warped products]
	Suppose $\phi_1$ and $\phi_2$ are given as follows:
	
	\begin{align}
	\phi_1(p_{0}',p_{1}) & = P_{0}'p_{0}' + \bp{a_1, p_{0}'} (p_{1}-c_1) \\
	\phi_2(\tilde{p}_{0},p_{2}) & = \tilde{P}_{0}\tilde{p}_{0} + \bp{a_2, \tilde{p}_{0}} (p_{2}-c_2)
	\end{align}
	
	Observe that $\rho_1(\phi_2(\tilde{p}_{0},p_{2})) = \rho_1(\tilde{p}_{0})$, which follows from the above equation for $\phi_2$ and the fact that $a_1 \in \tilde{W}_0$. Then, 
	
	\begin{align}
	\psi(p_0 , p_1 , p_2) & = \phi_1(\phi_2(p_0, p_2), p_1) \\
	& = P_{0}'\phi_2(p_0, p_2) + \bp{a_1, \phi_2(p_0, p_2)} (p_{1}-c_1) \\
	& = P_{0}'\tilde{P}_{0}p_0 + \bp{a_2, p_{0}} (p_{2}-c_2) + \bp{a_1, p_0} (p_{1}-c_1)
	\end{align}
	
	\noindent where $P_{0}'\tilde{P}_{0}$ is the orthogonal projector onto $\tilde{W}_0 \cap W_0 = \tilde{V}_0 \cap \spa{a_1,a_2}^\perp$.
\end{example}

This procedure can be repeated as many times as necessary to obtain more general warped products. In general, for some $\bar{p} \in \eunn$, suppose we have a decomposition $T_{\bar{p}} \eunn = \bigobot\limits_{i=0}^{k} V_{i}$ into non-trivial subspaces (hence non-degenerate) with $k \geq 1$ and linearly independent pair-wise orthogonal vectors $a_1,\dotsc,a_k \in V_0 \setminus \{0\}$. Furthermore we will assume the warped product is in canonical form, so $\bar{p} \in V_0$ and $\bp{a_i, \bar{p}} = 1$ for each $i$. This data determines a warped product decomposition $\psi$, having the following form \cite[Appendix~D]{Rajaratnam2014}:

\begin{equation} \label{eq:genPsiEqn}
\psi : \begin{cases}
N_{0} \times _{\rho_{1}} N_{1} \times \cdots  \times _{\rho_{k}} N_{k} & \rightarrow \eunn \\
(p_{0},...,p_{k}) & \mapsto p_{0} + \sum\limits_{i=1}^{k} \rho_{i}(p_{0})(p_{i} - \overline{p})
\end{cases}
\end{equation}

\noindent where $\rho_i(p_0) = \bp{a_i , p_0}$ and $N_i$ is the sphere determined by $(\bar{p},V_i, a_i)$. This general formula is originally from \cite[theorem~7]{Nolker1996}. We call $\psi$ the warped product decomposition (of $\eunn$) determined by $(\bar{p}; \bigobot\limits_{i=0}^{k} V_{i}; a_{1},...,a_{k})$. One can more generally let some of the $a_i$ be zero, this results in Cartesian products as done in \cite{Nolker1996}. Since we assume the $a_i$ are non-zero, we say additionally that $\psi$ is a \emph{proper} warped product decomposition. Finally, note that the properties of the more general warped product decompositions of $\eunn$ can be deduced from \cref{thm:WPDecomps}.

Now suppose $N = N_{0} \times_{\rho_{1}} N_{1} \times \cdots \times_{\rho_{k}} N_{k}$ is a warped product and $\tilde{L}$ is a CT in $N_0$. We say $\tilde{L}$ can be \emph{extended to a CT in $N$} if $\tilde{L}$ satisfies \cref{eq:CTredExt} for each $i$ with some $\lambda_i \in \R$. Assuming $\tilde{L}$ is an OCT, then \cref{prop:CTredChar} allows one to define a CT on $N$ which restricts to $\tilde{L}$ on $N_0$. The following lemma will be our main tool for classifying reducible concircular tensors.

\begin{lemma} \label{prop:extCT}
	Fix a proper warped product decomposition $(V_{0} \obot V_{1}; a)$ of $\eunn$ and let $?L^i_j? = ?A^i_j? + m x^{i} x_{j} + w^{i} x_{j} + x^{i}w_{j} $ be a concircular tensor in $N_{0}$. Then $L$ can be extended to concircular tensor in $\eunn$ decomposable in this warped product iff $a$ is an eigenvector of $A$ orthogonal to $w$.
\end{lemma}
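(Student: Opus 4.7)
My strategy is to first use \cref{prop:CTredChar} to translate the existence of a decomposable extension into the scalar functional equation \eqref{eq:CTredExt}, and then verify that equation by direct computation in Cartesian coordinates on $V_{0}$.

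Recall the warping function is $\rho(p_{0}) = \bp{a, p_{0}}$, so that as a 1-form on $N_{0}$ we have $\d \rho = a^{\flat}$ and hence $\d \log \rho = \rho^{-1} a^{\flat}$. With a single spherical factor, \cref{prop:CTredChar} characterizes decomposable extensions as those CTs of the form $\hat{L} = \tilde{L} + \lambda G_{1}$, where $\tilde{L}$ is the canonical lift of $L$ and $\lambda \in \R$ must satisfy
\begin{equation*}
L(\d \log \rho) = \d \bigl(\lambda \log \rho + \tfrac{1}{2} \tr{L}\bigr).
\end{equation*}
The proof of \cref{prop:CTredChar} that underlies this reduction rests only on the warped product structure and the defining equation $\nabla L = C \odot G$ of a CT, so it applies even though $L$ is not assumed diagonalizable here.

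Next I would expand both sides using the given parameters $A$, $w$, $m$. Applying $L$, as a self-adjoint operator on 1-forms, to $\rho^{-1} a^{\flat}$ and using $\bp{a, r} = \rho$ gives
\begin{equation*}
L(\d \log \rho) = \rho^{-1} (Aa)^{\flat} + w^{\flat} + m r^{\flat} + \rho^{-1} \bp{w, a}\, r^{\flat}.
\end{equation*}
For the right side, $\tr{L} = \tr{A} + m r^{2} + 2\bp{w, r}$ with $\tr{A}$ constant, so $\nabla(\tfrac{1}{2}\tr{L}) = m r + w$ and
\begin{equation*}
\d\bigl(\lambda \log \rho + \tfrac{1}{2}\tr{L}\bigr) = \rho^{-1} \lambda a^{\flat} + m r^{\flat} + w^{\flat}.
\end{equation*}
Equating the two sides and cancelling the matching $w^{\flat}$ and $m r^{\flat}$ terms, then clearing the common factor $\rho^{-1}$, reduces everything to the single vector identity
\begin{equation*}
Aa + \bp{w, a}\, r = \lambda a,
\end{equation*}
required to hold at every $p \in N_{0}$. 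Since $Aa$, $\lambda a$ and $\bp{w, a}$ are independent of $p$, while $r = p$ ranges over an open set of $V_{0}$, matching the constant and position-dependent parts forces $\bp{w, a} = 0$ and $Aa = \lambda a$, which is precisely the condition that $a$ is an eigenvector of $A$ orthogonal to $w$. All steps are reversible, giving the biconditional.

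\textbf{Main obstacle.} The delicate point is the first step: justifying that any CT on $\eunn$ extending $L$ and decomposable in this warped product must actually have the shape $\tilde{L} + \lambda G_{1}$ for a single constant $\lambda$, and that this shape is itself a CT precisely when \eqref{eq:CTredExt} holds. Since \cref{prop:CTredChar} is phrased for OCTs, one has to either note that its proof uses only the warped product structure and the defining CT equation (not point-wise diagonalizability), or else supply a short direct argument at a general point of $\eunn$: writing the extension as $\hat{L} = \hat{A} + 2 \hat{w} \odot r + \hat{m}\, r \odot r$ via \cref{prop:CtFormEunn}, matching $\hat{L}|_{TN_{0}} = L$ fixes $\hat{w} = w$, $\hat{m} = m$, $\hat{A}|_{V_{0}} = A$ and $\hat{A}(V_{0}) \subseteq V_{0}$, and then invariance of $T_{p}N_{1}$ at every $p$ forces $\hat{A}|_{V_{1}}$ to be a constant multiple of the metric. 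Once this reduction is secured, the remainder is the routine algebra displayed above.
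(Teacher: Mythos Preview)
Your proof is correct and follows essentially the same route as the paper: both reduce the question to the functional equation \eqref{eq:CTredExt} and then carry out the same Cartesian-coordinate computation, arriving at $Aa + \bp{w,a}\,r = \lambda a$ (equivalently, $\rho^{-1}(A^i_j a^j + x^i w_j a^j) \in \spa{\rho^{-1}a^i}$), from which the eigenvector and orthogonality conditions follow by letting $r$ vary.

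One clarification regarding your ``main obstacle'': in the paper the phrase ``$L$ can be extended to a CT decomposable in this warped product'' is \emph{defined}, just before the lemma, to mean that \eqref{eq:CTredExt} holds for some $\lambda \in \R$. So the reduction you worry about is by definition, not by an appeal to \cref{prop:CTredChar}; your additional argument (matching $\hat{L}|_{TN_0}$ and forcing $\hat{A}|_{V_1}$ to be scalar) is a reasonable independent justification, but it is not needed for the proof as the paper frames it.
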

\begin{proof}
	First observe
	
	\begin{align}
	v^{k}\nabla_{k} \tr{L} & = v^{k}\nabla_{k} (m  x_{i} x^{i} + 2x^{i}w_{i}) \\
	& = m [(v^{k}\nabla_{k} x_{i}) x^{i} + x_{i} (v^{k}\nabla_{k} x^{i})] + 2 [(v^{k}\nabla_{k} w_{i}) x^{i} + w_{i} (v^{k}\nabla_{k} x^{i})] \\
	& = m (v_{i} x^{i} + x_{i}v^{i}) + 2 v^{i}w_{i} \\
	& = 2m v^{i}x_{i} + 2 v^{i} w_{i}
	\end{align}
	
	Hence $\nabla^{i} \tr{L} = 2 (m x^{i} + w^{i})$. Now let $\rho = a^{i} x_{i} = \bp{a, x} > 0$, then one can similarly show that 
	
	\begin{equation}
	\nabla^{i} \log  \rho = \frac{a^{i}}{\rho}
	\end{equation}
	
	Then,
	
	\begin{align}
	?L^i_j? \nabla^{j} \log  \rho - \frac{1}{2} \nabla^{i} \tr{L} & = \frac{1}{\rho} (?A^i_j?a^{j} + m x^{i} x_{j}a^{j} + w^{i} x_{j}a^{j} + x^{i}w_{j}a^{j}) - m x^{i} - w^{i} \\
	& = \frac{1}{\rho} (?A^i_j?a^{j} + x^{i}w_{j}a^{j}) + \frac{1}{\rho} (m x^{i} \rho + w^{i} \rho) - m x^{i} - w^{i} \\
	& = \frac{1}{\rho} (?A^i_j?a^{j} + x^{i}w_{j}a^{j})
	\end{align}
	
	By definition, $L$ can be extended to a CT decomposable in this warped product iff $?L^i_j? \nabla^{j} \log  \rho - \frac{1}{2} \nabla^{i} \tr{L} \in \spa {\nabla^{i} \log  \rho}$. The above equation implies that this happens iff $a$ is an eigenvector of $A$ and $a \in w^{\perp}$.
\end{proof}

We now use the above lemma to construct reducible CTs in $\eunn$.

\begin{propMy}[Constructing Reducible CTs in $\eunn$] \label{prop:CtRedConstrEn}
	Fix a proper warped product decomposition $(V_{0} \obot V_{1}; a)$ of $\eunn$ and let $\tilde{L} = \tilde{A} + m \tilde{r} \odot \tilde{r} + 2 \tilde{r} \odot \tilde{w} $ be a concircular tensor in $N_{0}$ (in contravariant form) which can be extended to a concircular tensor L in $\eunn$ via the above lemma. Since $N_{0} \subset V_{0} \subset \eunn$, we can consider $\tilde{L}$ to be a tensor in $\eunn$. Then L is given as follows:
	
	\begin{equation}
	L = A + m r \odot r + 2 r \odot \tilde{w}
	\end{equation}
	
	\noindent where as a linear operator, $A = \tilde{A} + \lambda I_{V_{1}}$, where $\lambda$ is the eigenvalue of $\tilde{A}$ associated with $a$ and $I_{V_{1}}$ is the identity on $V_{1}$.
\end{propMy}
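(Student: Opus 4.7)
The plan is to verify that the proposed $L$ satisfies three conditions: it is a concircular tensor on $\eunn$, its restriction to $N_0$ recovers $\tilde{L}$, and it is decomposable in the warped product $(V_0 \obot V_1; a)$. The preceding lemma supplies the necessary algebraic input: its hypothesis is equivalent to $\tilde{A} a = \lambda a$ for some $\lambda \in \R$ together with $\bp{a, \tilde{w}} = 0$. Set $A := \tilde{A} + \lambda I_{V_1}$ as in the statement, where $\tilde{A}$ is extended from $V_0$ to $\eunn$ by being zero on $V_1$; then $A$ is a symmetric constant tensor on $\eunn$, and $\tilde{w} \in V_0 \subset \eunn$ and $m \in \R$ are likewise constant on $\eunn$.

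Since $A \in C^2_0(\eunn)$, $\tilde{w} \in C^1_0(\eunn)$, and $m \in C^0_0(\eunn)$, \cref{prop:CtFormEunn} immediately gives that $L = A + 2 \tilde{w} \odot r + m r \odot r$ is a concircular tensor on $\eunn$. Next, for $p_0 \in N_0 \subset V_0$ the dilatational vector field satisfies $r_{p_0} = p_0 = \tilde{r}_{p_0}$; since $\tilde{w} \in V_0$ and $A|_{V_0} = \tilde{A}$, restricting $L$ at $p_0$ to covectors along $V_0$ yields exactly $\tilde{A} + 2 \tilde{w} \odot \tilde{r} + m \tilde{r} \odot \tilde{r} = \tilde{L}$.

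The main step is to verify decomposability, i.e.\ that the associated endomorphism $\hat{L}$ preserves $T_p N_1$. Let $p = \psi(p_0, p_1)$ and let $v \in T_p N_1 \subset T_p \eunn$. Using the explicit formulas \cref{eq:WPDecNn} in the non-null case and \cref{eq:WPDecN} in the null case, together with the identification $T_{\bar{p}} N_1 = V_1$, one checks in each case that $v$ lies in the subspace $V_1 \obot \spa{a}$ of $\eunn$ and that $\bp{r_p, v} = 0$: the $V_0$-part of $p$ is orthogonal to $V_1 \obot \spa{a}$, while the remaining component of $p$ is sphere-radial and hence orthogonal to the sphere tangent $v$. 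Consequently
\begin{equation*}
\hat{L}(v) = A v + m \bp{r, v} r + \bp{r, v} \tilde{w} + \bp{\tilde{w}, v} r = A v + \bp{\tilde{w}, v} r.
\end{equation*}
Now $A v = \lambda v$ since $A|_{V_1} = \lambda I$ and $A a = \tilde{A} a = \lambda a$; and $\bp{\tilde{w}, v} = 0$ because $\tilde{w} \in V_0 \perp V_1$ annihilates the $V_1$-component of $v$, while $\bp{\tilde{w}, a} = 0$ kills the $\spa{a}$-component. Thus $\hat{L}(v) = \lambda v \in T_p N_1$, and decomposability is established (with spherical eigenvalue $\lambda$, consistent with \cref{prop:CTredChar}).

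The principal obstacle is the geometric bookkeeping in the decomposability step: one must identify $T_p N_1$ concretely as a subspace of $\eunn$ and verify the two pairings $\bp{r, v} = 0$ and $\bp{\tilde{w}, v} = 0$ uniformly in $(p_0, p_1)$ and separately in the null and non-null warped-product cases. Once these are in hand, the algebraic conclusion $\hat{L}(v) = \lambda v$ follows at once from the hypothesis of the preceding lemma.
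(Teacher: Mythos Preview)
Your approach is genuinely different from the paper's and largely sound, but there is one gap in the logic.

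The paper proves the identity $\psi_*(\tilde{L}+\lambda\rho^{-2}G_1)=A+m\,r\odot r+2\,r\odot\tilde{w}$ by a direct push-forward computation, carried out separately in the non-null and null cases: it writes down how $\psi_*$ acts on the basis vectors $a_i$, on $a$, and on $\tilde r$ (obtaining $\psi_*\tilde r=r$, $\psi_*\tilde w=\tilde w$, and $\psi_*(\tilde A+\lambda\rho^{-2}G_1)=\tilde A+\lambda G_{V_1}$) and then reads off the formula. Your route instead posits the formula and verifies three properties: it is a CT, its $V_0$-block at points of $N_0$ equals $\tilde L$, and $T_pN_1$ is an eigenspace of $\hat L$ with eigenvalue~$\lambda$. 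This is cleaner in that the eigenspace check $\hat L v=\lambda v$ is uniform and short once you know $v\in V_1\obot\spa{a}$, $\bp{r,v}=0$, $\bp{\tilde w,v}=0$.

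The gap is that these three properties do not by themselves establish that your candidate \emph{equals} the extension $L':=\psi_*(\tilde L+\lambda\rho^{-2}G_1)$. Your restriction check is only along the slice $\{p_1=\bar p\}$, so you have not shown that the $TN_0$-block of $(\psi^{-1})_*L$ is independent of $p_1$; appealing to \cref{prop:CTredChar} does not help, since that proposition is stated for OCTs and in any case presupposes the lifted form you are trying to verify. The fix is short: both your $L$ and the extension $L'$ are CTs on $\eunn$, and your computations show that at every $p_0\in N_0$ they agree as \emph{full} contravariant tensors (the $V_0\times V_0$ block by your restriction check, the $V_1\times V_1$ and cross blocks by the eigenspace check at $(p_0,\bar p)$ together with self-adjointness). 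Writing $L-L'=A''+2w''\odot r+m''\,r\odot r$ via \cref{prop:CtFormEunn} and evaluating all components at $r=p_0\in N_0$ (an open subset of $V_0$) forces $A''=0$, $w''=0$, $m''=0$. Add this uniqueness sentence and your argument is complete.
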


\begin{proof}
	Throughout the proof, $G$ is the contravariant metric for $\eunn$ and this metric adapted to the warped product is given as follows:
	
	\begin{equation}
	G = G' + \frac{1}{\rho^{2}} G_{1}
	\end{equation}
	
	\textbf{The non-null case:} In this case $\kappa_{1} := a^{2} = \pm 1$. Let $m := \dim V_{0}$ and choose an orthonormal basis for $V_0$, $\{a_{1},...,a_{m}\}$ with $a_{m} = a$.
	
	First note that for $p = (p_{0},p_{1}) \in N_{0} \times N_{1}$ and $v = (v_{0},v_{1}) \in T_{p} (N_{0} \times N_{1})$, \cref{eq:WPDecNn} implies that
	
	\begin{equation}  \label{eq:nonNullWPPush}
	\psi_{*}v = P_{0}v_{0} + \bp{a, v_{0}}(p_{1}-c) + \bp{a, p_{0}}v_{1}
	\end{equation}
	
	Hence we observe the following:
	
	\begin{align}
	\psi_{*}p_{0} & = P_{0}p_{0} + \bp{a, p_{0}}(p_{1}-c) \label{eq:nonNullPushR} \\
	& = \psi(p_{0},p_{1})
	\end{align}
	
	and
	
	\begin{equation} \label{eq:nonNullPushBas}
	\psi_{*}a_{i} = a_{i} \quad \text{for } i = 1,...,m-1
	\end{equation}
	
	Now let $\tilde{L} = \tilde{A} + m \tilde{r} \odot \tilde{r} + 2 \tilde{w} \odot \tilde{r}$ be a concircular tensor in $N_{0}$ satisfying $\tilde{A}a = \lambda a$ for some $\lambda$ and $\bp{a,\tilde{w}} = 0$. Then from \cref{prop:extCT} we know that $\psi_{*}(\tilde{L} + \frac{\lambda}{\rho^{2}} G_{1})$ is a concircular tensor in $\eunn$. We now calculate $\psi_{*}(\tilde{L} + \frac{\lambda}{\rho^{2}} G_{1})$ explicitly.
	
	First note that
	
	\begin{equation}
	\tilde{A} = A_{0} + \lambda \kappa_1 a \odot a
	\end{equation}
	
	\noindent where $A_{0} a = 0$ and so $\psi_* A_0 = A_0$ by \cref{eq:nonNullPushBas}. Let $G$ be the contravariant metric for $\eunn$ and $G_{0} $ be the restriction of $G$ to $W_{0}$, then
	
	\begin{align}
	G & = G' + \frac{1}{\rho^{2}} G_{1} \\
	& = G_{0} + \kappa_1 a \odot  a + \frac{1}{\rho^{2}} G_{1}
	\end{align}
	
	Thus
	
	\begin{equation}
	\frac{1}{\rho^{2}} G_{1} = G - G_{0} - \kappa_1 a \odot a
	\end{equation}
	
	Let $G_{V_{1}}$ be the restriction of $G$ to $V_{1}$, then
	
	\begin{align}
	\psi_{*}(\tilde{A} + \frac{\lambda}{\rho^{2}} G_{1}) & = \psi_{*}(A_{0} + \lambda \kappa_1 a \odot a + \lambda( G - G_{0} - \kappa_1 a \odot a)) \\
	& = \psi_{*}(A_{0} + \lambda( G - G_{0})) \\
	& = A_{0} + \lambda( G - G_{0}) \\
	& = \tilde{A} + \lambda G_{V_{1}}
	\end{align}
	
	\noindent where the second last equality follows from \cref{eq:nonNullPushBas} and the fact that $\psi$ is an isometry.
	
	\cref{eq:nonNullPushR} implies that $\psi_{*}\tilde{r} = r$, also \cref{eq:nonNullPushBas} together with the fact that $\bp{a,\tilde{w}} = 0$ implies that $\psi_{*}\tilde{w} = \tilde{w}$. Thus we conclude that
	
	\begin{equation}
	\psi_{*}(\tilde{L} + \frac{\lambda}{\rho^{2}} G_{1}) = A + m r \odot r + 2 r \odot \tilde{w}
	\end{equation}
	
	\noindent where as a linear operator, $A = \tilde{A} + \lambda I_{V_{1}}$ where $I_{V_{1}}$ is the identity on $V_{1}$.
	
	\textbf{The null case:} In this case $a$ is a lightlike vector. Let $m := \dim V_{0}$ and choose a basis $\{a_{1},...,a_{m-2},a,b\}$ for $V_0$ where $\{a_{1},...,a_{m-2}\}$ is an orthonormal basis for $W_0$ and $a,b$ are as in the null warped product decomposition.
	
	First note that for $p = (p_{0},p_{1}) \in N_{0} \times N_{1}$ and $v = (v_{0},v_{1}) \in T_{p} (N_{0} \times N_{1})$, \cref{eq:WPDecN} implies that
	
	\begin{align}
	\psi_{*}v & = P_{0}v_{0} + (\bp{b, v_{0}} - \frac{1}{2}\bp{a, v_{0}}(P_{1}p_1)^{2} - \bp{a, p_{0}}\bp{P_{1}p_{1}, P_{1}v_{1}})a + \bp{a,v_{0}}b \nonumber \\
	& \qquad + \bp{a, v_{0}} P_{1}p_{1} + \bp{a, p_{0}} P_{1}v_{1} \label{eq:nullWPPush}
	\end{align}
	
	Hence we observe the following:
	
	\begin{align}
	\psi_{*}p_{0} & = P_{0}p_{0} + (\bp{b, p_{0}} - \frac{1}{2}\bp{a, p_{0}}(P_{1}p_1)^{2})a + \bp{a,p_{0}}b  + \bp{a, p_{0}} P_{1}p_{1} \label{eq:nullPushR} \\
	& = \psi(p_{0},p_{1})
	\end{align}
	
	and
	
	\begin{align}
	\psi_{*}a_{i} & = a_{i} \quad i = 1,...,m-2 \label{eq:nullPushBas} \\
	\psi_{*}a & = a
	\end{align}
	
	Now let $\tilde{L} = \tilde{A} + m \tilde{r} \odot \tilde{r} + 2 \tilde{w} \odot \tilde{r}$ be a concircular tensor on $N_{0}$ satisfying $\tilde{A}a = \lambda a$ for some $\lambda$ and $\bp{a,\tilde{w}} = 0$. Then from \cref{prop:extCT} we know that $\psi_{*}(\tilde{L} + \frac{\lambda}{\rho^{2}} G_{1})$ is a concircular tensor in $\eunn$. We now calculate $\psi_{*}(\tilde{L} + \frac{\lambda}{\rho^{2}} G_{1})$ explicitly.
	
	Since $\tilde{A}a = \lambda a$, $\tilde{A}$ can be decomposed in contravariant form as follows:
	
	\begin{equation}
	\tilde{A} = A_{0} + 2 \lambda a \odot b
	\end{equation}
	
	\noindent where $A_{0} a = 0$ and so $\psi_* A_0 = A_0$ by \cref{eq:nullPushBas}. Let $G$ be the contravariant metric for $\eunn$ and $G_{0} $ be the restriction of $G$ to $W_{0}$, then we see that
	
	\begin{equation}
	\frac{1}{\rho^{2}} G_{1} = G - G_{0} - 2 a \odot b
	\end{equation}
	
	Let $G_{V_{1}}$ be the restriction of $G$ to $V_{1}$, then
	
	\begin{align}
	\psi_{*}(\tilde{A} + \frac{\lambda}{\rho^{2}} G_{1}) & = \psi_{*}(A_{0} +  2\lambda a \odot b + \lambda( G - G_{0} -2 a \odot b)) \\
	& = \psi_{*}(A_{0} + \lambda( G - G_{0})) \\
	& = A_{0} + \lambda( G - G_{0}) \\
	& = A_{0} + 2 \lambda a \odot b + \lambda G_{V_{1}} \\
	& = \tilde{A} + \lambda G_{V_{1}}
	\end{align}
	
	\noindent where the third equality follows from \cref{eq:nullPushBas} and the fact that $\psi$ is an isometry.
	
	\cref{eq:nullPushR} implies that $\psi_{*}\tilde{r} = r$, also \cref{eq:nullPushBas} together with the fact that $\bp{a,\tilde{w}} = 0$ implies that $\psi_{*}\tilde{w} = \tilde{w}$. Thus we conclude that
	
	\begin{equation}
	\psi_{*}(\tilde{L} + \frac{\lambda}{\rho^{2}} G_{1}) = A + m r \odot r + 2 r \odot \tilde{w}
	\end{equation}
	
	\noindent where as a linear operator, $A = \tilde{A} + \lambda I_{V_{1}}$ where $I_{V_{1}}$ is the identity on $V_{1}$.
\end{proof}

\begin{remark}
	Note that even though the extended CT, $L$, can be naturally extended to all of $\eunn$. It is the extension of $\tilde{L}$ only for the subset $\Ima(\psi)$ of $\eunn$ given by \cref{thm:WPDecomps}, which is in general not a dense subset of $\eunn$.
\end{remark}

The following corollary will be useful in the sequel.

\begin{corollary} \label{cor:WPsphMet}
	Fix a proper warped product decomposition $\psi$ determined by the data $(V_{0} \obot V_{1}; a)$ with $\kappa_1 := a^2 = \pm 1$. Let $\tilde{r} = P_1 r$ be the dilatational vector in $W_1$ and $G_1$ be the metric in $W_1$. Write the metric adapted to the warped product as $G = G' + \frac{1}{\rho^{2}} \tilde{G}$, then:
	
	\begin{equation}
	\psi_* \tilde{G} = \kappa_1 \tilde{r}^2  (G_1 - \frac{1}{\tilde{r}^2} \tilde{r} \odot \tilde{r})
	\end{equation}
\end{corollary}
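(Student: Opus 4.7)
The plan is to reduce this to a direct computation using the explicit form of $\psi$ and $\psi_*$ recorded in \cref{eq:WPDecNn,eq:nonNullWPPush}, together with the standard description of the contravariant metric of a central sphere as an orthogonal projection.

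First I would compute $\tilde r$ explicitly at a point $\psi(p_0,p_1)\in\eunn$. Writing $p_1=c+q$ with $q\in W_1$ and $q^2=1/\kappa$, the definition $c=\bar p-a/\kappa$ together with $a\in W_1$ (which follows from $W_0\subseteq a^\perp$) gives $P_1\bar p=a/\kappa$, $P_1c=0$, and $P_1p_0=\rho(p_0)a/\kappa$. Inserting these into $\tilde r=P_1r$ evaluated at $r_{\psi(p_0,p_1)}=p_0+\rho(p_0)(p_1-\bar p)$ yields $\tilde r=\rho(p_0)\,q$, so in particular $\tilde r^2=\rho^2/\kappa$ and $\kappa_1\tilde r^2=\rho^2$.

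Next I would use \cref{eq:nonNullWPPush} with $v_0=0$ to read off that $\psi_*X_1=\rho(p_0)X_1$ for every $X_1\in T_{p_1}N_1$ (viewed as a vector in $W_1\subset\eunn$). Since $N_1$ is, up to the affine translation by $c$, the central hyperquadric $\{q\in W_1\mid q^2=1/\kappa\}$, its induced contravariant metric at $p_1$, regarded as a $(2,0)$-tensor on $W_1$, is the orthogonal projector onto $T_{p_1}N_1=q^\perp\cap W_1$; that is,
\begin{equation*}
\tilde G\big|_{p_1}=G_1-\frac{1}{q^2}\,q\odot q = G_1-\kappa\,q\odot q .
\end{equation*}
Because $\psi_*$ acts on vertical vectors by the scalar $\rho$, it acts on symmetric contravariant vertical $2$-tensors by $\rho^2$, giving
\begin{equation*}
\psi_*\tilde G=\rho^2\bigl(G_1-\kappa\,q\odot q\bigr)=\rho^2G_1-\kappa\,\tilde r\odot\tilde r.
\end{equation*}
Factoring $\rho^2=\kappa_1\tilde r^2$ and using $\kappa=1/\tilde r^2\cdot\rho^2\cdot\kappa_1/\kappa_1$ (equivalently, $\kappa=\kappa_1$ and $\tilde r^2=\rho^2/\kappa$) rewrites this as $\kappa_1\tilde r^2\bigl(G_1-\tilde r^{-2}\tilde r\odot\tilde r\bigr)$, which is the claimed identity.

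The calculation is essentially mechanical once the three ingredients above are in place, so I do not anticipate a serious obstacle. The only point requiring care is the bookkeeping among the subspaces $V_0$, $V_1$, $W_0$, $W_1$ and the membership $a\in W_1$; a small slip there would give the wrong projection formula for $\tilde r$. Recognizing that the sphere's contravariant metric is exactly the projector onto the tangent hyperplane (so that its natural extension as a tensor on the ambient vector space $W_1$ is $G_1-\tilde r^{-2}\tilde r\odot\tilde r$ after rescaling) is what makes the two sides match without further work.
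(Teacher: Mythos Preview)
Your proposal is correct and follows essentially the same route as the paper: a direct computation using the explicit pushforward formula \cref{eq:nonNullWPPush} for the non-null warped product. The only organizational difference is that the paper writes $\tilde G$ via the algebraic identity $\frac{1}{\rho^2}\tilde G = G - G_0 - \kappa_1\, a\odot a$ and then computes $\psi_* a = \kappa_1 \tilde r/\rho$, whereas you identify $\tilde G$ geometrically as the sphere's projector metric $G_1 - \kappa\, q\odot q$ and push forward vertical vectors by the scalar $\rho$; since $\psi_* a = \kappa_1 q$ and $\tilde r = \rho q$, these are the same computation in different packaging.
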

\begin{proof}
	Let $G$ be the contravariant metric for $\eunn$ and $G_{0} $ (resp. $G_1$) be the restriction of $G$ to $W_{0}$ (resp. $W_1$), then recall that
	
	\begin{align}
	\frac{1}{\rho^{2}} \tilde{G} & = G - G_{0} - \kappa_1 a \odot a
	\end{align}
	
	Hence the above equation together with \cref{eq:nonNullPushBas} implies that
	
	\begin{align}
	\psi_* \tilde{G}  & = \rho^2 (G - G_{0} - \kappa_1  \psi_* (a \odot a)) \\
	& = \rho^2 (G_1 - \kappa_1  \psi_* (a \odot a))
	\end{align}

	Let $\tilde{p}_1 = p_{1}-c \in W_1(\kappa_1)$ then $\tilde{r} =  P_1 r = \bp{a, p_0} \tilde{p}_1$. Then by \cref{eq:nonNullPushR}
	
	\begin{align}
	\psi_{*}a  & = \kappa_1 \tilde{p}_1 \\
	& = \kappa_1 \frac{\tilde{r}}{\bp{a, p_0}} \\
	& = \kappa_1 \frac{\tilde{r}}{\rho}
	\end{align}
	
	Thus since $\tilde{r}^2 = \frac{\rho^2}{\kappa_1} $, we have:
	
	\begin{align}
	\psi_* \tilde{G}  & = \rho^2 (G_1 - \kappa_1  \psi_* (a \odot a)) \\
	& = \rho^2 (G_1 - \kappa_1 \frac{1}{\rho^2} \tilde{r} \odot \tilde{r}) \\
	& = \kappa_1 \tilde{r}^2  (G_1 - \frac{1}{\tilde{r}^2} \tilde{r} \odot \tilde{r})
	\end{align}
\end{proof}

We now present some examples which show how to use the above proposition (\cref{prop:CtRedConstrEn}) to construct warped products which decompose a given reducible CT.

\begin{example} \label{ex:wpRedI}
	Let $M = \eunn$ where $n \geq 3$. Consider the central CT $L$ with parameter matrix $A = \varepsilon e \odot e$, where $\varepsilon := e^2 = \pm 1$.
	
	Let $W := e^\perp$ and $P$ be the orthogonal projection onto $W$. Choose $\bar{p} \in \eunn$ such that $(P \bar{p})^2 \neq 0$, WLOG we assume $(P \bar{p})^2 = \pm 1$. We construct a warped product passing through $\bar{p}$ which decomposes $L$.
	
	Let $\kappa_1 := \sgn (P \bar{p})^2$ and take $a := \kappa_1 P \bar{p} \in W$. Let $V_1 = W \cap a^\perp$ and $V_0 = V_1^\perp = \R e \obot \R a$. Note that $a$ was chosen so that the initial data $(\bar{p} ; V_0 \obot V_1; a)$ is in canonical form and also note that $\kappa_1 = a^2$. Let $\psi : N_{0} \times_{\rho} N_{1} \rightarrow \eunn$ be the warped product in \cref{thm:WPDecomps} determined by this initial data.
	
	Now let $\tilde{A} := \varepsilon e \odot e + 0 a \odot a \in C^2_0(N_0)$, then by construction we have that:
	
	\begin{equation}
	A = \tilde{A} + 0 I_{V_1}
	\end{equation}
	
	Let $\tilde{L}$ be the central CT in $N_0$ with parameter matrix $\tilde{A}$ and suppose the contravariant metric in the warped product decomposes as $G = G' + \frac{1}{\rho^{2}} G_{1}$. The above proposition shows that:
	
	\begin{equation}
	\psi_*(\tilde{L} + 0 \frac{1}{\rho^{2}} G_{1}) = L
	\end{equation}
	
	\noindent for all points in the image of $\psi$, which includes $\bar{p}$. Hence this warped product decomposition decomposes $L$. Note that this warped product was constructed so that $\tilde{A}$ has simple eigenvalues and so $\tilde{L}$ is no longer reducible.
	
	In the following we replace $N_1$ with $N_1 - c_1$ so that $N_1$ is a central hyperquadric. Then by \cref{eq:WPDecNn}, we have for $(p_0, p) = (\kappa_1 x a + y e ,p) \in N_{0} \times N_{1}$
	
	\begin{equation}
	\psi(p_{0},p) = x p + y e
	\end{equation}
\end{example}

The above example will be applied to construct separable coordinates in \cref{sec:constSepCoord}, see \cref{ex:oblProCoords}. We now give a non-Euclidean variation of the above example.

\begin{example} \label{ex:wpRedII}
	Let $M = \eunn$ where $n \geq 3$. Consider the central CT $L$ with parameter matrix $A = a \odot a$ with $a^2 = 0$ and $a \neq 0$.
	
	Let $W = a^\perp$. Choose $\bar{p} \notin W$, WLOG we assume $\bp{\bar{p}, a} = \pm 1$. We now construct a warped product passing through $\bar{p}$ which decomposes $L$.
	
	If $\bp{\bar{p}, a} = -1$, then set $a := - a$, so we can assume $\bp{\bar{p}, a} = 1$. Define $b$ as follows:
	
	\begin{eqnarray}
	b := \bar{p} - \frac{\bar{p}^2}{2} a
	\end{eqnarray}
	
	
	Note that $b$ is a lightlike vector satisfying $\bp{a,b} = 1$. Define $V_1 = a^\perp \cap b^\perp$ and $V_0 = \spa{a,b}$. Note that $b$ was chosen so that the initial data $(\bar{p} ; V_0 \obot V_1; a)$ is in canonical form. Let $\psi : N_{0} \times_{\rho} N_{1} \rightarrow \eunn$ be the warped product in \cref{thm:WPDecomps} determined by this initial data.
	
	Note that $\{b,a\}$ forms a cycle of generalized eigenvectors for $A$ and $A|_{V_1} = 0 I_{V_1}$. Hence by the above proposition, $(\psi^{-1})_* L$ is decomposable in this warped product. Also by \cref{thm:WPDecomps}, $\bar{p} \in \Ima(\psi)$. Also, the restriction of $(\psi^{-1})_* L$ to $N_0$, $\tilde{L}$, is a central CT with 2D parameter matrix $a \odot a$.
	
	In the following we replace $N_1$ with $P_1(N_1 - \bar{p})$ so that $N_1 = V_1$ is a vector space. Then by \cref{eq:WPDecN}, we have for $(p_0, p) = (x b + y a,p) \in N_{0} \times N_{1}$
	
	\begin{equation}
	\psi(p_{0},p) = x(b + p - \frac{1}{2} p^2 a) + y a
	\end{equation}
\end{example}

\paragraph{General Construction} \label{par:CtRedConstrEn} We will show how to use \cref{prop:CtRedConstrEn} to construct a warped product which decomposes an interesting class\footnote{This class includes all reducible OCTs in Euclidean and Minkowski space.} of non-degenerate reducible CTs. This construction generalizes the above examples. First we need a preliminary definition. Suppose $A$ is a linear operator on a vector space. We say that a vector $v$ is a \emph{proper generalized eigenvector of $A$} if $(A - \lambda I)^k v = 0$ for some $\lambda \in \C$ and $k > 1$.

Let $L = A + m r \odot r + 2 r \odot w$ be a non-degenerate CT in $\eunn$ in the canonical form given by \cref{thm:conTenCanForm}. We let the subspace $D$ and the matrix $A_c$ be as in the remarks following the theorem. We assume that each real generalized eigenspace of $A_c$ admits at most one proper generalized eigenvector. We lose no generality when working in Euclidean or Minkowski space \cite[Appendix~C]{Rajaratnam2014}.

Now let $W_{1},\dotsc,W_{k}$ be the multidimensional (real) eigenspaces of $A_c$ with corresponding eigenvalues $\lambda_1,\dotsc,\lambda_k$. The following construction is based on the metric-Jordan canonical form of $A_c$, see \cref{thm:comMetJFor} or \cite[theorem~C.3.7]{Rajaratnam2014}.

\begin{parts}
	\item $W_{i}$ is a non-degenerate subspace \\
	Choose a unit vector $a_{i} \in W_{i}$ and define $V_{i} := W_{i} \cap a_{i}^{\perp}$. The pair $(V_{i}, a_{i})$ determine a sphere.
	\item $W_{i}$ is a degenerate subspace \\
	Consider the metric-Jordan canonical form for $A_c$. By assumption there must be a single cycle $v_1,\dotsc,v_r$ of generalized eigenvectors with $v_r \in W_i$ being a lightlike eigenvector. Let $a_i := v_r$ and $V_{i} := W_{i} \cap v_1^{\perp}$, note that $V_i$ is non-degenerate.
\end{parts}

Now let $V_0 := \cap_{i=1}^{k} V_{i}^{\perp}$ and $\tilde{A} := A|_{V_0}$. By construction, the data $(\bigobot\limits_{i=0}^{k} V_{i}; a_{1},...,a_{k})$, determines a warped product decomposition $\psi : N_0 \times_{\rho_1} N_1 \cdots \times_{\rho_k} N_k \rightarrow \eunn$ in canonical form. By repeatedly applying \cref{prop:CtRedConstrEn} we see that $L$ is decomposable in the warped product decomposition induced by $\psi$, with the following properties:

\begin{itemize}
	\item $((\psi^{-1})_* L)|_{N_0} = \tilde{A} + m \tilde{r} \odot \tilde{r} + 2 \tilde{r} \odot w $ where $\tilde{r}$ is the dilatational vector field in $N_0$
	\item $\tilde{A}|_{D^\perp}$ only has eigenspaces of dimension one, i.e. each Jordan block of $\tilde{A}|_{D^\perp}$ has a distinct eigenvalue.
	\item For each $i > 0$, $T N_i$ is an eigenspace of $(\psi^{-1})_* L$ with constant eigenfunction $\lambda_i$
\end{itemize}


\paragraph{On Completeness } We will end this section by showing that the above construction is complete, meaning that the restriction of $(\psi^{-1})_* L$ to the geodesic factor $N_0$ no longer has constant eigenfunctions.

We also note here that with an appropriate choice of $a_1,\dotsc,a_k$ we can choose warped product decompositions to cover all of $\eunn$ except for a union of closed submanifolds with dimension strictly less than $n$. \Cref{ex:wpRedI,ex:wpRedII} give more details on how to do this, see also \cref{thm:WPDecomps}. In other words, for the non-degenerate CTs considered above, there exists a warped product decomposition $\psi : N_0 \times_{\rho_1} N_1 \cdots \times_{\rho_k} N_k \rightarrow \eunn$ such that $\Ima(\psi)$ is a dense subset of $\eunn$. Although the cost of this is that the factors $N_i$ may no longer be connected.

The following lemma shows that the classification of reducible CTs given above is complete for central CTs.

\begin{lemma}[Reducible central CTs] \label{prop:CtCentRed}
	Let $L$ be a central CT with parameter matrix $A$. Suppose that each real generalized eigenspace of $A$ has at most one proper generalized eigenvector. Then $A$ has a real eigenspace $\tilde{E}_\lambda$ with dimension $m > 1$ iff $L$ has a non-degenerate eigenspace $E_\lambda$ (defined on a dense subset of $\eunn$) with constant eigenfunction $\lambda$ and dimension $m-1$.
\end{lemma}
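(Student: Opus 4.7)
The plan is to handle the two directions of the iff separately. For the forward direction ($\Rightarrow$), I would invoke the General Construction paragraph preceding this lemma: given an $m$-dimensional real eigenspace $\tilde{E}_\lambda$ of $A$, the hypothesis on proper generalized eigenvectors ensures that $\tilde{E}_\lambda$ falls into exactly one of the two cases (non-degenerate or degenerate) handled there, producing a non-degenerate subspace $V_i \subseteq \tilde{E}_\lambda$ of dimension $m-1$ together with a canonical vector $a_i$, and hence a warped product $\psi : N_0 \times_{\rho_1} N_1 \times \cdots \times_{\rho_k} N_k \rightarrow \eunn$ in which $L$ is decomposable. Since $\Ima(\psi)$ is dense and, by the properties listed at the end of that paragraph, $T N_i$ is a non-degenerate eigenspace of $(\psi^{-1})_* L$ of dimension $m-1$ with constant eigenvalue $\lambda$, this yields the required $E_\lambda$.

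For the reverse direction ($\Leftarrow$), the strategy is to differentiate the eigenvalue equation. Let $v$ be a local section of $E_\lambda$ on the dense open set $U$ where it is defined. Since $\lambda$ is constant, differentiating $L v = \lambda v$ and using $\nabla A = 0$ together with $\nabla r = I$ (for the central CT $L = A + r \otimes r^\flat$) gives
\begin{equation*}
\bp{X,v}\, r + \bp{r,v}\, X = (\lambda I - L)\, \nabla_X v.
\end{equation*}
Self-adjointness of $L$ forces the right-hand side to lie in $E_\lambda^\perp$, so pairing the identity with any $u \in E_\lambda$ yields $\bp{r,u}\bp{X,v} + \bp{r,v}\bp{X,u} = 0$ for all vector fields $X$ and $u, v \in E_\lambda$. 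Specializing $u = v$ and invoking non-degeneracy of the metric (so that $v^\flat \neq 0$ whenever $v \neq 0$) forces $\bp{r,v} = 0$ for every $v \in E_\lambda$. Substituting back into $Lv = \lambda v$ collapses the equation to $A v = \lambda v$, yielding $E_\lambda|_p \subseteq \tilde{E}_\lambda$ at every $p \in U$, and in particular $\dim \tilde{E}_\lambda \geq m-1$.

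The final step upgrades this lower bound to $m$. If instead $\dim \tilde{E}_\lambda = m - 1$, then $E_\lambda|_p = \tilde{E}_\lambda$ would be independent of $p$, and the relation $r_p \perp E_\lambda|_p$ would read $p \perp \tilde{E}_\lambda$ on the dense set $U$; by continuity this extends to all $p \in \eunn$, forcing $\tilde{E}_\lambda \subseteq \bigcap_{p \in \eunn} p^\perp = \{0\}$, a contradiction. The main obstacle I expect is the explicit computation $(\nabla_X L)(v) = \bp{X,v}\, r + \bp{r,v}\, X$ for a central CT and the careful use of self-adjointness together with non-degeneracy of $E_\lambda$ to place the right-hand side in $E_\lambda^\perp$; the remainder of the argument is essentially linear algebra and density.
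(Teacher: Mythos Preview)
Your proof is correct, and the converse direction takes a genuinely different route from the paper.

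For the forward direction you both invoke the General Construction, so there is nothing to compare.

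For the converse, the paper proves the contrapositive by an induction on the characteristic polynomial: using the splitting formula \eqref{eq:CTDetCentInvSS} and the base case in \cref{cor:CTChPolCentDe}, it shows that if every real eigenspace of $A$ is one–dimensional then $p(z)$ cannot have a constant root, hence $L$ has no constant eigenfunction.  Your argument instead differentiates the eigenvalue equation directly, uses self-adjointness to force $\bp{r,v}=0$, and concludes $E_\lambda|_p\subseteq\tilde E_\lambda$ pointwise; the density step then rules out $\dim\tilde E_\lambda=m-1$.  This is more elementary and self-contained: it does not touch the characteristic-polynomial machinery, it does not actually use the hypothesis on proper generalized eigenvectors for this direction, and it pins down the eigenvalue $\lambda$ immediately (the paper's contrapositive only yields that \emph{some} eigenspace of $A$ is multidimensional, and one must feed the result back through the warped-product construction to match eigenvalues and dimensions).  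On the other hand, the paper's inductive scheme is what it reuses verbatim for the axial case (via \cref{eq:CTChPolAxial}) and the spherical case (via \cref{eq:CTChPolEunnKap}), so its payoff is uniformity across all three settings rather than minimality of the argument here.
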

\begin{proof}
	It was proven above that under the hypothesis, if $A$ has a real eigenspace with dimension $m > 1$ then $L$ has a non-degenerate eigenspace $E_\lambda$ with dimension $m-1$. We will now prove the converse.
	
	To prove the converse, we simply have to prove that if all real eigenspaces of $A$ are at most one dimensional then $L$ has no non-degenerate eigenspaces with constant eigenfunctions defined on open subsets of $\eunn$. It is sufficient to show that $L$ has no constant eigenfunctions defined on open subsets of $\eunn$.
	
	We prove this by induction. The base cases are given by \cref{cor:CTChPolCentDe}. Suppose $U$ is a non-degenerate invariant subspace of $A$ such that $L_u$ has the form given by \cref{cor:CTChPolCentDe} and $U^\perp$ satisfies the induction hypothesis. By \cref{eq:CTDetCentInvSS} we can write:
	
	\begin{equation}
	p(z) = p_u(z)B_{u^\perp}(z) + B_{u}(z)(p_{u^\perp}(z) - B_{u^\perp}(z))
	\end{equation}
	
	Then
	
	\begin{equation}
	\d p = B_{u^\perp} \d p_u + B_{u} \d p_{u^\perp}
	\end{equation}
	
	By the induction hypothesis, $L_{u^\perp}$ has no constant eigenfunctions. Suppose $\lambda$ is a constant eigenfunction of $p$, then by \cref{cor:CTChPolCentDe} and the above equation, it follows that
	
	\begin{equation}
	B_{u^\perp}(\lambda)  = B_{u}(\lambda) = 0
	\end{equation}
	
	If $B_{u}$ has no real roots, we reach a contradiction. Otherwise, by construction $A$ must have a real eigenspace with dimension $m > 1$, a contradiction. Hence we conclude that $L$ has no constant eigenfunctions which proves the claim by induction.
\end{proof}

Since a multidimensional eigenspace of an OCT has a constant eigenfunction, the above proposition allows us to classify these eigenspaces when the CTs considered induce an OCT on some subset of $\eunn$. For completeness, we will show that the hypothesis of the above proposition is the most general for classifying OCTs.

\begin{propMy}
	Let $L$ be a central CT with parameter matrix $A$. Suppose $A$ has a real generalized eigenspace with multiple proper generalized eigenvectors, then $L$ is not an OCT.
\end{propMy}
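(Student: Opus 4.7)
The plan is to reduce to $\lambda = 0$ by a shift, then split into two subcases by the Jordan structure of $A$ on the offending real generalized eigenspace and produce a point-wise obstruction in each.

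Since $L$ and $L - \lambda G$ share eigenspaces (\cref{prop:CtEquiv}), I may replace $A$ by $A - \lambda I$ and assume $\lambda = 0$. Letting the Jordan decomposition of $A$ restricted to the generalized $0$-eigenspace be $\bigoplus_{i=1}^{s'} J_{k_i}(0)$ with $K = \max_i k_i$, the hypothesis $\sum_i(k_i - 1) \geq 2$ forces either (A) $K \geq 3$, or (B) $K = 2$ with at least two of the $k_i$ equal to $2$.

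In subcase (A), take $U \subseteq \eunn$ to be a size-$K$ Jordan block; this is a non-degenerate $A$-invariant subspace with skew-normal basis $(e_1, \dotsc, e_K)$ and sign $\varepsilon$. At a point $p$ with $r_p = x^1 e_1 \in U$ and $x^1 \neq 0$, the rank-one tensor $r_p \otimes r_p^\flat$ preserves the splitting $U \obot U^\perp$, so $L_p = L|_U \oplus A|_{U^\perp}$ and \cref{cor:CTChPolCentDe} gives $p_U(z) = z^K - \varepsilon (x^1)^2$. For $K \geq 3$ this polynomial has $K$ distinct roots of which at most two are real, so $L_p$ has a simple complex conjugate pair of eigenvalues. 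Continuity of simple eigenvalues extends this to a neighborhood of $p$, so $L$ cannot be orthogonally diagonalizable on any dense subset.

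In subcase (B), the matrix-determinant identity $p(z) = B(z) - \bp{r, \operatorname{adj}(zI - A) r}$ (with $B(z) = \det(zI - A)$) shows that $z^{m - K}$ divides $p(z)$ \emph{as a polynomial identity in the Cartesian coordinates}, where $m = \sum_i k_i$: the adjugate $\operatorname{adj}(zI - A) = B(z)(zI - A)^{-1}$ vanishes to order $m - K$ at $z = 0$, since the pole of $(zI - A)^{-1}$ at $z = 0$ has maximal order $K$. In subcase (B), $K = 2$ and $m = s + s'$ (with $s \geq 2$ the number of size-$2$ blocks and $s' = \dim \ker A$), so the algebraic multiplicity of $0$ in $p(z)$ is at least $s + s' - 2$ at every point. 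Meanwhile, solving $L_p v = A v + \bp{r_p, v} r_p = 0$ shows that $\ker L_p = \ker A \cap r_p^\perp$ whenever $r_p$ is not in the image of $A$ and $\bp{r_p, \cdot}$ is non-trivial on $\ker A$ — both open dense conditions — giving $\dim \ker L_p = s' - 1$ generically. Since $s + s' - 2 > s' - 1$ when $s \geq 2$, the geometric multiplicity of $0$ is strictly less than the algebraic multiplicity on an open dense subset, so $L$ fails to be orthogonally diagonalizable there.

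The main obstacle is subcase (B), where non-diagonalizability at a single point is not in general stable under perturbation. The key is that the divisibility $z^{m - K} \mid p(z)$ holds \emph{identically} in the coordinates — a consequence of the pole-order structure of $\operatorname{adj}(zI - A)$ at $z = 0$ — giving a uniform lower bound on the algebraic multiplicity that, combined with the generic-point computation of $\dim \ker L_p$, yields the strict inequality on an open dense subset.
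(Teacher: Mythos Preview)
Your proof is correct but takes a genuinely different route from the paper. The paper gives a single uniform rank argument: writing $L^2 = A^2 + Ar \odot r + r^2\, r \odot r$, the range of $L^2$ lies in $\operatorname{Im}(A^2) + \operatorname{span}\{r, Ar\}$, so $\dim N(L^2) \geq \dim N(A^2) - 2$; combined with the generic computation $\dim N(L) = \dim N(A) - 1$ (the same kernel computation you do in subcase~(B)) and the hypothesis that $\dim N(A^2) \geq \dim N(A) + 2$, this yields $\dim N(L^2) > \dim N(L)$ on a dense set, hence $L$ is not diagonalizable there. Your subcase~(B) is close to this in spirit---both compare the geometric multiplicity of $0$ to a lower bound on its algebraic multiplicity---but you extract the algebraic bound from the pole order of the resolvent in the characteristic polynomial rather than from $N(L^2)$; the two bounds agree when $K=2$. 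Your subcase~(A), by contrast, is a completely different mechanism (exhibiting an explicit point with non-real eigenvalues), with no analogue in the paper's proof.

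One terminological point: your reading of the hypothesis as $\sum_i(k_i-1)\geq 2$ is broader than the paper's intended meaning, which (judging from its use of $\dim N(A^2)\geq 4$) is ``at least two Jordan blocks of size $\geq 2$''. In particular your subcase~(A) also catches a single block $J_K$ with $K\geq 3$ and no other block of size $\geq 2$, which is \emph{not} in the paper's hypothesis---and indeed such an $L$ \emph{can} be pointwise diagonalizable on some open set (e.g.\ the $J_3$ case in $3$D Minkowski). Your conclusion there is still correct under the paper's convention that ``not an OCT'' means ``not pointwise diagonalizable on any open \emph{dense} subset'', so you have in fact proven slightly more than the proposition requires.
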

\begin{proof}
	WLOG we can assume that that this generalized eigenspace of $A$ is associated with the eigenvalue zero. First we have
	
	\begin{align}
	L & = A + r \odot r \\
	L^2 & = A^2 + A r \odot r + r^2 r \odot r
	\end{align}
	
	By hypothesis, $\dim N(L) \geq 1$. We also have that $\dim N(A^2) \geq 4$. The above equation shows that the range of $L^2$ is spanned by $\{r, Ar\}$ and the range of $A^2$ (on a dense subset of $\eunn$), hence we see that $\dim N(L^2) \geq 1 + \dim N(L)$. This implies that $L$ is not point-wise diagonalizable on some dense subset of $\eunn$ (see for example \cite{friedberg2003linear}).
\end{proof}

In fact one can show that if $A = J_2(0) \oplus J_2(0)$, then the associated central CT has a $2$-cycle of generalized eigenvectors associated with eigenvalue zero.

The following lemma is the analogue of \cref{prop:CtCentRed} for axial CTs. Its proof is also analogous and reduces to \cref{prop:CtCentRed} with the help of \cref{eq:CTChPolAxial} and \cref{prop:CTDetIrredAxial}. 

\begin{lemma}[Reducible axial CTs]
	Let $L$ be an axial CT with parameter matrix $A$. Suppose that each real generalized eigenspace of $A_c$ has at most one proper generalized eigenvector. Then $A_c$ has a real eigenspace $\tilde{E}_\lambda$ with dimension $m > 1$ iff $L$ has a non-degenerate eigenspace $E_\lambda$ (defined on a dense subset of $\eunn$) with constant eigenfunction $\lambda$ and dimension $m-1$.
\end{lemma}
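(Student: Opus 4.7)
The plan is to mirror the structure of the proof of \cref{prop:CtCentRed}, but without needing induction: the characteristic polynomial identity of \cref{eq:CTChPolAxial} cleanly separates the axial and central contributions, so the axial case reduces directly to the central one.

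For the forward implication, I would invoke the general construction of \cref{par:CtRedConstrEn}, which repeatedly applies \cref{prop:CtRedConstrEn}. Given a real eigenspace $\tilde{E}_\lambda$ of $A_c$ of dimension $m>1$, this construction produces a warped product decomposition $\psi : N_0 \times_{\rho_1} N_1 \times \cdots \times_{\rho_k} N_k \rightarrow \eunn$ in which $L$ is decomposable, with one spherical factor $N_i$ carrying tangent space isomorphic to a codimension-one subspace of $\tilde{E}_\lambda$ and serving as a constant-eigenvalue-$\lambda$ eigenspace of $(\psi^{-1})_*L$. This gives the required non-degenerate eigenspace $E_\lambda$ of dimension $m-1$ on a dense subset of $\eunn$.

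For the converse, I would argue by contradiction: assume every real generalized eigenspace of $A_c$ is at most one-dimensional and that $L$ admits a constant eigenfunction $\lambda \in \R$, so $p(\lambda) \equiv 0$ on $\eunn$. By \cref{eq:CTChPolAxial},
\begin{equation}
p(\lambda) \;=\; p_d(\lambda)\,B(\lambda) \;+\; \varepsilon\bigl(p_c(\lambda) - B(\lambda)\bigr) \;\equiv\; 0,
\end{equation}
where, crucially, $p_d(\lambda)$ is a polynomial only in the Cartesian coordinates of $D$, $p_c(\lambda)$ only in those of $D^\perp$, and $B(\lambda)\in\R$ is a scalar. If $B(\lambda) = 0$, the identity forces $p_c(\lambda) \equiv 0$, so $\lambda$ is a constant eigenfunction of the central CT $L_c$ on $D^\perp$; by the hypothesis on $A_c$ and \cref{prop:CtCentRed} applied to $L_c$, this is impossible. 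If $B(\lambda) \neq 0$, the identity rearranges to $p_d(\lambda) + \varepsilon B(\lambda)^{-1} p_c(\lambda) \equiv \varepsilon$, and since the two summands live on complementary coordinate subspaces their sum can be constant only if each is individually constant in its own variables. But \cref{prop:CTDetIrredAxial} (which is valid for all $k \geq 1$ covering both non-null and null axial cases) shows that the coefficient of $\d x^1$ in $\d p_d(\lambda)$ is $-2\varepsilon \neq 0$, so $p_d(\lambda)$ is not constant on $D$, yielding the desired contradiction.

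The main obstacle, such as it is, will be the orthogonality-of-variables step in the second case: justifying that a function depending only on $D$-coordinates plus a function depending only on $D^\perp$-coordinates being identically equal to $\varepsilon$ forces each summand to be constant. This is standard but should be stated explicitly. Everything else is a transcription of the central-case argument, with \cref{prop:CTDetIrredAxial} playing the role that \cref{cor:CTChPolCentDe} played there.
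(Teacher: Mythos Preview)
Your proposal is correct and matches the paper's approach exactly: the paper's proof consists solely of the remark that it is analogous to \cref{prop:CtCentRed} via \cref{eq:CTChPolAxial} and \cref{prop:CTDetIrredAxial}, which is precisely the reduction you carry out. One small slip: in the contrapositive you should assume every real \emph{eigenspace} (not generalized eigenspace) of $A_c$ is at most one-dimensional---the hypothesis on generalized eigenspaces is the standing assumption of the lemma, not what you are contraposing.
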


In conclusion we have the following theorem which summarizes our classification:

\begin{thmMy}[Classification of Reducible CTs in $\eunn$] \label{thm:classRedCtEunn}
	Let $L$ be a non-degenerate CT in $\eunn$ such that each real generalized eigenspace of $A_c$ has at most one proper generalized eigenvector. Then $L$ is reducible iff $A_c$ has a multidimensional real eigenspace. If $L$ is reducible, then there exists an explicitly constructible warped product decomposition $\psi : N_0 \times_{\rho_1} N_1 \cdots \times_{\rho_k} N_k \rightarrow \eunn$ such that the following hold:
	\begin{itemize}
		\item $L$ is decomposable in the warped product $N_0 \times_{\rho_1} N_1 \cdots \times_{\rho_k} N_k$.
		\item The restriction of $(\psi^{-1})_* L$ to $N_0$ has no constant eigenfunctions.
		\item $\Ima(\psi)$ is an open dense subset of $\eunn$.
	\end{itemize}
\end{thmMy}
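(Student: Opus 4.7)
The plan is to assemble the theorem from the construction and lemmas that immediately precede it. The forward implication together with the two structural bullets come from the explicit procedure in \cref{par:CtRedConstrEn}; the reverse implication comes from \cref{prop:CtCentRed} (for central CTs) together with its axial analogue stated just above the theorem.

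For the ``if'' direction, I would assume $A_c$ has a multidimensional real eigenspace and enumerate its multidimensional real eigenspaces $W_1,\dots,W_k$ with eigenvalues $\lambda_1,\dots,\lambda_k$. For each $W_i$, I pick the distinguished vector $a_i$ exactly as in \cref{par:CtRedConstrEn}: a unit vector inside $W_i$ if $W_i$ is non-degenerate, or the terminal lightlike eigenvector of the (by hypothesis unique) Jordan cycle if $W_i$ is degenerate. Then I set $V_i := W_i \cap a_i^\perp$ and $V_0 := \bigl(\bigoplus_i V_i\bigr)^\perp$. The data $\bigl(\bigobot_{i=0}^k V_i;\, a_1,\dots,a_k\bigr)$ produces a proper warped product $\psi$ via \cref{thm:WPDecomps}, and iterated application of \cref{prop:CtRedConstrEn} shows that $L$ is decomposable in $\psi$ with each $T N_i$ ($i>0$) an eigenspace of $(\psi^{-1})_* L$ having constant eigenfunction $\lambda_i$. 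This establishes the first bullet.

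For the second bullet, let $\tilde{L}$ denote the restriction of $(\psi^{-1})_* L$ to $N_0$, with central part parameter matrix $\tilde{A}_c$. By construction every Jordan block of $\tilde{A}_c$ has a distinct eigenvalue, so $\tilde{A}_c$ has no multidimensional real eigenspace. Applying \cref{prop:CtCentRed} (or its axial analogue, which itself reduces to \cref{prop:CtCentRed} via \cref{eq:CTChPolAxial} and \cref{prop:CTDetIrredAxial}) in the contrapositive shows that $\tilde{L}$ has no constant eigenfunctions. The ``only if'' direction of the equivalence is immediate from the same lemmas: if $A_c$ has no multidimensional real eigenspace then $L$ has no constant eigenfunctions at all, so in particular no non-degenerate eigenspace with constant eigenfunction, and $L$ is irreducible.

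The last bullet --- density of $\Ima(\psi)$ --- is where I expect the main obstacle. The vectors $a_i$ above are not uniquely determined: inside each non-degenerate $W_i$ any unit vector works, and the degenerate case carries analogous freedom within the relevant Jordan cycle. By \cref{thm:WPDecomps}, the complement of $\Ima(\psi)$ in $\eunn$ is cut out by the quadratic/linear conditions $(P_i p)^2 = 0$ or $\bp{a_i,p} \leq 0$ attached to each warped factor, as already illustrated in \cref{ex:wpRedI,ex:wpRedII}. The plan is to make a generic choice of the $a_i$ simultaneously so that each such locus has codimension at least one; then the complement of $\Ima(\psi)$ is a finite union of closed submanifolds of dimension strictly less than $n$, and density follows. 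The principal technicality is verifying that one generic choice can be made across all $k$ warped factors at once rather than one at a time, but this is a straightforward dimension count using the explicit formulas \cref{eq:WPDecNn,eq:WPDecN}.
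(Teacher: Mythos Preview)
Your assembly of the equivalence and the first two bullets is exactly the paper's approach: the ``if'' direction and decomposability come from \cref{par:CtRedConstrEn} via iterated \cref{prop:CtRedConstrEn}, and the absence of constant eigenfunctions on $N_0$ (hence also the ``only if'' direction) is precisely \cref{prop:CtCentRed} together with its axial analogue.

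The one place your sketch goes slightly wrong is the density bullet. Your plan is to make a generic choice of the $a_i$ so that ``each such locus has codimension at least one,'' but in the null case the complement $\{\bp{a_i,p} \le 0\}$ is a closed half-space for \emph{every} admissible $a_i$ (which is fixed up to sign as the terminal lightlike eigenvector of the Jordan cycle), and in the non-null case with indefinite $W_i$ the locus $\{\sgn(P_{W_i}p)^2 \neq \sgn a_i^2\}$ likewise has nonempty interior regardless of which unit $a_i \in W_i$ you pick. So no single generic choice of the $a_i$ achieves density in pseudo-Euclidean signature. The paper's resolution (see the ``On Completeness'' paragraph just before the theorem and \cref{ex:wpRedI,ex:wpRedII}) is not genericity but rather to allow the factors $N_i$ to be disconnected, effectively gluing together the warped products associated to the different sign choices; the complement then really is a finite union of closed submanifolds of strictly lower dimension. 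With that amendment your argument goes through.
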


\subsection{In Spherical submanifolds of pseudo-Euclidean space} \label{sec:CtClassRedEunnKap}

In this section we show how the problem of classifying reducible CTs in $\eunn(\kappa)$ can be reduced to the same problem in $\eunn$; we will assume $n > 2$ to avoid trivial cases. First we will need to obtain the warped product decompositions of $\eunn(\kappa)$. The following proposition shows that any proper warped product decomposition of $\eunn$ in canonical form restricts to a warped product decomposition of $\eunn(\kappa)$. Its proof is straightforward consequence of \cref{eq:wpdecpSp}; see \cite[Appendix~D]{Rajaratnam2014} for more details.

\begin{theorem}[Restricting Warped products to $\eunn(\kappa)$] \label{thm:eunnKapRestWP}
	Let $\psi$ be a proper warped product decomposition of $\eunn$ associated with $(\bar{p}; \bigobot\limits_{i=0}^{k} V_{i}; a_{1},...,a_{k})$ in canonical form. Suppose $\kappa^{-1} := \bar{p}^{2} \neq 0$ and let $N' := N_{0}(\kappa) \times_{\rho_{1}} N_{1} \times \cdots \times_{\rho_{k}} N_{k}$. Then $\phi : N' \rightarrow \eunn(\kappa)$ defined by $\phi := \psi|_{N'}$ is a warped product decomposition of $\eunn(\kappa)$ passing through $\bar{p}$.
	
	
\end{theorem}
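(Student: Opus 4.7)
The plan is to verify the theorem in three steps: (i) show that $\phi$ maps into $\eunn(\kappa)$, (ii) show that the warped product metric on $N'$ coincides with the metric induced by $\phi$ from $\eunn(\kappa)$, and (iii) check that $\bar{p}$ lies in the image. The decisive ingredient throughout will be the identity \eqref{eq:wpdecpSp}, $\psi(p_{0},\dotsc,p_{k})^{2} = p_{0}^{2}$, together with the fact that $\psi$ is already an isometry onto an open subset of $\eunn$ by \cref{thm:WPDecomps}.

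For (i), fix $(p_{0},p_{1},\dotsc,p_{k}) \in N'$. By definition of $N_{0}(\kappa)$ we have $p_{0}^{2} = \kappa^{-1}$, and \eqref{eq:wpdecpSp} gives $\psi(p_{0},\dotsc,p_{k})^{2} = p_{0}^{2} = \kappa^{-1}$, so $\phi(p_{0},\dotsc,p_{k}) \in \eunn(\kappa)$. Thus $\phi$ is a well-defined map into $\eunn(\kappa)$, and smoothness and injectivity are inherited from those of $\psi$. Since $\bar{p} \in N_{0}$ and $\bar{p}^{2} = \kappa^{-1}$, we have $\bar{p} \in N_{0}(\kappa)$, hence $\phi(\bar{p},\bar{p},\dotsc,\bar{p}) = \psi(\bar{p},\bar{p},\dotsc,\bar{p}) = \bar{p}$ by the canonical-form normalization of $\psi$; this takes care of (iii).

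For (ii) we need to check that $\phi^{*}g_{\eunn(\kappa)}$ agrees with the warped product metric on $N'$. Because $N_{0}(\kappa)$ is a submanifold of $N_{0}$, at a point $(p_{0},p_{1},\dotsc,p_{k}) \in N'$ the tangent space $T_{(p_{0},\dotsc,p_{k})}N'$ sits inside $T_{(p_{0},\dotsc,p_{k})}(N_{0}\times N_{1}\times\cdots\times N_{k})$, and the warped product metric $g' := g_{N_{0}(\kappa)} + \sum_{i\geq 1}\rho_{i}^{2}\,\pi_{i}^{*}g_{N_{i}}$ on $N'$ is simply the restriction of the warped product metric $g = g_{N_{0}} + \sum_{i\geq 1}\rho_{i}^{2}\,\pi_{i}^{*}g_{N_{i}}$, because $g_{N_{0}(\kappa)}$ is by definition the restriction of $g_{N_{0}}$ to $T N_{0}(\kappa)$ and the warping functions $\rho_{i}$ depend only on $p_{0}$. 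Since $\psi$ is an isometry, $\psi^{*}g_{\eunn} = g$; restricting both sides to $T N'$ yields $\phi^{*}(g_{\eunn}|_{T\eunn(\kappa)}) = g'$, which is precisely the warped product structure demanded of $\phi$.

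The main (and really only) subtlety is confirming that $TN_{0}(\kappa)$ embeds correctly inside $TN_{0}$ and that $\rho_{i}|_{N_{0}(\kappa)}$ remains positive — both are immediate from the construction of $N_{0}(\kappa) = \{p \in N_{0} : p^{2}=\kappa^{-1}\}$ and from the positivity of each $\rho_{i}$ on $N_{0}$. One should also remark that $N_{0}(\kappa)$ is itself a nondegenerate spherical submanifold of $V_{0}$ (so in particular a pseudo-Riemannian manifold of constant curvature $\kappa$), as required in order for the expression $N_{0}(\kappa) \times_{\rho_{1}} N_{1} \times \cdots \times_{\rho_{k}} N_{k}$ to qualify as a warped product in the sense of the paper. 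With these points in place, $\phi$ is a warped product decomposition of $\eunn(\kappa)$ passing through $\bar{p}$, as claimed.
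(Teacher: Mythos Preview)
Your proof is correct and follows exactly the approach the paper indicates: the paper itself does not spell out a proof but states that the result ``is a straightforward consequence of \cref{eq:wpdecpSp}'' and refers to an appendix elsewhere. Your three-step argument (image lies in $\eunn(\kappa)$ via \eqref{eq:wpdecpSp}, metric agreement by restriction of the isometry $\psi$, and $\bar p$ in the image by the canonical-form normalization) is precisely the intended unpacking of that remark.
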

\begin{remark}
	Sometimes $N_{0}(\kappa)$ may not be connected, for more details on this see \cite[Appendix~D]{Rajaratnam2014}.
\end{remark}

Now we show how to restrict a reducible CT in $\eunn$ to one in $\eunn(\kappa)$.

\begin{propMy}[Restricting Reducible CTs to $\eunn(\kappa)$] \label{thm:eunnKapRestCT}
	Let $\psi : N_0 \times_{\rho_1} N_1 \cdots \times_{\rho_k} N_k \rightarrow \eunn$ be a proper warped product decomposition in canonical form and let $\bar{p} \in \Ima(\psi)$ as in the above theorem. Suppose $L_c$ is a reducible central CT in $\eunn$ satisfying
	\begin{equation}
	L_c = \psi_{*} (\tilde{L}_c + \sum_{i=1}^{k} \lambda_{i} G_{i})
	\end{equation}
	
	\noindent where $G_i$ is the restriction of $G$ to $T N_i$, $\lambda_i \in \R$ and $\tilde{L}_c$ is a CT in $N_0$. Let $\phi := \psi|_{N'}$ be the induced warped product decomposition of $\eunn(\kappa)$ as in the above theorem. Then if we let $L$ (resp. $\tilde{L}$) be the restriction of $L_c$ (resp. $\tilde{L}_c$) to $\eunn(\kappa)$ (resp. $N_0(\kappa)$), then
	
	\begin{equation}
	L = \phi_{*} (\tilde{L} + \sum_{i=1}^{k} \lambda_{i} G_{i})
	\end{equation}
\end{propMy}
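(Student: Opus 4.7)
The plan is to reduce the claim to a pointwise computation on $T\eunn$, using the projection formula for restriction of CTs (\cref{prop:CtFormEunnKap}) together with the orthogonality properties of the warped product $\psi$ and the formula $\psi(p_0,\dotsc,p_k)^2 = p_0^2$ from \cref{thm:WPDecomps}. Specifically, I would recall that the restriction of a central CT $L_c$ in $\eunn$ to the spherical submanifold $\eunn(\kappa)$ is given by $L = R L_c R^*$, where $R = I - r\otimes r^\flat / r^2$ is the orthogonal projector onto $r^\perp$. Likewise, $\tilde{L}$ is obtained from $\tilde{L}_c$ via the corresponding projector $\tilde{R} = I - \tilde{r}\otimes\tilde{r}^\flat / \tilde{r}^2$ in $V_0$, where $\tilde{r}$ is the dilatational vector field on $V_0 \supseteq N_0$. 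Thus the statement will follow if I can show that, on $\Ima(\psi)$, the operator $R$ commutes appropriately with $\psi_*$ and reduces to $\psi_*\tilde{R}\psi_*^{-1}$ on the $N_0$-factor while acting as the identity on each $\psi_*(TN_i)$ for $i>0$.

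To make this precise, I would first observe that \cref{eq:wpdecpSp} gives $\psi(p_0,\dotsc,p_k)^2 = p_0^2$, so taking gradients yields $\psi_*\tilde{r} = r$ along $\Ima(\psi)$; in particular $r^2 = \tilde{r}^2$ pulled back. Next, because $\psi$ is an isometry of the warped product metric $g_{N_0} \obot \sum_{i\geq 1}\rho_i^2 g_{N_i}$ onto its image, the decomposition $T\eunn = \bigobot_{i=0}^k \psi_*(TN_i)$ is orthogonal, and since $r = \psi_*\tilde{r}$ lies entirely inside $\psi_*(TN_0)$, we obtain $\psi_*(TN_i) \subseteq r^\perp$ for each $i>0$. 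Consequently $R|_{\psi_*(TN_i)} = I$ for $i>0$, and $R|_{\psi_*(TN_0)} = \psi_*\tilde{R}\psi_*^{-1}$.

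With these observations, the computation becomes routine: since $\psi_*$ is linear and the contravariant tensor $\sum_{i\geq 1}\lambda_i G_i$ takes values in $\bigoplus_{i\geq 1}\psi_*(TN_i)\otimes\psi_*(TN_i) \subseteq r^\perp\otimes r^\perp$, it is preserved by $R \cdot R^*$. The remaining term $\psi_*\tilde{L}_c$ takes values in $\psi_*(TN_0)\otimes\psi_*(TN_0)$, on which $R\cdot R^*$ acts as $\psi_*(\tilde{R}\cdot\tilde{R}^*)\psi_*^{-1}$, and $\tilde{R}\tilde{L}_c\tilde{R}^* = \tilde{L}$ by \cref{prop:CtFormEunnKap}. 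Therefore
\begin{equation*}
L = R L_c R^* = \psi_*\!\left(\tilde{L} + \sum_{i=1}^{k}\lambda_i G_i\right),
\end{equation*}
and restricting the domain from $N_0\times_{\rho_1}N_1\times\cdots\times_{\rho_k}N_k$ to $N'$ (using $\phi = \psi|_{N'}$ from \cref{thm:eunnKapRestWP}) yields the desired formula $L = \phi_*(\tilde{L} + \sum_i \lambda_i G_i)$.

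The only subtlety I anticipate is bookkeeping: one has to be explicit that the $G_i$ appearing on the $N'$ side and on the $\eunn$ side are literally the same tensors on the spherical factors $N_i$ (unchanged under passing to the submanifold, since those factors are common to both warped products). Once that identification is stated, every step reduces either to linearity of pushforward, orthogonality of the warped product splitting, or the definitional formula $L = RL_cR^*$, so no delicate calculation is needed.
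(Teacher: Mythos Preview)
Your proposal is correct and follows essentially the same route as the paper: both establish $\psi_*\tilde{r} = r$ (you derive it from $\psi(p_0,\dotsc,p_k)^2 = p_0^2$, the paper cites it from \cref{prop:CtRedConstrEn} or \cref{eq:genPsiEqn}), then use this together with the isometry property of $\psi$ to show that the ambient projector $R$ intertwines with $\psi_*$ as $R^*\circ\psi_* = \psi_*\circ\tilde{R}^*$ and fixes the $G_i$, after which the identity $L = R L_c R^*$ from \cref{prop:CtFormEunnKap} finishes the argument. Your version spells out the block decomposition of $R$ on $\psi_*(TN_0)$ versus $\psi_*(TN_i)$ more explicitly than the paper does, but the content is the same.
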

\begin{proof}
	Let $\tilde{r}$ (resp. $r$) be the dilatational vector field in $N_0$ (resp. $\eunn$). We will use the fact that $\psi_{*} \tilde{r} = r$; this can be deduced from the proof of \cref{prop:CtRedConstrEn} or \cref{eq:genPsiEqn}. We let $R^{*} = I - \dfrac{r \otimes r^{\flat}}{r^{2}}$ be the orthogonal projection onto $T \eunn(\kappa)$ with a similar definition for $\tilde{R}^{*}$ with respect to $T N_0(\kappa)$. In the following, given $L \in S^2(\eunn)$, we denote by $R^*L$ the restricted tensor given by $(R^*L)^{ij} = R \indices{^i_l} L^{lk} R\indices{^j_k}$.
	
	Using the fact that $\psi$ is an isometry and $\psi_{*} \tilde{r} = r$, one can show that $R^{*} \circ \psi_{*} = \psi_{*} \circ \tilde{R}^{*}$. Also note that $\tilde{R}^{*}G_{i} = G_{i}$. Thus
	
	\begin{align}
	R^{*}L_c & = R^{*}\psi_{*} (\tilde{L}_c + \sum_{i=1}^{k} \lambda_{i} G_{i}) \\
	& = \psi_{*} (\tilde{R}^{*}\tilde{L}_c + \sum_{i=1}^{k} \lambda_{i} \tilde{R}^{*}G_{i}) \\
	& = \psi_{*} (\tilde{R}^{*}\tilde{L}_c + \sum_{i=1}^{k} \lambda_{i} G_{i})
	\end{align}
	
	By evaluating the above equation in $N_0(\kappa) \times_{\rho_1} N_1 \cdots \times_{\rho_k} N_k$, one obtains the desired result.
\end{proof}

Now we show how to apply the above results to obtain a warped product decomposition in which a given CT in $\eunn(\kappa)$ is decomposable. Let $L$ be a non-trivial CT in $\eunn(\kappa)$, then there is a unique central CT, $L_c$, such that $L = R^* L_c$ . As described in the previous section, provided $L_c$ is reducible, we can choose a warped product decomposition of $\eunn$, $\psi$, such that $L_c = \psi_{*} (\tilde{L}_c + \sum_{i=1}^{k} \lambda_{i} G_{i})$ satisfying the hypothesis of the above proposition. Thus the above proposition gives a warped product decomposition $\phi$ which decomposes $L$, and is obtained by an appropriate restriction of $\psi$. We now give some examples of this procedure to obtain the standard spherical coordinates.

\begin{example}[Spherical Coordinates I] \label{ex:wpRedIII}
	Let $M = \eunn(\kappa)$ where $\kappa = \pm 1$ and $n \geq 3$. Consider the CT $L$ in $\eunn(\kappa)$ induced by $A = \varepsilon e \odot e$ with $\varepsilon := e^2 = \pm 1$. Let $P$ be the orthogonal projector onto $e^\perp$ and choose $\bar{p} \in \eunn(\kappa)$ such that $(P \bar{p})^2 = \pm 1$. By \cref{ex:wpRedI} there is a warped product decomposition $\psi : N_{0} \times_{\rho} N_{1} \rightarrow \eunn$ passing through $\bar{p}$ which decomposes $L_c := A + r \odot r$. For $(p_0, p) = (x \kappa_1 a + y e,p) \in N_{0} \times N_{1}$, we have
	
	\begin{equation}
	\psi(p_{0},p) = x p + y e
	\end{equation}
	
	To obtain a warped product decomposition of $\eunn(\kappa)$, by \cref{thm:eunnKapRestWP} we need to restrict $\psi$ to $N_{0}(\kappa) \times N_{1}$. Let $\phi$ be the induced warped product decomposition of $\eunn(\kappa)$, then it follows by \cref{thm:eunnKapRestCT} that $L$ is decomposable in this warped product. We now give the standard forms of this warped product by parameterizing $(x,y)$ as in \cref{ex:coordEunKap} while enforcing $x = \bp{a, p_0} > 0$ and $N_0(\kappa)$ to be connected. We have three different cases:
	
	%
	%
	
	\begin{parts}
		\item $\kappa_1 = \kappa$ and $\varepsilon = \kappa$
		
		\begin{equation}
		\phi : \begin{cases}
		\left (0, \pi \right ) \times _{\sin} N_1 & \rightarrow \E_\nu^{n}(\kappa) \\
		(t,p) & \mapsto \sin(t) p + \cos(t) e
		\end{cases}
		\end{equation}
		\item $\kappa_1 = \kappa$ and $\varepsilon = -\kappa$ 
		
		\begin{equation}
		\phi : \begin{cases}
		\R \times _{\cosh} N_1 & \rightarrow \E_\nu^{n}(\kappa) \\
		(t,p) & \mapsto \cosh(t) p + \sinh(t) e
		\end{cases}
		\end{equation}
		\item $\kappa_1 = -\kappa$ and $\varepsilon = \kappa$
		
		\begin{equation}
		\phi : \begin{cases}
		\R^+ \times _{\sinh} N_1 & \rightarrow \E_\nu^{n}(\kappa) \\
		(t,p) & \mapsto \sinh(t) p + \cosh(t) e
		\end{cases}
		\end{equation}
	\end{parts}
	
	Note that even though there is only one inequivalent coordinate system on $\E^2_\nu (\kappa)$, the last two warped products are inequivalent. This is due to the fact that $a^2 = \kappa_1$ is different in these cases and $N_0 = \{p \in V_{0} | \bp{a, p} > 0  \}$.
\end{example}

The following example considers spherical coordinates that only occur in non-Euclidean spheres.

\begin{example}[Spherical Coordinates II] \label{ex:wpRedIV}
	Let $M = \eunn(\kappa)$ where $\kappa = \pm 1$ and $n \geq 3$. We now consider the CT $L$ in $\eunn(\kappa)$ induced by $A = a \odot a$ with $a^2 = 0$ and $a \neq 0$. This example proceeds similarly to the first. Fix $\bar{p} \in \eunn(\kappa)$ such that $\bp{a , \bar{p}} = 1$. By \cref{ex:wpRedII} there is a warped product decomposition $\psi : N_{0} \times_{\rho} N_{1} \rightarrow \eunn$ passing through $\bar{p}$ which decomposes $L_c := A + r \odot r$. For $(p_0, p) = (x b + y a,p) \in N_{0} \times N_{1}$, we have
	
	\begin{equation}
	\psi(p_{0},p) = x(b + p - \frac{1}{2} p^2 a) + y a
	\end{equation}
	
	Restricting $\psi$ to $N_{0}(\kappa) \times N_{1}$ forces:
	
	\begin{equation}
	\kappa = p_0^2  =  2 x y
	\end{equation}
	
	Let $\phi$ be the warped product decomposition of $\eunn(\kappa)$ induced by $\psi$ as in \cref{thm:eunnKapRestWP}. Again, it follows by \cref{thm:eunnKapRestCT} that $L$ is decomposable in this warped product. We now give $\phi$ with the standard parameterization of $N_0(\kappa)$, by enforcing $x = \bp{a, p_0} > 0$ and $N_0(\kappa)$ to be connected. These conditions are all satisfied if we take $x = \frac{1}{\sqrt{2}} \exp(t)$. Then we have the following:
	
	\begin{equation}
	\phi : \begin{cases}
	\R \times _{\frac{1}{\sqrt{2}} \exp} \E_{\nu-1}^{n-2} & \rightarrow \E_{\nu}^{n}(\kappa) \\
	(t,p) & \mapsto \frac{1}{\sqrt{2}} \exp(t)  (b + p  - \frac{1}{2}p^{2}a) + \frac{\kappa}{\sqrt{2}} \exp(-t) a
	\end{cases} 
	\end{equation}
	
	Also note that if $\nu = - \kappa = 1$, then $\phi$ is an isometry onto a connected component of $\E_{1}^{n}(-1) \simeq H^{n-1}$.
\end{example}

In conclusion we have the following theorem which summarizes our classification:

\begin{thmMy}[Classification of Reducible CTs in $\eunn(\kappa)$]
	Let $L$ be a non-trivial CT in $\eunn(\kappa)$ with $n > 2$ such that each real generalized eigenspace of $A$ has at most one proper generalized eigenvector. Then $L$ is reducible iff $A$ has a multidimensional real eigenspace. If $L$ is reducible, then there exists an explicitly constructible warped product decomposition $\psi : N_0 \times_{\rho_1} N_1 \cdots \times_{\rho_k} N_k \rightarrow \eunn(\kappa)$ such that the following hold:
	\begin{enumerate}
		\item $L$ is decomposable in the warped product $N_0 \times_{\rho_1} N_1 \cdots \times_{\rho_k} N_k$.
		\item \label{it:clasRedCnstEi} The restriction of $(\psi^{-1})_* L$ to $N_0$ has no constant eigenfunctions.
		\item \label{it:clasReddens} $\Ima(\psi)$ is an open dense subset of $\eunn(\kappa)$.
	\end{enumerate}
\end{thmMy}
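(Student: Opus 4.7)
The plan is to reduce everything to the pseudo-Euclidean case via \cref{thm:classRedCtEunn}, using \cref{prop:CtFormEunnKap,thm:eunnKapRestWP,thm:eunnKapRestCT} as the bridge. By \cref{prop:CtFormEunnKap}, any CT $L$ on $\eunn(\kappa)$ is uniquely of the form $L = R^*L_c$ where $L_c$ is the central CT in $\eunn$ with parameter matrix $A$. Since $L$ is non-trivial, $A$ is not a multiple of the ambient metric, so $L_c$ is a non-degenerate central CT whose parameter matrix $A_c$ is $A$ itself. The hypothesis on proper generalized eigenvectors is inherited verbatim.

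First I would establish the iff for reducibility. If $A$ has a multidimensional real eigenspace, then by (the easy direction of) \cref{prop:CtCentRed} applied to $L_c$ in $\eunn$, together with the restriction argument below, $L$ acquires a non-degenerate eigenspace with constant eigenfunction, hence is reducible in the sense defined before \cref{sec:CtClassRed}. Conversely, if all real eigenspaces of $A$ are one-dimensional, then by the same proposition $L_c$ has no constant eigenfunctions on any open subset of $\eunn$, and by the calculation in \cref{sec:cTIrredSphC} relating $p(z)$ for $L$ to $B(z) - p_c(z)$ via \cref{eq:CTChPolEunnKap}, the eigenfunctions of $L$ on $\eunn(\kappa)$ are (up to $B$-roots, which correspond to singular points) in bijection with those of $L_c$, so $L$ is irreducible.

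Next I would construct the warped product. Applying \cref{thm:classRedCtEunn} to $L_c$ yields a proper warped product decomposition $\psi : N_0 \times_{\rho_1} N_1 \times \cdots \times_{\rho_k} N_k \to \eunn$ in canonical form, with $\Ima(\psi)$ open dense in $\eunn$, such that
\begin{equation}
L_c = \psi_*\bigl(\tilde L_c + \textstyle\sum_{i=1}^k \lambda_i G_i\bigr),
\end{equation}
where the restriction $\tilde L_c$ of $L_c$ to $N_0$ has no constant eigenfunctions. By adjusting the canonical point $\bar p$, I may arrange $\bar p \in \eunn(\kappa)$, so that \cref{thm:eunnKapRestWP} produces the warped product decomposition $\phi := \psi|_{N'}$ of $\eunn(\kappa)$ with geodesic factor $N_0(\kappa)$. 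Applying \cref{thm:eunnKapRestCT} then gives
\begin{equation}
L = \phi_*\bigl(\tilde L + \textstyle\sum_{i=1}^k \lambda_i G_i\bigr),
\end{equation}
where $\tilde L$ is the restriction of $\tilde L_c$ to $N_0(\kappa)$. This is precisely the statement that $L$ is decomposable in the warped product, establishing item~1.

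For item~2, I would invoke \cref{prop:CtRestUmb}: since $N_0(\kappa)$ is a spherical (hence umbilical) submanifold of $N_0$, and $\tilde L_c$ is a CT on $N_0$, its pullback $\tilde L$ is a CT on $N_0(\kappa)$, with parameter matrix again equal to (the central part of) $\tilde L_c$'s. The factorization produced by \cref{thm:classRedCtEunn} guarantees that $\tilde L_c$'s parameter matrix has no real eigenspace of dimension greater than one; by the iff established in the first paragraph applied now within $N_0$, the induced $\tilde L$ on $N_0(\kappa)$ is irreducible, i.e.\ has no constant eigenfunctions. Item~3 follows immediately: $\Ima(\phi) = \Ima(\psi) \cap \eunn(\kappa)$, and since $\Ima(\psi)$ is open dense in $\eunn$ and $\eunn(\kappa)$ is a submanifold transverse to the ``bad'' loci (which are finite unions of proper closed submanifolds defined by hyperquadric equations in $\eunn$), the intersection is open dense in $\eunn(\kappa)$.

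The main obstacle is the second item: one must verify that restricting an irreducible central CT on $N_0$ to the spherical submanifold $N_0(\kappa)$ produces a CT on $N_0(\kappa)$ with \emph{no} constant eigenfunctions. The difficulty is that constant eigenfunctions could in principle appear upon restriction. This is handled cleanly by observing, via the characteristic polynomial formula \cref{eq:CTChPolEunnKap}, that a constant eigenfunction of the restricted CT forces a common real root of $B(z)$ and $p_c(z) - B(z)$, i.e.\ a multidimensional real eigenspace of $A$; the construction already ruled this out at the level of $N_0$.
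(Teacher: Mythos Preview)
Your proposal is correct and follows essentially the same route as the paper: reduce to the central CT $L_c$ in $\eunn$, invoke \cref{thm:classRedCtEunn} to get the ambient warped product, restrict via \cref{thm:eunnKapRestWP,thm:eunnKapRestCT}, and handle item~2 through the characteristic polynomial relation \cref{eq:CTChPolEunnKap} together with the argument of \cref{prop:CtCentRed}. The paper's proof is terser---it only writes out item~2, observing that since $\lied{p}{r}=0$ a constant eigenfunction of $L$ on $\eunn(\kappa)$ lifts to the condition $p(\lambda)\equiv 0$ on an open subset of $\eunn$, after which the induction in the proof of \cref{prop:CtCentRed} applies verbatim to $r^2 p(z)=B(z)-p_c(z)$.

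One small imprecision: your final sentence says a constant eigenfunction ``forces a common real root of $B(z)$ and $p_c(z)-B(z)$.'' That is not quite the mechanism. What you actually get from $p(\lambda)\equiv 0$ is $(\d p_c)|_{z=\lambda}\equiv 0$; feeding this into the decomposition $\d p_c = B_{u^\perp}\,\d (p_c)_u + B_u\,\d (p_c)_{u^\perp}$ from the proof of \cref{prop:CtCentRed} (with the two differentials living in orthogonal coordinate directions) and using the base case \cref{cor:CTChPolCentDe} is what forces $B_u(\lambda)=B_{u^\perp}(\lambda)=0$, hence a repeated real eigenvalue of $A$. Your conclusion is right, but the intermediate clause should be phrased as ``forces $\lambda$ to be a repeated root of $B$'' rather than as a common root of $B$ and $p_c-B$.
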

\begin{proof}
	We give the proof of \cref{it:clasRedCnstEi}. First suppose $\lambda$ is a constant eigenfunction of $L$, then one can naturally lift $\lambda$ to a constant function on $\eunn$. Let $p(z)$ be the characteristic polynomial of $L$ having the form given by \cref{eq:CTChPolEunnKap}. Then since $\lied{p}{r} = 0$ (see the proof of \cref{prop:IctMetricEunnKap}), we must have $p(\lambda) = 0$ on some open subset of $\eunn$. Then the proof of  \cref{prop:CtCentRed} holds verbatim by \cref{eq:CTChPolEunnKap}, which proves the result.
	
	\Cref{it:clasReddens} follows from the construction of $\psi$ (see \cref{thm:eunnKapRestCT}) and \cref{thm:classRedCtEunn}.
\end{proof}

\section{Applications and Examples} \label{sec:appNEx}

In this section we show how to apply the theory developed in this article to solve some of the motivating problems stated in the introduction. First, in \cref{sec:enumIneqCoord} we show how to enumerate the isometrically inequivalent separable coordinates in a given space of constant curvature. Then in \cref{sec:constSepCoord} we show how to construct separable coordinate systems by way of examples. Finally, in \cref{sec:BEKMsep} we show how to explicitly execute the BEKM separation algorithm in general. We also give the details of executing the BEKM separation algorithm for the Calogero-Moser system.

\subsection{Enumerating inequivalent separable coordinates} \label{sec:enumIneqCoord}

In this section we show how one can use the theory developed in this article to enumerate the isometrically inequivalent separable coordinate systems on a given space of constant curvature. For dimensions greater than two, this problem is recursive as described in \cite[section~6.2]{Rajaratnam2014a}. This recursive nature was originally discovered by Kalnins et al. and is discussed more concretely in \cite{Kalnins1986}. So one will also have to enumerate the separable coordinate systems on spherical submanifolds of the underlying space and then construct the separable coordinates systems using warped products (see the beginning of \cref{sec:sumRes} and also \cite[section~6.2]{Rajaratnam2014a}).

The main step is to enumerate the geometrically inequivalent CTs, so we will focus on this. To do this, one has to enumerate the canonical forms summarized in \cref{sec:sumRes} together with the metric-Jordan canonical forms for $A_c$ and take into account geometric equivalence. We illustrate this idea with some examples.

\begin{example}[Central CTs]
	Let $L$ be a central CT with parameter matrix $A$. In this case, we essentially have to enumerate the different metric-Jordan canonical forms for $A$. Fix $\lambda_1 < \cdots < \lambda_n \in \R$.
	
	In Euclidean space there is only one central CT we can build from these parameters; it is given by the parameter matrix $A = \diag(\lambda_1,\dotsc,\lambda_n)$ and it induces the well known elliptic coordinate system (see \cref{ex:genElipCoord}).
	
	In Minkowski space there are $n$ (geometrically inequivalent) central CTs we can build from these parameters, they are given as follows:
	
	\begin{align}
	A & = J_{-1}(\lambda_1) \oplus J_{1}(\lambda_2)  \oplus \cdots J_{1}(\lambda_n) \\
	\vdots \\
	A & = J_{1}(\lambda_1) \oplus J_{1}(\lambda_2) \oplus \cdots J_{-1}(\lambda_n)
	\end{align}
	
	They differ by the eigenvalue of $A$ which is timelike. Similarly there are $n-1$ central CTs built only using $\lambda_2 < \cdots < \lambda_n$ with parameter matrix of the form:
	
	\begin{equation}
	A = J_{\pm 2}(\lambda_2) \oplus J_{1}(\lambda_3) \oplus \cdots J_{1}(\lambda_n)
	\end{equation}
	
	Now consider the case where $A$ has a two dimensional eigenspace, the rest being simple. Using $\lambda_2 < \cdots < \lambda_n$, in Euclidean space there are $n - 1$ central CTs depending on which $\lambda_i$ corresponds to the two dimensional eigenspace\footnote{When $n = 3$ the two different cases induce the oblate and prolate spheroidal coordinate systems.}. Each of these cases in Euclidean space induce $n-1$ different cases in Minkowski space depending on which $\lambda_i$ becomes timelike, hence there are a total of $(n-1)^2$ cases in Minkowski space.
	
	Finally we note that in Minkowski space $A$ can have two complex conjugate eigenvalues, then since the corresponding real Jordan block is distinguishable from the other real eigenvalues of $A$, a similar analysis applies. In general one would have to order the complex eigenvalues (see \cref{defn:ordC}).
\end{example}

Enumerating inequivalent axial CTs can largely be reduced to the same problem for central CTs. For example, in Euclidean space there is only one type of axial CT if all the eigenvalues of $A_c$ are distinct. We conclude with CTs in spherical submanifolds of pseudo-Euclidean space as they are somewhat different.

\begin{example}[CTs in $\eunn(\kappa)$]
	Let $L$ be the CT in $\eunn(\kappa)$ with parameter matrix $A$. Fix $\lambda_1 < \cdots < \lambda_n \in \R$. In this case there are sometimes less geometrically inequivalent CTs then isometrically inequivalent ones.
	
	In the Euclidean sphere there is only one CT we can build from these parameters, it is given by the parameter matrix $A = \diag(\lambda_1,\dotsc,\lambda_n)$ and it induces the sphere-elliptic coordinate system.
	
	Now suppose the ambient space is Minkowski space. Then we only need to consider $\lceil \frac{n}{2} \rceil$ cases given by (see \cref{ex:geoEquivHypSpa}):
	
	\begin{align}
	A & = J_{-1}(\lambda_1) \oplus J_{1}(\lambda_2)  \oplus \cdots J_{1}(\lambda_n) \\
	\vdots \\
	A & = J_{1}(\lambda_1) \oplus J_{1}(\lambda_2) \oplus \cdots \oplus J_{-1}(\lambda_{\lceil \frac{n}{2} \rceil}) \oplus \cdots J_{1}(\lambda_n)
	\end{align}
	
	Note that only the first $\lceil \frac{n}{2} \rceil$ eigenvalues of $A$ are made timelike.
	
	Most of the other cases can be deduced from the first example if one desires. Although we illustrate one difference with an example. For the Euclidean sphere $\E^3(1)$, fix $\lambda_1 < \lambda_2 \in \R$ and consider the CT induced by the following parameter matrices:
	
	\begin{align}
	A_1 & = \diag(\lambda_1, \lambda_1, \lambda_2) \\
	A_2 & = \diag(\lambda_1, \lambda_2, \lambda_2)
	\end{align}
	
	Note that $-A_2$ has the same form as $A_1$, specifically the smallest eigenvalue of $-A_2$ is repeated. Hence in considering parameter matrices with two dimensional eigenspaces, we only need to enumerate those with the form given by $A_1$, where the smaller eigenvalue is repeated.
\end{example}

We have described how to enumerate the geometrically inequivalent CTs in spaces of constant curvature. One should note though, that in non-Euclidean spaces a given CT could induce different coordinate systems on disjoint connected subsets of the space (see \cref{ex:CCTCoordMink}). Hence in these cases, more work has to be done to enumerate the isometrically inequivalent separable coordinate systems.

\subsection{Constructing separable coordinates} \label{sec:constSepCoord}

In a two dimensional Riemannian manifold, all non-trivial CTs are Benenti tensors. Hence in this case, one can enumerate all isometrically inequivalent separable coordinates simply by enumerating the geometrically inequivalent CTs. The latter problem can be solved in pseudo-Euclidean space using \cref{thm:conTenCanForm}. In \cref{tab:E2} we have done this for $\E^2$ and included the standard transformations from separable to Cartesian coordinates.

\begin{table}[h]
	\caption{Separable Coordinate Systems in $\E^2$}
	\label{tab:E2}
	\centering
	\medskip
	\begin{tabular}{| l | l | l |}
		\hline
		1. Cartesian coordinates &  $L = d \odot d $ & $x \: d + y \: e$ \\
		2. Polar coordinates &   $L =  r \odot r $ & $\rho \cos \theta \: d + \rho \sin \theta \: e$ \\
		3. Elliptic coordinates &  $L = d \odot d + a^{-2} r \odot r $ & $a\cos \phi \cosh \eta \: d + a\sin \phi \sinh \eta \: e$  \\
		4. Parabolic coordinates &  $L = 2 r \odot d $ & $\frac{1}{2}(\mu^{2} - \nu^{2}) \: d + \mu\nu \: e$ \\
		\hline
	\end{tabular} \\[1.5pt]
	The vectors $d,e$ form an orthonormal basis for $\E^2$ and $a > 0$.
\end{table}

We now show how one obtains the coordinate formula in \cref{tab:E2} from formulas we have already calculated. For elliptic coordinates, take Cartesian coordinates $(x,y)$ on $\E^2$ and let $L$ be the central CT with parameter matrix $A = \diag(\lambda_1, \lambda_2)$ where $\lambda_2 > \lambda_1$. Then the transformation from canonical coordinates $(u^1 , u^2)$ to Cartesian coordinates $(x,y)$ read (see \cref{eq:CTCoordsCCI}):

\begin{align}
x^2 & = \frac{(\lambda_1 - u^1)(\lambda_1 - u^2)}{(\lambda_2 - \lambda_1)} & y^2 & = \frac{(\lambda_2 - u^1)(\lambda_2 - u^2)}{(\lambda_1 - \lambda_2)}
\end{align}

We can obtain the standard parameterization of elliptic coordinates as follows. Note that $L = \lambda_1 G + (\lambda_2 - \lambda_1) \tilde{L}$ where $\tilde{L} = e \odot e + (\lambda_2 - \lambda_1)^{-1} r \odot r $ is geometrically equivalent to $L$. The eigenfunctions of $\tilde{L}$, $(\tilde{u}^1 , \tilde{u}^2)$, are related to those of $L$ by $u^i = \lambda_1 + (\lambda_2 - \lambda_1) \tilde{u}^i$. Letting $a^2 := \lambda_2 - \lambda_1$ and substituting this expression for $u^i$ in the above equation gives:

\begin{align}
x^2 & = a^2 \tilde{u}^1 \tilde{u}^2 & y^2 & = a^2 (1-\tilde{u}^1) (\tilde{u}^2 - 1)
\end{align}

Then making the transformation $\tilde{u}^1 = \cos^2 \phi$ and $\tilde{u}^2 = \cosh^2 \eta$, we obtain the formula in \cref{tab:E2}.

The formula for parabolic coordinates follow similarly from \cref{eq:CTCoordsACI,eq:CTCoordsACII}, after taking $u^1 = - \nu^{2}$ and $u^2 = \mu^{2}$ assuming $u^1 < u^2$.

We end with a few more examples to further illustrate the theory. The first example shows how to obtain coordinates which diagonalize a Benenti tensor which is not an ICT.

\begin{example}[Spherical coordinates in $\Si^2$] \label{ex:sphCoord}
	Fix $d \in \Si^2$ and let $L$ be the CT induced in $\Si^2$ by restricting $d \odot d$. As we observed earlier, $L$ is necessarily a Benenti tensor. In \cref{ex:wpRedIII} it was shown that a warped product which decomposes $L$ is given by:
	
	\begin{equation}
	\psi(\phi, p) = \cos \phi \: d + \sin \phi \: p
	\end{equation}
	
	\noindent where $p \in d^\perp(1)$, i.e. $p \in \Si^2 \cap d^\perp$ and $\phi \in (0, \pi)$. Since $d^\perp(1)$ is the unit circle we obtain coordinates on it by taking $p = \cos \theta \: e + \sin \theta \: f$ where $e,f$ is an orthonormal basis for $d^\perp$. Then the above equation becomes:
	
	\begin{equation}
	\psi(\phi, p) = \cos \phi \: d + \sin \phi (\cos \theta \: e + \sin \theta \: f)
	\end{equation}
	
	Furthermore, since $\psi$ is a warped product decomposition with warping function $\sin \phi$, it follows from \cref{ex:wpRedIII} that the metric is:
	
	\begin{equation}
	g = (\d \phi)^2 + \sin^2 \phi (\d \theta)^2
	\end{equation}
\end{example}

\begin{example}[Oblate/Prolate spheroidal coordinates in $\E^3$] \label{ex:oblProCoords}
	Fix a unit vector $d \in \E^n$, $c \neq 0$ and consider the following CT in $\E^n$:
	
	\begin{equation} \label{eq:exCTobPro}
	L = c \: d \odot d + r \odot r
	\end{equation}
	
	It follows from \cref{ex:wpRedI} that a warped product $\psi$ which decomposes $L$ is given as follows: Let $e \in d^\perp$ be a unit vector, then for  $(p_0, p) = ( x d + y e ,p) \in N_{0} \times N_{1}$
	
	\begin{equation}
	\psi(p_{0},p) = x d + y p
	\end{equation}
	
	Observe that $N_0 \simeq \E^2$ and $L$ induces a Benenti tensor, $\tilde{L}$, on $N_0$ which has the form given by \cref{eq:exCTobPro}. If we let $a := \sqrt{\Abs{c}}$, then using \cref{tab:E2} we can take coordinates on $N_0$ which diagonalize $\tilde{L}$ yielding the following maps.
	
	\begin{equation} \psi(p_{0},p) = 
	\begin{cases}
	c > 0 &  a\cos \phi \cosh \eta \: d + a\sin \phi \sinh \eta \: p \\
	c < 0 &  a\sin \phi \sinh \eta \: d + a\cos \phi \cosh \eta \: p
	\end{cases}
	\end{equation}
	
	Also $N_1$ is the unit sphere in $d^\perp$, hence $N_1 \simeq \Si^{n-2}$. We can obtain separable coordinates for $\E^n$ by taking any separable coordinates for $\Si^{n-2}$ on $N_1$ \cite{Rajaratnam2014a}. For example, if $c > 0$ and $n = 3$, we obtain prolate spheroidal coordinates:
	
	\begin{equation}
	\psi(p_{0},p) = a\cos \phi \cosh \eta \: d + a\sin \phi \sinh \eta \: (\cos \theta \: e + \sin \theta \: f)
	\end{equation}
	
	\noindent where $e,f$ is any orthonormal basis for $d^\perp$. Also note that using \cref{prop:IctMetricEunn} and the fact that $\psi$ is a warped product decomposition with warping function $a\sin \phi \sinh \eta$, one can obtain the following expression for the metric:
	
	\begin{equation}
	g = a^2(\sinh^2 \eta + \sin^2 \phi)((\d \phi)^2 + (\d \eta)^2) + a^2 \sin^2 \phi \sinh^2 \eta (\d \theta)^2
	\end{equation}
	
	Finally note that oblate spheroidal coordinates can be obtained by taking $c < 0$.
\end{example}

\begin{example}[Product coordinates in $\E^4$]
	Consider the decomposition $\E^n = V \obot W$ into non-trivial subspaces. Let $\tilde{G}$ denote the induced contravariant metric in $V$ and consider the following CT in $\E^n$:
	
	\begin{equation}
	L = \tilde{G}
	\end{equation}
	
	Observe that the warped product $\psi : V \times_1 W \rightarrow \E^n$ given by $(q,p) \rightarrow q + p$ is adapted to the eigenspaces of $L$. We can construct separable coordinates by parameterizing $q$ (resp. $p$) with separable coordinates on $V$ (resp. $W$). For example, if $\dim V = \dim W = 2$, by taking polar (resp. elliptic) coordinates on $V$ (resp. $W$) from \cref{tab:E2}, we have the following separable coordinates on $\E^4$:
	
	\begin{equation}
	\psi(q,p) = \rho \cos \theta \: b + \rho \sin \theta \: c + a\cos \phi \cosh \eta \: d + a\sin \phi \sinh \eta \: e
	\end{equation}
	
	\noindent where $b,c$ (resp. $d,e$) is an orthonormal basis for $V$ (resp. $W$).
\end{example}

Extending the above analysis one can prove that there are eleven classes of isometrically inequivalent separable coordinate systems in $\E^3$.

\subsection{The BEKM separation algorithm} \label{sec:BEKMsep}

In this section we show how to execute the BEKM separation algorithm (see \cite[section~6.3]{Rajaratnam2014a} for details) in spaces of constant curvature using the classification of CTs given in this article.

In order to execute this algorithm in $\eunn$ we will need the \gls{kbd} equation in $\eunn$ and in $\eunn(\kappa)$. Fix a function $V \in \F(\eunn)$ and suppose $n > 1$. Then if $L$ is the general CT in $\eunn$ given by \cref{eq:CTGenEunn} and $K_e := \tr{L}G - L$ is its KBDT, then the KBD equation in $\eunn$ is:

\begin{equation}
\d (K_e \d V) = 0
\end{equation}

We will often refer to the above equation as just the \emph{KBD equation}.

It will be convenient to evaluate the KBD equation in $\eunn(\kappa)$ via its embedding in $\eunn$. Then if $\tilde{L}$ is the general CT in $\eunn(\kappa)$ given in $\eunn$ by \cref{eq:CTGenEunnKap}, let $L := r^2 \tilde{L}$ and $K_s := \tr{L}R - L$, then the KBD equation in $\eunn(\kappa)$ (embedded in $\eunn$) is:

\begin{equation} \label{eq:KBDsph}
\d (K_s \d V) = 0
\end{equation}

We will often refer to the above equation as the \emph{spherical KBD equation}. We will show how this equation is derived in \cref{par:sphKBDeq}.

We should also mention here that we carry out the BEKM separation algorithm slightly differently than described in \cite[section~6.3]{Rajaratnam2014a}. We construct warped products which decompose reducible OCTs such that the induced CT on the geodesic factor is an ICT as opposed to a Benenti tensor. This allows one to simultaneously construct separable coordinates while carrying out the algorithm, as illustrated by the following example.

\subsubsection{Example: Calogero-Moser system} \label{ex:CalMosSys}

We first present an example which separates in several different coordinate systems and hence provides a good example for the BEKM separation algorithm. Our example is the Calogero-Moser system, which will be defined shortly. Another advantage of this example is that its separability properties have been studied by several different authors \cite{Horwood2005,Rauch-Wojciechowski2005,Waksjo2003,Benenti2000,Calogero1969}, hence it allows one to compare and contrast different methods. Finally we mention that we obtained this example from \cite{Waksjo2003} where an algorithm equivalent to the BEKM separation algorithm was used to study this example.

The $n$-dimensional Calogero-Moser system is given by the following natural Hamiltonian \cite{Calogero2008}:

\begin{equation} \label{eq:CMHam}
\tag{CM}
H\left( p,q\right) =\frac{1}{2}\sum_{i=1}^{n}\left(p_{i}^{2}+\omega ^{2}q_{i}^{2}\right) + \sum_{1 \leq i < j \leq n} \frac{g^{2}}{(q_{i}-q_{j})^{2}}
\end{equation}

We will take $\omega = 0, g = 1$ for convenience. In this case this Hamiltonian models $n$ point particles moving on a line acted on by forces depending on their relative distances. We can write the potential $V$ as follows:

\begin{equation}
V = \sum_{i} \bp{r,a_{i}}^{-2}
\end{equation}

\noindent where $a_{i} = e_{k} - e_{l}$ for some $k,l \in \{1,\dotsc,n\}$ with $e_i := \partial_i$. Furthermore we let

\begin{equation}
d = \frac{1}{\sqrt{n}} \sum_{i=1}^{n} e_{i}
\end{equation}

We can obtain solutions to the KBD equation by using the following result.

\begin{propMy} \label{prop:extCTKBDsol}
	Suppose $L = A + m r \odot r + 2 w \odot r$ is a CT in $\eunn$ and let $\tilde{L}$ be the restriction of $L$ to $\eunn(\kappa)$. Let $a$ be a covariantly constant vector and let $V := \bp{r,a}^{-2}$. If $a$ is an eigenvector of $A$ orthogonal to $w$ then $V$ satisfies the KBD equation with $L$ in $\eunn$. If $a$ is an eigenvector of $A$ then the restriction of $V$ to $\eunn(\kappa)$ satisfies the KBD equation with $\tilde{L}$ in $\eunn(\kappa)$.
\end{propMy}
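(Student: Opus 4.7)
The plan is to exhibit both $K_e \d V$ and $K_s \d V$ as exact $1$-forms, which immediately gives closedness. Setting $\rho := \bp{r,a}$, covariant constancy of $a$ yields $\d\rho = a^\flat$ and $\d V = -2\rho^{-3} a^\flat$.

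For the $\eunn$ case, I first apply the endomorphism $L = A + m r \odot r + 2w \odot r$ to the covector $a^\flat$. Using $Aa = \lambda a$ together with $\bp{a, w} = 0$, this simplifies to $L a^\flat = \lambda a^\flat + \rho(m r^\flat + w^\flat)$, so
\[
K_e \d V = -2\rho^{-3}(\tr L - \lambda)\, a^\flat + 2\rho^{-2}(m r^\flat + w^\flat).
\]
The claim will be that this equals $\d\bigl[\rho^{-2}(\tr L - \lambda)\bigr]$. To verify, I will use $\d(\rho^{-2}) = -2\rho^{-3} a^\flat$ together with $\d \tr L = 2m r^\flat + 2 w^\flat$, which follows from $\tr L = \tr A + m r^2 + 2\bp{r,w}$ via $\d(r^2) = 2 r^\flat$ and $\d\bp{r,w} = w^\flat$ ($w$ being parallel); expanding by the product rule then reproduces the displayed expression.

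For the spherical case, I work within $\eunn$ using $L := r^2 \tilde L = r^2 A + r^{-2}\bp{r,Ar}\, r \odot r - 2 Ar \odot r$, for which $\tr L = r^2 \tr A - \bp{r,Ar}$. The analogous computation with $R^* \d V = -2\rho^{-3} a^\flat + 2\rho^{-2} r^{-2} r^\flat$ and $Aa = \lambda a$ will show, via the identity $\tr L + \bp{r,Ar} = r^2 \tr A$, that the coefficient of $r^\flat$ in $K_s \d V = \tr L\, R \d V - L \d V$ collapses to the constant $c := \tr A - \lambda$, yielding
\[
K_s \d V = -2\rho^{-3}\bigl(c\, r^2 - \bp{r,Ar}\bigr) a^\flat + 2c\,\rho^{-2} r^\flat - 2\rho^{-2}(Ar)^\flat.
\]
Invoking the exact $1$-forms $r^\flat = \tfrac{1}{2}\d(r^2)$ and $(Ar)^\flat = \tfrac{1}{2}\d\bp{r,Ar}$ (both closed by symmetry of $g$ and of $A$), I will then verify $K_s \d V = \d\bigl[\rho^{-2}(c\, r^2 - \bp{r,Ar})\bigr]$ by a direct product rule computation; restricting to $\eunn(\kappa)$ delivers the spherical KBD equation.

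The main obstacle will be the algebraic simplification in the spherical case: one must carefully combine the contributions from $L$ and $\tr L\, R$ to see that the coefficient of $r^\flat$ really is constant. Once that is recognised, exactness of $K_s \d V$ follows precisely as in the $\eunn$ case, and no further input beyond the eigenvalue, orthogonality, and covariant-constancy assumptions is needed.
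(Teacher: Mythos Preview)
Your argument is correct and follows essentially the same route as the paper: both rest on the identity $L(\d\log\rho)=\lambda\,\d\log\rho+\tfrac{1}{2}\d\tr L$ (the paper quotes it from \cref{prop:extCT}, you redo the one-line computation), and from there closedness of $K_e\d V$ and $K_s\d V$ is immediate. Your version is slightly more explicit in that you exhibit the primitives $\rho^{-2}(\tr L-\lambda)$ and $\rho^{-2}\bigl((\tr A-\lambda)r^2-\bp{r,Ar}\bigr)$, whereas the paper leaves the final check implicit; for the spherical case you work with the ambient lift $K_s$ rather than intrinsically on $\eunn(\kappa)$, which is exactly the reformulation justified in \cref{prop:KBDsph}.
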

\begin{proof}
	We first consider the case in $\eunn$. Under these hypothesis it follows by \cref{prop:extCT} that if $\rho := \Abs{\bp{r,a}}$, then we have:
	
	\begin{equation}
	L(\d \log \rho) = \d ( \lambda  \log \rho + \frac{1}{2} \tr{L})
	\end{equation}
	
	\noindent for some $\lambda \in \R$. From the above equation one can check that $L$ satisfies the KBD equation with $V$. A similar proof holds for the case in $\eunn(\kappa)$, but now the above equation with $\tilde{L}$ follows either by restriction of the one in the ambient space or by \cref{thm:eunnKapRestCT} together with \cref{eq:CTredExt} from \cref{prop:CTredChar}.
\end{proof}
\begin{remark}
	This result comes from the connection between extending KTs into warped products and the separation of the Hamilton-Jacobi equation for natural Hamiltonians \cite{Rajaratnam2014a}. One can show that the commuting integrals can be explicitly calculated; this is a consequence of the fact that $L$ is torsionless.
\end{remark}
\begin{remark}
	One can naturally construct separable potentials from the above proposition. For example if $a_1,\dotsc,a_n$ is an orthonormal basis for $\eunn$ then the above proposition implies that the following potential is separable in generalized elliptic coordinates (see \cref{ex:genElipCoord}):
	\begin{equation}
	V = \sum_{i} k_i \bp{r,a_{i}}^{-2}
	\end{equation}
	
	\noindent for some $k_i \in \R$. In fact this potential is clearly multi-separable. Furthermore we can also obtain a multi-separable potential on $\eunn(\kappa)$ by restriction.
\end{remark}

Now returning to the Calogero-Moser system, we construct the most general solution to the KBD equation that one can construct using the above proposition:

\begin{propMy} \label{prop:KBDsolCM}
	If $V$ is the potential of the Calogero-Moser system given by \cref{eq:CMHam}, then the following CT is a solution of the KBD equation:
	
	\begin{equation} \label{eq:KBDsolCM}
	L = c \, d \odot d + 2 w \, d \odot r + m \, r \odot r
	\end{equation}
	
	\noindent where $c,w,m \in \R$. Furthermore the restriction of the above CT to $\Si^{n-1}$ is a solution of the spherical KBD equation.
\end{propMy}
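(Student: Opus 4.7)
The plan is to reduce the claim to an application of \cref{prop:extCTKBDsol} by exploiting linearity. Since the KBD operator $V \mapsto \d(K_e \, \d V)$ is $\R$-linear in $V$ for a fixed CT $L$, and since $V = \sum_i \bp{r, a_i}^{-2}$ with $a_i = e_k - e_l$, it suffices to verify the hypotheses of \cref{prop:extCTKBDsol} for each summand $V_i = \bp{r, a_i}^{-2}$ with the tensor $L$ of \cref{eq:KBDsolCM}.

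First I would match $L = c \, d \odot d + 2 w \, d \odot r + m \, r \odot r$ to the general form $L = A + 2 \tilde{w} \odot r + m \, r \odot r$ given by \cref{prop:CtFormEunn}, reading off the parameter matrix $A = c \, d \odot d$ and the axial vector $\tilde{w} = w \, d$. The three hypotheses of \cref{prop:extCTKBDsol} then read: $a_i$ is covariantly constant, which is clear since $a_i$ is a constant combination of Cartesian frame vectors; $a_i$ is an eigenvector of $A$, which becomes $A a_i = c \, d \, \bp{d, a_i} = 0$; and $a_i$ is orthogonal to the axial vector, i.e.\ $\bp{w \, d , a_i} = w \bp{d, a_i} = 0$. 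The last two conditions both reduce to the single identity $\bp{d, a_i} = 0$, which holds by the direct computation $\bp{d, e_k - e_l} = \tfrac{1}{\sqrt{n}}(1 - 1) = 0$. Invoking \cref{prop:extCTKBDsol} for every $V_i$ and summing over $i$ then yields the KBD equation for $V$ in $\eunn$.

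For the spherical claim the argument is even lighter: the second half of \cref{prop:extCTKBDsol} drops the orthogonality condition and only requires $a_i$ to be an eigenvector of $A$, which has already been checked. Restricting each $V_i$ and $L$ to $\Si^{n-1}$ and summing therefore yields a solution to the spherical KBD equation for the restricted potential.

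The proof is essentially mechanical; the only substantive content is the observation that the diagonal vector $d$ is orthogonal to every root $e_k - e_l$, i.e.\ that the $a_i$ span the hyperplane $d^\perp$. This orthogonality is precisely what simultaneously places each $a_i$ in the kernel of $A$ and makes it orthogonal to the axial vector, and it is what allows the full three-parameter family of CTs of the form \cref{eq:KBDsolCM} to satisfy the KBD equation for the Calogero-Moser potential at once.
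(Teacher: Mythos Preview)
Your proof is correct and takes essentially the same approach as the paper: both invoke \cref{prop:extCTKBDsol} for each summand $\bp{r,a_i}^{-2}$ and reduce the eigenvector and orthogonality hypotheses to the single identity $\bp{d,a_i}=0$, then sum by linearity. The paper's proof is phrased slightly differently---it starts from the vectors $e_1-e_i$ and \emph{derives} the form of $A$ as the most general self-adjoint operator having $d^\perp$ as an eigenspace---but this extra framing is not needed for the proposition as stated, and your direct verification is entirely adequate.
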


\begin{proof}
	Consider the vectors $b_{i} := e_{1} - e_{i}$ for $i \neq 1$. We construct the most general CT for which each vector $b_i$ is an eigenvector of $A$ and orthogonal to $w$. Observe that none of them are orthogonal, they span an $n-1$ dimensional subspace and 
	\begin{equation}
	\cap_{i} b_{i}^{\perp} = (\oplus_{i} \spa{b_{i}})^{\perp} = \spa{d}
	\end{equation}
	
	Now suppose $A$ is a self-adjoint operator such that each $b_{i}$ is an eigenvector of $A$. Then it follows that $A$ must have $d^{\perp}$ as an eigenspace, hence $A = k I + c d \odot d$ for some $k,c \in \R$. Thus up to equivalence the above form of $L$ satisfies our requirements, and it follows by \cref{prop:extCTKBDsol} that $L$ satisfies the KBD equation with $V$.
	
	The second statement on the spherical KBD equation follows by a similar argument using \cref{prop:extCTKBDsol}.
\end{proof}
\begin{remark}
	It follows by a straightforward calculation that the CT stated in the above proposition is the most general solution of the KBD equation. Similarly when $n = 3$ one can check that the solution to the spherical KBD equation given in the above proposition is the most general.
\end{remark}

\paragraph{Canonical forms} We obtain the canonical forms according to \cref{thm:conTenCanForm} for the CTs given by \cref{eq:KBDsolCM}. First the constants $\omega_i$ from \cref{eq:omegI} are given as follows:

\begin{align}
\omega_0 & = m \\
\omega_1 & = w^2 
\end{align}

Note that in Euclidean space, one only needs to calculate $\omega_0$ and $\omega_1$ to carry out the classification. We now consider the cases given by \cref{thm:conTenCanForm}:

\begin{parts}
	\item \textbf{Elliptic}: $\omega_0 \neq 0$ \\
	By applying the translation given by \cref{eq:transConFormI} and changing to a geometrically equivalent CT one obtains:
	
	\begin{equation} \label{eq:CMCTGeoI}
	L = c d \odot d + r \odot r
	\end{equation}
	
	\noindent for some $c \in \R$.
	\item \textbf{Parabolic}: $\omega_0 = 0, \omega_1 \neq 0$ \\
	By applying the translation given by \cref{eq:transConFormII} and changing to a geometrically equivalent CT one obtains:
	
	\begin{equation} \label{eq:CMCTGeoII}
	L = 2 d \odot r
	\end{equation}
	
	\item \textbf{Cartesian}: $\omega_0 = 0, \omega_1 = 0, c \neq 0$ \\
	
	In this case after changing to a geometrically equivalent CT, we have:
	
	\begin{equation} \label{eq:CMCTGeoIII}
	L = d \odot d
	\end{equation}
\end{parts}

Hence the three geometrically inequivalent solutions of the KBD equation for the Calogero-Moser potential are given by \cref{eq:CMCTGeoI,eq:CMCTGeoII,eq:CMCTGeoIII}. Note that we can obtain these CTs from \cref{eq:KBDsolCM} with an appropriate choice of parameters, hence there is no need to apply any isometries.

\paragraph{Determining Separability} We analyze these solutions further to find separable coordinates. We will obtain a compete analysis for the case $n \leq 3$ for purposes of illustration. For the following analysis, we fix unit vectors $a \in d^\perp$ and $e \in \d^\perp \cap a^\perp$.

We define $N_1$ to be the unit sphere in $d^\perp$:

\begin{equation}
N_1 = \{ p \in d^\perp \: | \: p^{2} = 1 \}
\end{equation}

Note if $d^\perp = \R a$, then we take $N_1 = \{a\}$. When $\dim N_1 = 1$, we take coordinates on it as follows:

\begin{equation}
\sigma(\theta) = \cos(\theta) a + \sin(\theta) e
\end{equation}

\begin{parts}
	\item Elliptic with $c \neq 0$ \\ 
	When $n > 2$, this CT is reducible and a warped product decomposition $\psi$ which decomposes this CT is given by \cref{ex:wpRedI}. First define $N_0$ as follows:
	
	\begin{align}
	N_{0} & = \{p \in \R d \obot \R a \: | \: \bp{a, p} > 0  \}
	\end{align}
	
	For $(p_0, p) = ( x a + y d ,p) \in N_{0} \times N_{1}$, $\psi$ is given as follows (see \cref{ex:wpRedI}):
	
	\begin{equation}
	\psi(p_{0},p) = x p + y d
	\end{equation}
	
	Note that this equation also holds when $n = 2$, but in this case $\psi$ is not a warped product decomposition. To separate $V$, we have to apply the BEKM separation algorithm with $V$ restricted to $N_1$ on $N_1$. Although it will be more convenient to use the spherical KBD equation in $d^\perp$, see the next section for more details.
	
	When $n \leq 3$, no additional steps are needed since in this case $\dim N_1 \leq 1$. Indeed, by \cref{ex:genElipCoord} $L$ restricted to $N_0$ is an ICT (in a dense subset) hence $L$ has simple eigenfunctions (locally), and so one obtains separable coordinates for $V$ by taking elliptic coordinates on $N_0$ \cite{Rajaratnam2014a}. When $c < 0$ we obtain oblate spheroidal coordinates and when $c > 0$ we obtain prolate spheroidal coordinates; see \cref{ex:oblProCoords} for more details.
	
	\item Parabolic \\
	When $n > 2$, then proceeding as in \cref{ex:wpRedI} (see also \cref{par:CtRedConstrEn}), one observes that the same warped product $\psi$ as in the above case decomposes this CT. When $n \leq 3$, with similar arguments as in the above case, one finds that $L$ locally has simple eigenfunctions, and one obtains separable coordinates for $V$ by taking parabolic coordinates on $N_0$ \cite{Rajaratnam2014a}. The resulting coordinate system is often called rotationally symmetric parabolic coordinates.
	\item Spherical: Elliptic with $c = 0$ \\
	
	In this case, one can check that the following warped product, $\psi$, decomposes $L$. For $(p_0, p) = ( \rho a ,p) \in \R^+ a \times \Si^{n-1}$, $\psi$ is given as follows:
	
	\begin{equation}
	\psi(p_{0},p) = \rho p
	\end{equation}
	
	Now observe that even when $n = 3$, $L$ does not have simple eigenfunctions; in contrast with the previous two cases. To fill the multidimensional eigenspace of $L$ corresponding to $r^\perp$, we have to solve the spherical KBD equation (see the next section for more details). When $n = 3$, we can fill this degeneracy by using the solution to the spherical KBD equation given by \cref{prop:KBDsolCM}. Indeed, that proposition shows that the CT on $\Si^{n-1}$ induced by $d \odot d$ is a solution of the spherical KBD equation. Hence by \cref{ex:sphCoord}, this induced CT is diagonalized in spherical coordinates, and we see that $V$ separates in the following coordinates \cite{Rajaratnam2014a}.
	
	
	%
	
	%
	%
	
	\begin{equation}
	\psi(\rho a, p) = \rho (\sin(\phi) (\cos(\theta) a + \sin(\theta) e) + \cos(\phi) d)
	\end{equation}
	
	\item Cartesian \\
	In this case we obtain a product which decomposes $L$ as follows. First let $N_0 = \R d$ and $N_1 = d^\perp$, then for $(p_0, p) = ( x d ,p) \in N_{0} \times N_{1}$, we have:
	
	\begin{equation}
	\psi(p_0, p) = x d + p
	\end{equation}
	
	As in the above case, even when $n = 3$, $L$ does not have simple eigenfunctions. Hence we have to apply the BEKM separation algorithm with $V$ restricted to $N_1$ on $N_1$. When $n = 3$ one finds that the general solution to the KBD equation is $\tilde{r} \odot \tilde{r}$ where $\tilde{r}$ is the dilatational vector field in $N_1$. Thus if we take polar coordinates in $N_1$, we obtain separable coordinates for $V$. For $(p_0, p) = ( x d ,y \sigma(\theta)) \in N_{0} \times N_{1}$ with $y > 0$, we have:
	
	\begin{equation}
	\psi(p_0, y \sigma(\theta)) = x d + y (\cos(\theta) a + \sin(\theta) e)
	\end{equation}
\end{parts}

We conclude with some remarks. First the analysis given above is complete when $n \leq 3$. Although when $n > 3$ the warped product decompositions obtained may allow for partial separation of the Hamilton-Jacobi equation. When $n = 4$ it was shown in \cite{Rauch-Wojciechowski2005} that no additional solutions to the (spherical) KBD equation could be obtained. Hence our analysis above is complete when $n = 4$.

Furthermore the above analysis holds verbatim for the weighted Calogero-Moser system with unequal masses, which can be modeled using the natural Hamiltonian in $\E^n$ associated with the following potential (see e.g. \cite[Section~3.3]{Rauch-Wojciechowski2005}):

\begin{equation}
V = \sum_{1 \leq i < j \leq n} \frac{g_{ij}}{(m_i q_{i}- m_jq_{j})^{2}}
\end{equation}

The only difference is that in this case:

\begin{equation}
d = \frac{1}{\sqrt{M}} \sum_{i=1}^{n} \frac{e_{i}}{m_i},  \quad M = \sum_{i=1}^{n} \frac{1}{m_i^2}
\end{equation}

More examples can be found in \cite[section~7]{Waksjo2003}, where an algorithm equivalent to the BEKM separation algorithm is used to determine separability of some natural Hamiltonians defined in $\E^3$. See also \cite{Benenti1993} where some Kepler type potentials are tested for separability in elliptic coordinates in $\E^2$.

\subsubsection{Spherical KBD Equation} \label{par:sphKBDeq}

We first show how to derive the spherical KBD equation. Suppose $V \in \F(\eunn)$ is a potential in $\eunn$ which satisfies the KBD equation with $r \odot r$. Choose $a \in \eunn$ with $\kappa := a^2 = \pm 1$ and let $\rho := \bp{a, r}$. Then we can easily construct a warped product $\psi : \R^+ a \times_\rho \eunn(\kappa) \rightarrow \eunn$ which decomposes this CT. Let $\tau : \eunn(\kappa) \rightarrow \eunn$ be the standard embedding of this sphere. Hence to find separable coordinates for $V$, we have to apply the BEKM separation algorithm with $\tilde{V} := \tau^* V$ in $\eunn(\kappa)$.

If $\tilde{L}$ is the general CT in $\eunn(\kappa)$ and $\tilde{K} := \tr{\tilde{L}}R - \tilde{L}$ is the KBDT where $R$ is the metric in $\eunn(\kappa)$, then we have to solve the equation \cite[section~6.3]{Rajaratnam2014a}:

\begin{equation}
\d(\tilde{K} \d \tilde{V}) = 0
\end{equation}

Now let $K$ be the lift of $\tilde{K}$ (as a contravariant tensor) to $\eunn$ via the warped product $\psi$. Then proposition~5.2 in \cite{Rajaratnam2014a} shows that the above equation is locally satisfied iff

\begin{equation}
\d(K\d V) = 0
\end{equation}

Hence if we calculate this lift of $K$, we only need to solve the above equation in $\eunn$. We now proceed to calculate this lift. Note that it is sufficient to find a contravariant tensor, $K$, in $\eunn$ which is equal to $\tilde{K}$ for points in $\eunn(\kappa)$ and satisfies $\lied{K}{r} = 0$. We shall see that it will be sufficient to do this for the CT then calculate the KBDT using its defining equation. Also noting that $r$ is a \gls{cv}, we execute the following calculations in a more general context just using this fact. 

Let $r$ be a non-null CV, since $r \odot r$ is an OCT, it follows that any integral manifold of $r^\perp$ is a spherical submanifold. Hence \cref{prop:CtRestUmb} shows that any CT on $M$ induces one on any leaf of the foliation induced by $r^\perp$. The following proposition shows how to solve the problem described earlier in this more general context.

\begin{propMy} \label{prop:CTrestCVperp}
	Suppose $L$ is a CT on $M$ and $r$ is a non-null CV. Let $E := r^{\perp}$, and $L_{E} := L|_{E}$. Then $\tilde{L} := r^{2}L_{E}$ restricts to a CT on any integral manifold of $E$ and it satisfies $\lied{\tilde{L}}{r} = 0$ on $M$ where $\tilde{L}$ is in contravariant form.
\end{propMy}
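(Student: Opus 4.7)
The plan is to prove the two claims of the proposition separately. For the first claim, the starting observation is that since $r$ is a non-null CV with $\nabla_{X} r = \phi X$ for some function $\phi$, the tensor $r \odot r$ is an OCT, so $E = r^{\perp}$ is orthogonally integrable and its leaves are umbilical submanifolds. A short computation gives that $r^{2}$ is locally constant along $E$ (since $X(r^{2}) = 2\phi\bp{X,r} = 0$ for $X \in E$) and that the mean curvature normal of each leaf is $H = -\phi r/r^{2}$. I would then note that the covariant form of $L_{E}$ coincides with that of $L$ when restricted to vectors in $E$: for $X, Y \in E$ one has $R^{*} X^{\flat} = X^{\flat}$ because $X^{\flat}(r) = \bp{X,r} = 0$, so $L_{E}^{\flat}(X,Y) = L(R^{*} X^{\flat}, R^{*} Y^{\flat}) = L^{\flat}(X,Y)$. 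Applying \cref{prop:CtRestUmb} to $L$ (in covariant form) thus shows that $\iota^{*} L_{E}^{\flat}$ is a CT on each leaf $\iota : \tilde{M} \hookrightarrow M$, and multiplying the resulting CT by the locally constant scalar $r^{2}|_{\tilde{M}}$ preserves the CT property, which establishes the first claim.

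The second claim, $\lied{\tilde{L}}{r} = 0$, I would handle by a direct computation resting on three preliminary identities, each immediate from $\nabla_{X} r = \phi X$: $\lied{r^{2}}{r} = 2\phi r^{2}$, $\lied{r^{\flat}}{r} = 2\phi r^{\flat}$, and $\lied{r}{r} = 0$. These combine to give $\lied{R}{r} = 0$ (and consequently $\lied{R^{*}}{r} = 0$) for the projector $R = I - r \otimes r^{\flat}/r^{2}$ appearing in $L_{E} = R L R^{*}$, so the Leibniz rule reduces everything to $\lied{L_{E}}{r} = R(\lied{L}{r}) R^{*}$. Now since $L$ is a CT with conformal factor $C$, combining the standard formula $(\lied{L}{r})^{ij} = r^{k}\nabla_{k}L^{ij} - L^{kj}\nabla_{k}r^{i} - L^{ik}\nabla_{k}r^{j}$ with $\nabla_{r}L = C \odot r$ yields $\lied{L}{r} = C \odot r - 2\phi L$. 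The crucial point is that $R r = 0$, so the $C \odot r$ term is annihilated when sandwiched between $R$ and $R^{*}$, leaving $\lied{L_{E}}{r} = -2\phi L_{E}$. The Leibniz rule then gives $\lied{\tilde{L}}{r} = \lied{r^{2}}{r}\, L_{E} + r^{2}\, \lied{L_{E}}{r} = 2\phi r^{2} L_{E} - 2\phi r^{2} L_{E} = 0$.

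I do not anticipate serious technical obstacles beyond careful bookkeeping with signs and index placements. The conceptual content of the proof is the identification of the correct normalization $r^{2}$: $L_{E}$ carries conformal weight $-2\phi$ under the flow of $r$ while $r^{2}$ carries weight $+2\phi$, so $r^{2} L_{E}$ is precisely the combination invariant under the flow. A cleaner but less direct alternative would go through the warped product interpretation --- locally $M$ is a warped product whose spherical factor is a leaf of $E$, and $\tilde{L}$ is the horizontal lift to $M$ of the induced CT on that leaf, for which $\lied{\cdot}{r} = 0$ is automatic --- which I might include as a brief remark.
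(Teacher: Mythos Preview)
Your proof is correct and follows essentially the same direct-computation approach as the paper: both reduce to showing $\lied{L_E}{r} = -2\phi L_E$ in contravariant form and then cancelling this against $\lied{r^2}{r} = 2\phi r^2$. The only difference is organizational---you establish $\lied{R}{r} = 0$ for the projector and sandwich, whereas the paper evaluates the Lie derivatives directly on vectors $u,v \in \Gamma(E)$ and invokes closedness of $r^{\flat}$ at the end---but the content is the same.
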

\begin{proof}
	The proof of this fact is a straightforward calculation. We first note that since $r$ is a CV with conformal factor $\phi$, we have that
	
	\begin{equation}
	?\nabla_(i? ?r_j)? = \phi g_{ij}
	\end{equation}
	
	Suppose $u,v \in \Gamma(E)$, then
	
	\begin{align}
	(\lied{L_{ij}}{r}) u^i v^j & = (\nabla_{r}L_{ij}) u^i v^j+ L_{ij} (\nabla_{u} r^i) v^j + L_{ij} (\nabla_{v} r^i) u^j \\
	& = \alpha_{(i} r_{j)} u^i v^j + + 2 \phi L_{ij} u^i v^j \\
	& = 2 \phi L_{ij} u^i v^j
	\end{align}
	
	Thus
	
	\begin{align}
	(\lied{L^{ij}}{r}) u_i v_j & = \lied{(G^{ik}  L_{kj} G^{lj})}{r} u_i v_j \\
	& = -2 \phi L_{ij} u^i v^j + (\lied{L_{ij}}{r}) u^i v^j -2 \phi L_{ij} u^i v^j \\
	& = -2 \phi L_{ij} u^i v^j
	\end{align}
	
	Finally
	
	\begin{align}
	(\lied{(r^{2} L^{ij})}{r}) u_i v_j & = r^2 (\lied{L^{ij}}{r}) u_i v_j + (\nabla_{r} r^{2}) L^{ij} u_i v_j \\ 
	& = -2 r^2  \phi L^{ij} u_i v_j + 2 r^2  \phi L^{ij} u_i v_j \\
	& = 0
	\end{align}
	
	Thus since $r^{\flat}$ is closed, we conclude that $\lied{\tilde{L}}{r} = 0$. Also, as we noted earlier, \cref{prop:CtRestUmb} implies that $\tilde{L}$ induces a CT on any integral manifold of $E$.
\end{proof}
\begin{remark}
	The above ansatz for $\tilde{L}$ was deduced by studying results obtained by \citeauthor{Benenti2008} in \cite{Benenti2008}. Although one can also obtain $\tilde{L}$ by solving a certain differential equation.
\end{remark}

Returning to $\eunn$, let $r$ be the dilatational vector field and $L = r^2 L_E$ as in the above proposition. Note that $L_E$ is given in general by \cref{eq:CTGenEunnKap}. Let $G$ be the metric of $\eunn$, then $R = G_E$ is the induced metric on $\eunn(\frac{1}{r^2})$ and the above proposition shows that $\lied{(r^2 R)}{r} = 0$. Hence $r^2 R$ is the $r$-lift of the metric of $\eunn(\kappa)$ (up to sign). Hence if $\tr{L}$ is obtained by using the metric of $\eunn$, the lifted KBDT is given as follows:

\begin{equation} \label{eq:KBDTsphInEunn}
K_s = (\tr{L}\frac{1}{r^2}) (r^2 R) - L = \tr{L}R - L
\end{equation}

\noindent which is the KBDT in $\eunn(\kappa)$ embedded in $\eunn$. Also note that it follows from proposition~4.3 in \cite{Rajaratnam2014a} that $K_s$ is a KT in $\eunn$. Also, using \cref{eq:CTGenEunnKap}, one can calculate $K_s$ explicitly:

\begin{equation}
K_s = \tr{A} r^2 R - \bp{r,A r} G - r^{2} A + 2 A r \odot r
\end{equation}

Note that since the term $\tr{A} r^2 R$ is a multiple of the metric of $\eunn(\kappa)$, that term can be removed. We summarize our results in the following statement:

\begin{propMy}[Spherical KBD equation] \label{prop:KBDsph}
	Suppose $V \in \F(\eunn)$ is a potential in $\eunn$ which satisfies the KBD equation with $r \odot r$. Let $L$ be a CT in $\eunn(\kappa)$ with parameter matrix $A$. Then $V$ satisfies the KBD equation induced by $L$ in $\eunn(\kappa)$ iff it satisfies the spherical KBD equation (\cref{eq:KBDsph}) with $L$ in $\eunn$.
\end{propMy}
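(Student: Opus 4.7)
The plan is to exploit the warped product structure induced by the data $(\bar p; \R a \obot a^\perp; a)$ (with $a^2 = \kappa$) to translate the KBD equation on $\eunn(\kappa)$ into an equivalent equation on the ambient $\eunn$. Because $V$ satisfies the KBD equation with $r \odot r$, this reducible OCT determines a local warped product decomposition $\psi : \R^+ a \times_\rho \eunn(\kappa) \to \eunn$ (with $\rho := \bp{a,r}$), and the BEKM theory identifies the relevant Killing tensors on $\eunn(\kappa)$ with those KTs on $\eunn$ that are $r$-invariant and restrict correctly. Specifically, by proposition 5.2 of \cite{Rajaratnam2014a}, a contravariant $r$-invariant KT $K$ on $\eunn$ which restricts to a KT $\tilde K$ on the spherical factor $\eunn(\kappa)$ satisfies $\d(K \d V) = 0$ on $\eunn$ if and only if $\d(\tilde K \d \tilde V) = 0$ on $\eunn(\kappa)$, where $\tilde V := \tau^* V$ and $\tau : \eunn(\kappa) \hookrightarrow \eunn$ is the inclusion.

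The next step is to identify the $r$-lifted KBDT explicitly. Starting from $\tilde L$ with parameter matrix $A$, I would regard it in ambient form $L_E$ as given by \cref{eq:CTGenEunnKap}. Then \cref{prop:CTrestCVperp}, applied with the CV $r$, shows that the ambient tensor $L := r^2 L_E$ is $r$-invariant ($\lied{L}{r} = 0$ in contravariant form) and restricts on each leaf $\eunn(\tfrac{1}{r^2})$ to the original CT. Since $r^2 R$ is (up to sign) the $r$-lift of the spherical metric, the ambient expression $K_s = \tr{L}\, R - L$ displayed in \cref{eq:KBDTsphInEunn} is both $r$-invariant and restricts on $\eunn(\kappa)$ to the KBDT $\tilde K = \tr{\tilde L}\, R - \tilde L$ of $\tilde L$. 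Hence $K_s$ is a legitimate lift of $\tilde K$, and combining this with the BEKM equivalence above yields the claimed iff.

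The main technical obstacle is verifying that the two scaling operations (first pass from $L_E$ to $r^2 L_E$, then contract with the ambient metric to form the KBDT) really produce an object that both agrees with $\tilde K$ on the spherical leaves and has vanishing Lie derivative along $r$, rather than only one of these. This is essentially the bookkeeping exercise carried out in the paragraphs preceding the statement: the invariance computation is the calculation in the proof of \cref{prop:CTrestCVperp}, while the restriction computation follows from $G_E = R$ and the fact that multiplying $L_E$ by $r^2$ compensates exactly for the rescaling of the metric required to lift $R$ in an $r$-invariant way to the ambient space.
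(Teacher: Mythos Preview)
Your proposal is correct and follows essentially the same route as the paper: the argument there is precisely the derivation laid out in the paragraphs preceding the proposition, using the warped product $\psi : \R^+ a \times_\rho \eunn(\kappa) \to \eunn$, invoking proposition~5.2 of \cite{Rajaratnam2014a} for the lifting equivalence, and appealing to \cref{prop:CTrestCVperp} to verify that $L = r^2 L_E$ (hence $K_s = \tr{L}R - L$) is the required $r$-invariant lift of the spherical KBDT.
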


\subsection{In pseudo-Euclidean space}

We show how to execute the BEKM separation algorithm in pseudo-Euclidean space. Fix a non-trivial solution $L$ of the KBD equation in $\eunn$. First apply the classification given by \cref{thm:conTenCanForm} to $L$. We assume that $L$ is in one of the canonical forms listed in that theorem. If $L$ is a Cartesian CT then the analysis is straightforward, see \cref{ex:CalMosSys} for example. So we now assume $L$ is non-degenerate and each generalized eigenspace of $A_c$ has at most one proper generalized eigenvector\footnote{It was proven that we lose no generality with this assumption in Euclidean or Minkowski space.}.

First if $A_c$ has no multidimensional (real) eigenspaces, then it is not reducible by \cref{thm:classRedCtEunn}. Hence one obtains separable coordinates for the natural Hamiltonian on the subset where $L$ is an ICT.

Now suppose $A_c$ has multidimensional (real) eigenspaces $W_{1},\dotsc,W_{k}$. It is shown in \cref{par:CtRedConstrEn} that one can obtain data $(\bar{p};\bigobot\limits_{i=0}^{k} V_{i}; a_{1},...,a_{k})$ which determines a warped product decomposition $\psi : N_0 \times_{\rho_1} N_1 \cdots \times_{\rho_k} N_k \rightarrow \eunn$ in canonical form. Note that $\psi$ decomposes the KBDT, $K$, associated with $L$. We now work with $K$.

We consider a somewhat more general situation in order to incorporate the spherical case later. Suppose $K$ is an orthogonal KT in $\eunn$ which is decomposed by the warped product $\psi$ just constructed. Furthermore assume that each $N_i$ corresponds to a distinct eigenspace of $K$. Now we show how to apply the BEKM separation algorithm on the spheres $N_i$ by working only in a pseudo-Euclidean space.

\begin{parts}
	\item $N_i$ is a non-null sphere, i.e. $a_i^2 \neq 0$ \\
	Let $W_{i\perp} := W_i^\perp$ and $c_i := \bar{p} - \frac{a_i}{\kappa_i}$. Define $\phi : W_{i\perp} \times W_i \rightarrow \eunn$ to be the standard product decomposition. Embed $W_i$ in $\eunn$ as follows:
	
	\begin{equation}
	\tau_i : \begin{cases}
	W_i & \rightarrow \eunn \\
	p_i & \mapsto \phi(c_i,p_i) = c_i + p_i
	\end{cases}
	\end{equation}
	
	Note that $N_i = \tau_i(W_i(\kappa_i))$. Let $r_i$ be the dilatational vector field in $W_i$. By \cref{cor:WPsphMet} and Proposition~5.2 in \cite{Rajaratnam2014a}, it follows that $\tau_i^* V$ satisfies the KBD equation with $r_i \odot r_i$. Hence by \cref{prop:KBDsph} it is necessary and sufficient to solve the spherical KBD equation on $W_i$ with $\tau_i^* V$.
	\item $N_i$ is a null sphere, i.e. $a_i^2 = 0$ \\
	Embed $N_i$ in $\eunn$ as follows (see \cref{eq:genPsiEqn}):
	
	\begin{equation}
	\tau_i : \begin{cases}
	N_i & \rightarrow \eunn \\
	p_i & \mapsto \psi(\bar{p},\dotsc,\bar{p},p_i,\bar{p},\dotsc,\bar{p}) = p_i
	\end{cases}
	\end{equation}
	In this case $N_i$ is isometric to $V_i$ which is a pseudo-Euclidean space. Hence the BEKM separation algorithm can be applied on $V_i$.
\end{parts}

In the following section we show how to apply the BEKM separation algorithm on $\eunn(\kappa)$.

\subsubsection{In Spherical submanifolds of pseudo-Euclidean space}

We show how to execute the BEKM separation algorithm in $\eunn(\kappa)$. First we convert it to a problem in $\eunn$. Let $\tilde{V}$ be a potential in $\eunn(\kappa)$. Note that $\tilde{V}$ can be naturally lifted to a potential in $\eunn$ satisfying $\lied{\tilde{V}}{r} = 0$ using an appropriate coordinate system. Then, one can check that the potential 

\begin{equation}
V := \frac{\tilde{V}}{\kappa r^2}
\end{equation}

\noindent in $\eunn$ satisfies the KBD equation with $r \odot r$ in $\eunn$ and equals $\tilde{V}$ for points in $\eunn(\kappa)$. So we lose no generality in working with a potential $V \in \F(\eunn)$ which satisfies the KBD equation with $r \odot r$.

Note that by \cref{prop:KBDsph}, we only need to consider solutions of the spherical KBD equation in $\eunn$. So let $L$ be a non-trivial solution of the spherical KBD equation (\cref{eq:KBDsph}). As in the pseudo-Euclidean case, we assume each generalized eigenspace of $A$ has at most one proper generalized eigenvector. In order to execute the BEKM separation algorithm in $\eunn$, we will need the following lemma:

\begin{lemma}
	Let $L_c$ be the central CT associated with $L$ and $K_s = \tr{L} R - L$ be the KBDT associated with $L$. Suppose $L_c$ is reducible and let $\psi : N_0 \times_{\rho_1} N_1 \cdots \times_{\rho_k} N_k \rightarrow \eunn$ be a warped product which decomposes $L_c$. Then $\psi$ decomposes $K_s$.
\end{lemma}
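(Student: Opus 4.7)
The plan is to verify the defining invariance condition directly: for each $i > 0$ and each $p$ in the (image of the) warped product, the tangent space $T_p N_i$ must be left invariant by $K_s$. The whole proof hinges on a single structural observation about how $r$ sits inside the warped product $\psi$, after which the argument is purely algebraic.

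First I would unpack what it means for $\psi$ to decompose $L_c$. Since $L_c$ is a reducible central CT, the construction in \cref{par:CtRedConstrEn} together with \cref{prop:CtRedConstrEn} (applied recursively to each spherical factor) gives $(\psi^{-1})_* L_c = \tilde{L}_c + \sum_{i=1}^{k} \lambda_i G_i$ as operators, where $\lambda_i$ is the constant eigenvalue of the parameter matrix $A$ on $W_i$. Equivalently, each $T N_i$ with $i > 0$ is an eigenspace of $L_c$ (pushed forward to $\eunn$ via $\psi$) with constant eigenvalue $\lambda_i$; in particular $L_c|_{T N_i} = \lambda_i I$ and $L_c^*|_{T^* N_i} = \lambda_i I$.

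Second, I would establish the key geometric fact that the dilatational vector field $r$ of $\eunn$ is tangent to the geodesic factor $N_0$ of the warped product $\psi$. This is essentially \cref{eq:nonNullPushR} and \cref{eq:nullPushR} (or the general formula \cref{eq:genPsiEqn}) and is stated in the proof of \cref{prop:CtRedConstrEn} as $\psi_* \tilde{r} = r$, where $\tilde{r}$ is the dilatational vector field on $N_0$. It follows that every vector tangent to $N_i$ with $i > 0$ is orthogonal to $r$, so the projection $R = I - r\otimes r^\flat/r^2$ acts as the identity on $T N_i$, and $R^*$ acts as the identity on $T^* N_i$.

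Combining these two facts gives the result almost immediately. For $v \in T N_i$ with $i > 0$, we compute $L v = R L_c R^* v^{\flat} = R L_c v = \lambda_i R v = \lambda_i v$, so $T N_i$ is still an eigenspace of $L$ with eigenvalue $\lambda_i$. Then
\begin{equation*}
K_s v = (\tr L)\, R v - L v = (\tr L - \lambda_i)\, v,
\end{equation*}
which shows $T N_i$ is invariant under $K_s$ (in fact an eigenspace), proving that $\psi$ decomposes $K_s$. The only step that requires any care is the geometric observation in the second paragraph, and even that is a routine consequence of the canonical-form expressions for $\psi$ already recorded in \cref{thm:WPDecomps}; there is no real obstacle.
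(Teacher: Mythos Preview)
Your argument is correct and rests on the same two ingredients the paper uses: the fact that $\psi_*\tilde r = r$ (so $r$ is tangent to $N_0$ and $R$ acts as the identity on each $T N_i$ for $i>0$) and the fact that each $T N_i$ is a $\lambda_i$-eigenspace of $L_c$. The paper packages these via the pushforward identities $L = r^2 R^*L_c = \psi_*(\tilde r^2 \tilde R^*\tilde L_c + \sum_i \lambda_i \tilde r^2 G_i)$ and $R = \psi_*(\tilde R + \sum_i G_i)$ taken from the proof of \cref{thm:eunnKapRestCT}, whereas you verify the invariance condition pointwise; these are two phrasings of the same computation. One small slip: in this section $L = r^2\, R L_c R^*$, not $R L_c R^*$, so for $v\in T_pN_i$ you actually get $Lv = r^2(p)\,\lambda_i v$ and hence $K_s v = (\tr L - r^2\lambda_i)v$; this does not affect the invariance conclusion.
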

\begin{proof}
	This follows from the proof of \cref{thm:eunnKapRestCT}. In that proof we obtained the following equation:
	
	\begin{align}
	R^{*}L_c & = \psi_{*} (\tilde{R}^{*}\tilde{L}_c + \sum_{i=1}^{k} \lambda_{i} G_{i})
	\end{align}
	
	Then we have:
	
	\begin{align}
	L & = r^2 R^{*}L_c  = \psi_{*} (\tilde{r}^2\tilde{R}^{*}\tilde{L}_c + \sum_{i=1}^{k} \lambda_{i} \tilde{r}^2 G_{i}) \\
	R & = \psi_* (\tilde{R} + \sum_{i=1}^{k} G_{i})
	\end{align}
	
	Hence the result follows.
\end{proof}

Now by \cref{thm:eunnKapRestCT} it follows that $L$ is reducible iff $L_c$ is reducible. Hence if $L_c$ is not reducible, one obtains separable coordinates for the natural Hamiltonian on the subset (of $\eunn(\kappa)$) where $L$ is an ICT.

If $L_c$ is reducible, then by the above lemma, one can follow the arguments given in the previous section using the warped product decomposition induced by $L_c$ which decomposes the KT $K_s$.

We now make some crucial remarks. Let $\psi : N_0 \times_{\rho_1} N_1 \cdots \times_{\rho_k} N_k \rightarrow \eunn$ be a warped product decomposition which decomposes $L_c$ and let $\phi : N_0(\kappa) \times_{\rho_1} N_1 \cdots \times_{\rho_k} N_k \rightarrow \eunn(\kappa)$ be an induced warped product decomposition of $\eunn(\kappa)$ as in \cref{thm:eunnKapRestWP}. First note that the separable coordinates are constructed using the warped product $\phi$. Also because the spherical factors $N_i$ (where $i > 0$) are simultaneously spherical factors of $\psi$ and $\phi$ (see \cref{thm:eunnKapRestWP}), we can work in the ambient space.

\section{Conclusion}

In this article we have given a classification of concircular tensors in spaces of constant curvature which permits us to apply them to the separation of variables problem as suggested in \cite{Rajaratnam2014a}. We have obtained canonical forms for these tensors modulo the action of the isometry group in \cref{sec:CtEunn,sec:CtEunnKap}, studied the webs described by irreducible concircular tensors in \cref{sec:cTIrred} and obtained warped product decompositions adapted to reducible orthogonal concircular tensors in \cref{sec:CtClassRed}. In \cref{sec:appNEx} we have shown how to apply these results to solve some of the motivating problems listed in the introduction.

In our solution, there is one important problem that has been unresolved. In Minkowski space, $M^n$, with $n \geq 3$, it is still computationally difficult to find the subset on which a given concircular tensor (CT) is a Benenti tensor. This implies that we still don't have a complete understanding of the separable coordinate systems for these spaces. However, when the space has Euclidean signature or $n = 2$, this is not a problem as is illustrated by \cref{ex:genElipCoord,ex:CCTCoordMink} respectively.

For future research, it would be interesting to see if concircular tensors can be applied to other types of separation such as non-orthogonal separation \cite{Benenti1992,Benenti1997a,Kalnins1979}, complex separation \cite{Degiovanni2007}, and conformal separation \cite{Benenti2005b}. Note that the first two types of separation are of no interest in Euclidean space but they are in Minkowski space. In \cite{Bolsinov2013}, a procedure is given to obtain the local canonical (normal) forms for CTs in pseudo-Riemannian manifolds. Hence the results developed therein may be of interest for the study of the first two types of separation.

\section*{Acknowledgments}

We would like to express our appreciation to Dong Eui Chang for his continued interest in this work.  The research was supported in part by National Science and Engineering Research Council of Canada Discovery Grants (D.E.C. and R.G.M.). The first author would like to thank Spiro Karigiannis for critically reading his thesis \cite{Rajaratnam2014}, which contains the contents of this article.
\appendix
\section*{Appendices}
\addcontentsline{toc}{section}{Appendices}

\section{Lexicographic ordering of complex numbers}

%
%
%

Complex numbers can be given a natural lexicographic ordering (as in dictionaries) by using their Cartesian product structure:

\begin{definition} \label{defn:ordC}
	Suppose $\lambda = a + i b$ and $\omega = c + i d$ are complex numbers. We write $\lambda < \omega$ if: $b < d$ or ($b = d$ and $a < c$)
\end{definition}

In the following we use ``xor'' to mean exclusive or and ``or'' has its standard meaning. Suppose $\lambda, \omega, \nu \in \C$ and $a \in \R^+$, one can check that this ordering has the following properties:

\begin{description}
	\item[trichotomy: ] $\lambda = \omega$ xor $\lambda < \omega$ xor $\omega < \lambda$
	\item[transitivity: ] If $\lambda < \omega$ and $ \omega < \nu$ then $\lambda < \nu$
	\item[translation invariance: ] If $\lambda < \omega$ then $\lambda + \nu < \omega + \nu$
	\item[dilatation invariance: ] If $\lambda < \omega$ then $a \lambda  < a \omega $
	\item[skew symmetry: ] If $\lambda < \omega$ then $- \omega < - \lambda$
\end{description}

Furthermore we note that if $\lambda, \omega \in \R$ then this ordering reduces to the natural ordering of real numbers.
\phantomsection
\addcontentsline{toc}{section}{References}
\printbibliography
\end{document}